\documentclass[11pt]{article}

\usepackage{algorithm}
\usepackage{algorithmic}
\usepackage{epsfig}
\usepackage{subfigure}
\usepackage{enumerate}
\usepackage{amsmath}

\def\qed{\hbox{\rlap{$\sqcap$}$\sqcup$}}
\newtheorem{theorem}{Theorem}[section]
\newtheorem{lemma}{Lemma}[section]
\newtheorem{cor}{Corollary}[theorem]
\newtheorem{definition}{Definition}[section]
\newtheorem{property}{Property}[section]
\newenvironment{proof}{\par\noindent{\bf Proof:}}{\mbox{}\hfill$\qed$\\}
\setlength{\topmargin}{-0.25in}
\setlength{\textwidth}{6.50in}
\setlength{\oddsidemargin}{0.0in}
\setlength{\textheight}{9.00in}
\newcounter{rem}
\setcounter{rem}{0}
\newcommand{\ignore}[1]{ }

\def\cA{{\cal A}}
\def\cW{{\cal W}}
\def\cP{{\cal P}}
\def\cT{{\cal T}}

\bibliographystyle{plain}

\begin{document}

\title{A near optimal algorithm for finding Euclidean shortest path in polygonal domain}
\author{Rajasekhar Inkulu \thanks{Department of Computer Science, Indian Institute of Technology, Guwahati, India.  E-mail: rinkulu@iitg.ac.in} \and Sanjiv Kapoor \thanks{Department of Computer Science, Illinois Institute of Technology, Chicago, USA.  E-mail: kapoor@iit.edu} \and S. N. Maheshwari \thanks{Department of Computer Science, Indian Institute of Technology, Delhi, India.  E-mail: snm@cse.iitd.ernet.in}}
\date{}
\maketitle

\begin{abstract}
We present an algorithm to find an {\it Euclidean Shortest Path} from a source vertex $s$ to a sink vertex $t$ in the presence of obstacles in $\Re^2$.
Our algorithm takes $O(T+m(\lg{m})(\lg{n}))$ time and $O(n)$ space.
Here, $O(T)$ is the time to triangulate the polygonal region, $m$ is the number of obstacles, and $n$ is the number of vertices.
This bound is close to the known lower bound of $O(n+m\lg{m})$ time and $O(n)$ space.
Our approach involve progressing shortest path wavefront as in continuous Dijkstra-type method, and confining its expansion to regions of interest.
\end{abstract}

\section{Introduction}
\label{sect:intro}

%\rem{??? this section needs to be expanded with new results from the last few years; approaches of each paper needs to explained in more detail}   

The shortest path problem in $\Re^d$ is that of finding a shortest route from one point to another among the presence of obstacles.
Even in $\Re^3$ under the Euclidean metric, it is not even known whether the shortest path problem in the presence of polyhedral obstacles is in NP though the problem has been shown to be NP-hard.  
This paper considers the case in $\Re^2$.
The Euclidean shortest path problem in a polygonal region is one of the oldest and best-known in computational geometry due to its various applications.  
Mitchell \cite{Mitchell00} provides an extensive survey of research accomplished in determining shortest paths in polygonal and polyhedral domains.
\hfil\break

We assume that the domain is defined by a simple polygon having $m$ obstacles comprising a total of $n$ vertices.
There are two fundamentally different approaches in solving this problem: the visibility graph method, and the wavefront method.  
These approaches assume a triangulation of the domain, which can be accomplished in $O(n+m(\lg{m})^{1+\epsilon})$ using the algorithm from Bar-Yehuda and Chazelle \cite{Chaz94}. 
\hfil\break

The visibility graph method is based on constructing a graph whose nodes are the vertices of the obstacles and the edges are pairs of mutually visible vertices.  
Welzl \cite{Welzl85} provides an algorithm for constructing the visibility graph with $n$ line segments in $O(n^2)$ time.
Ghosh and Mount \cite{Ghosh91}, and Kapoor and Maheshwari \cite{Kapoor00} found an algorithm to construct the visibility graph of time complexity $O(n \lg{n}+E)$, where $E$ is the number of edges in the graph.  
Applying Dijkstra-type algorithm on this graph, one can determine a shortest path in $O(n \lg{n}+E)$.  
Unfortunately the visibility graph can have $\Omega(n^2)$ edges in the worst case, so any shortest path algorithm that depends on an explicit construction of the visibility graph will have a similar worst-case running-time.  
\hfil\break

Storer and Reif \cite{Reif94} presented $O(T+mn)$ time algorithm which constructs a data structure so that the shortest path from $s$ to any point on the plane can be determined in $O(1)$ time.
Using the concept of corridors, Kapoor and Maheshwari \cite{Kapoor88} presented an algorithm of time complexity $O(m^2 \lg{n}+n \lg{n})$.
\hfil\break

The second approach used by Hershberger and Suri \cite{Hersh93}, Mitchell \cite{Mitchell88}, Mitchell \cite{Mitchell93}, and Kapoor \cite{Kapoor98, Kapoor99} gave algorithms to find a shortest path by expanding a wavefront from source $s$ till it reaches the destination $t$.
This approach seems inherently more geometric than the graph-theoretic method based on visibility graphs.
However, this method when directly applied does not achieve the known lower bound of $\Omega(n + m \log{m})$, and resolving this is an open problem for several years.  
Mitchell \cite{Mitchell93} gave an algorithm for computing a shortest path map, an encoding of shortest paths from $s$ to all points of the plane in $O(n^{3/2+\epsilon})$ time and space.
More recently, Hershberger and Suri \cite{Hersh97} presented $O(n \lg{n})$ time algorithm.
\hfil\break

In this paper, we combine corridors with the wavefront approach to obtain a $O(n+m(\lg{m})(\lg{n}))$ time algorithm which uses $O(n)$ space for computing Euclidean shortest path among obstacles in $\Re^2$, whereas the Problem 21 of The Open Problems Project (TOPP) intends for a solution with $O(n+m\lg{m})$ time using $O(n)$ space.
We assume a model of computation where real arithmetic is allowed, though the results apply even when finite precision arithmetic is used (ignoring the numerical complexity of the schemes). 
Our algorithm proceeds by first triangulating the given polygonal region and then identifying the useful corridors and junctions among those triangles as in Kapoor et al. \cite{Kapoor88}.  
Then we initiate a shortest path wavefront from source and progress it as in continuous Dijkstra-type of method; however, to reduce the number of event points, we confine the wavefront to progress in regions of interest.
\hfil\break

Section \ref{sect:defsandprops} gives basic definitions, properties, and the utility of various constructs that we use in developing the algorithm.
Algorithm outline is mentioned in Section \ref{sect:algooutline}.
Section \ref{sect:datastr} gives details of data structures and the operations on each of them.
More technical details of algorithm are presented in Sections \ref{sect:sdcomp}, \ref{sect:IIntersectProc}, \ref{sect:merging}, \ref{sect:boundarysplit}, and \ref{sect:dirtybridges}.
The algorithm in terms of event point types and their handling is described in Section \ref{sect:evtpts}. 
Both the analysis and proof of correctness are spread all through the paper, whereas the analysis required for the overall time and space complexity analysis is presented in Section \ref{sect:moreanalysis}.
Section \ref{sect:conclu} concludes with possible generalizations.

\section{Definitions and Properties}
\label{sect:defsandprops}

\indent Let $v_{l}\hspace{-0.05in}=\hspace{-0.05in}v_0, v_1, \ldots, v_{l-1}$ be $l$ points, known as vertices, in the plane.
The sequence of $l$ line segments, known as edges, $e_l\hspace{-0.05in} =\hspace{-0.05in} e_0\hspace{-0.05in}=\hspace{-0.05in}v_0v_1, e_1\hspace{-0.05in}=\hspace{-0.05in}v_1v_2, \ldots, e_{l-1}\hspace{-0.05in}=\hspace{-0.05in}v_{l-1}v_0$ together form a {\it closed polygonal chain}, say $C$.
The polygonal chain $C$ is {\it simple} if and only if $\forall_i e_i \cap e_{i+1} = v_{i+1}$ and $\forall_{i, j \ne i+1} e_i \cap e_j = \phi$.
Then $C$ with the region bounded by it together is known as a {\it simple polygon}, say $P'$.
Let ${\cal O} = \{ P_1, P_2, \ldots, P_m \}$ be the set of simple polygons interior to $C$ s.t. $\forall_{i, j \ne i} P_i \cap P_j = \phi$. 
Then $P-\bigcup_i P_i$ is known as the {\it polygonal domain} or {\it polygon with holes}, say ${\cal P}$.
The set $V$ comprising the vertices of $\cP$ is of size $n$, whereas the number of obstacles $|{\cal O}|$ is $m$.
We denote the Euclidean shortest distance between two points $p_1$ and $p_2$ in {\cP} with $d(p_1, p_2)$.
We intend to find an Euclidean shortest path between two points in $\cP$.
Among these two points, one is termed as source $s$ and the other is sink $t$.
We consider both $s$ and $t$ as degenerate single point obstacles.
In other words, $s, t \in {\cal O}$.
We use the continuous Dijkstra's approach in finding a shortest path from $s$ to $t$.

\paragraph{Wavefront Progression with Triangulation}

\begin{definition}
The {\bf shortest path wavefront} $\cW (d)$ at distance $d$ is the locus of points at Euclidean shortest distance $d$ from the source $s$.
\end{definition}

Initially, the wavefront is a circle centered at $s$ with radius $\epsilon$, where $\epsilon$ is a small positive constant. 
The algorithm proceeds by expanding the wavefront in ${\cP}$. 
As the wavefront progresses, it may encounter various vertices and edges of $\cP$.
Let the shortest path wavefront be at distance $d'$ from $s$.
A point $p \in \cP$ is considered as {\bf traversed} if $d(p, s) \leq d'$.
Otherwise, the point $p$ is said to be untraversed.
An edge $e$ is traversed, if there exists a point $p \in e$ such that $p$ is traversed.
A region $R$ is traversed, if for every point $p \in R$, $p$ is traversed. 
An edge $e$ is defined as {\bf struck} if there exists a point $p \in e$ such that $d(p, s) = d'$ and for $p \ne p'$ and $p' \in e$, $d(p', s) \ge d'$. 
For any vertex $v$ of ${\cP}$, when the wavefront strikes $v$, a new arc with center $v$ may be initiated and inserted into the wavefront. 

\begin{property}
A {\bf wavefront segment} $w(v)$ is a circular arc with center $v \in \cP$, such that each point on $w(v)$ has a shortest path to source $s$ via $v$.
\end{property}

\begin{property}
At any stage of the algorithm, the wavefront $\cW(d)$ comprises a contiguous (abutting) sequence of wavefront segments, $w(v_1), \ldots, w(v_k)$ for some $k \le n$.
\end{property}

The algorithm halts when the wavefront strikes $t$.
Suppose $w(v_k)$ is the arc that struck $t$, $w(v_{k-1})$ is the arc that struck $v_k$, \ldots, $w(s)$ is the arc that struck $v_1$. 
Then our algorithm outputs the shortest path from $s$ to $t$ which comprises of line segments $sv_1, v_1v_2, \ldots, v_{k-1}v_k$ and $v_kt$.
Adding the Euclidean distances along these line segments yields the shortest distance from $s$ to $t$.
\hfil\break

\begin{lemma}
\label{lem:spnoncrossing}
Consider shortest paths from source $s$ to two points $p_i$ and $p_j$ in $\cP$.
Suppose the interior of a line segment in shortest path from $s$ to $p_i$ intersects with the interior of a line segment in shortest path from $s$ to $p_j$.
Then there always exists a shortest path from source $s$ to $p_i$ (resp. $s$ to $p_j$) so that the interior of no line segment in shortest path from $s$ to $p_i$ (resp. $s$ to $p_j$) intersects with the interior of line segments in the given shortest path from $s$ to $p_j$ (resp. $s$ to $p_j$).
This property is termed as {\bf non-crossing property of shortest paths}.
\end{lemma}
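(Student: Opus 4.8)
The plan is to prove the statement by an exchange (``surgery'') argument on the two given shortest paths, resting on two standard structural facts about geodesics in a polygonal domain: (i) \emph{optimal substructure} --- every subpath of a shortest path is itself a shortest path between its endpoints; and (ii) a shortest path can change direction only at a vertex of $\cP$, because a bend at any point lying in the open free region could be replaced by a short straight local shortcut, strictly decreasing the length. I would write the two given shortest paths as $\pi_i$ (from $s$ to $p_i$) and $\pi_j$ (from $s$ to $p_j$), and let $x$ be an intersection point of the interior of a segment $\sigma \subseteq \pi_i$ with the interior of a segment $\tau \subseteq \pi_j$. Throughout, $\pi[a,b]$ denotes the portion of a path $\pi$ running between $a$ and $b$.

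First I would split each path at $x$ and recombine the pieces crosswise, forming $\pi_i' = \pi_j[s,x] \cdot \pi_i[x,p_i]$ and $\pi_j' = \pi_i[s,x] \cdot \pi_j[x,p_j]$. By optimal substructure the two prefixes $\pi_i[s,x]$ and $\pi_j[s,x]$ are both shortest $s$-to-$x$ paths, so they share the common length $d(s,x)$; likewise the two suffixes have lengths $d(x,p_i)$ and $d(x,p_j)$. Hence $\pi_i'$ and $\pi_j'$ have lengths $d(s,x)+d(x,p_i)=d(s,p_i)$ and $d(s,x)+d(x,p_j)=d(s,p_j)$, so both recombined paths are again shortest paths. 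This is the core computation, and it shows the surgery never costs anything in length.

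Next I would argue that the surgery genuinely removes the contact at $x$. If $\sigma$ and $\tau$ crossed transversally (their supporting lines differ), then $\pi_i'$ would approach $x$ along $\tau$ and leave along $\sigma$, producing a real bend at the point $x$, which lies in the open free region; by fact (ii) such a path cannot be shortest, contradicting the length computation above. Hence a transversal crossing of two shortest paths emanating from the common source $s$ is impossible, and the only way the interiors can meet is a collinear overlap along a common sub-segment. In that degenerate case I would reroute one path to travel along the shared sub-segment of the other, so that the two paths coincide exactly on the overlap and separate at its endpoints, eliminating the interior intersection while preserving length.

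Finally, since the two polygonal paths have finitely many segments, they have only finitely many interior intersections. I would phrase the conclusion via a minimal-crossing argument: among all pairs of shortest paths to $p_i$ and $p_j$, take one minimizing the number of proper interior crossings; if that number were positive, one surgery at a crossing would yield another shortest-path pair with strictly fewer crossings, a contradiction. The main obstacle I anticipate is the careful bookkeeping of the degenerate situations --- overlaps along collinear segments, intersections occurring exactly at a vertex of $\cP$, and verifying that a single surgery does not silently reintroduce crossings elsewhere --- so that the chosen monotone quantity (the number of proper crossings) is genuinely forced to decrease and the argument closes cleanly.
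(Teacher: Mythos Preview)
Your proof is correct and uses essentially the same exchange/surgery argument as the paper: swap the prefixes at the crossing point, observe that lengths are preserved by optimal substructure, and iterate (the paper phrases this as ``each swap removes one crossing,'' you phrase it as a minimal-crossing argument). You go one step further than the paper by noting that a transversal crossing would, after the swap, leave a genuine bend at a non-vertex point of the open free region, contradicting optimality---hence transversal crossings between shortest paths from a common source are in fact impossible; the paper does not draw this stronger conclusion and simply performs the swap and decrements the crossing count.
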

\begin{proof}

\begin{figure}
\center{
\subfigure[Before applying the property]{\epsfysize=170pt \epsfbox{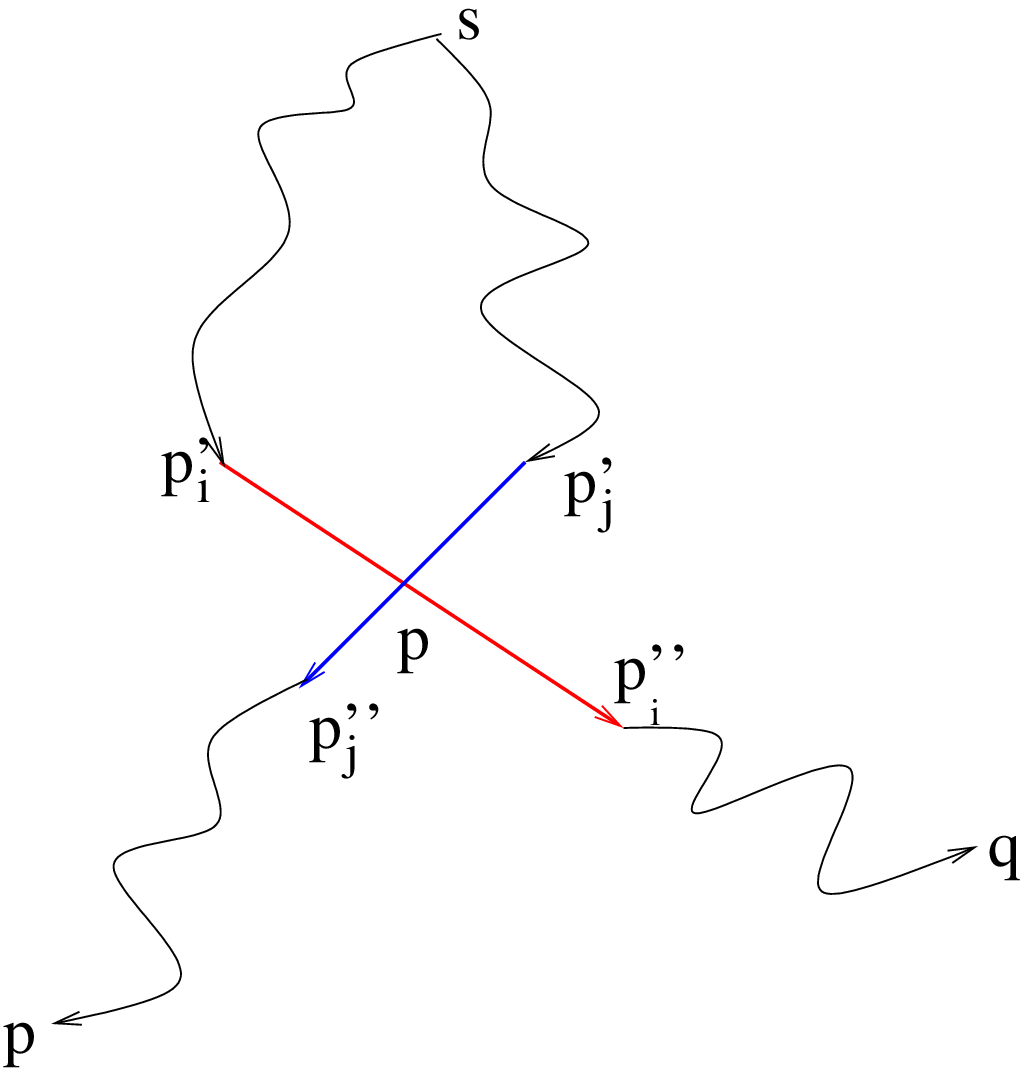}}
\subfigure[After applying the property]{\epsfysize=170pt \epsfbox{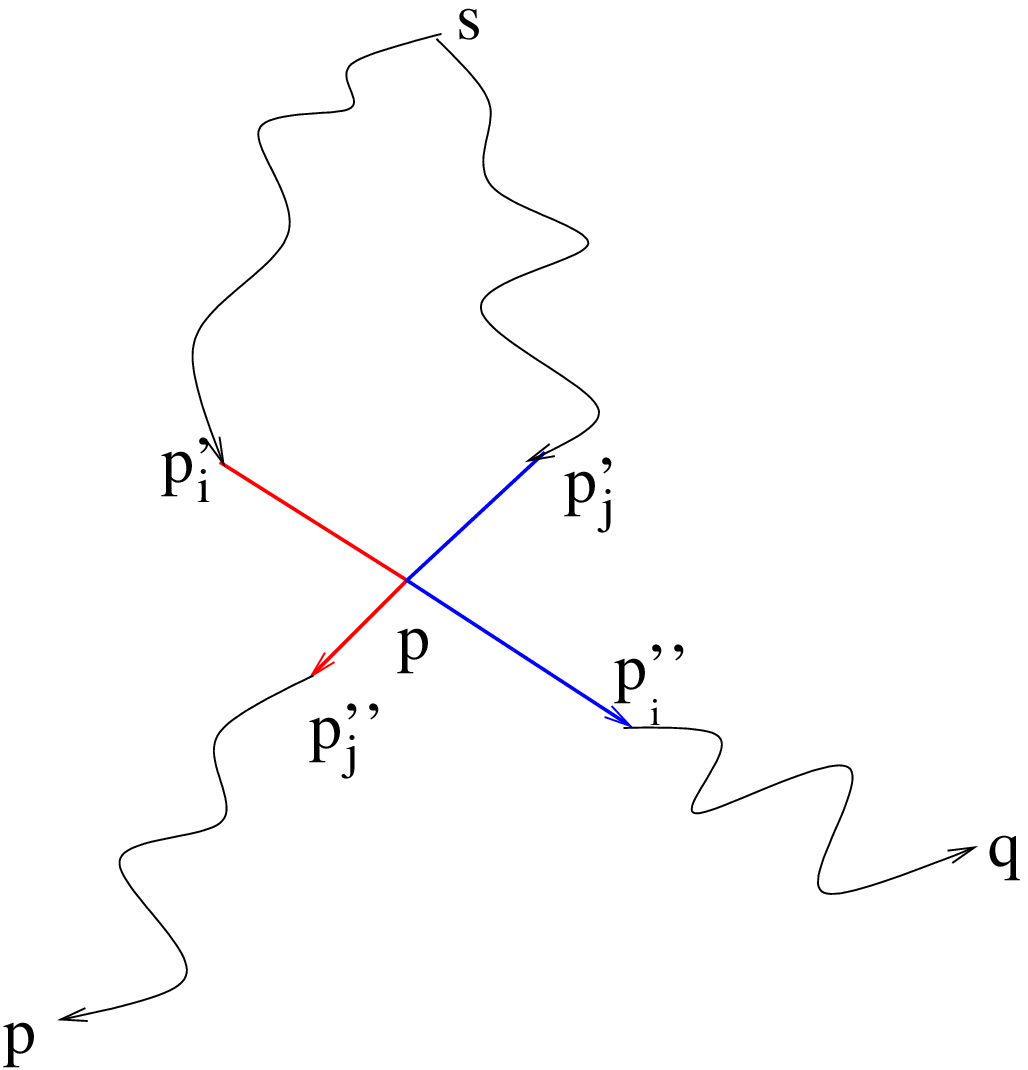}}
}
\caption{\label{fig:noncontigprop} Non-contiguity property of shortest paths}
\end{figure}

Consider the point $p$ at which the interior of line segments $p_i'p_i''$ and $p_j'p_j''$ respectively belonging to shortest paths from $s$ to $p_i$ and $p_j$ intersect.
First, note that there exists at least two shortest paths from $s$ to $p$: one via $p_i'$ and the other via $p_j'$. 
Replacing the given shortest path to $p_i''$ via $p_i'$ with $p_i''p$ and $pp_j'$, and replacing the given shortest path to $p_j''$ via $p_j'$ with $p_j''p$ and $pp_i'$ yields two shortest paths that do not cross with each other at $p$.
Since this local operation reduces the number of interior intersection of line segments in both the shortest paths by one, repeatedly applying this operation at every such intersection point yields the required. 
\end{proof}

To guide the wavefront progression, we triangulate the polygonal domain using the algorithm by Bar-Yehuda and Chazelle \cite{Chaz94}.
First we triangulate $\cP - \{s, t\}$, and obtain a triangulation.
We locate $s$ in a triangle $T=(v_1, v_2, v_3)$ and introduce triangulation edges $sv_1, sv_2, sv_3)$; similarly, after locating $t$, we introduce triangulation edges to yield a resultant triangulation, denoted with $\cT$.
\hfil\break

A sequence of wavefront segments in the wavefront, not necessarily contiguous, together are termed as a {\it section of wavefront}.
We define a line segment $l$ to be {\bf reachable} by a section of wavefront, if there exists a point $p$ on $l$ and a point $w$ on $W$ such that the interior of line segment $pw$ does not intersect any untraversed vertex/edge in $\cP$.
Typically, we are interested in reachable edges of triangulation $\cT$.
The wavefront is progressed based on its interaction with the reachable edges, and the interaction among the wavefront segments.
\hfil\break

The algorithm is event based.
Let $\cW (d)$ be the wavefront.
Primarily, the event points can be categorized into two: finding $d' \ge d$ so that the $\cW (d')$ strikes a reachable triangulation edge; finding $d'' \ge d$ so that the $\cW (d'')$ at which two wavefront segments intersect.
The events occur as the wavefront progresses and are maintained in a min-heap, with the corresponding shortest distance at which the event occurs ($d'$ or $d''$) as the key. 
The former causes the updates to the set of reachable edges, and the new segments may possibly be incorporated into the wavefront.
The latter could cause some wavefront segments to change shape, some to be removed from the wavefront, and may change the set of reachable edges.
\hfil\break

With this approach, there are $O(n)$ vertices from each of which a wavefront segment could be initiated; these $O(n)$ wavefront segments could interact with $O(n)$ triangulation edges, causing quadratic time complexity in terms of number of vertices in $\cP$. 

\paragraph{Corridors and Junctions} \hfil\break

We intend to reduce the number of edges with which the wavefront interacts.
This we accomplish by exploiting the structure in the triangulation and obtain a coarser data structure.
The number of elements in the new data structure are $O(m)$ and the wavefront may strike $O(m)$ entities rather than $O(n)$, hence is an improvement.
\hfil\break

\begin{figure}
\centerline{\epsfxsize=350pt \epsfbox{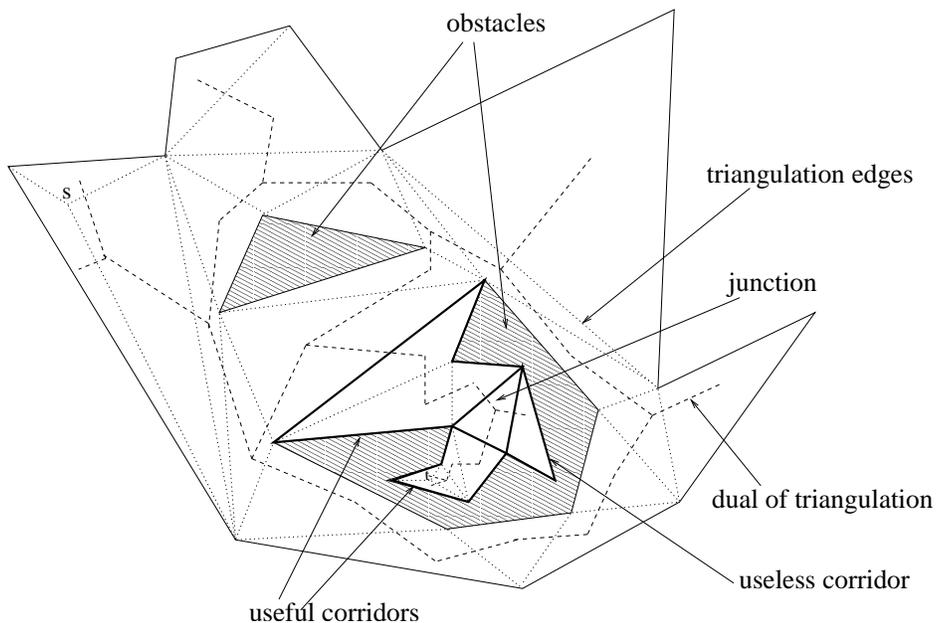}}
\caption{\label{fig:junctcorri}{Corridors and Junctions}}
\end{figure}

The coarser structural elements obtained from triangulation are termed as corridors and junctions.
We outline their descriptions from Kapoor et al. \cite{Kapoor97b}.
Consider the triangulation $\cT$ and its dual graph.
See Fig. \ref{fig:junctcorri}.
The dual graph is divided into paths where a path is composed of maximally connected vertices of degree two.  
Each such path defines a {\it corridor} formed by the sequence of dual regions or triangles corresponding to the vertices of the dual graph.  
A corridor is a region confined by (at most) four geometric entities - two convex chains on opposite sides, termed as {\em corridor convex chains}; and two edges, termed as {\it (wavefront) enter/exit bounding edges or enter/exit boundaries}, that are incident to both the corridor convex chains.
Each of these convex chains' is a section of boundary of an upper or lower hull (see \cite{Prep85}).
A junction is a triangle enclosed by at most three edges, each edge from a distinct corridor.
A side of an edge belonging to a useful corridor convex chain or enter/exit bounding edge is known as a {\it bounding edge}.
An end point of a bounding edge is known as a {\it bounding vertex}.
An edge $e$ in the dual graph is defined to be {\it useful} if there exists a simple path from $s$ to $t$ that is having a point in common with the triangles associated with dual edge $e$. 
Otherwise, it is {\it useless}.
A corridor is defined to be {\it useful} if the triangles constituting the corridor contribute an useful edge in the dual graph.  
While partitioning $\cP$ into corridors and junctions, we define $s$ and $t$ as degenerate corridors.
The algorithm starts with a subdivision of the polygonal region into $O(m)$ useful {\it corridors} and {\it junctions}.
\hfil\break

The corridors can be classified by their structure into two types, {\it open and closed}.
A corridor $C$ is termed as an {\it open corridor} whenever there exists two points such that $p_1$ lies on one enter/exit boundary of $C$ and $p_2$ lies on the other enter/exit boundary edge of $C$, and $p_1$ is visible to $p_2$.
See Fig. \ref{fig:evolveboundcycle}.
Otherwise, a corridor is termed as a {\it closed corridor}.
A closed corridor gives rise to two funnels each with an apex, and each funnel has two convex chains.
See Fig. \ref{fig:closedcorri}.
\hfil\break

The advantages in progressing shortest path wavefront using corridors is two fold. 
First, rather than interacting with the $O(n)$ triangulation edges, it interacts with $O(m)$ corridor convex chains and corridor enter/exit bounding edges.
Further, as explained latter, we exploit the coherence in the wavefront segments that are originated from the successive vertices along a corridor convex chain.
\hfil\break

A corridor convex chain $C$ is {\it reachable}, if there exist a bounding edge $b \in C$ that is reachable.
A contiguous cycle of reachable corridor convex chains and/or enter/exit boundaries is termed as a {\bf boundary cycle}.
\hfil\break

\begin{figure}
\centerline{
\subfigure[Boundary cycle $b_1, b_2, b_3$]{\epsfxsize=200pt \epsfbox{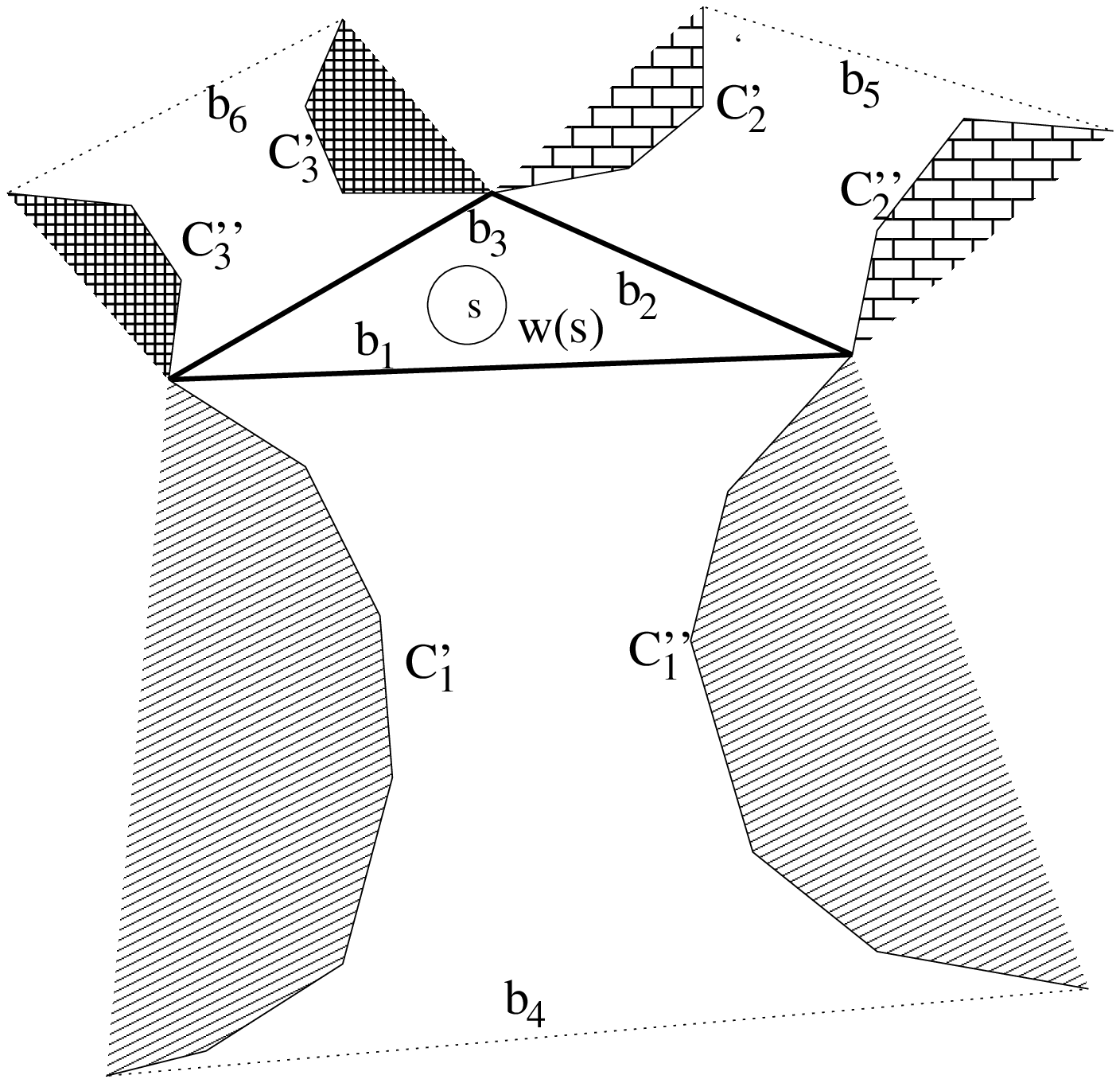}}
\subfigure[Boundary cycle $C_1', b_4, C_1'', b_2, b_3$]{\epsfxsize=200pt \epsfbox{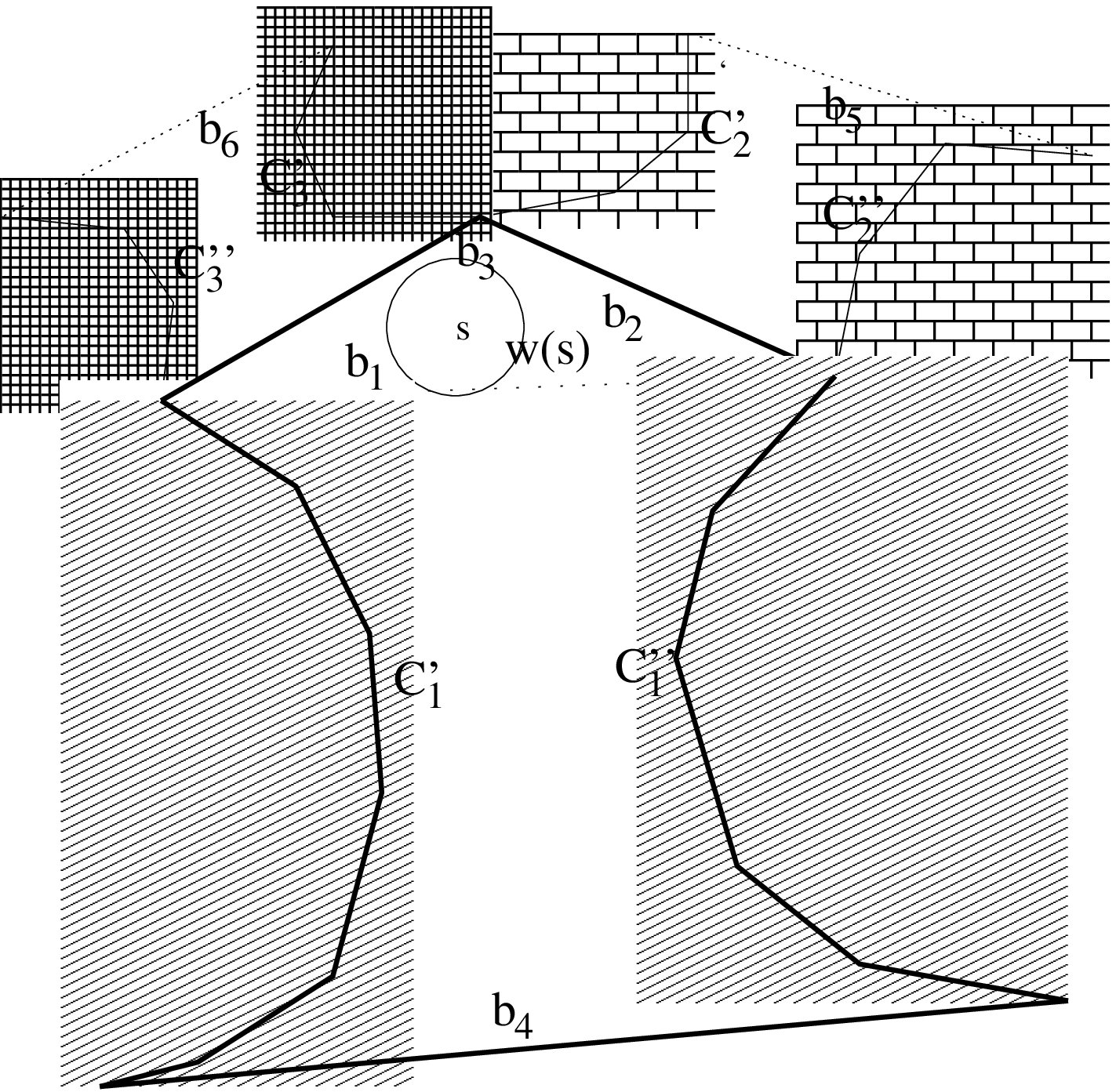}}
}
\centerline{
\subfigure[Boundary cycle $C_1', b_4, C_1'', b_2, C_3', b_6, C_3''$]{\epsfxsize=200pt \epsfbox{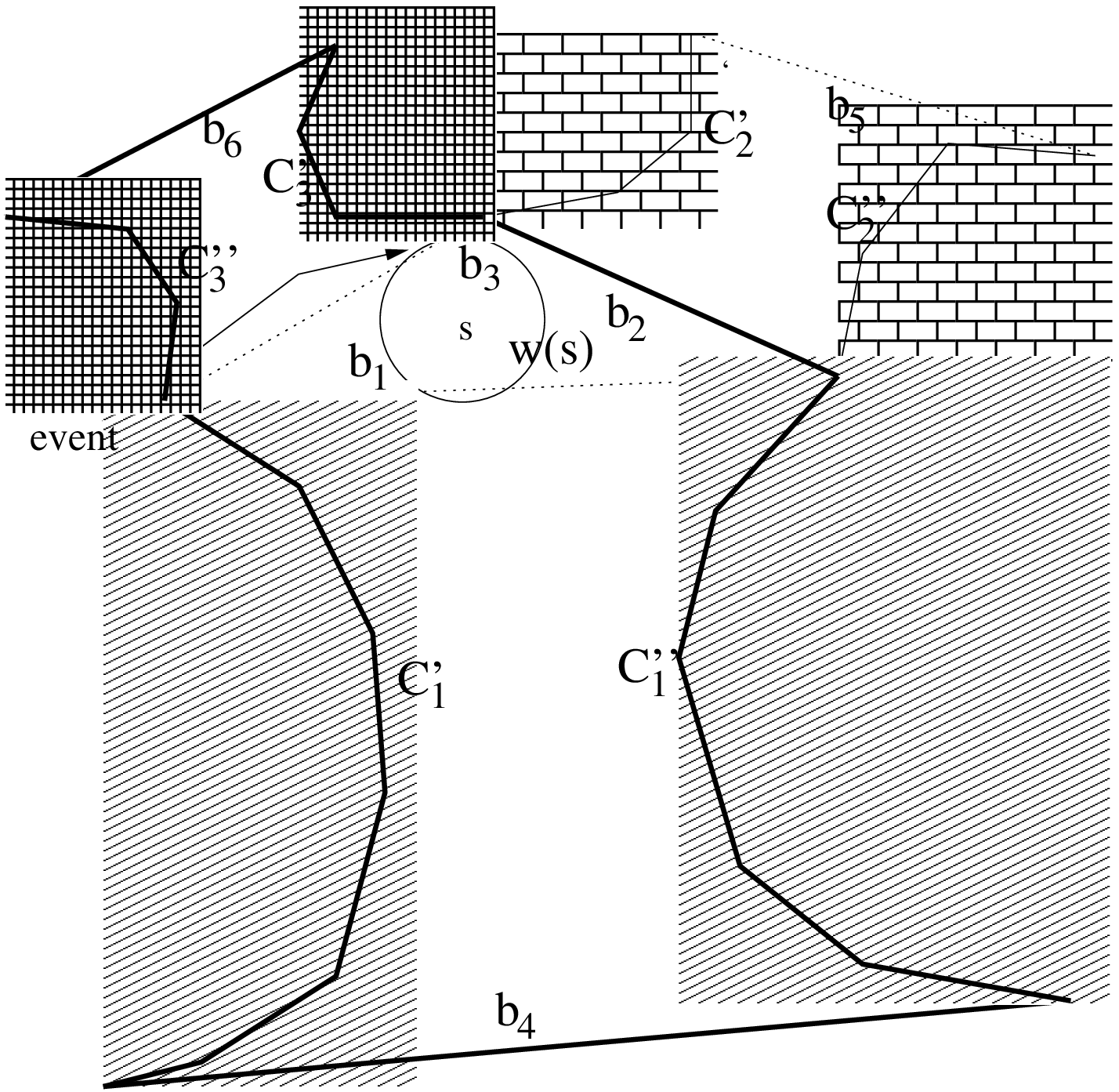}}
}
\caption{\label{fig:evolveboundcycle} Boundary cycle as the initial wavefront expands}
\end{figure}

As shown in Fig. \ref{fig:evolveboundcycle}(a), $e_1, e_2, e_3$ are the edges of a junction $J$ and the corresponding $b_1, b_2, b_3$ are the bounding edges that are reachable from the initial wavefront segment $w(s)$.
The initial wavefront comprises of $w(s)$ only.
The bounding edges $b_1, b_2, b_3$ together form the boundary cycle $BC$.
The first event that occurs when the wavefront, which is a circle, strikes $b_1$ of $J$.
See Fig. \ref{fig:evolveboundcycle}(b). 
At that stage, the bounding edge $b_1$ in $BC$ is replaced by convex chain $C_1'$, bounding edge $b_4$, and the convex chain $C_1''$.
The resultant boundary cycle is $C_1', b_4, C_1'', b_2, b_3$.
As the $w(s)$ progresses, the boundary cycle $BC$ further changes as shown in Fig. \ref{fig:evolveboundcycle}(c).
In general, as the wavefront expands, if the just struck edge $e$ bounds an untraversed junction $J$ then $e$ is replaced by the other two edges of $J$ in the boundary cycle under consideration.
And if the edge $e$ is a bounding edge of an untraversed corridor $C$, then $e$ is replaced by the other bounding edge of $C$ and corridor convex chains of $C$. 
\hfil\break

\begin{figure}
\centerline{\epsfysize=220pt \epsfbox{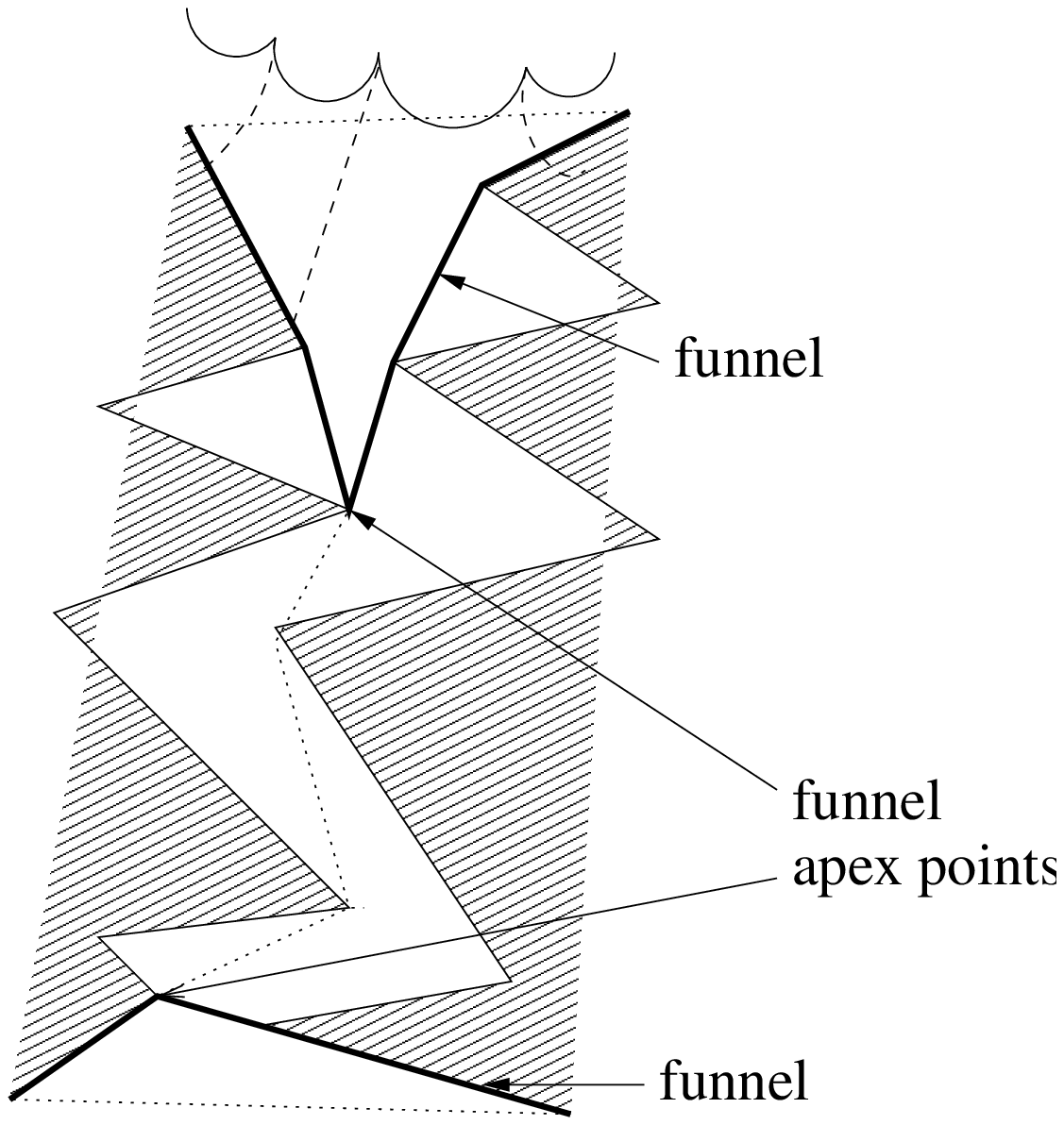}}
\caption{\label{fig:closedcorri}{Wavefront progression in closed corridors}}
\end{figure}

In the case of closed corridors, after the wavefront strikes the first apex point of the funnel, segments are initiated from the other apex point (provided the shortest distance to that has not already been determined) $p$, when the wavefront expands from $s$ after a distance that equals the shortest distance from $s$ to $p$.
See Fig. \ref{fig:closedcorri}.
\hfil\break

\begin{figure}
\center{
\subfigure[Junction $J$ before boundary split]{\epsfxsize=330pt \epsfbox{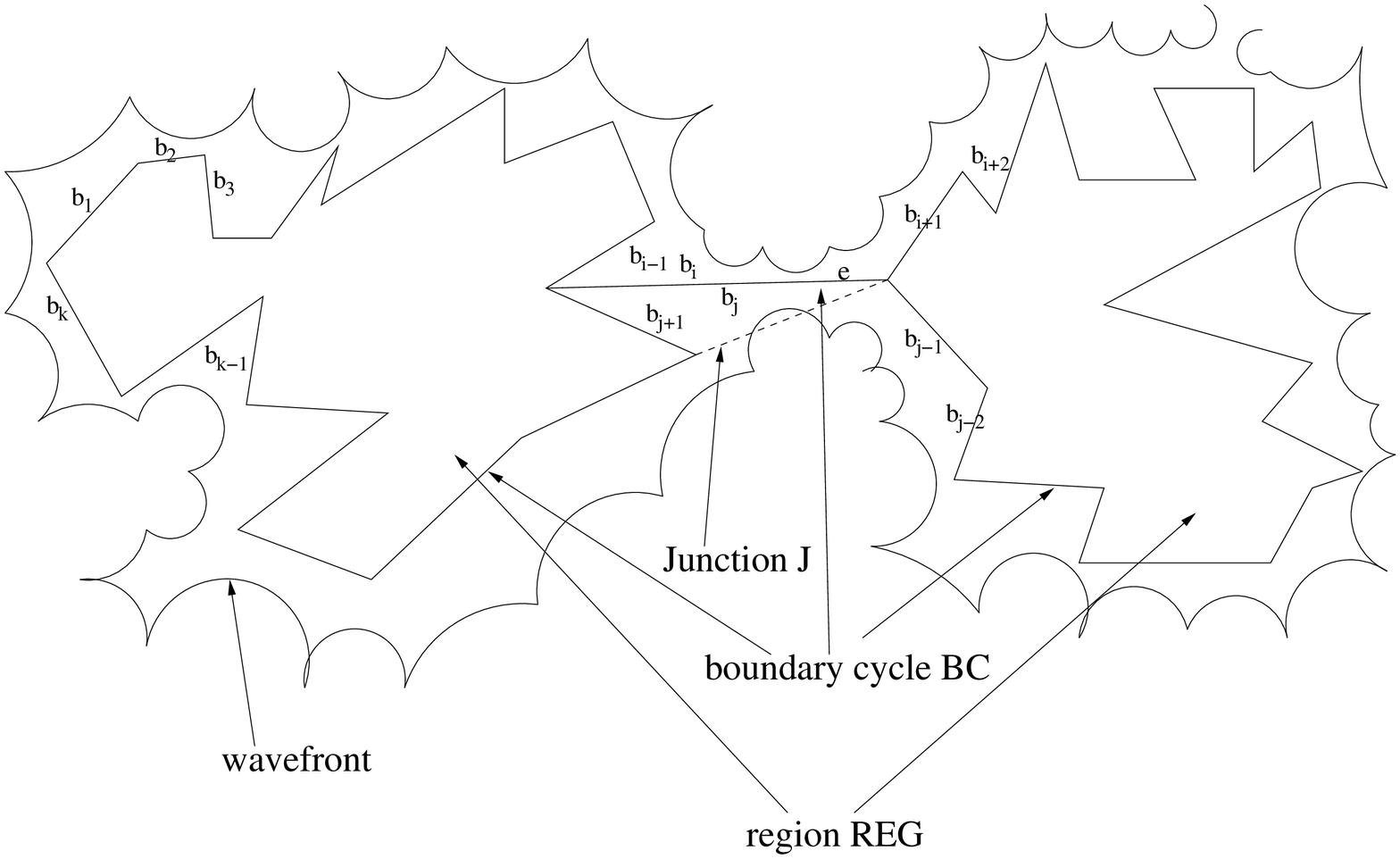}}
\subfigure[Junction $J$ after boundary split]{\epsfxsize=330pt \epsfbox{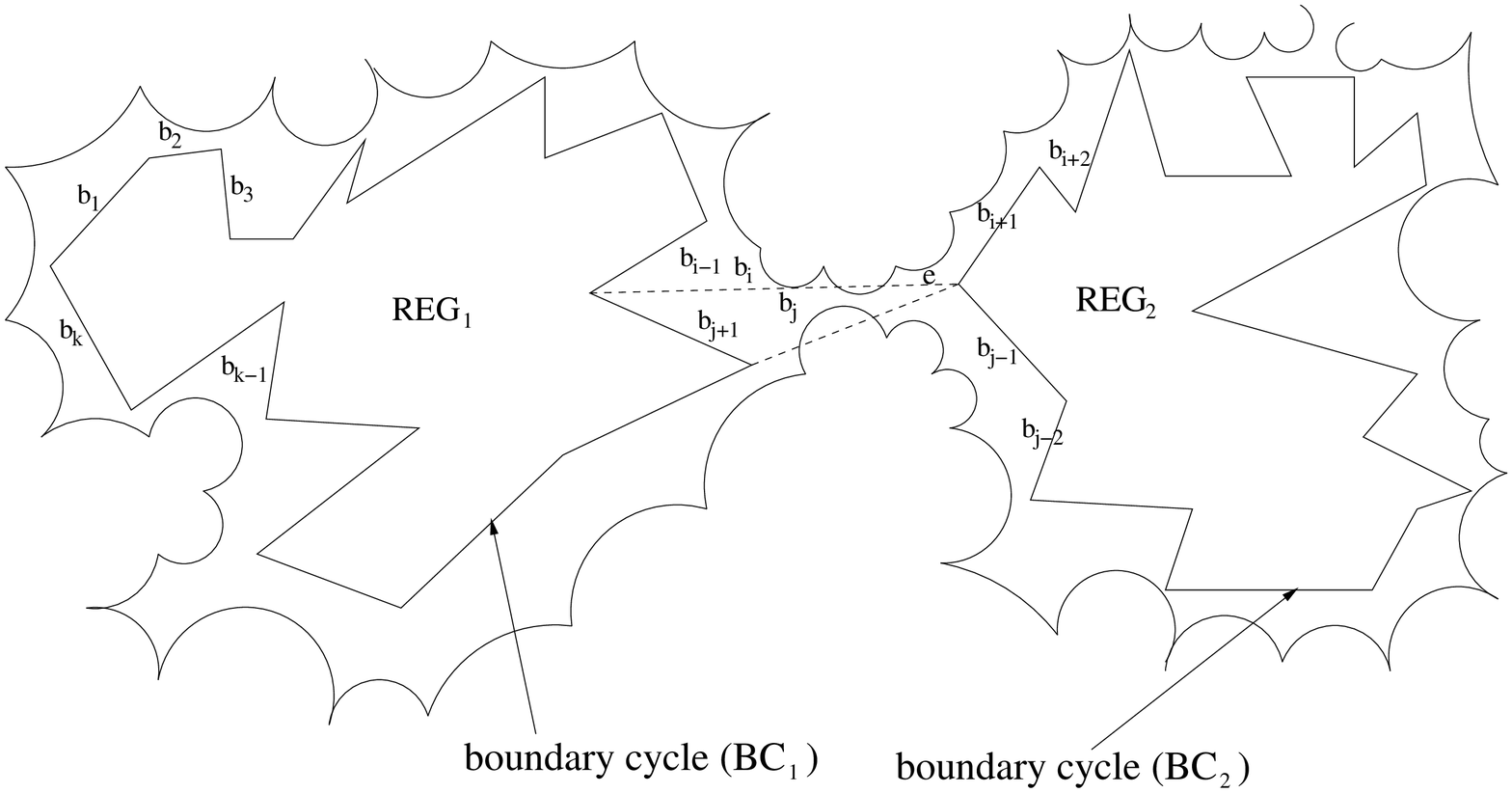}}
}
\caption{\label{fig:boundarysplit} Boundary split}
\end{figure}

As the wavefront progresses, a boundary cycle may split.
See Fig. \ref{fig:boundarysplit}.
This is possible when an edge $e$ in a boundary cycle $BC$ is reachable from the wavefront from both of its sides. 
In other words, $e$ appears twice in $BC$ as a bounding edge, and the boundary cycle splits into two when $e$ is struck from either side.
Consider a boundary cycle $BC = b_1 , b_2 \ldots b_i = (u_i , v_i) \ldots  b_j = (v_i, u_i) \ldots b_k, b_1$.
Note that both $b_i$ and $b_j$ represent the same edge, say $e$.
When $e$ is struck, the boundary cycle $BC$ splits into two boundary cycles:  $BC_i= b_1, \ldots, b_{i-1}, b_{j+1}, \ldots, b_k$ and $BC_j=b_{i+1}, \ldots, b_{j-1}$.
For any two boundary cycles $BC', BC''$, and bounding edges $b' \in BC', b'' \in BC''$, if  $b'$ and $b''$ do not correspond to the same edge, then we say that $BC', BC''$ are {\it disjoint boundary cycles}. 
Therefore, $BC_i$ and $BC_j$ are disjoint boundary cycles.
The corridor bounding edges in all boundary cycles together are denoted with $\partial B$.
A sequence of contiguous bounding edges in $\partial B$ is termed as a {\it section of boundary}.

\paragraph{Associations and I-curves} \hfil\break

Since there are $O(m)$ corridors, $O(m)$ junctions, and $O(n)$ wavefront segments, the interactions between the wavefront and the junction/corridor boundaries can be $O(nm)$.  
This has been perceived to be the bottleneck in implementing the wavefront method.
To this end, we keep track of corridor convex chains, corridor enter/exit boundaries that are reachable from sections of wavefront, with the wavefront progress.
\hfil\break
 
Consider a corridor convex chain or enter/exit boundary $g$ in a boundary cycle.
Let $W_g$ be the set of wavefront segments from each of which $g$ is reachable.
Every wavefront segment $w \in W_g$ is {\bf associated} with $g$ (or, $g$ is associated with segment $w$) if and only if a point on $g$ has shortest Euclidean distance to $s$ via the center of $w$. 
The  association is defined by the relation: ${\cal A} \subseteq {\cal G} \times {\cal S}$, where ${\cal G}$ is the set of corridor convex chains or enter/exit boundaries in the polygonal domain, and ${\cal S}$ is the set of waveform segments formed during the course of algorithm.
Note that the wavefront segments in $W_g$ need not be contiguous in wavefront.
\hfil\break

We maintain these associations for determining the wavefront progression at which the wavefront strikes $\partial B$.
For every section of wavefront $W$, given its association with a section of boundary $BS$ in a boundary cycle, we need only to compute the shortest distance between $BS$ and $W$, which is more efficient than computing the shortest distance between $W$ and whole of $\partial B$.
Again, we update these associations locally whenever an event changes either $BS$ or $W$. 
The following definition of I-curves helps in initiating and updating associations of wavefront segments or sections of wavefronts.
\hfil\break

The Voronoi diagram of a given set of points $S$ is the partition of the plane into regions so that all the points interior to a region are closer to one and only one point of $S$. 
To determine the associations for a wavefront segment, essentially we require Voronoi regions corresponding to each wavefront segment.
Consider any two adjacent wavefront segments, $w(v_a), w(v_b)$, in the wavefront.
The curve that separates the Voronoi regions of $v_a$ and $v_b$ is termed as an {\bf I-curve$(w(v_a), w(v_b))$}.
%Formally, the I-curve($w(v_a), w(v_b)$) is the locus of points $p$ such that $p \in w(v_a) \cap w(v_b)$ and $p \in \cW (d)$ for some $d$.
Every point $p$ on I-curve($w(v_a), w(v_b)$) has (at least) two shortest distance paths to $s$: one is via $v_a$ (the line segment from $p$ to $v_a$ together with a shortest path from $v_a$ to $s$) and the other is via $v_b$ (the line segment from $p$ to $v_b$ together with a shortest path from $v_b$ to $s$).  
In other words, I-curve($w(v_a), w(v_b)$) is a curve separating Voronoi regions belong to $v_a$ and $v_b$ in a weighted Voronoi diagram defined over $v_a$ and $v_b$, with the respective weights $d(v_a, s), d(v_b, s)$ at $v_a$ and $v_b$.
\hfil\break

Let $w(v_i), \ldots, w(v_j), w(v_{j+1}), \ldots, w(v_k), w(v_{k+1}), \ldots, w(v_l)$ be successive wavefront segments.
When an I-curve($w(v_j), w(v_{j+1})$) intersects another I-curve($w(v_k), w(v_{k+1})$) at a point $p$, due to non-crossing property of shortest paths (Lemma \ref{lem:spnoncrossing}), there is no need to progress wavefront segments $w(v_{j+1}), \ldots, w(v_k)$  any further from $p$.
To avoid overlaps of segments within the wavefront, we capture I-curve intersection as an event point and update the wavefront by removing $w(v_{j+1}), \ldots, w(v_k)$ from the wavefront.
Therefore, I-curves are also useful in identifying the wavefront changes.
\hfil\break

Given that there could be number of boundary cycles, the associations of wavefront segments with the boundary edges on every cycle is of interest in determining the interactions.
We show the following useful property:

\begin{lemma}
\label{lem:contigprop1} There exists an association {\cA} such that the sequence of boundary edges on a boundary cycle that are associated with a segment is a contiguous sequence.  
This is known as the {\it contiguity property for wavefront segments}.
\end{lemma}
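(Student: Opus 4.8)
The plan is to reduce the contiguity property to a statement that the associated segment varies \emph{monotonically} as one walks along the boundary cycle, and then to prove that monotonicity from the non-crossing property (Lemma \ref{lem:spnoncrossing}) together with the ordering of the wavefront segments. Concretely, I would fix an orientation of the boundary cycle $BC$, giving a cyclic order on its bounding edges, and fix the order $w(v_1), \ldots, w(v_k)$ in which the segments occur along the wavefront (a contiguous sequence, by the earlier Property). For each reachable point $p$ on $BC$ let $i(p)$ be the index of a center $v_{i(p)}$ through which $d(s,p)$ is realized, and define $\cA$ to associate a bounding edge $g$ with $w(v_i)$ exactly when some point of $g$ has $i(p)=i$. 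Since the associated-center function $i(\cdot)$ changes only where $p$ crosses an I-curve, it suffices to show that as $p$ traverses $BC$ once the value $i(p)$ runs through $1,\ldots,k$ in cyclic monotone order; a cyclically monotone index makes the preimage of each $i$ a single contiguous arc, and the bounding edges meeting that arc then form a contiguous sequence in $\partial B$, which is exactly the claim.

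For the monotonicity I would work with the Voronoi cells $R_1,\ldots,R_k$ of the centers restricted to the strip between the wavefront and $BC$, where $R_i$ is bounded by $\text{I-curve}(w(v_{i-1}),w(v_i))$ and $\text{I-curve}(w(v_i),w(v_{i+1}))$. The key sub-claim is that these cells form linearly ordered slabs across the strip, i.e.\ the separating I-curves are pairwise non-crossing and appear in the wavefront order; a simple arc such as $BC$ crossing an ordered family of slabs necessarily meets each slab in one connected piece, giving monotonicity of $i(\cdot)$. To justify the ordering I would argue by contradiction: if some center $v_a$ were associated to two points $p_1,p_3$ separated on $BC$ by a point $p_2$ with a different center $v_b$, consider shortest paths realizing these associations and note that the two last legs $v_a p_1,\,v_a p_3$ emanate from the common apex $v_a$ inside the region bounded by the simple closed curve $BC$ and cut it into two parts, with $p_2$ stranded in the part not containing $v_b$'s approach; the last leg $v_b p_2$ is then forced to cross one of $v_a p_1,v_a p_3$. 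At such a crossing $x$ we have $d(s,x)$ realized both via $v_a$ and via $v_b$, so the reroute of Lemma \ref{lem:spnoncrossing} yields a shortest path to the stranded endpoint via the other center of equal length, contradicting that the two cells interleave. This collapses the putative interleaving and forces the cells into wavefront order.

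It then remains to handle the bookkeeping. A single bounding edge $g$ may be cut by an I-curve so that different sub-parts lie in different cells; because $i(\cdot)$ is monotone along $BC$, such an edge is associated with at most two consecutive segments and does not break contiguity of either segment's run. Points lying exactly on an I-curve (genuine ties) I would assign, in the definition of $\cA$, to the center that is encountered first in the chosen traversal direction, which keeps $\cA$ well defined and preserves the single-run structure. Reading off the monotone index, each center $v_i$ owns one contiguous arc of $BC$, hence a contiguous block of bounding edges in $\partial B$; this is the contiguity property, and the freedom in the tie-breaking is exactly the ``there exists an association $\cA$'' in the statement.

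I expect the main obstacle to be the crossing argument in the second step, specifically making the Jordan-curve separation rigorous: one must control where the source $s$ and the centers $v_a,v_b$ lie relative to the region bounded by $BC$ and verify that the relevant shortest-path last legs stay inside that region, so that an interleaving of associations genuinely forces a crossing of two otherwise non-crossing shortest paths. The companion difficulty is the degenerate case where the ``stranded'' endpoint already lies on the $v_a$--$v_b$ I-curve, which must be absorbed by the tie-breaking rule rather than treated as a true interleaving; once these are dispatched, converting the forced crossing into a contradiction is the routine reroute supplied by Lemma \ref{lem:spnoncrossing}.
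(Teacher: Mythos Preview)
Your proposal is correct and takes essentially the same approach as the paper: both hinge on the observation that an interleaved association $v_a,v_b,v_a$ at boundary points $p',p_z,p''$ forces the last leg $v_bp_z$ to cross one of $v_ap',v_ap''$, whence Lemma~\ref{lem:spnoncrossing} gives the crossing point---and then the sandwiched endpoint---an alternative shortest path via the other center, allowing re-association. The paper presents exactly this crossing argument as a one-step local fix (concluding ``$w(v_b)$ can be associated with $e''$, and $w(v_a)$ can be associated with $e_z$'') without your monotone-index and Voronoi-slab scaffolding; that extra structure is sound but unnecessary, and what you call a ``contradiction'' is precisely the existence-of-alternative-association statement the paper makes.
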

\begin{proof}
Consider a sequence of bounding edges $e_i, \ldots, e_j, \ldots, e_k, \ldots, e_l$ of a boundary cycle.
Let two sections of boundary $e_i, \ldots, e_j$ and  $e_k, \ldots, e_l$ are associated with wavefront segment $w(v_a)$ and another section of boundary $e_j, \ldots, e_k$ is associated with a segment $w(v_b)$.
See Fig. \ref{fig:contigassoc}.
Also, let the shortest path from $v_b$ to $s$ does not pass through $v_a$, and vice versa.
\hfil\break

\begin{figure}
\centerline{\epsfysize=160pt \epsfbox{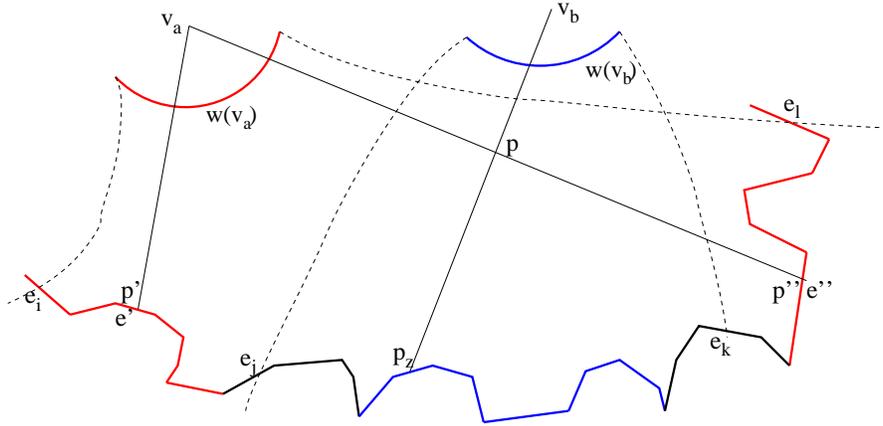}}
\caption{\label{fig:contigassoc}{Contiguity property for segments}}
\end{figure}

>From the definition of associations, for every bounding edge $e' \in \{e_i,\ldots, e_j\}$, there exists a point $p'$ on $e'$ which has a shortest path to $s$ via  $v_a$.  
Similarly, for every bounding edge $e'' \in \{e_k,\ldots,e_l\}$, there exists a point $p''$ on $e''$ which has a shortest path to $s$ via $v_a$; and for every bounding edge $e_z \in \{e_j,\ldots,e_k\}$, there exists a point $p_z$ on $e_z$ which  has a shortest path to $s$ via $v_b$.  
But the line segment joining $p_z$ to $v_b$, intersects either the line segment joining $p'$ to $v_a$ or the line segment joining $p''$ to $v_a$.
This is because the point $p_z$ occurs on the section of boundary between points $p'$ and $p''$.
Suppose the line segment $p_zv_b$ intersects the line segment $p''v_a$ at a point $p$  (argument for the other case is symmetric).
Because of the non-crossing property of shortest paths (Lemma \ref{lem:spnoncrossing}), the intersection point $p$ has at least two shortest paths to $s$: one via $v_a$, and, another via $v_b$.  
In turn, there exists two shortest paths from $p''$ (resp. $p_z$) to $s$: one via $v_a$, and the other via $v_b$.  
Hence $w(v_b)$ can be associated with $e''$, and, $w(v_a)$ can be associated with $e_z$.
\end{proof}

Fig. \ref{fig:assocsegcorri} shows the associations between a section of wavefront $W$ and a section of boundary $B$ with I-curves between adjacent segments of $W$.
\hfil\break

\begin{figure}
\centerline{\epsfxsize=290pt \epsfbox{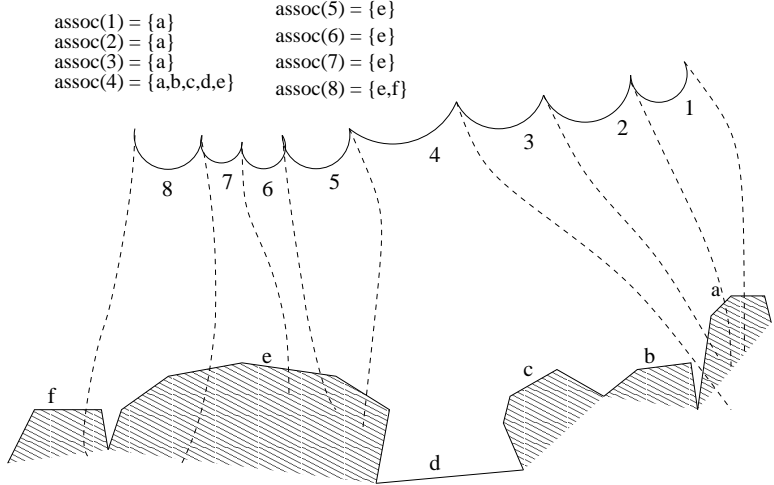}}
\caption{\label{fig:assocsegcorri} Associations of wavefront segments with convex chains and enter/exit boundaries}
\end{figure}

Any corridor enter/exit boundary $e$ is considered as traversed from the first time it got struck by a section of wavefront, and from there on $e$ does not participate in associations. 
In other words, each corridor enter/exit boundary is struck by the wavefront at most only once.
When a section of wavefront $W$ strikes an untraversed junction $J$ or corridor $C$, the updates to associations of $W$ are termed as {\it wavefront split}; whereas the {\it wavefront merger} procedure updates the associations of sections of wavefront when an edge of already traversed junction $J$ or corridor $C$ is struck.
\hfil\break

The following example describes both the wavefront split and merger.
Consider a junction $J=(e_1, e_2, e_3)$ in which no edge is traversed yet. 
Suppose a section of wavefront $SW_1$ struck edge $e_1$ when the wavefront is $\cW (d)$. 
Then we need to delete $e_1$ from all of its associations, and the associations of $SW_1$ needs to be updated to reflect the associations of wavefront segments in $SW_1$ to bounding edges corresponding to $e_2$ and $e_3$.
As shown in Fig. \ref{fig:wfsplitmerger}(a), the section of wavefront $SW_1$ is split into section of wavefront $SW_1'$ that is associated with $e_2$, and the section of wavefront $SW_1''$ that is associated with $e_3$, for $SW_1' \cup SW_1'' = SW_1$.
\hfil\break

\begin{figure}
\center{
\subfigure[$SW_1$ struck $e_1$]{\epsfxsize=200pt \epsfbox{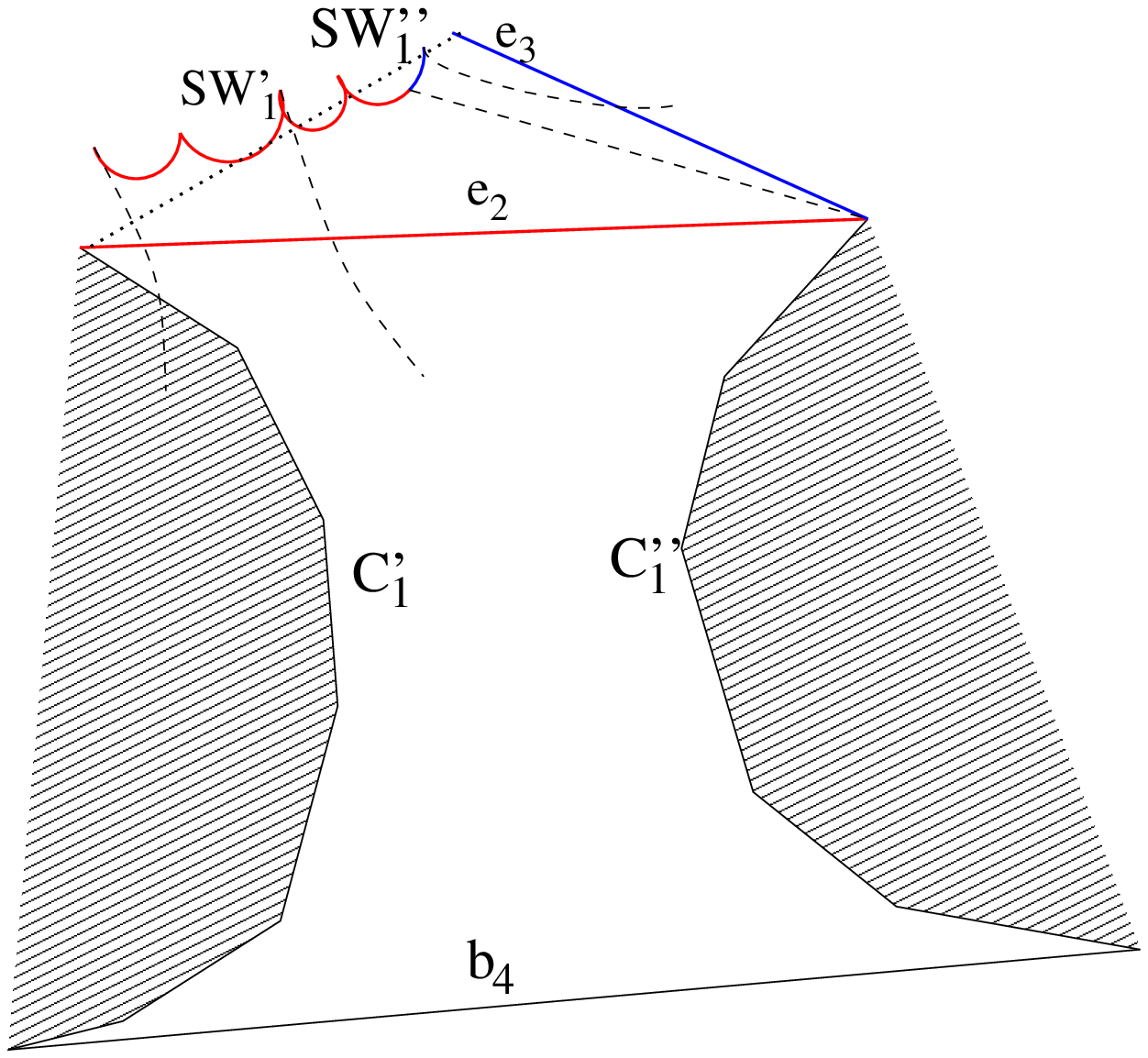}}
%\subfigure[$SW_1'$ struck $e_2$]{\epsfxsize=170pt \epsfbox{}}
\subfigure[$SW_1'$ struck $e_2$ and $SW_2$ struck $e_3$]{\epsfxsize=200pt \epsfbox{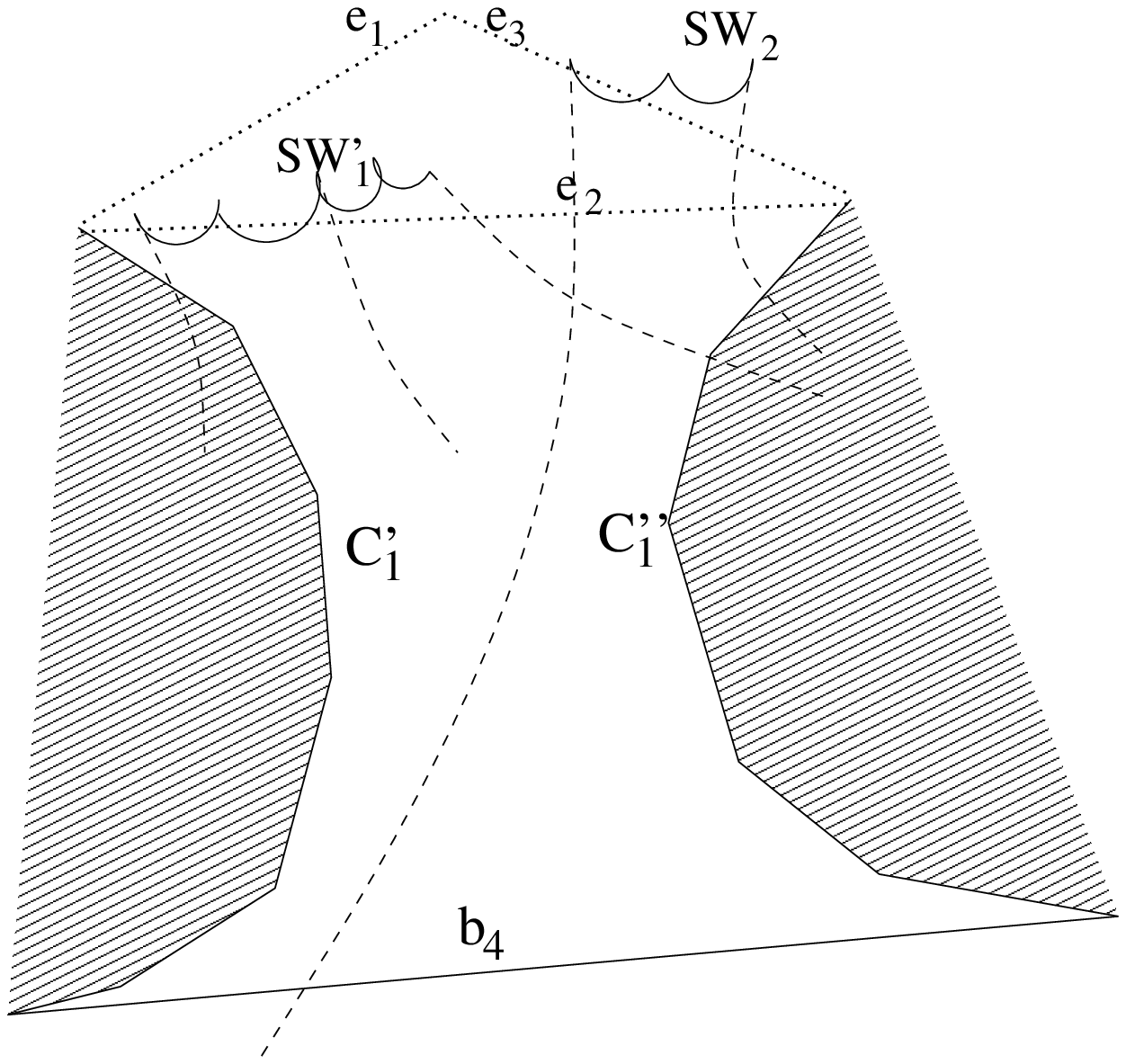}}
}
\caption{\label{fig:wfsplitmerger} Wavefront split and merger}
\end{figure}

For $d' > d$, when the wavefront is $\cW (d')$, suppose $e_2$ is struck by $SW_1'$ as shown in Fig. \ref{fig:wfsplitmerger}(c).
Now based on $SW_1'$ I-curves, the section of boundary associated with $SW_1'$ comprise $C_1' \cup b_4 \cup C_1''$.
For $d'' > d'$, when the wavefront is $\cW (d'')$, suppose $e_3$ is struck by a section of wavefront $SW_2$ (see Fig. \ref{fig:wfsplitmerger}(c)).  
This causes a boundary split.
Both the edges $e_1$ and $e_2$ are not considered as existing any more as they were respectively struck by $SW_1$ and $SW_1'$ in the past.
Then based on $SW_2$ I-curves, we know that $C_1'$ and $b_4$ are reachable from $SW_2$.
Since there is a section of boundary that is common to both $SW_1'$ and $SW_2$, we merge these two sections of wavefront. 
We merge two sections of wavefront whenever there is a boundary split.
The wavefront merge procedure computes the association of $C_1' \cup b_4 \cup C_1''$ with bunches in $SW_1' \cup SW_2$, based on the proximity of each of $C_1', b_4, C_1''$ with $SW_1'$ and $SW_2$.
Note that each section of wavefront shown in these pictures may comprise bunches which are not necessarily contiguous along the wavefront.
This is true even for the resultant section of wavefront that is associated with $C_1' \cup b_4 \cup C_1''$.

\begin{property}
A waveform merger is required whenever there is a boundary split.
\end{property}

\ignore{
\begin{figure}
\centerline{\epsfysize=200pt \epsfbox{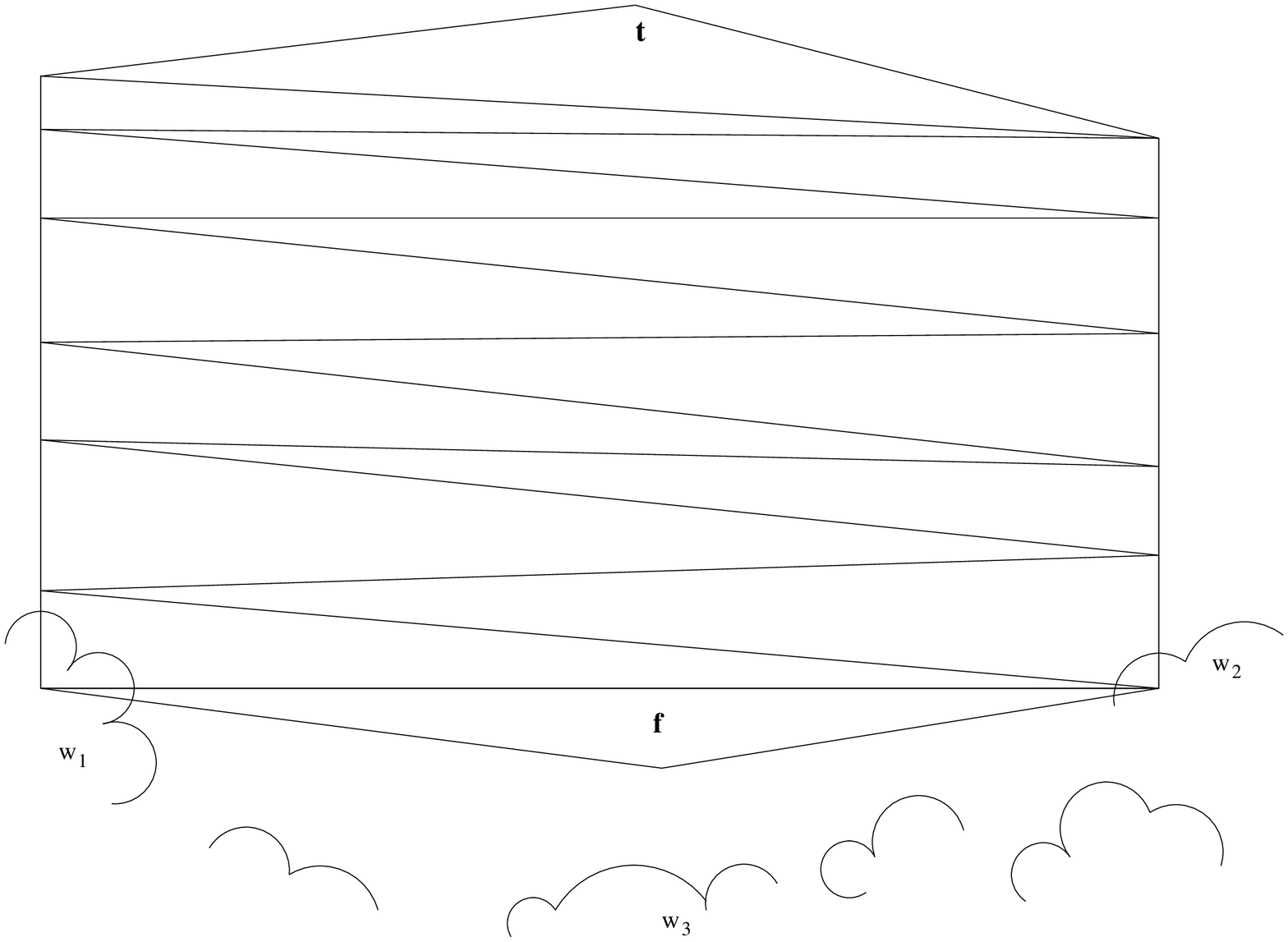}}
\caption{\label{fig:wfmerger} Wavefront merger}
\end{figure}

Consider an example that describes the wavefront merger.
Suppose a long facet $f$ reached by sections of wavefront $w_1$ and $w_2$ far from each other, as illustrated in 
See Fig. \ref{fig:wfmerger}.
Let an edge of face $f$ is struck by $w_1$ first.
When $w_2$ strikes another edge of $f$, we merge $w_1$ and $w_2$.
Let the new section of wavefront due to this merger be $w_{12}$.  
This is valid since there is no requirement for bunches in a section of wavefront to be contiguous. 
When a section of wavefront $w_3$, strikes a face $f'$ whose bounding edge is associated with $w_{12}$, we merge $w_{3}$ with $w_{12}$.  
We continue in this way of merging the sections of wavefront as the algorithm proceeds.  
\hfil\break
}

We associate either a section of boundary with a segment, or a section of wavefront with a $g \in \partial B$, such that neither a $g$ nor a $w(v)$ participates in more than one association.   
This limits the number of interactions between the wavefront and $\partial B$ to $O(n+m)$.

\paragraph{Bunches} \hfil\break

The shortest distance computations and updates of associations are proportional to the number of associations involved. 
Given that $n$ could be much larger than $m$, it would be interesting to explore whether the number of interactions/associations is a function of number of obstacles. 
Also, $O(n)$ wavefront segments could interact between themselves.
To reduce the number of such possible events, either we need to reduce the number of wavefront segments initiated or show that the number of segment intersections are limited to $o(n^2)$.
The latter is possible due to the following structure:

\begin{definition}
Let $v_j, v_{j+1}, \ldots, v_k$ be a maximal list of successive vertices along a corridor convex chain $CC$ such that for every two neighboring vertices $p, q$ in this ordered list $d(s, q) = d(s, p) + d(p, q)$.  
Then, a {\bf bunch} $B(v_j, v_k)$ is defined as the sequence of wavefront segments $w(v_j), w(v_{j+1}), \ldots, w(v_k)$.
\end{definition}

The I-curves among segments within a bunch are straight-lines.
Two I-curves are said to {\it diverge} if they are not intersecting, and will not intersect in future as the wavefront expands.

\begin{lemma}
\label{lem:divergeprop} The intra-bunch I-curves among the segments within a bunch diverge.
\end{lemma}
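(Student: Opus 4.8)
The plan is to first pin down the exact geometry of an intra-bunch I-curve, and then deduce divergence from the convexity of the corridor convex chain $CC$. First I would identify the I-curve between two adjacent segments $w(v_i), w(v_{i+1})$ of the bunch. By the defining relation of a bunch, $d(s,v_{i+1}) = d(s,v_i) + |v_iv_{i+1}|$, so the weighted-Voronoi bisector condition $|p-v_i| + d(s,v_i) = |p-v_{i+1}| + d(s,v_{i+1})$ collapses to $|p-v_i| - |p-v_{i+1}| = |v_iv_{i+1}|$. By the triangle inequality this holds exactly when $v_{i+1}$ lies on the segment $\overline{v_ip}$, i.e. the I-curve is precisely the ray $r_i$ emanating from $v_{i+1}$ in the direction $v_{i+1}-v_i$ (the forward extension of the chain edge $v_iv_{i+1}$, hence the direction $d_i$ of that edge). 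This re-derives the straight-line claim and, crucially, fixes each I-curve as a ray contained in the line $L_i$ through $v_iv_{i+1}$. Since this ray does not change as the wavefront expands (only the realized portion grows), the statement that two I-curves \emph{diverge} reduces to showing that the corresponding two rays are disjoint as point sets.

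Next I would invoke that a corridor convex chain is a section of an upper or lower hull (\cite{Prep85}): its edge directions rotate monotonically with total turning strictly less than $\pi$, and the line $L_i$ through each edge is a supporting line of the whole chain, keeping every other vertex of $CC$ on one fixed side $H_i$ (the side toward which the chain bends), strictly so for the vertices other than $v_i,v_{i+1}$. Fix indices $i<i'$ within the bunch and let $\hat n$ be the unit normal of $L_i$ pointing into $H_i$. I would then establish two facts: (i) the base point $v_{i'+1}$ of $r_{i'}$ satisfies $(v_{i'+1}-v_{i+1})\cdot\hat n > 0$, by the supporting-line property; and (ii) the direction $d_{i'}$ of $r_{i'}$ is $d_i$ rotated toward $H_i$ by an angle $\alpha\in(0,\pi)$, so that $d_{i'}\cdot\hat n = \sin\alpha > 0$. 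Consequently every point of $r_{i'}$ lies strictly on the $H_i$ side of $L_i$, whereas $r_i\subseteq L_i$; hence $r_i\cap r_{i'}=\emptyset$. As this argument is uniform over all pairs $i<i'$ (the consecutive case $i'=i+1$ included), the intra-bunch I-curves are pairwise disjoint rays and therefore diverge.

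The main work is the sign bookkeeping in the convexity step: I must confirm that the side toward which the edge directions rotate is the same side $H_i$ on which the later vertices sit, so that (i) and (ii) carry the same sign. This is exactly where the total-turning-less-than-$\pi$ property of an upper/lower hull is essential, since it guarantees the accumulated rotation from $d_i$ to $d_{i'}$ stays within $(0,\pi)$ and never wraps past the normal $\hat n$ (which would flip the sign in (ii) and break the argument). The only remaining care is for degenerate collinear edges, where $v_{i'+1}$ may lie on $L_i$ and the strict inequality in (i) weakens to equality; there I would note that the strict inequality in (ii) still forces $r_{i'}$ to leave $L_i$ immediately, so the two rays meet in at most the single point $v_{i'+1}$, which is not on $r_i$, preserving disjointness. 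Such collinearities are in any case excluded under a standard general-position assumption.
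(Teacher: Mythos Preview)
Your proposal is correct and follows essentially the same approach as the paper: both identify each intra-bunch I-curve with the forward ray extending the corresponding chain edge, and both use the convexity of the corridor chain to conclude these rays never meet. The paper phrases the convexity step as ``the backward extension of edge $v_kv_{k+1}$ already crosses every earlier edge-line, and two lines meet at most once, so the forward rays cannot meet,'' whereas you phrase it via supporting lines and a sign computation with the normal $\hat n$; these are equivalent readings of the same geometric fact. Your write-up is in fact more explicit than the paper's --- you actually derive the ray description of the I-curve from the additive-weight condition $d(s,v_{i+1})=d(s,v_i)+|v_iv_{i+1}|$, which the paper merely asserts before the lemma, and you make the role of the total-turning-$<\pi$ bound (the upper/lower-hull property) explicit.
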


\begin{figure}
\centerline{\epsfxsize=180pt \epsfbox{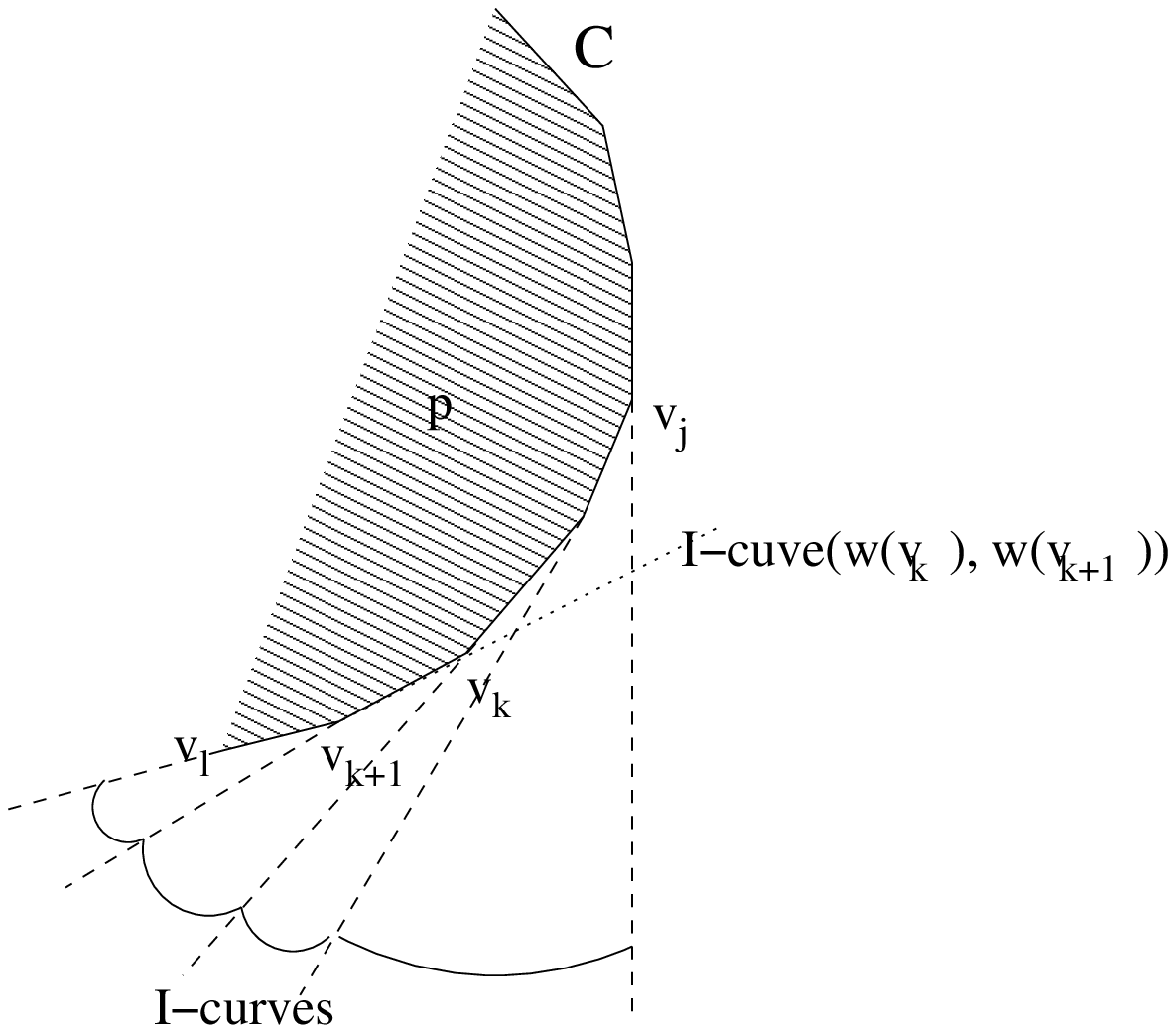}}
\caption{\label{fig:divergingicurves} Diverging I-curves}
\end{figure}

\begin{proof}
Consider a bunch $B$ with wavefront segments $w(v_j), \ldots, w(v_l)$.
Suppose $v_j$ lies on corridor convex chain $C$.
See Fig. \ref{fig:divergingicurves}.
Let $p$ be a point interior to the hull formed by the vertices of $C$, and $v_j, \ldots, v_l$ occur in clockwise (resp. counter-clockwise) direction w.r.t. $p$.
Since any two non-parallel lines can intersect at one point at most, and any edge $v_kv_{k+1}$ for $j< k \le l-1$ extended in coutner-clockwise (resp. clockwise) direction w.r.t. $p$ intersects all lines induced by edges in the set $\{v_jv_{j+1}, \ldots, v_{k-1}v_k\}$, the I-curves will not intersect in future on wavefront expansion.
\end{proof}

Let $CC, CC'$ be two corridor convex chains.
And, let the vertices of $CC$ be $ \{v_1, v_2, \ldots, v_i, v_{i+1},$ $\ldots, v_j=v_z, v_{j+1}, \ldots, v_k, \ldots, v_{n'}$\}.
Let $v_{zout}'$ be a vertex of $CC'$.
Suppose $w(v_{zout}')$ strikes $CC$ at a vertex $v_z$ along the tangent from $v_{zout}'$ to $v_z$.
The tangent $v_{zout}'v_z$ defines a sequence $S$ of $CC$ vertices, $v_{j+1}, \ldots, v_{n'}$, none of which are visible from $v_{zout}'$.
We initiate a bunch $B(v_z, v_{n'})$, whose wavefront segment centers are $v_j, v_{j+1}, \ldots, v_{n'}$, and insert $B(v_z, v_{n'})$ into the wavefront.
\hfil\break

\begin{figure}
\centerline{\epsfxsize=300pt \epsfysize=280pt \epsfbox{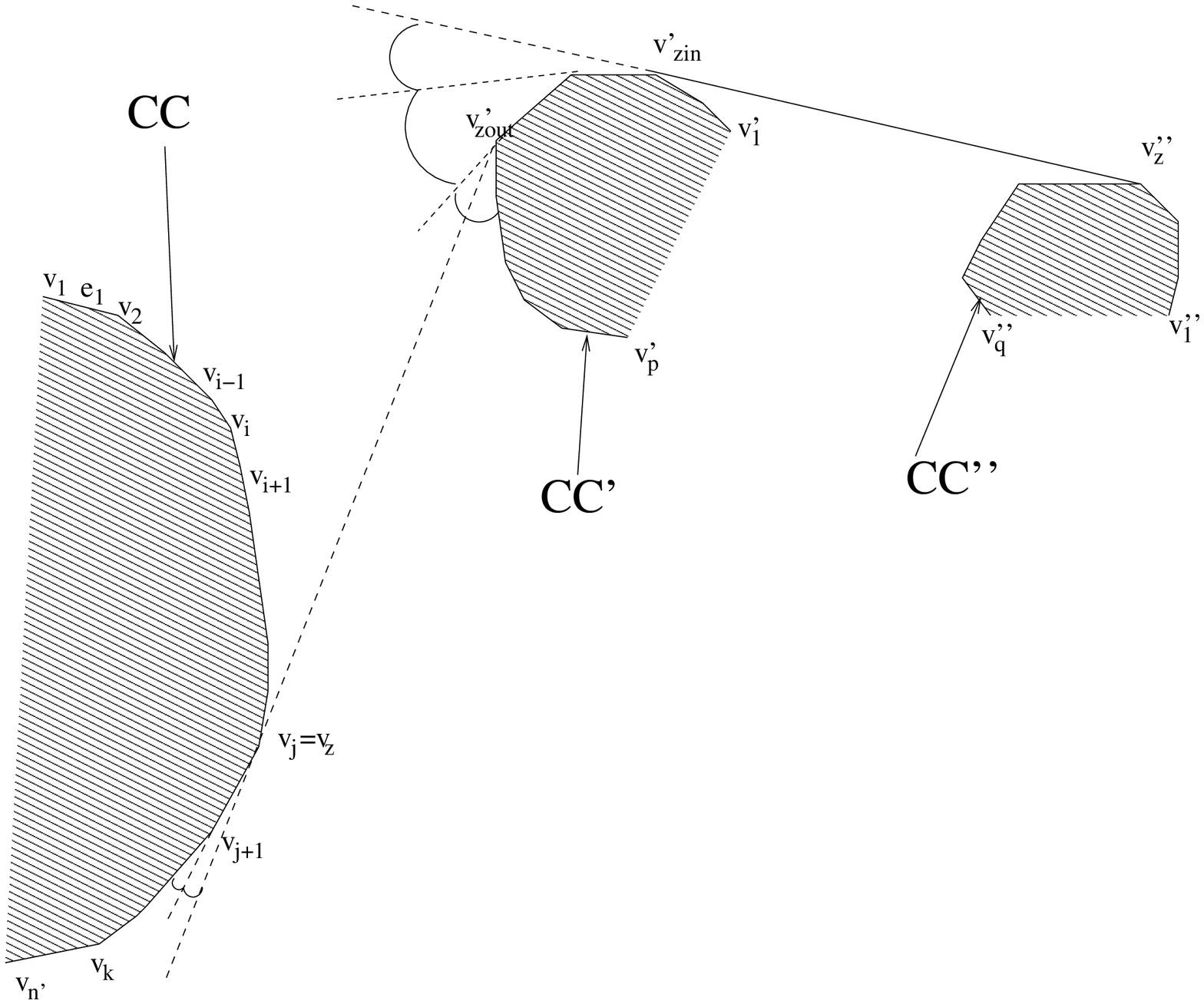}}
\caption{\label{fig:bunchinit}{Bunch initiation}}
\end{figure}

The initial wavefront consists of a (degenerate) bunch with one segment, $w(s)$.
Bunches are initiated whenever a wavefront segment strikes a corridor convex chain tangentially.
As shown in Fig. \ref{fig:bunchinit}, when a segment from a bunch initiated from a convex chain $CC''$ strikes another convex chain $CC'$ with $v_z''v_{zin}'$ as a common tangent, we initiate a bunch $B'(v_{zin}', v_{p}')$ from $v_{zin}'$.
Similarly, when a segment $w(v_{zout})$ of a bunch $B(v_{zin}', v_p')$ strikes with $v_{zout}'v_z$ as a common tangent between $CC$ and $CC'$, a new bunch $B(v_z, v_{n'})$ is initiated.
This process continues until we reach the corridor in which $t$ resides.
\hfil\break

We consider a closed corridor as a special case of open corridor.
See Fig. \ref{fig:closedcorri}.
Let $v_1, v_2$ be the apex points of two funnels.
Suppose $v_2$ is not yet struck.
After the wavefront strikes $v_1$, (two) bunches are initiated from $v_2$ whenever the wavefront expands from $v_1$ a distance that equals the shortest distance from $v_1$ to $v_2$.
This distance is precomputed for each closed corridor.
The bunches initiated from $v_2$ correspond to two corridor convex chains that  originate from $v_2$ and define the funnel with apex $v_2$.
Let $\{v_2, u_1, u_2, \ldots, u_k\}$ and $\{v_2, u_1', u_2', \ldots, u_l'\}$ be the sequence of vertices along the two convex chains of funnel with apex $v_2$.
Then the two bunches that are initiated comprise the wavefront segments $w(v_2), w(u_1) , w(u_2), \ldots w(u_k)$ and $w(u_1') , w(u_2') \ldots w(u_l')$ i.e., with the wavefront segment originating at $v_2$ is included in only one bunch.
\hfil\break

Hence at any stage of the algorithm, the wavefront is formed by a set of bunches.

\begin{property}
\label{prop:bunchseg} The wavefront $\cW (d)$, for any $d$, is composed of a set of bunches.
\end{property}

At any stage of the wavefront progression, a bunch $B(v_z, v_{n'})$ consists of a sequence of segments that are already initiated, followed by the rest of the uninitiated wavefront segments (if there are any) that possibly belong to $B(v_z, v_{n'})$ in future. 
However, for any vertex $v$ in $\{v_z, \ldots, v_{n'}\}$, $w(v)$ is part of $B(v_z, v_{n'})$ if and only if the wavefront strikes $v$ when the wavefront is at Euclidean distance $d(s, v_z) + d(v_z, v)$ from $s$.
Each bunch is maintained in a balanced tree data structure.
\hfil\break

\begin{figure}
\centerline{
\subfigure[Both the bunch and $w(v_z)$ are initiated at $v_z$]{\epsfxsize=180pt \epsfbox{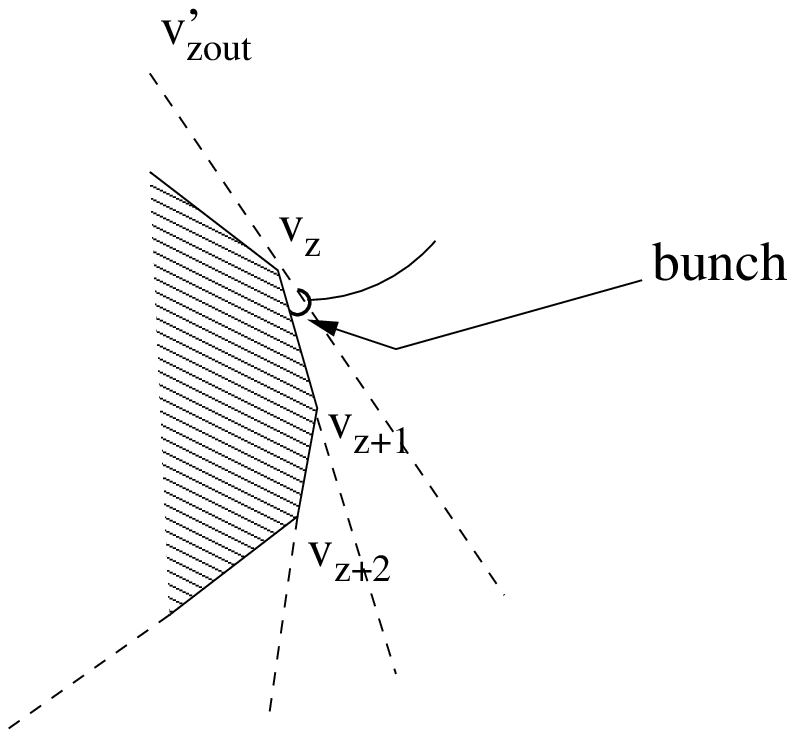}}
\subfigure[$w(v_{z+1})$ is initiated in the bunch]{\epsfxsize=180pt \epsfbox{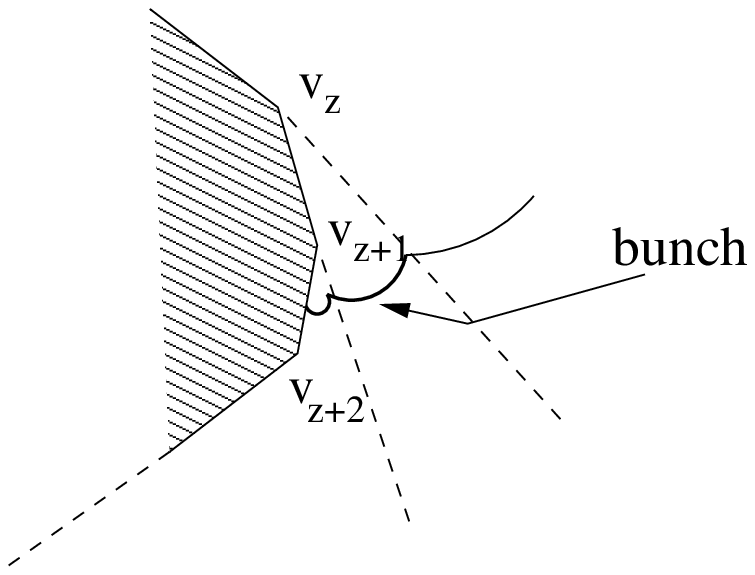}}
}
\centerline{
\subfigure[$w(v_{z+2}$ is initiated in the bunch]{\epsfxsize=180pt \epsfbox{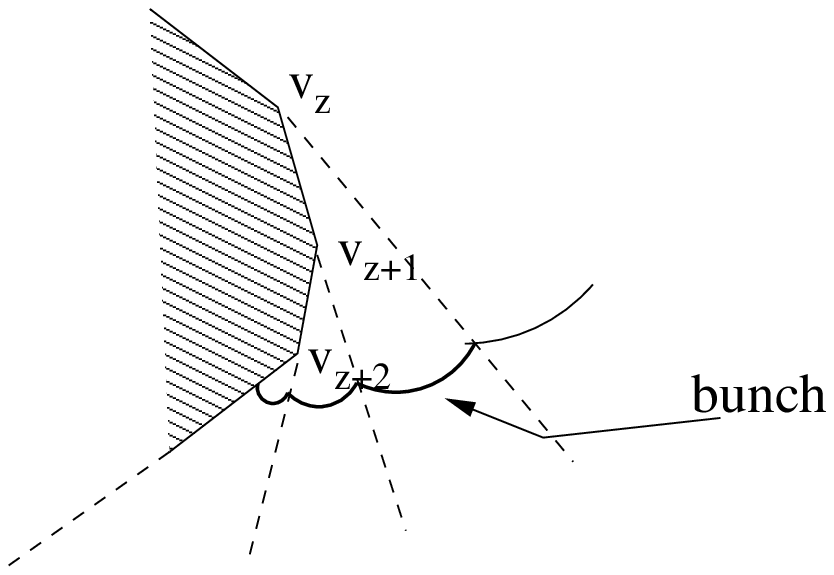}}
}
\caption{\label{fig:bunchprogr}{Initiation of wavefront segments within a bunch}}
\end{figure}

As Fig. \ref{fig:bunchprogr}(a) shows, when $v_z$ is struck by the wavefront such that the line joining $v_z$ and $v_{zout}'$ is a tangent to $UH$, we initiate a bunch $B(v_z, v_{n'})$ with $w(v_z)$.
With the wavefront expansion, when $w(v_z)$ strikes $v_{z+1}$, we initiate $w(v_{z+1})$ in bunch $B(v_z, v_{n'})$.
See Fig. \ref{fig:bunchprogr}(b).
Similarly, $w(v_{z+2})$ is initiated in bunch $B(v_z, v_{n'})$ with further wavefront expansion.
See Fig. \ref{fig:bunchprogr}(c).
Although the bunch $B(v_z, v_{n'})$ is inserted into the wavefront when $w(v_z)$ is initiated, the rest of the wavefront segments are in $B(v_z, v_{n'})$ when the wavefront strikes their corresponding centers.
The satellite data associated with the nodes of the data structure corresponding to every bunch $B$ facilitates in determining which wavefront segments within $B$ are initiated. 

\paragraph{Bunch I-curves and Associations} \hfil\break

Since segments within a bunch does not interact between themselves, only the interactions between bunches, and the interaction of bunches with $\partial B$ are of interest. 
Given that both the number of bunches and $|\partial B|$ are a function of number of corridors, which is again a function of number of obstacles, the total number of events is a function of $m$. 
Similar to associations of segments with the boundary sections, we are interested in associations between bunches and the boundary sections.
\hfil\break

\begin{figure}
\center{
\subfigure[Associations of $B_1, w(s)$, and $B_2$ \hfil\break are respectively \{$b_1$\}, \{$b_1, b_2$\}, \{$b_2$\}]{\epsfxsize=220pt \epsfbox{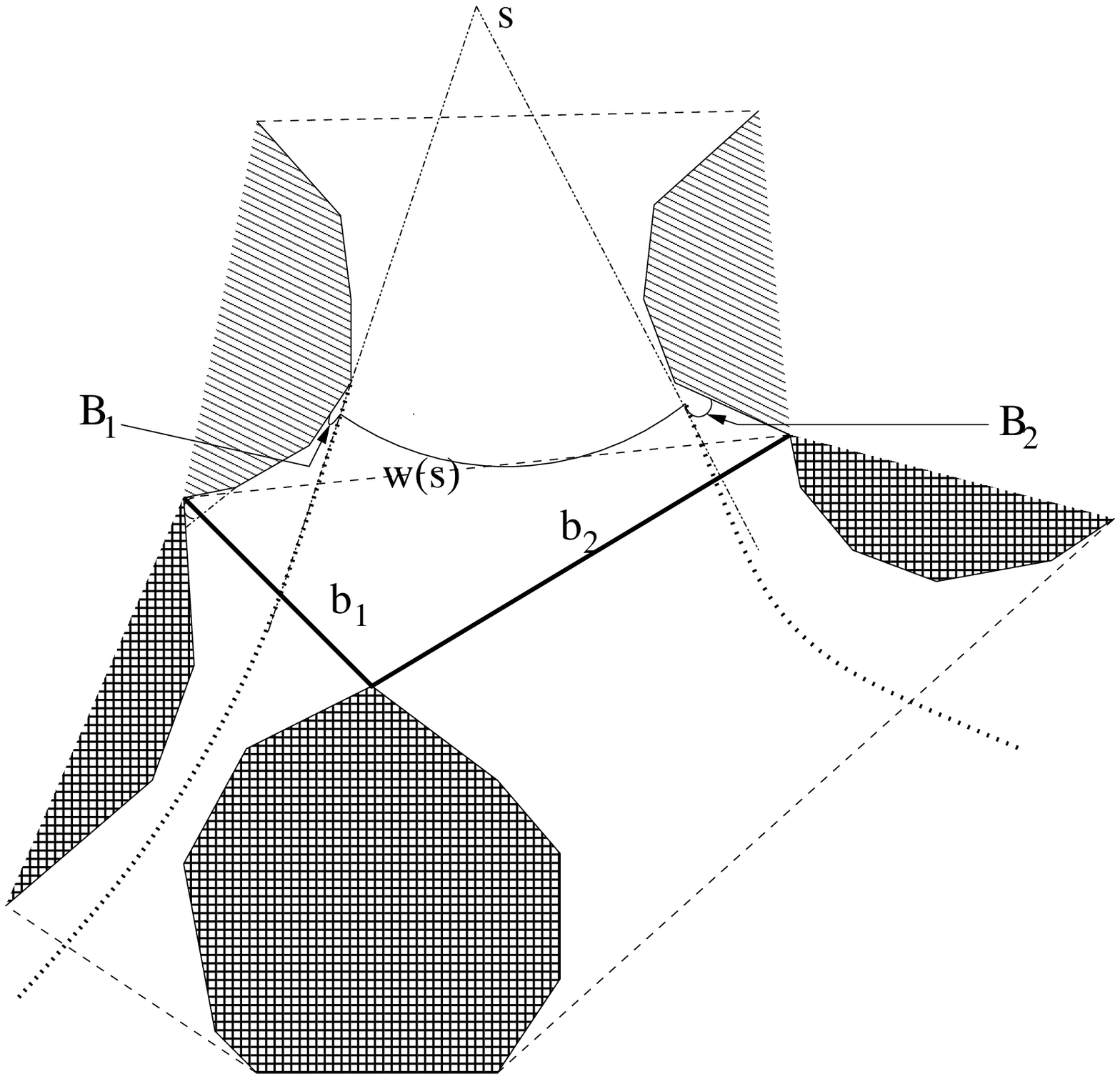}}
\subfigure[Associations of $B_1, w(s)$, and $B_2$ \hfil\break are respectively \{$C_2', b_3$\}, \{$b_3, C_2'', C_3'$\}, \{$b_4, C_3''$\}]{\epsfxsize=240pt \epsfbox{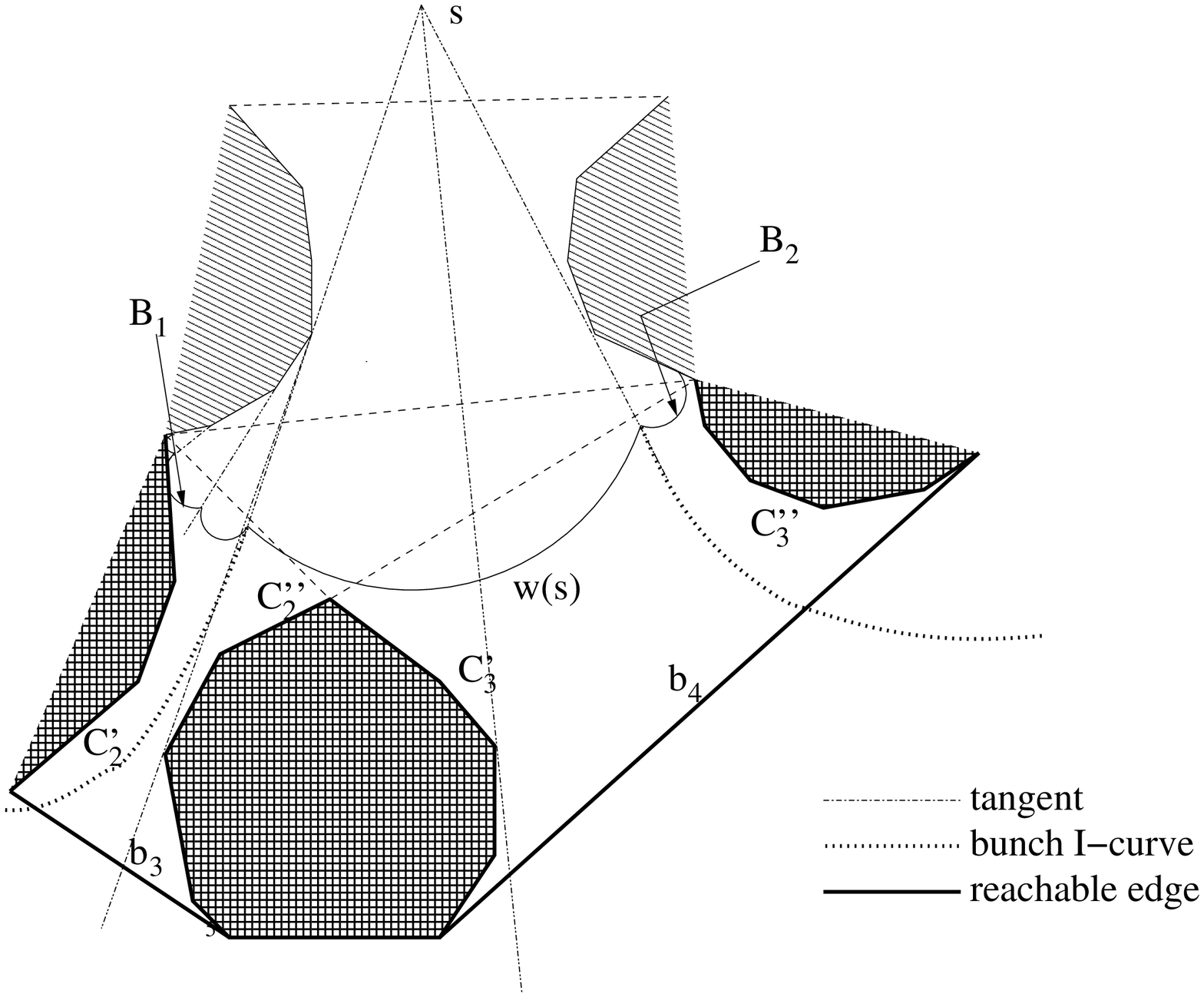}}
}
\caption{\label{fig:bunchcorriassoc} Associations of bunches with $\partial B$}
\end{figure}

Consider a corridor convex chain or enter/exit boundary $g$ in a boundary cycle.
Let $B_g$ be the set of bunches from each of which $g$ is reachable.
Every bunch $S \in B_g$ is {\bf associated} with $g$ (or, $g$ is associated with bunch $S$) if and only if a point on $g$ has shortest Euclidean distance to $s$ via the center of a segment $w \in S$.
The  association is defined by the relation: ${\cal A} \subseteq {\cal G} \times {\cal B}$, where ${\cal G}$ is the set of corridor convex chains or enter/exit boundaries in $\cP$, and $B$ is the set of bunches formed during the course of algorithm.
Again, note that it is not necessary for any two bunches in $B_g$ to be contiguous in the wavefront.
The example in Fig. \ref{fig:bunchcorriassoc}(a) shows an association of bunches with $\partial B$, and how the associations are updated in Fig. \ref{fig:bunchcorriassoc}(b).  

\begin{lemma}
\label{lem:contigprop2} There exists an association {\cal A} such that the sequence of boundary edges on a boundary cycle that are associated with a bunch is a contiguous sequence.
This is known as {\it contiguity property for bunches}.
\end{lemma}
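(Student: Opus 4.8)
The plan is to mirror the proof of Lemma \ref{lem:contigprop1}, lifting the single-segment reassociation argument to the level of entire bunches and using the bunch structure only to certify that the local exchanges remain legal. Suppose the contiguity property fails, so that under any reassignment some bunch $B_a$ stays associated with two disjoint sections of a boundary cycle, say $e_i,\ldots,e_j$ and $e_k,\ldots,e_l$, while an intervening section $e_j,\ldots,e_k$ is associated with a different bunch $B_b$. By the definition of bunch–boundary association there is a point $p'$ on the left section with a shortest path to $s$ through some center $x$ of $B_a$, a point $p''$ on the right section with a shortest path through some center $y$ of $B_a$ (possibly $y\ne x$), and a point $p_z$ on the middle section with a shortest path through a center $z$ of $B_b$. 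Since $p_z$ lies on the boundary cycle between $p'$ and $p''$, the last link $p_z z$ of the path through $B_b$ must cross either $p'x$ or $p''y$; assume without loss of generality that it crosses $p''y$ at a point $p$.

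At $p$, Lemma \ref{lem:spnoncrossing} supplies two shortest paths to $s$, one through the $B_a$-center $y$ and one through the $B_b$-center $z$. Propagating these alternatives back to the witness points exactly as in the proof of Lemma \ref{lem:contigprop1}, I would conclude that $p_z$ also admits a shortest path through $y\in B_a$, so the middle edge carrying $p_z$ may be reassociated with $B_a$ rather than $B_b$. This local exchange is valid because it only reassigns an edge to a center that genuinely realizes a shortest distance, and it strictly reduces the number of places where the $B_a$-association is interrupted by $B_b$. Repeating the operation at every such interleaving therefore terminates in an association in which the edges carried by each bunch form a single contiguous run, which is the claimed contiguity property for bunches.

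Two points require the bunch structure and are where I expect the real work to lie. First, the witnesses $x$ and $y$ may be distinct segments of the same bunch, so I must justify attaching the middle edges to $B_a$ as a whole rather than to one designated center; here I would invoke Lemma \ref{lem:divergeprop}, whose diverging intra-bunch I-curves force the Voronoi regions of $w(x),\ldots,w(y)$ to appear along the boundary in the same order as their centers along the corridor convex chain, so that the segments of $B_a$ between $x$ and $y$ already sweep the intervening portion of the boundary and every reassigned edge keeps a legitimate $B_a$-segment. Second, I must check that the swap creates no new interleaving elsewhere, which holds because the operation is purely local to $p$ and, by the non-crossing property, the exchanged initial path-segments can be taken disjoint from the remaining shortest paths. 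The hard part, then, is the bookkeeping that keeps the whole bunch coherent during the exchange — reducing the bunch-level claim to the already-established segment-level claim of Lemma \ref{lem:contigprop1} while leaning on Lemma \ref{lem:divergeprop} to let the bunch act as a single coherent source.
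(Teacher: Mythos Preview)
Your proposal is correct and takes essentially the same approach as the paper, which simply states that the proof is similar to Lemma~\ref{lem:contigprop1}. You have in fact supplied more detail than the paper does, carefully noting where the bunch structure (distinct centers $x,y$ and Lemma~\ref{lem:divergeprop}) enters, whereas the paper is content to leave the adaptation to the reader.
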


The proof for the above Lemma is similar to Lemma~\ref{lem:contigprop1}.
\hfil\break

To help in associating bunches with $\partial B$, we define Voronoi regions corresponding to bunches. 
Let $w(v_i), w(v_{i+1}), \ldots, w(v_k)$ be the segments of a bunch $B(v_i, v_k)$.
Let $w(v_q), w(v_{q+1}), \ldots, w(v_r)$ be the segments of a bunch $B(v_q, v_r)$.
Let the bunches $B(v_i, v_k), B(v_q, v_r)$ be adjacent along the wavefront such that the wavefront segments $w(v_k)$ and $w(v_q)$ are adjacent.
Then the I-curve($v_k, v_q$) is the {\bf inter-inter-bunch I-curve} between $B(v_i, v_k), B(v_q, v_r)$, denoted as I-curve($B(v_i, v_k), B(v_q, v_r)$). 
Although the intra-bunch I-curves are diverging rays (Lemma \ref{lem:divergeprop}), inter-bunch I-curves could be higher-order curves.
\hfil\break

\begin{figure}
\centerline{\epsfysize=250pt \epsfbox{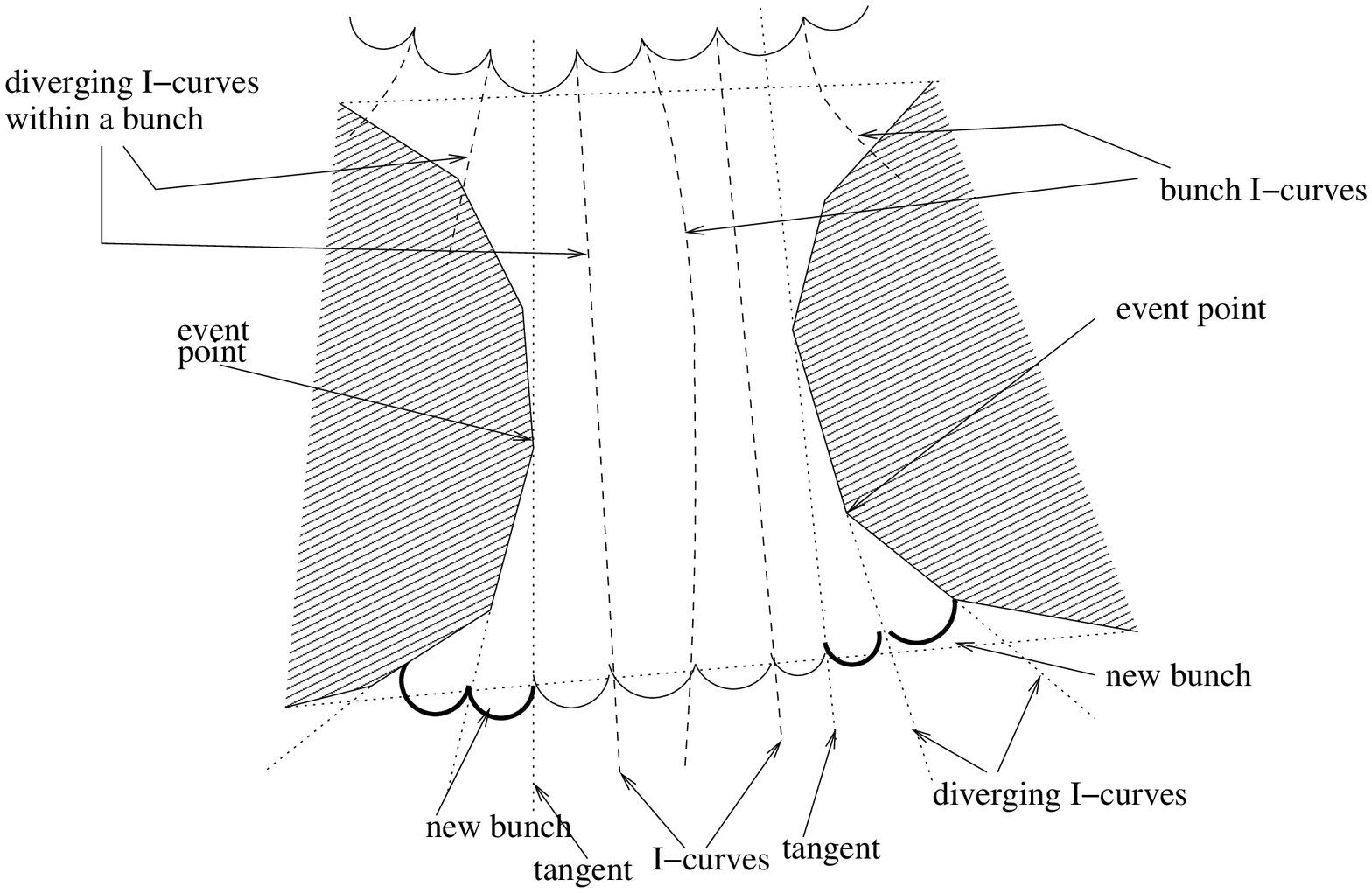}}
\caption{\label{fig:opencorri}{Open corridor with bunches}}
\end{figure}

An example in Fig. \ref{fig:opencorri} shows bunches, I-curves among wavefront segments in a bunch, inter-bunch I-curves.
The following two definitions help in maintaining the associations:

\begin{definition}
Let $B$ be the set of bunches associated either with a corridor convex chain or corridor enter/exit boundary $C$, i.e., $\forall b \in B, (C, b) \in \cA$.  
When $|B|>1$,  a {\bf waveform-section} for $C$, denoted by $WS(C)$, is the sequence of bunches in $B$.
Note that the bunches in $WS(C)$ need not be contiguous in the wavefront.
\end{definition}

\begin{definition}
Let a bunch $b$ be associated to a set $S$, where each element of $S$ is a corridor convex chain or enter/exit boundary.  
When $|S| > 1$, a {\bf boundary-section} for bunch $b$, denoted with $BS(b)$, is the contiguous sequence of corridor convex chains or enter/exit boundaries in $S$.
\end{definition}

Consider a corridor convex chain or enter/exit boundary $C$ for which a waveform-section, $WS(C)$, is defined.
The above partitioning ensures that if a bunch $b$, which has not yet struck $C$ before, strikes a point $p \in C$  before striking any other bounding edge in $\partial B$, then the bunch $b$ is guaranteed to be in $WS(C)$.  
Furthermore, consider a bunch $b$ for which a boundary-section, $BS(b)$, is defined.
On wavefront expansion, if the bunch $b$ strikes $\partial B$, it would do so only by striking a point $p \in BS(b)$.  
In general, a section of wavefront represents either a sequence of bunches (not necessarily contiguous) in a waveform-section, or the bunch associated to a boundary-section.  
\hfil\break

The {\it RV} defined below is used in initiating/updating boundary-/waveform-sections and it helps in the analysis.  

\begin{definition}
Suppose a bunch $a_r$ is associated with a contiguous sequence of corridor convex chains/exit boundaries, say $B_S$, i.e. $(e,a_r) \in {\cal A}, \forall e \in B_S$.  
If there exists at least one corridor convex chain or enter/exit boundary $C \in S$ such that $C$ is solely associated with bunch $a_r$ then {\bf $RV(a_r)$} is defined as $B_S$; otherwise, $RV(a_r)$ is $\phi$.
\end{definition}

For a bunch $b$, suppose $|RV(b)| = 0$.
Hence the bunch $b$ is associated solely to at least one corridor convex chain or enter/exit boundary, say $C$.  
Then the waveform-section of $C$, $WS(C)$, is obtained as the (not necessarily contiguous) sequence of bunches associated with $C$.  
For a bunch $b$, we define boundary-section for $b$, $BS(b)$, only if $|RV(b)| > 1$.
When defined, $BS(b)$ is the sequence of corridor convex chains/exit boundaries in $RV(b)$.  
The example in Fig. \ref{fig:wsbsassoc} shows $RV$s together with associations, boundary-sections, and waveform-sections.
\hfil\break

\begin{figure}
\centerline{\epsfxsize=430pt \epsfbox{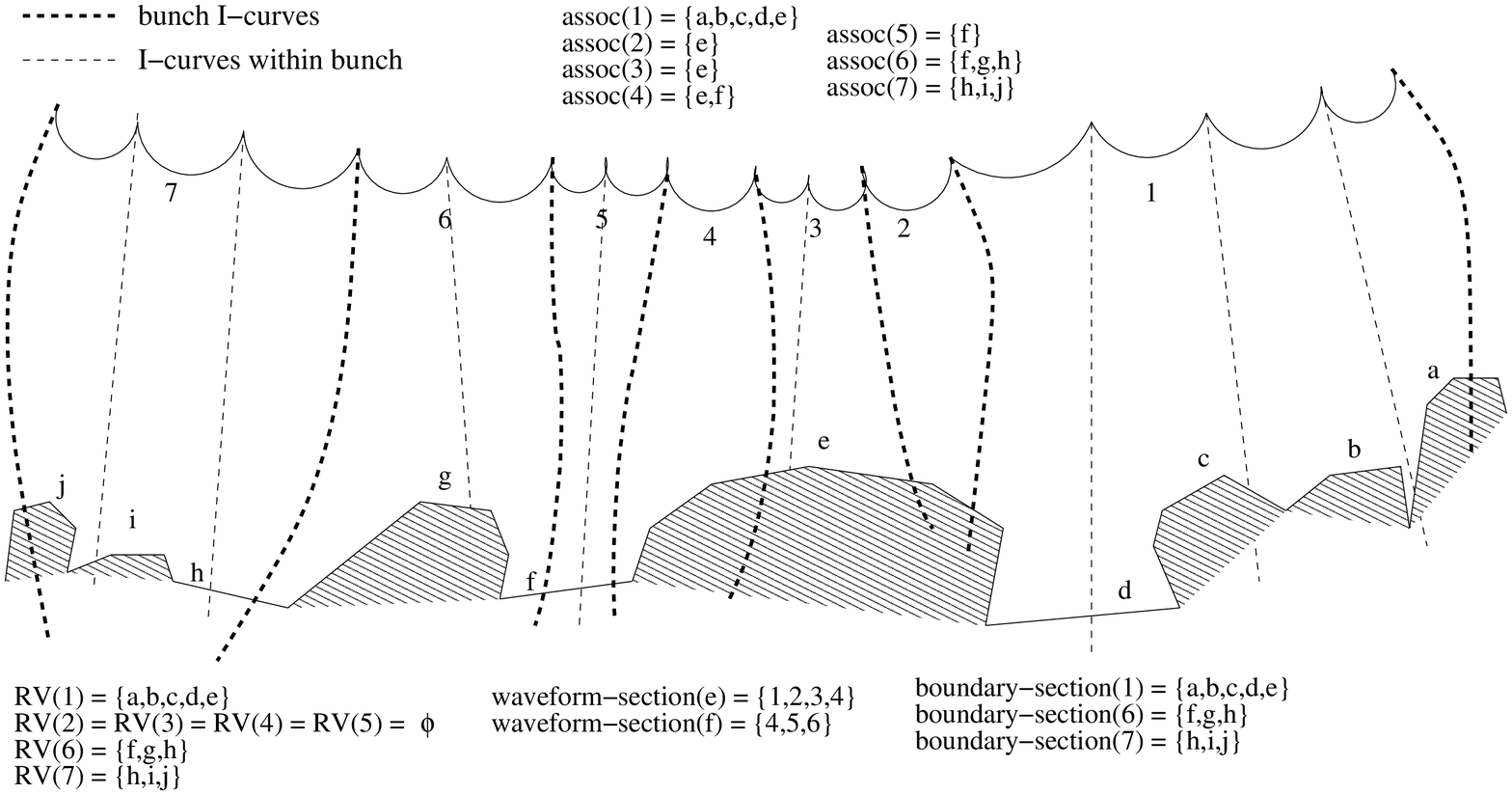}}
\caption{\label{fig:wsbsassoc} Associations, boundary-sections and waveform-sections of a section of wavefront}
\end{figure}

Fig. \ref{fig:wsbsassoc} shows $RV$s, associations, boundary- and waveform-sections for a section of boundary and a section of wavefront.
\hfil\break

As shown in Fig. \ref{fig:evolveboundcycle}, when the wavefront is initiated the boundary-section $BS$ of $w(s)$ is $b_1b_2b_3$.
We compute the shortest distance $d'$ between $BS$ and $w(s)$.
When $w(s)$ strikes $b_1$, a waveform-section $WS_1$ comprising the boundary $C_1', b_4, C_1'$ is associated with $w(s)$ is initiated, and $b_1$ is deleted from $BS$. 
At the next event point i.e., when $w(s)$ strikes $b_3$, a new waveform-section $WS_2$ comprising the boundary $C_3'', b_6, C_3'$ is associated with $w(s)$ is initiated, and $b_3$ is deleted from $BS$. 
We continue initiating/updating/deleting waveform- and boundary-sections until the wavefront strikes $t$.
\hfil\break

Boundary-sections are primarily useful when the boundary splits.
The data structure that stores the boundary sections' facilitates in finding the convex chain vertices that are visible from the vertices of wavefront segments.
This in turn helps in initiating bunches.

\section{Algorithm Outline}
\label{sect:algooutline}

Initially, the wavefront is a segment $w(s)$ with radius $\epsilon$, and the boundary-section (also, the initial boundary cycles) is the triangle in which $s$ resides.
The event points are based on the interaction of wavefront with the boundary cycles. 
The events occur as the wavefront progresses and are maintained in a min-heap, with the corresponding shortest distance at which an event occurs as the key. 
Based on the type of event, an event handler procedure is invoked to handle the event.
The event points are categorized into the following types:  \hfil\break
\hfil\break

\begin{tabular}{lp{4.8in}}
Type-I &
Occurs when a segment $w(v)$ in a waveform-section $WS$ strikes the associated corridor convex chain or enter/exit bounding edge $C$.
This event is determined by computing the shortest distance between $WS$ and $C$. 
The event handler accomplishes the following:
\vspace{-0.1in}
\begin{enumerate} \itemsep -2pt
\item Updating relevant waveform-sections and their associations.
\item Updating relevant boundary-sections and their associations.
\item Handling relevant boundary splits if there are any.
\item Using the new associations, computing the shortest distances and pushing corresponding event points to min-heap.
\item If the event is occurred due to wavefront strike of an enter/exit bounding edge of a corridor $C'$, then Type-III events are pushed to the min-heap: these event points correspond to the wavefront progression at which new bunches from the convex chains of $C'$ are initiated.
\end{enumerate}
\end{tabular}

\begin{tabular}{lp{4.8in}}
Type-II &
Occurs when a segment $w(v)$ in a bunch $B$ strikes the associated boundary-section $BS$.
This event is determined by computing the shortest distance between $BS$ and $w(v)$.
The handling procedure is same as the Type-I handler.  \hfil\break
\end{tabular}

\begin{tabular}{lp{4.8in}}
Type-III &
Occurs when a segment $w(v)$ in a section of wavefront $SW$ strikes a corridor convex chain $C$ at a point of tangency $p$ such that the line segment $vp$ is a tangent from $v$ to $C$. 
Either a Type-I or a Type-II event could cause Type-III event.
The handler procedure does the following:
\vspace{-0.1in}
\begin{enumerate} \itemsep -2pt
\item A new bunch $B$ is initiated from $p$.
\item The associations of bunch $B$ are determined and the relevant waveform-sections are updated.
\item Also, shortest distance corresponding to new associations are pushed to min-heap.
\end{enumerate}
\end{tabular}

\begin{tabular}{lp{4.8in}}
Type-IV &
Occurs when upper convex hull boundary approximations $UH_i, UH_j$ of any two bunches $b_i, b_j$ within a waveform-section $WS$ intersect for the first time. 
This is determined by computing the shortest distance between pairs of bunches within $WS$. 
As there could be $O(m)$ bunches within a $WS$, this could lead to $O(m^2)$ computations.
However, the number of events are reduced by utilizing the special structure in saving the waveform-sections (detailed in Subsection \ref{subsubsect:wstdatastr}). 
The event handler primarily does the following:
\vspace{-0.1in}
\begin{enumerate} \itemsep -2pt
\item Suppose $SW$ is the section of wavefront between $b_i$ and $b_j$ such that $t$ is not located in the region enclosed by $SW \cup \{b_i\} \cup \{b_j\}$.  Since no wavefront segment in $SW$ can cause a shortest path to $t$ without crossing some wavefront segment not in $SW$, utilizing the non-crossing property of shortest paths (Lemma \ref{lem:spnoncrossing}), bunches in $SW$ are deleted from the wavefront.
\item Updating relevant waveform- and boundary-sections.
\item Pushing new shortest distances with the updated waveform- and boundary-sections.
\end{enumerate}
\end{tabular}
\hfil\break
\hfil\break
The detailed descriptions of the determination and handling procedures are given in Section \ref{sect:evtpts}. 
We continue processing the events scheduled from the min-heap till the sink $t$ is struck.
When this happens, we compute the shortest path and distance from $s$ to $t$.

\section{Data Structures}
\label{sect:datastr}

This Section describes all the required data structures.

\subsection{Bunch Hull Tree ({\it BHT})}
\label{subsect:bhtdatastr}

\begin{figure}
\center{
\subfigure[A bunch]{\epsfxsize=170pt \epsfbox{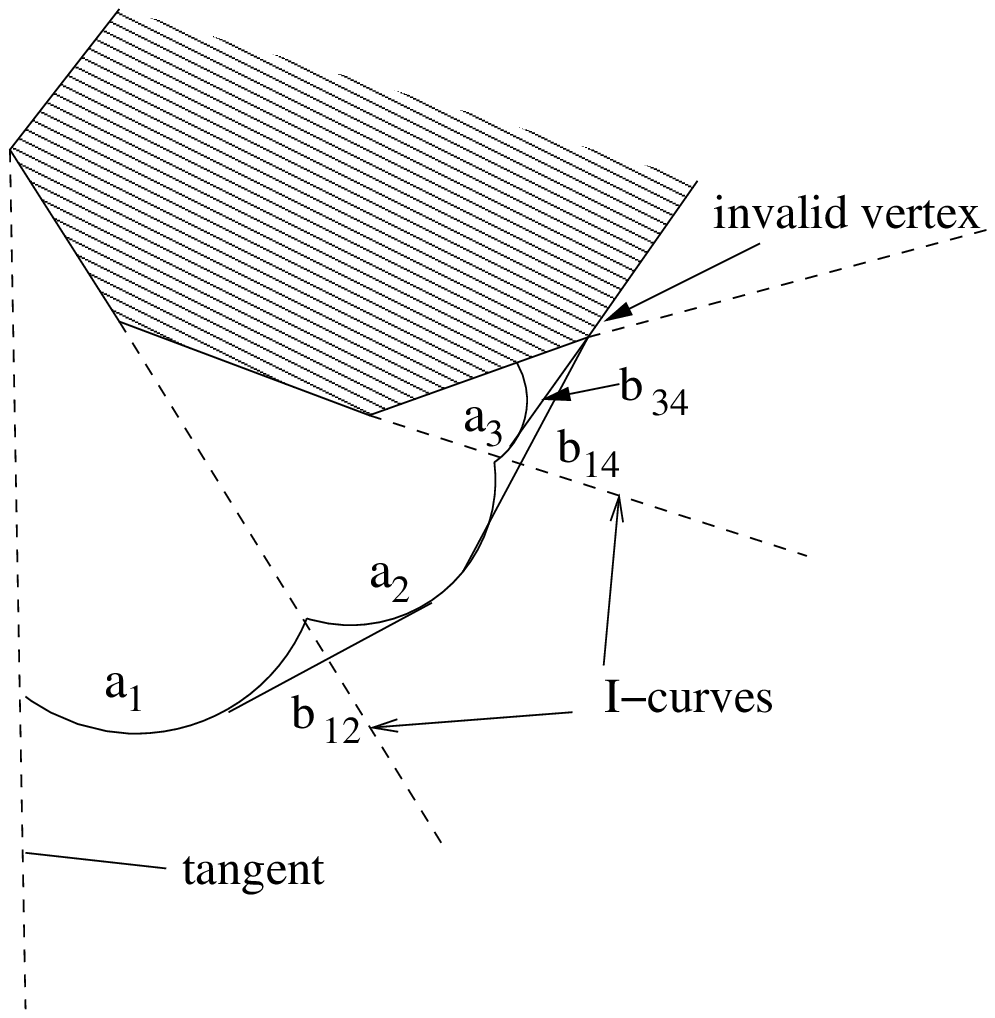}}
\subfigure[Corresponding $BHT$]{\epsfxsize=170pt \epsfbox{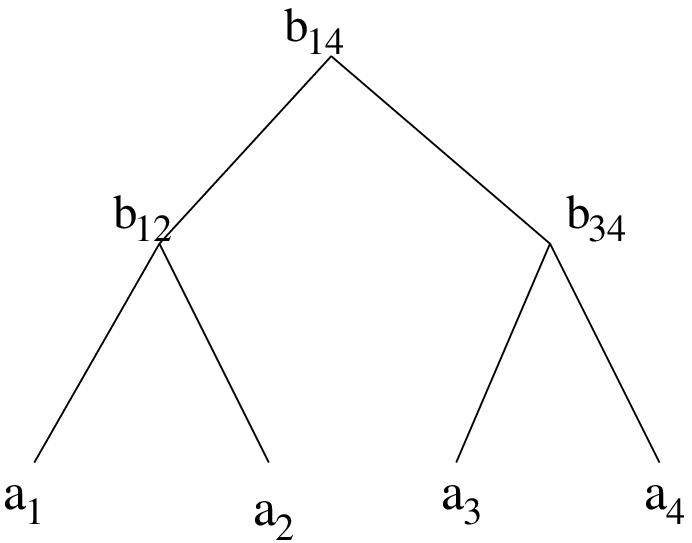}}
}
\caption{\label{fig:bhtexample} A bunch and its $BHT$}
\end{figure}

Each bunch is stored in a balanced tree structure, termed as a {\it bunch hull tree ($BHT$)}.
Suppose that there is no bunch originated from a corridor convex chain $CC$.
Let a vertex $v_j$ of $CC$ be struck by a wavefront segment $w(v_{zout}')$ such that the line $v_{zout}'v_j$ is a tangent to $CC$.
Also, let $v_j \in \cW (d)$. 
As shown in Fig. \ref{fig:bunchinit}, a new bunch $B(v_j, v_{n'})$ is initiated i.e., a bunch hull tree $BHT(v_j, v_{n'})$ corresponding to $B(v_j, v_{n'})$ is constructed.
A wavefront segment $w(v)$ in a bunch is termed as {\it initiated} if $d(s, v)$ is determined.
Since the wavefront struck $v_j$ and hence $d(v_j, s)$ is known, $w(v_j)$ is initiated.
At a given $\cW (d)$, the wavefront segments of a bunch that are initiated are termed as {\it valid segments} and the other wavefront segments of bunch are termed as {\it invalid segments} (see Fig. \ref{fig:bunchprogr}).
The invalid segments of $B(v_z, v_{n'})$ may possibly become valid in future as the wavefront progresses.
\hfil\break

The leaves of $BHT(v_j, v_{n'})$ consists of $w(v_j), \ldots, w(v_{n'})$, wherein $w(v_j)$ is the only valid segment and no wavefront segment initiated from the remaining vertices.
An example is shown in Fig. \ref{fig:bhtexample}.
Each tree node $p$ represents an upper hull of the wavefront segments at the leaves of the subtree at $p$.
Let $q, r$ be the child nodes of an internal node $p$ of $BHT(v_j, v_{n'})$.
Also, let $UH_q, UH_r$ be the upper hulls at $q$ and $r$ respectively.
The node $p$ stores the common tangent segment $br_p$ between $UH_q$ and $UH_r$.
This common tangent is known as a {\it bridge} between $UH_q$ and $UH_r$.
The upper hull at a node $p$ of the tree can be determined from $UH_q, br_p,$ and $UH_r$. 
\hfil\break

When initiating a $BHT$, only one wavefront segment within that $BHT$ is initiated.
Consider a leaf node $l_{v_k}$ of $BHT(v_j, v_{n'})$ that corresponds to $w(v_k)$, where $v_k$ is a vertex of $CC$ and belongs to \{$v_j, \ldots, v_{n'}$\}. 
>From the definition of a bunch, $w(v_k)$ is added to $BHT(v_j, v_{n'})$ if and only if $d(s, v_k) = d(v_k, v_{k-1}) + d(v_{k-1}, v_{k-2}) + \ldots + d(v_{j+1}, v_j) + d(v_j, s)$.
Suppose that this is the case.
To facilitate in automatic implicit insertion of $w(v_k)$ to $BHT$, the data member {\it wpupdate} associated with $l_{v_k}$ stores the negated Euclidean distance along $CC$ between $v_j$ and $v_k$.
In other words, for $w(v_k)$ to be inserted to $BHT(v_j, v_{n'})$, the wavefront segments at $w(v_{j+1}), \ldots, w(v_{k-1})$ needed to be initiated in that order and $w(v_{k-1})$ requires to strike $v_k$.  
The root node $r$ stores $d(s, v_j)$ in a data member, {\it shortestdist}.

\begin{property}
\label{lem:wpupdate} 
Let $\cW (d)$ be the wavefront that caused a bunch $B(v_j, v_{n'})$ to be initiated.
Let $v_k \in \{v_j, \ldots, v_{n'}\}$.
Suppose $w(v_k)$ is struck by $w(v_{k-1}) \in B(v_j, v_{n'})$ when the wavefront is $\cW (d')$.
Also, let $r$ be the root node and $l_{v_k}$ corresponds to $w(v_k)$ in $BHT(v_j, v_{n'})$.
The wavefront segment $w(v_k)$ in $B(v_j, v_{n'})$ is valid if and only if $(d'-r.shortestdist+l(v_k).wpupdate) > 0$.
\end{property}

The root node of $BHT(v_j, v_{n'})$ refers vertex $v_{zout}'$ in {\it tangentstart}.

\begin{lemma}
\label{lem:spsdtos} The variables stored at roots and leaves of all the $BHT$s are sufficient to compute the shortest distance and shortest path from any valid segment $w(v_k)$ to $s$.
\end{lemma}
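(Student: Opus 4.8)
The plan is to treat the two assertions separately. The shortest-distance claim reduces to a single algebraic identity in the stored root and leaf fields, while the shortest-path claim is established by a recursive reconstruction driven by the \emph{tangentstart} pointers, where the real content lies in showing that the recursion is well-defined, correct, and terminating.

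For the distance, I would invoke the bunch-membership condition (equivalently, the validity characterization of Property~\ref{lem:wpupdate}): a segment $w(v_k)$ is valid in $BHT(v_j, v_{n'})$ exactly when $d(s, v_k) = d(s, v_j) + \sum_{i=j}^{k-1} d(v_i, v_{i+1})$. Since the root records $r.shortestdist = d(s, v_j)$ and the leaf records $l(v_k).wpupdate = -\sum_{i=j}^{k-1} d(v_i, v_{i+1})$, this reads
\[
 d(s, v_k) = r.shortestdist - l(v_k).wpupdate ,
\]
so the shortest distance is recovered from just the root datum and the single relevant leaf datum.

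For the path, I would first establish the factorization of a shortest $s$-to-$v_k$ path into three pieces: the convex-chain subpath $v_k, v_{k-1}, \ldots, v_j$ along $CC$, the tangent edge $v_j v_{zout}'$ with $v_{zout}' = r.tangentstart$, and a shortest $s$-to-$v_{zout}'$ path. The chain subpath is read directly off the leaves of $BHT(v_j, v_{n'})$ lying between $v_j$ and $v_k$ --- exactly the vertices that the chain-hugging equality used above forces onto the path; the tangent edge is the segment recorded in \emph{tangentstart}; and the remaining portion is obtained by recursion, because $v_{zout}'$ is itself a valid segment residing at a leaf of another $BHT$ whose root again carries a \emph{shortestdist} and a \emph{tangentstart}. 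The factorization itself follows because the bunch was created precisely when $w(v_{zout}')$ struck $v_j$ tangentially, so any shortest path to $v_j$ passes through $v_{zout}'$; prepending the chain-hugging subpath then yields the displayed decomposition.

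I expect the main obstacle to be proving that this recursion terminates at $s$ without cycling, and that the next $BHT$ is locatable from the stored data alone. For termination I would use a strict-decrease argument: each \emph{tangentstart} jump replaces the apex $v_j$ by $v_{zout}'$ with $d(s, v_{zout}') < d(s, v_j)$, since the tangent edge $v_j v_{zout}'$ has positive length and the shortest path to $v_j$ strictly extends that to $v_{zout}'$; hence the sequence of apexes visited has strictly decreasing distance to $s$, and since $\cP$ has only finitely many vertices the process must halt. The only segment admitting no \emph{tangentstart} predecessor is the initial $w(s)$, so the recursion bottoms out exactly at the source, yielding the full sequence of line segments comprising the shortest path. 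The secondary point --- that from the reference $v_{zout}'$ one can reach the leaf and root of its enclosing $BHT$ --- I would dispatch by noting that $v_{zout}'$ was an initiated (valid) segment at the instant the bunch was created, so a vertex-to-leaf handle is available; thus the entire reconstruction depends only on root and leaf fields, as claimed.
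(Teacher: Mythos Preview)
Your proposal is correct and follows essentially the same decomposition as the paper: walk back along the convex chain from $v_k$ to the bunch's apex $v_j$, cross the tangent edge to $r.tangentstart = v_{zout}'$, and recurse. Your treatment is in fact more explicit than the paper's on two points: you give the closed-form distance $d(s,v_k) = r.shortestdist - l(v_k).wpupdate$ directly from the stored fields (the paper merely observes that $d(s,v_k)$ equals the wavefront parameter $d'$ at which $v_k$ was struck), and you supply a strict-decrease termination argument for the recursion that the paper leaves implicit in the phrase ``until reaching $s$.''
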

\begin{proof}
Let the leaf node $l_{v_k}$ of $BHT(v_j, v_{n'})$ corresponds to $v_k$. 
Let $\cW (d)$ be the wavefront that caused the initiation of bunch $B(v_j, v_{n'})$.
Suppose $w(v_k)$ is struck by $w(v_{k-1})$ when the wavefront is $\cW (d')$.
Then $d'$ is the $d(s, v_k)$.
The shortest path from a point $p$ on $w(v_k)$ to $s$ consists of line segment $pv_k$, edges along $v_k$ to $v_{k-1}$, $v_{r-1}$ to $v_{r-2}$, \ldots, $v_{j+1}$ to $v_z$, and the tangent from $v_z$ to $root.tangentstart$, including the shortest paths computed from $v_{zout'}$ in similar fashion, until reaching $s$.
\end{proof}

Each internal node $t_v$ stores the maximum $wpupdate$ of its children.
Since each leaf node $l_{v_k}$ stores the {\it negated Euclidean distance} along $CC$ between $v_j$ and $v_k$, a negative $wpupdate$ at $t_v$ indicates that all wavefront segments stored in the leaves of the subtree rooted at $t_v$ are invalid. 
\hfil\break

The important data members of a $BHT(v_j, v_{n'})$ are recapitulated w.r.t. a leaf node $l_{v_k}$, an internal node $t_v$, and the root $r$:
\hfil\break

\begin{tabular}{lp{3.7in}}

$l_{v_k}.wpupdate$ & Negated Euclidean distance along $CC$ between $v_j$ and $v_k$ \\

$t_v.wpupdate$ & maximum $wpupdate$ of its children \\

$r.tangentstart$ & refers to $v_{zout}'$, given that $w(v_{zout}')$ is the wavefront segment that struck $v_j$ to cause $B(v_j, v_{n'})$ \\

$r.shortestdist$ & stores $d(s, v_j)$

\end{tabular}

\paragraph{Initialization}  \hfil\break

Let a vertex $v_z$ of $CC$ is struck by a wavefront segment $w(v_{zout}')$ such that the line $v_{zout}'v_z$ is a tangent to $CC$.
The following cases needs to be considered:
\begin{enumerate}[(1)]  \itemsep -2pt
	\item Neither $w(v_z)$ nor any $w(v_k)$ from $CC$ is initiated. 
	\item $w(v_z)$ is not initiated but there exists a $w(v_k)$ for some $v_k \in \{v_{z+1}, \ldots, v_{n'}\}$.
	\item $w(v_z)$ is a valid segment in some $BHT$.
	\item $w(v_z)$ is an invalid segment in some $BHT$.
\end{enumerate}

\begin{figure}
\centerline{
\subfigure[Case (1)]{\epsfysize=240pt \epsfbox{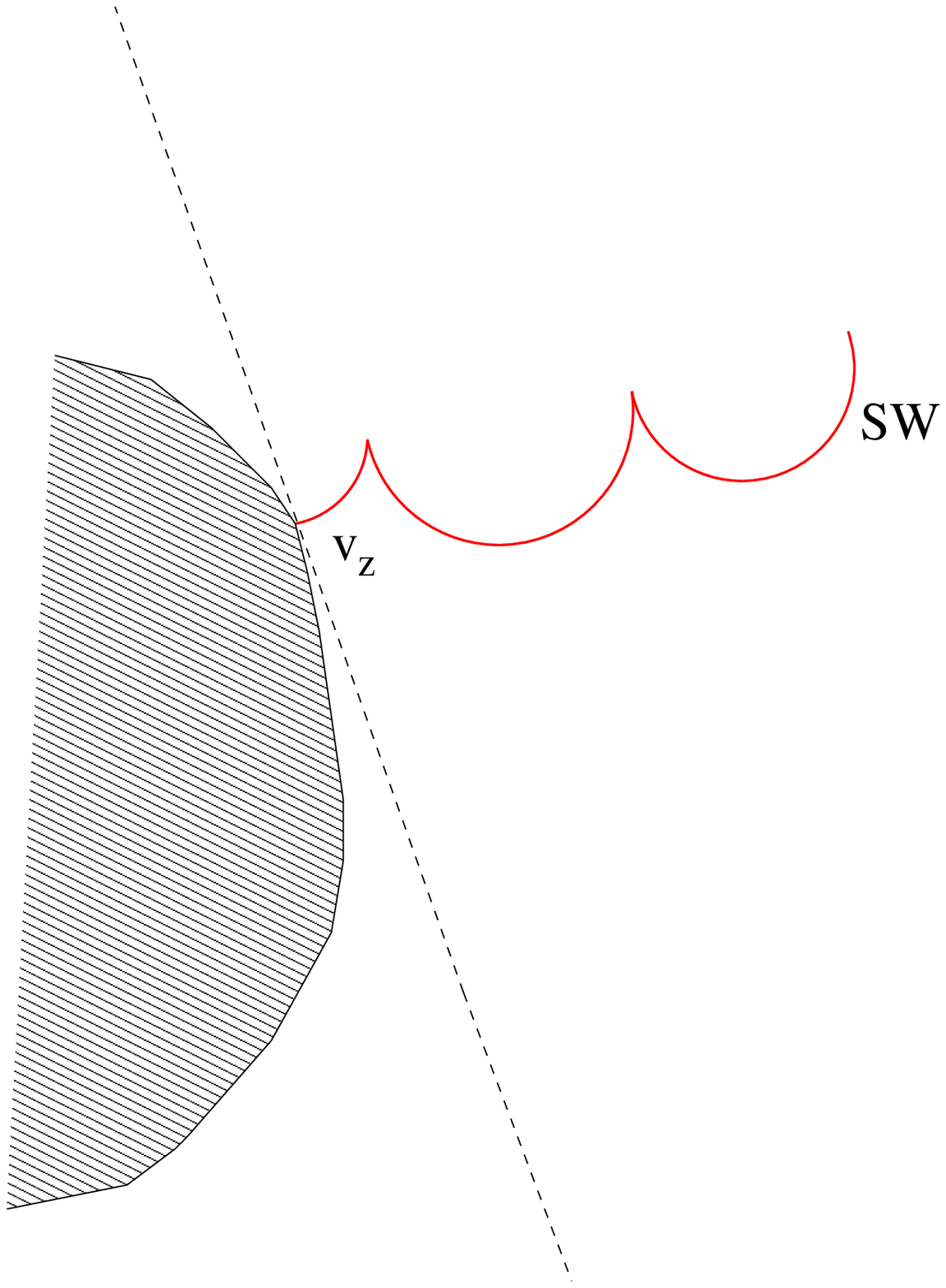}}
\subfigure[Case (2)]{\epsfysize=240pt \epsfbox{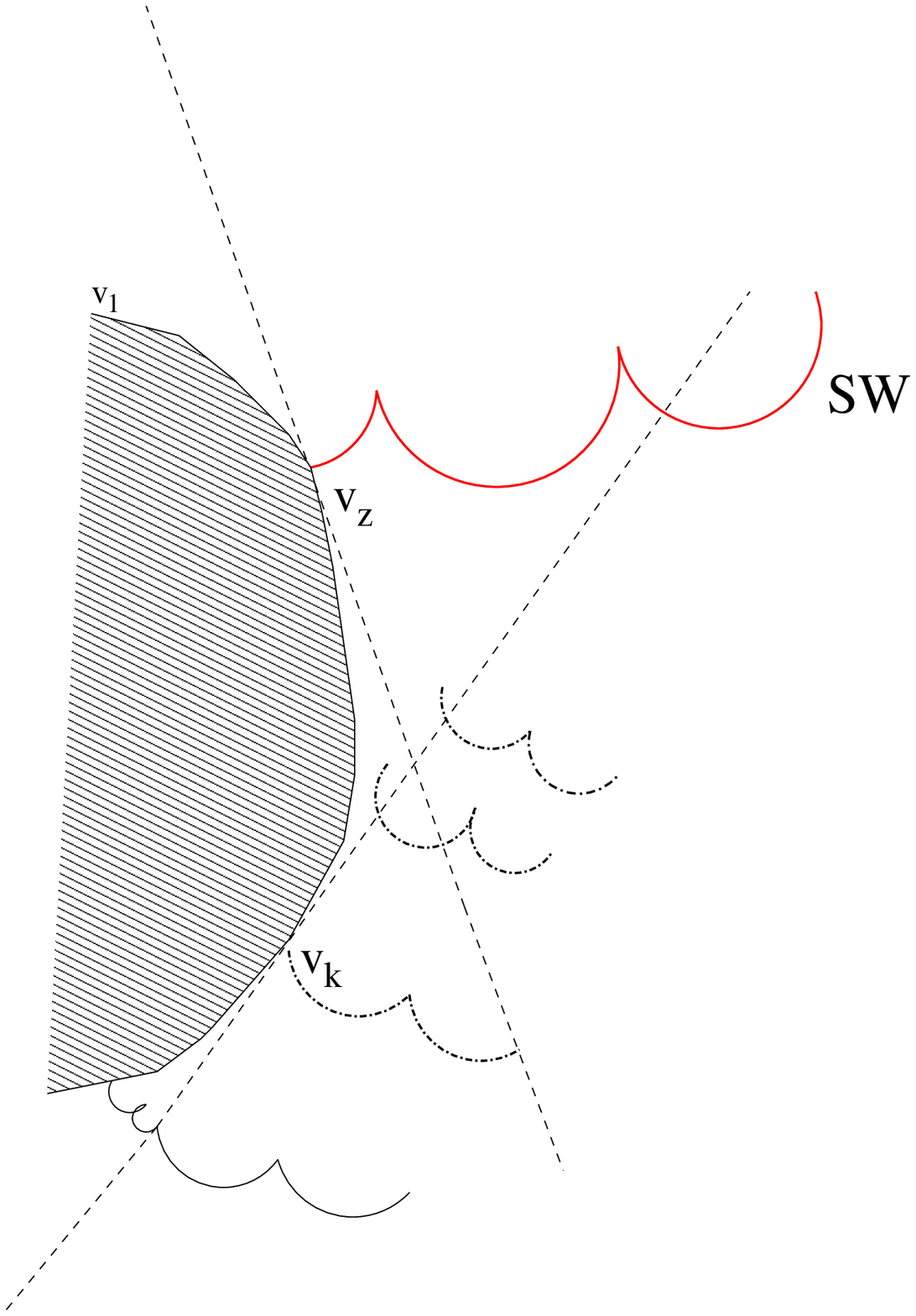}}
}
\centerline{
\subfigure[Case (3)]{\epsfysize=240pt \epsfbox{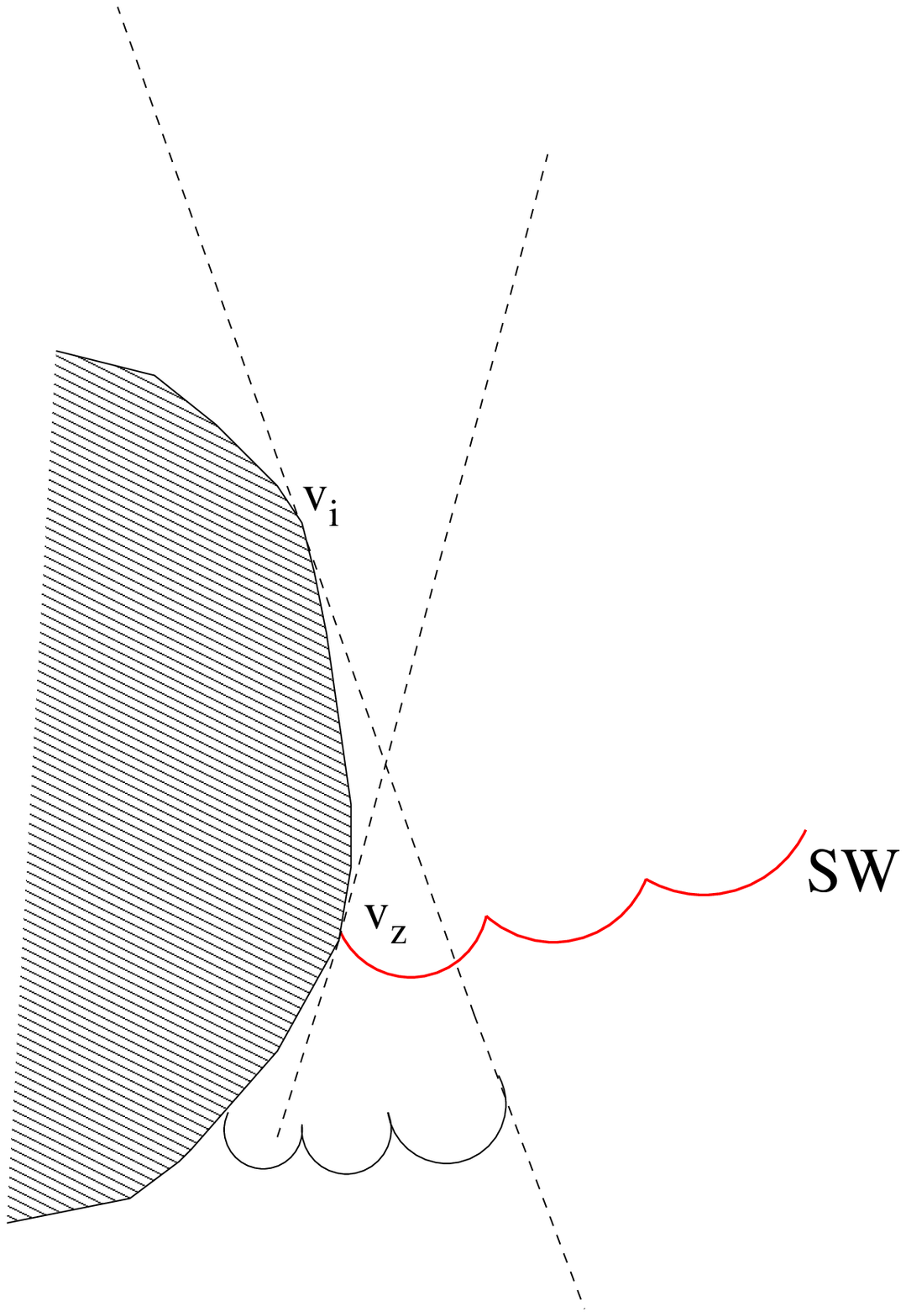}}
\subfigure[Case (4)]{\epsfysize=240pt \epsfbox{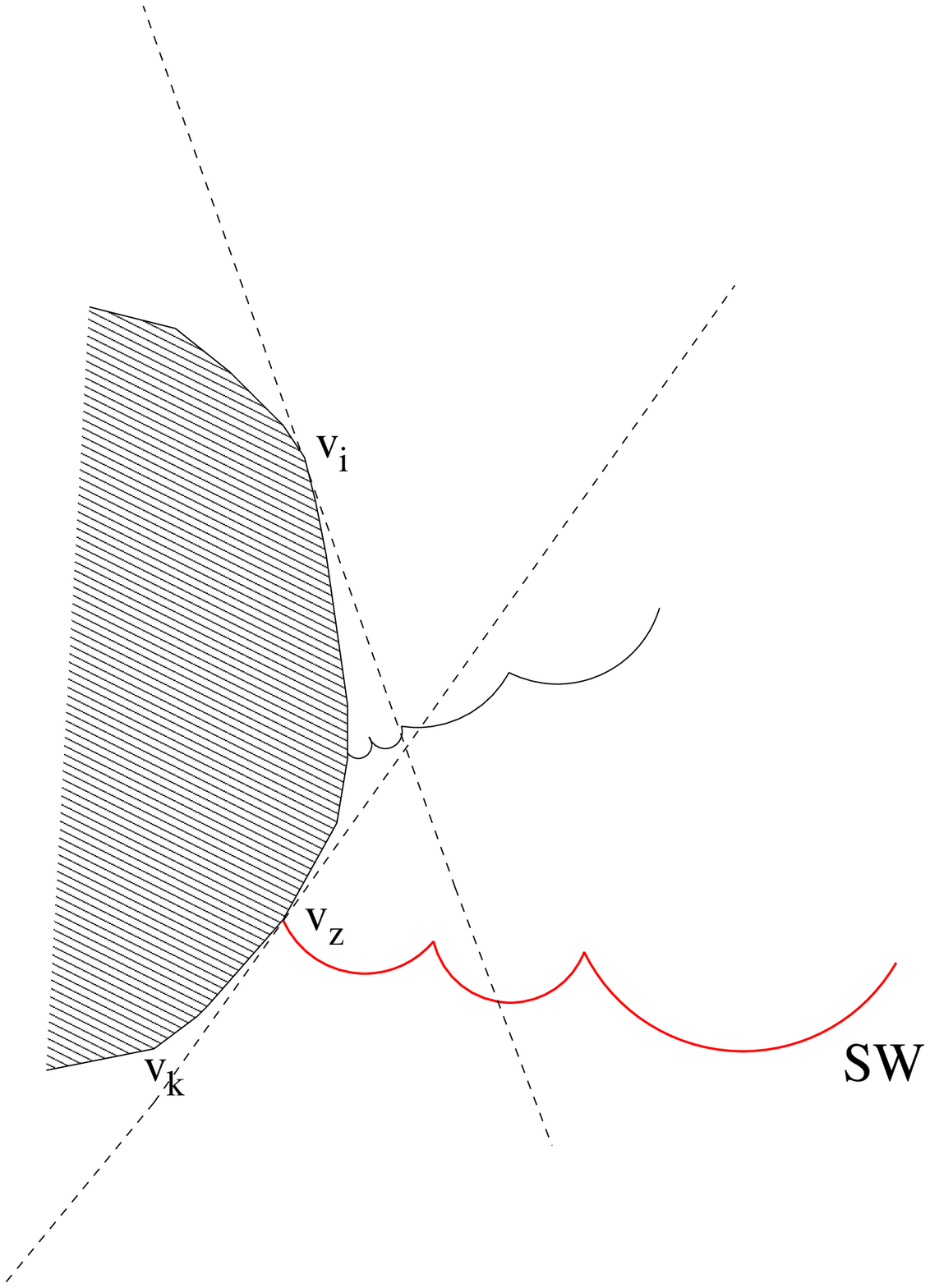}}
}
\caption{\label{fig:initbht} $BHT$ Initialization}
\end{figure}

For Case (1) (see Fig. \ref{fig:initbht}(a)), we create a new $BHT(v_z, v_{n'})$.
For the Case (2) (see Fig. \ref{fig:initbht}(b)), we neither need to create nor modify any $BHT$s.  
Let $SW'$ be the section of wavefront that struck $v_k$, which in turn caused a bunch.
Since the sink vertex $t$ is in its own corridor, progressing $SW$ and/or a 
bunch initiated from $v_z$ cannot reach this corridor without crossing $SW'$; 
but, the non-crossing nature of shortest paths (Lemma \ref{lem:spnoncrossing})
removes this case.
We do nothing for the Case (3) (see Fig. \ref{fig:initbht}(c)); the argument is same as Case (2).
For the Case (4) (see Fig. \ref{fig:initbht}(d)), let $v_z$ belongs to $BHT(v_i, v_k)$, whose leaves are $l_i, l_{i+1}, \ldots, l_z\hspace{-0.06in}=\hspace{-0.06in}l_j, l_{j+1}, \ldots, l_k$.
We split $BHT(v_i, v_k)$ into two $BHT$s, $BHT(v_i, v_{z-1})$ and $BHT(v_z, v_k)$, using the split procedure mentioned below. 
However, since $B(v_i, v_k)$ cannot reach corridor in which $t$ resides without crossing $v_{zout}'v_z$, we delete $BHT(v_i, v_k)$. 
In this case, the split procedure essentially needs to create only $BHT(v_z, v_k)$ from $BHT(v_i, v_k)$.

\paragraph{Splitting}  \hfil\break

Consider a bunch $B(v_j, v_w)$ with wavefront segments $w(v_j), \ldots, w(l_k), \ldots, w(l_w)$ in that order.
The process of forming two bunches $B(v_j, v_{k-1}), B(v_k, v_w)$ whose wavefront segments are $w(v_j), \ldots, w(l_{k-1})$ and $w(v_k), \ldots, w(l_w)$ respectively, is termed as a {\it bunch split}.
This is required due to either of the following reasons:
\begin{enumerate}[(a)] \itemsep -2pt
\item Case (4) mentioned with the bunch initialization.
\item All the wavefront segments in bunch $B(v_j, v_w)$ are no more associated with the same corridor convex chain or enter/exit boundary.
\end{enumerate}

The former is discussed in Case (4) part of bunch initialization. 
\hfil\break

Consider the latter.
Suppose we require the procedure to form two $BHT$s: one with leaves $l_j, \ldots, l_{k-1}$, and the other with leaves $l_k, \ldots, l_w$ i.e., we wish to form $BHT(v_j, v_{k-1})$ and $BHT(v_k, v_w)$ from $BHT(v_j, v_w)$. 
Let $p_a$ be the least common ancestor (LCA) of nodes $l_j, \ldots, l_{k-1}$, and let $p_b$ be the LCA of nodes $l_k, \ldots, l_w$.
Then $p_a$ with its subtree is the $BHT(l_j, l_{k-1})$, and $p_b$ with its subtree is the  $BHT(l_k, l_w)$.
For both of these $BHT$s, we update the $wpupdate$ data member for each internal node occurring along the leftmost and rightmost branches i.e., along the split paths.
As $wpupdate$ of no leaf node is changed, $wpupdate$ of no other internal node needs to be changed.
For $BHT(l_j, l_{k-1})$, the $tangentstart$ and $shortestdist$ members are same as the root of $BHT(v_j, v_w)$.
For $BHT(l_k, l_w)$, $tangentstart$ of root refers to $v_z$ with a special flag enabled to denote that this bunch is formed during bunch splits which helps in finding shortest paths passing through $v_k$; and, $shortestdist$ stores the shortest Euclidean distance along the boundary of $CC$ from $v_j$ to $v_k$ added with $d(s, v_j)$. 

\begin{property}
\label{lem:ptangencycorr} At any $\cW (d)$, any vertex $v_l$ is a leaf in at most one $BHT$.
\end{property}

\begin{lemma}
\label{lem:bunchinittime} The bunch initiation takes $O(n)$ time.
\end{lemma}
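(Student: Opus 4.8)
The plan is to read Lemma~\ref{lem:bunchinittime} as a bound on the \emph{cumulative} cost of all bunch initiations over the entire run of the algorithm (a single initiation can cost $\Theta(n)$ and there are up to $O(m)$ of them, so the bound cannot come from summing worst cases and must be amortized). First I would separate the work into the two things an initiation can do: build a fresh $BHT$ from scratch (Case (1) of the Initialization), and extract-and-discard part of an existing $BHT$ (Case (4), via the split procedure); Cases (2) and (3) do no work. The accounting device is to charge the construction of $BHT(v_z, v_{n'})$ to its own leaves $v_z,\ldots,v_{n'}$, giving each leaf $O(1)$ charge per fresh build in which it participates.

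The argument then rests on two pillars. The first is that a fresh $BHT$ over $k$ consecutive convex-chain vertices is built in $O(k)$ time: the vertices arrive already ordered along the chain and are in convex position, so assembling the balanced tree bottom-up and computing each internal bridge as a common upper tangent by binary search costs $\sum_i (k/2^i)\,O(i)=O(k)$ over the whole tree (this linearity, rather than $O(k\lg k)$, is what makes the final bound $O(n)$ rather than $O(n\lg n)$, so I would be careful to justify the convex-position speedup). The second pillar is that each vertex is the leaf of a freshly built $BHT$ only $O(1)$ times. Here I would invoke the non-crossing property (Lemma~\ref{lem:spnoncrossing}) together with Property~\ref{lem:ptangencycorr}: once a vertex has been absorbed into a bunch from one tangent direction, the suffix of the chain beyond the point of tangency can never be re-initiated from scratch — this is exactly the reason Cases (2) and (3) do nothing and Case (4) splits an existing tree instead of rebuilding it. Since a corridor convex chain is bounded by only its two enter/exit boundaries, a vertex can be freshly built into a $BHT$ at most twice. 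Combining the two pillars, the total fresh-build cost is $\sum_{\text{builds}} O(k) = \sum_{v} O(1) = O(n)$.

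The main obstacle is the Case (4) split cost, which must be shown not to dominate. The split itself reuses all existing tree nodes: it only locates the two least common ancestors $p_a,p_b$ and refreshes $wpupdate$, $tangentstart$, and $shortestdist$ along the two split paths, without creating any new leaf — so it never re-enters the leaf-charging of the previous paragraph. The delicate point is keeping the cumulative split cost at $O(n)$ rather than $O(n\lg n)$. I would charge each split to the vertices of the discarded prefix $v_i,\ldots,v_{z-1}$, which by non-crossing (the bunch $B(v_i,v_k)$ cannot reach the corridor of $t$ across the tangent $v_{zout}'v_z$) are deleted permanently from every $BHT$ and hence can be charged at most once; the crux is implementing the suffix-extraction so that its work is $O(1)$ amortized against those permanently removed leaves. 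If instead one only obtains $O(\lg n)$ per split, the fallback is to bound the number of splits by the number of bunch initiations, which is $O(m)$, and fold the resulting $O(m\lg n)$ into the global running time. Everything else follows routinely once the $O(k)$ fresh-build bound and the $O(1)$-fresh-builds-per-vertex accounting are in hand.
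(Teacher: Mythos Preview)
You have misread the lemma. In the paper, Lemma~\ref{lem:bunchinittime} is a bound on a \emph{single} bunch initiation, not on the cumulative cost over the run of the algorithm. The paper's proof is accordingly almost trivial: in Case~(1) a $BHT$ has $O(n)$ nodes and each node's data members are initialized in $O(1)$ time, giving $O(n)$; Cases~(2) and~(3) do nothing; Case~(4) costs $O(\lg n)$ by Lemma~\ref{lem:bunchsplittime}. The aggregate bound you are aiming for is stated and proved separately as Lemma~\ref{lem:typeIIIhandlingcost}, where the total over all $BHT$ creations and maintenance is $O(m\lg n + n)$.

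Your opening justification for the cumulative reading (``a single initiation can cost $\Theta(n)$ and there are up to $O(m)$ of them'') actually points in the opposite direction: the per-call $O(n)$ bound is exactly what the lemma asserts, and the paper never claims that summing these gives $O(n)$. Consequently, your two ``pillars'' and the delicate amortization of Case~(4) splits are machinery for a statement the lemma does not make. Even within your own argument, the fresh-build analysis is over-engineered: because the leaves of a new $BHT$ are consecutive vertices of a single corridor convex chain, the bridge at each internal node is simply the chain edge joining the rightmost leaf of the left subtree to the leftmost leaf of the right subtree, so it is found in $O(1)$ without any binary search; the $\sum_i (k/2^i)\,O(i)$ accounting is unnecessary.
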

\begin{proof}
There are at most $O(n)$ leaf nodes in a $BHT$.
At every node, we spend $O(1)$ time in initializing data members.
Hence, the Case (1) takes $O(n)$.
There is nothing to do in Cases (2) and (3).
As the Lemma \ref{lem:bunchsplittime} shows, Case (4) takes $O(\lg{n})$ time.
\end{proof}

\begin{lemma}
\label{lem:bunchsplittime} The split operation takes $O(\lg{n})$ time.  
\end{lemma}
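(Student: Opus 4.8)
The plan is to reduce the split to a standard operation on a height-balanced binary tree and then account separately for the three kinds of data a $BHT$ carries: the tree topology, the $wpupdate$ values, and the bridges stored at the internal nodes. Since a $BHT(v_j, v_w)$ is a balanced tree on at most $O(n)$ leaves, its height is $O(\lg{n})$, and this single fact will drive every part of the bound.

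First I would locate the split position. The leaves $w(v_j), \ldots, w(v_w)$ are ordered exactly as the vertices $v_j, \ldots, v_w$ occur along the corridor convex chain $CC$, so the leaf $l_{v_k}$ is found by one root-to-leaf descent in $O(\lg{n})$ time. This descent simultaneously exposes the split path and lets me read off the least common ancestors $p_a$ of $\{l_j, \ldots, l_{k-1}\}$ and $p_b$ of $\{l_k, \ldots, l_w\}$, so identifying the two resulting trees $BHT(v_j, v_{k-1})$ and $BHT(v_k, v_w)$ costs $O(\lg{n})$. The subtrees hanging off the split path are reassembled into the two output trees by the usual sequence of $O(\lg{n})$ joins along the path.

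Next I would handle the $wpupdate$ data and the root members. Because the split alters no leaf's $wpupdate$, only the internal nodes on the two split paths (the rightmost branch of the left tree and the leftmost branch of the right tree) can have their stored maxima change; each such maximum is recomputed from its two children in $O(1)$ time, and since a path has $O(\lg{n})$ nodes this contributes $O(\lg{n})$ in total. Finally, setting $tangentstart$ and $shortestdist$ at the two new roots is $O(1)$: for $BHT(v_j, v_{k-1})$ these are inherited from the original root, while for $BHT(v_k, v_w)$ the $tangentstart$ is made to reference $v_k$ with the split flag enabled and the $shortestdist$ is the chain distance from $v_j$ to $v_k$ added to $d(s, v_j)$.

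The main obstacle is the bridges. Every subtree that hangs off the split path is a complete subtree of the original $BHT$, so the bridge it stores is still the correct common tangent for the upper hull of its own leaves and needs no change; only the $O(\lg{n})$ bridges at the nodes reassembled along the split path can require recomputation. Here I would exploit the bunch geometry: by Lemma \ref{lem:divergeprop} the intra-bunch I-curves are diverging rays and the segment centers lie in convex position along $CC$, so the upper hulls being rejoined are nested sub-hulls of the original hull and their common tangent can be recovered in $O(1)$ from the bridge already stored at that node rather than by a fresh binary-search tangent computation. Establishing this $O(1)$-per-node bridge recovery — as opposed to the $O(\lg{n})$ cost of a general common-tangent search, which would only yield an $O((\lg{n})^2)$ bound — is the crux of the argument; once it holds, the search, the $wpupdate$ updates, and the bridge corrections each contribute $O(\lg{n})$, which proves the claim.
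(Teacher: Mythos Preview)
Your argument diverges from the paper's in a significant structural way. The paper does \emph{not} perform a join-based split: it simply takes the subtree rooted at $p_a$ (the LCA of $l_j,\ldots,l_{k-1}$) as the new $BHT(v_j,v_{k-1})$ and the subtree rooted at $p_b$ as the new $BHT(v_k,v_w)$, with the LCAs computed in $O(1)$ rather than by a root-to-leaf search. Because each output tree is an \emph{intact} subtree of the original, every internal bridge it already stores remains correct and is never touched; only the $wpupdate$ values along the leftmost and rightmost branches (their ``split paths'') and the two root records are updated, costing $O(\lg n)$ and $O(1)$ respectively. That is the entirety of the paper's proof.

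By reassembling hanging subtrees via joins you manufacture exactly the difficulty the paper avoids: you must now rebuild $O(\lg n)$ bridges, and you flag but do not close the claim that each such bridge can be recovered in $O(1)$ rather than by the generic $O(\lg n)$ common-tangent search. The geometric intuition from Lemma~\ref{lem:divergeprop} is reasonable, but the assertion that ``their common tangent can be recovered in $O(1)$ from the bridge already stored at that node'' is not justified --- the node being formed by a join combines two previously non-sibling sub-hulls, so there is no pre-existing bridge at that node to reuse. Without that step your argument yields only $O((\lg n)^2)$. The paper's subtree-extraction approach, whatever imprecisions it carries, sidesteps the bridge issue entirely, which is what makes its $O(\lg n)$ bound immediate.
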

\begin{proof}
The number of leaves in a bunch are $O(n)$.
Computing LCAs, $p_a$ and $p_b$, take $O(1)$ time. %using the algorithm from \cite{}.
As $BHT$s are balanced trees, updating $wpupdate$ along the split paths take $O(\lg{n})$ time.
Updates required at the root node take $O(1)$ time.
\end{proof}

\begin{lemma}
\label{lem:numbunches} The total number of bunches at any point of execution of the algorithm are $O(m)$.
\end{lemma}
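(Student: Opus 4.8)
The plan is to bound the total number of bunches ever created, since the number present at any instant is at most the number created; deletions (for instance those performed by Type-IV events under Lemma~\ref{lem:spnoncrossing}) only decrease the count. By definition a bunch is a maximal run of consecutive vertices $v_j,\ldots,v_k$ along a single corridor convex chain $CC$, so every bunch belongs to exactly one convex chain. I would therefore charge each bunch to its convex chain, and since there are $O(m)$ corridors and hence $O(m)$ convex chains and enter/exit boundaries, it suffices to show that each corridor accounts for only $O(1)$ bunches over the entire execution. A bunch comes into being in one of three ways: the single initial degenerate bunch $w(s)$; a tangential strike of a wavefront segment on a convex chain (a Type-III event); or an application of the split procedure. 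I will bound each class separately.

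First I would handle the tangent-initiated bunches. Such a bunch $B(v_z,v_{n'})$ is born when a segment $w(v_{zout}')$ strikes $CC$ tangentially at $v_z$ and consumes the entire tail $v_z,\ldots,v_{n'}$ cut off by the tangent $v_{zout}'v_z$; I charge it to $CC$. The key claim is that $CC$ receives only $O(1)$ such births. This should follow because the wavefront can reach $CC$ only from inside $CC$'s own corridor, and that corridor is entered solely through its two enter/exit boundaries, each of which is struck at most once. Combined with the non-crossing property of shortest paths (Lemma~\ref{lem:spnoncrossing}), which forces the section of wavefront reaching $CC$ from a given side to be contiguous and to sweep $CC$ monotonically, this limits the number of distinct tangent cuts of $CC$ to a constant. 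For closed corridors I would additionally count the two bunches started at the second funnel apex and charge them to that corridor. Since each of the $O(m)$ corridors contributes $O(1)$ convex chains, enter/exit boundaries, and apices, this yields $O(m)$ tangent-initiated bunches.

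Next I would bound the split-created bunches. Splits occur for two reasons. The Case-(4) splits in the $BHT$ initialization delete the original $BHT(v_i,v_k)$ and retain only $BHT(v_z,v_k)$, so they cause no net increase in the bunch count and may be ignored for an upper bound. The remaining splits arise when the segments of a bunch cease to be associated with a common convex chain or enter/exit boundary, which is exactly the event accompanying a boundary split. I would charge each such split to the coarse entity (convex chain or enter/exit boundary) whose association fragments, and then argue that each such entity is closed off from both of its sides only $O(1)$ times; since there are $O(m)$ such entities and each boundary split, together with the ensuing wavefront merger, induces only $O(1)$ bunch splits (each performed in $O(\lg n)$ time by Lemma~\ref{lem:bunchsplittime}), the number of split-created bunches is also $O(m)$.

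The main obstacle I anticipate lies in the second paragraph: rigorously establishing the $O(1)$-per-chain bound on tangent births, that is, ruling out a single convex chain being struck tangentially by many distinct segments arriving from many different origins. Making this precise requires combining the at-most-once striking of enter/exit boundaries with the non-crossing property (Lemma~\ref{lem:spnoncrossing}) and the contiguity properties (Lemmas~\ref{lem:contigprop1} and~\ref{lem:contigprop2}) to show that repeated tangent strikes on the same chain from the same side are impossible without violating non-crossing, while strikes from genuinely distinct directions are limited by the bounded number of corridor entries. A secondary care point is to verify that the boundary splits charged in the third paragraph number $O(m)$ rather than $O(n)$, by tying them to the coarse corridor structure rather than to individual triangulation edges, whose interactions with the wavefront are only bounded by $O(n+m)$. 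Summing the single initial bunch, the $O(m)$ tangent-initiated bunches, and the $O(m)$ split-created bunches gives the claimed $O(m)$ bound.
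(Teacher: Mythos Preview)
Your overall decomposition---charge each bunch to its corridor convex chain, bound tangent-initiated bunches and split-created bunches separately---matches the paper's. Your treatment of splits (charge each split to the enter/exit boundary that caused it, note that each such boundary is struck at most once) is essentially identical to the paper's.

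The difference lies in how the tangent-initiated bunches are counted, and here the paper takes a shorter path that sidesteps exactly the ``main obstacle'' you flag. You try to bound the \emph{total number ever created} and therefore have to rule out a single chain receiving many tangential strikes. The paper instead bounds only the \emph{number alive at any instant}: it observes that in Case~(4) of $BHT$ initialization the old bunch $BHT(v_i,v_k)$ is deleted when the new one $BHT(v_z,v_k)$ is formed, so at any moment at most two bunches per convex chain survive (one propagating toward each enter/exit boundary). This immediately gives $O(m)$ alive bunches without ever controlling how many tangential strikes a chain can accumulate over time. Your route is not wrong---the paper does eventually bound the total number of Type-III events by $O(m)$ in Lemma~\ref{lem:typeIIInumevents}, via a more delicate charging to junction vertices---but for the present lemma that work is unnecessary. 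Switching from ``total created'' to ``alive at any instant'' for the tangent-initiated bunches, and invoking the Case~(4) deletion, removes your obstacle entirely.
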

\begin{proof}
In Case (1), we create a new bunch.  
In Cases (2) and (3), we are not creating any new bunches.
In Case (4), we are deleting $T_{old}$ immediately after creating $T_{new}$.  
Therefore, at any instance at most two bunches  corresponding to a corridor convex chain are alive,  one moving towards/across one corridor enter/exit boundary, and the second towards/across the other corridor enter/exit boundary.  
Since there are $O(m)$ corridors with each having at most two convex chains, there can be at most $O(m)$ bunches at any instance during the entire algorithm.  
A split at a junction/corridor causes a bunch to divide into (at most) three bunches; once an enter/exit boundary $b$ caused a bunch to split, the same $b$ cannot cause split of any other bunch, causing $O(m)$ splits in total.
\end{proof}

\subsection{Waveform- and Boundary- Section Trees}

The computation of event points require computing the shortest distance for the wavefront to strike the boundary $\partial B$ either at a point on the corridor convex chain or at a corridor entry/exit boundary.  
The Euclidean distances need only be computed between:
\begin{enumerate} [(i)] \itemsep -2pt
\item Waveform-section $WS$ and the associated corridor convex chain or corridor enter/exit boundary $C$. 
\item Boundary-section $BS$ and the associated bunch $B$.
\end{enumerate}
To compute these distances, a naive approach would in case (i) compares the distance of each bunch in $WS$ with $C$, in case (ii) compares the distance of each corridor convex chain or corridor enter/exit boundary in $BS$ with $B$.
This would lead to a time complexity which is quadratic in number of corridors.
In order to improve the time complexity, in case (i), we consider a convex hull approximation of $WS$; in case (ii), we consider a convex hull approximation of $BS$.
The convex-hull approximation of $WS$ (resp. $BS$) is defined by the minimum area convex figure enclosing the $WS$ (resp. $BS$). 
Using the convex representation, we can efficiently determine when the wavefront next strikes the boundary.
\hfil\break

Since the convex-hull approximations are to be maintained as the wavefront changes dynamically, a hierarchical representation of the convex-hull of boundary- and waveform-sections is constructed and maintained dynamically using a balanced tree structure.
The data structures to store the waveform-sections and the boundary-sections are described in Subsections \ref{subsubsect:wstdatastr} and \ref{subsubsect:bstdatastr}.
The shortest distance computations using these trees are described in Section \ref{sect:sdcomp}.

\subsubsection{Waveform Section Tree ({\it WST})}
\label{subsubsect:wstdatastr}

\begin{figure}
\center{
\subfigure[A waveform-section]{\epsfxsize=270pt \epsfbox{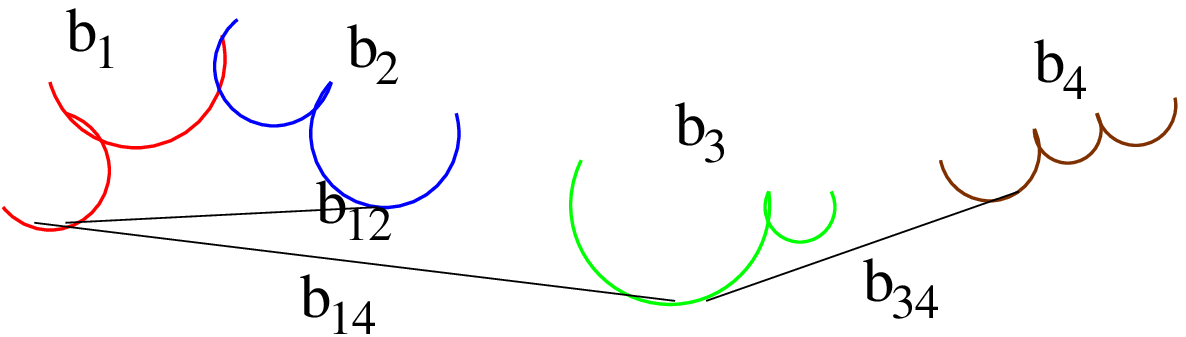}}
\subfigure[Corresponding $WST$]{\epsfxsize=140pt \epsfbox{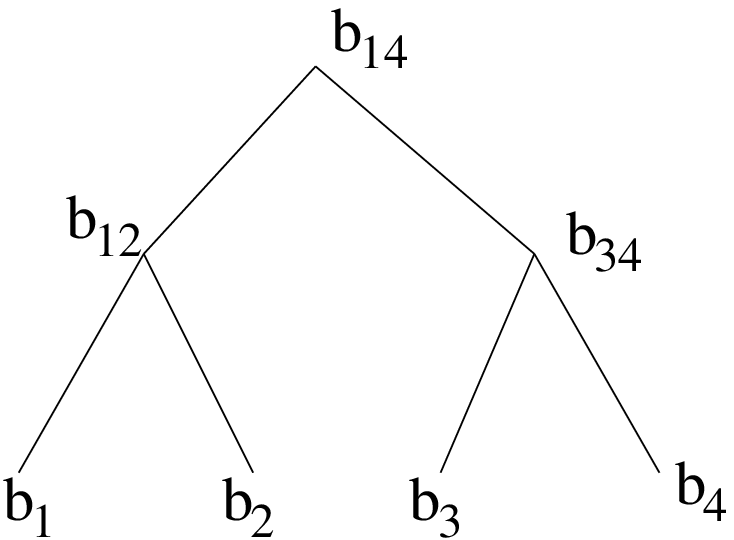}}
}
\caption{\label{fig:wstexample} A waveform-section and its $WST$}
\end{figure}

The convex hull approximation of a section of wavefront is maintained as a balanced tree.
This tree is termed as a {\it waveform-section tree, or $WST$}. 
Each leaf of a $WST$ refers to a $BHT$.
As $BHT$ is a convex approximation of a bunch, it is immediate that each leaf node refers to an upper hull.
At every internal node $p$ of a $WST$, is maintained  an upper hull of the 
wavefront segments (bunches) at the leaves of the subtree rooted at $p$.
Let $W(u)$ be the upper hull (implicitly) stored at a node $u$ of a hull tree.
%Also, let $\cW (d)$ be the wavefront.
Let $u$ be any internal node of $WST$.
Let $P(u)$ be the path from the root to the node $u$.
Each internal node $u$ maintains $offset(u) = d-\sum_{w \in P(u)} offset(w)$.
Whenever $offset(u) < 0$ then $W(u)$ is invalid.
%$\cW (d)$ is valid iff $root
See Fig. \ref{fig:wstexample}.
\hfil\break

As is standard, \cite{Overmars81},  at each internal node 
we will maintain bridges between the two hulls at the child nodes. 
Let $q, r$ be the child nodes of an internal node $p$ of $WST$. 
Let $S_q = \{b_i, b_{i+1}, \ldots, b_{j-1}\}$ and $S_r = \{b_j, b_{j+1}, \ldots, b_k\}$ be the bunches at the leaf nodes of the subtrees rooted at $q$ and $r$ respectively.
Formally, we define the bridge $br_p$ at node $p$ as a line segment joining two points, $p_j$ on a valid segment of $b_j$ and $p_k$ on a valid segment of $b_k$, such that both the $p_j$ and $p_k$ are points of tangencies.
Let $LINE(p_j, p_k)$ be the line obtained by extending the line segment $p_jp_k$ infinitely at both of its ends.
The bridge $br_p$ is defined such that the bunches $\bigcup_{l \in \{i, \ldots, k\}} b_l$, all belong to one of the closed half-planes defined by $LINE(p_i, p_k)$.
The upper hull $UH_p$ at node $p$ is computed from the upper hulls $UH_q, UH_r$ at nodes $q$ and $r$ respectively.
The information stored at internal nodes is same as in Overmars et al. \cite{Overmars81}.
\hfil\break

Let $p_j$ lies on wavefront segment $w(v_j)$ of $b_j$, and $p_k$ lies on wavefront segment $w(v_k)$ of $b_k$.
The slope of the lines $v_jp_j$ and $v_kp_k$ are saved at $p$ to facilitate in constructing bridge $br_p$ at a given wavefront propagation.
Consider the maintenance of bridges at internal nodes with the wavefront expansion.

\begin{figure}
\centerline{
\centerline{\epsfysize=200pt \epsfbox{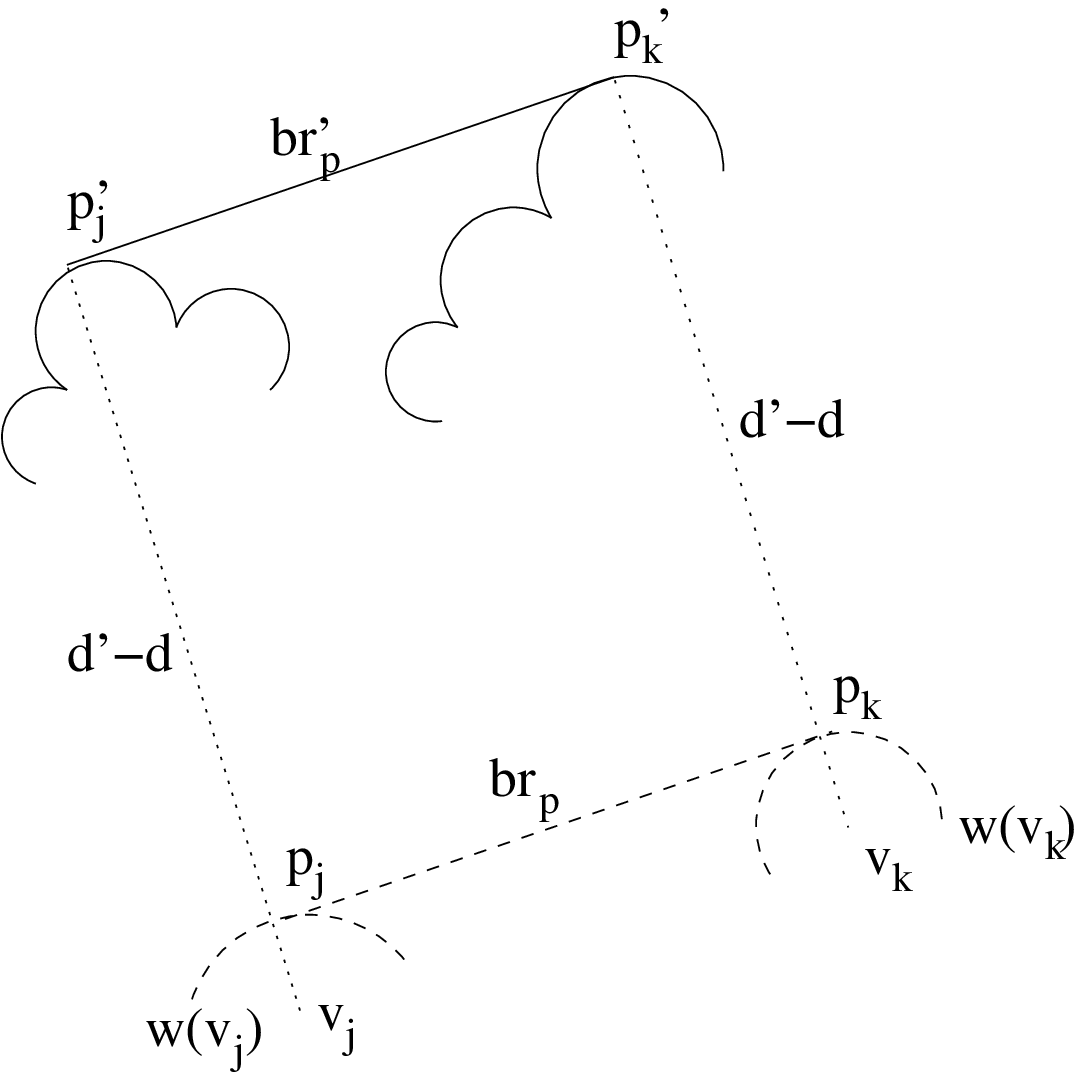}}
}
\caption{\label{fig:wstbridge} Bridge movement with the wavefront expansion}
\end{figure}

\begin{property}
\label{prop:wstbridgemovement} 
For $p_j \in w(v_j), p_k \in w(v_k)$, let $br_p=(p_j, p_k)$ be a bridge at node $p$ of a $WST$ when the wavefront is $\cW (d)$, and $p_j, p_k$ are the point of tangencies on $w(v_j)$ and $w(v_k)$ respectively.
Let $p_j'$ be a point at Euclidean distance $(|v_jp_j|+d'-d)$ from $v_j$ along the vector $\overrightarrow{v_jp_j}$.
Also, let $p_k'$ be a point at Euclidean distance $(|v_kp_k|+d'-d)$ from $v_k$ along the vector $\overrightarrow{v_kp_k}$.
If $p_j', p_k' \in \cW (d')$, then the line segment $p_j'p_k'$ is a bridge $br_p'$ of $WST$ at node $p$ when the wavefront is $\cW (d')$.
See Fig. \ref{fig:wstbridge}.
\end{property}

As the wavefront expands, the bridge $br_p$ moves as fast as any point on a bunch $b_r$, for $i \le r \le k$, expands.
Therefore, the traversal of $p_j$ (or $p_k$) could happen due to some bunch $b_r$, where $r < i$ or $r > k$ i.e., $b_r$ would be located in a subtree other than the one rooted at $p$.  
We handle the bridge maintenance in this case by introducing the notion of dirty bridges in Section \ref{sect:dirtybridges}. 

\paragraph{Primitive Operations on $WST$} \hfil\break

In the Overmars et al. \cite{Overmars81} structure each leaf node represents a point.
However, our data structure contains a bunch at the leaf node.
Our data structure operations involve insertion of bunches, deletion of bunches, tree splitting, and merging of trees.
Since all the points defining a bunch are contiguous geometrically and do not overlap with other bunches stored in $WST$, the time complexity to insert/delete a $BHT$ from $WST$ is upper bounded by the time complexity to insert/delete a point from Overmars structure.
Although the bunch itself is maintained as a hull tree, note that we never need to traverse $BHT$ while inserting/deleting/splitting/merging hull trees in $WST$.
At each node of $WST$, we maintain the same information as in Overmars et al. \cite{Overmars81} structure.  
\hfil\break

%\rem{??? imprecise offset? used offset without initializing it}

%At each internal  node $u$  is a number which offsets the  wavefront represented by the subtree rooted at $u$, i.e. the wavefront distance of $W(u)$ is $d(W(u))-\sigma_{w \in P(u)} offset(w)$ where $P$ is the path from the root to the node $u$.  
\hfil\break

Splitting a convex hull tree is performed on the balanced tree as specified in Preparata et al. \cite{Prep85}.
The offset at each node on the split path and in the two trees can be computed along the path. 
The combination of two convex hulls is computed by the bridge construction procedure, taking offset of the hulls into account.
\hfil\break

Merging two trees is performed by finding the height in the tree at which the two trees are to be merged. 
The offset of the two hulls to be merged at that node can be computed and bridge construction follows the procedure in Preparata et al. \cite{Prep85}.

\begin{lemma}
\label{lem:wsttimeandspace}
The upper hull tree of a set of bunches can be maintained dynamically at the worst-case cost of $O((\lg{m})(\lg{n}))$ per insertion/deletion/merge/split. 
And, the data structure uses $O(n)$ space.
\end{lemma}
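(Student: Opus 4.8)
The plan is to analyze the cost of each of the four primitive operations (insertion, deletion, split, merge) on the $WST$ and show that each is bounded by $O((\lg m)(\lg n))$, and then separately account for the $O(n)$ space. The key observation driving the time bound is a two-level cost decomposition: the $WST$ itself is a balanced tree over $O(m)$ bunches (by Lemma~\ref{lem:numbunches}), so its height is $O(\lg m)$; but each node of the $WST$ stores a hull represented implicitly through bridges, and each bridge-construction step on the Overmars--van Leeuwen structure \cite{Overmars81} requires a binary search down into the child hulls, which here are themselves hulls of bunches whose total size is $O(n)$. So a single bridge computation costs $O(\lg n)$ rather than $O(1)$.

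First I would recall the standard fact from \cite{Overmars81}: in the fully dynamic convex-hull tree, an insertion or deletion of a leaf touches only the $O(\lg N)$ nodes on the root-to-leaf path, and at each such node the bridge must be recomputed. For a tree over $N$ points each recomputation is a binary search costing $O(\lg N)$, giving $O((\lg N)^2)$ per update. I would replay this argument with $N$ replaced by the number of \emph{bunches}, which is $O(m)$ by Lemma~\ref{lem:numbunches}, so the path length is $O(\lg m)$. The new ingredient is that a bridge between two child hulls is no longer a tangent between two point-sets but a common tangent between two upper hulls of bunches; each such tangent is found by the Overmars--van Leeuwen binary search over the two hulls, whose combined leaf count is the total number of wavefront segments involved, i.e.\ $O(n)$. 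Hence one bridge computation costs $O(\lg n)$, and multiplying by the $O(\lg m)$ nodes on the update path yields $O((\lg m)(\lg n))$ per insertion/deletion. For split and merge I would invoke the balanced-tree split/merge of \cite{Prep85} as already described in the ``Primitive Operations'' paragraph: these also touch $O(\lg m)$ nodes along the split/merge path, and at each node a single bridge (re)construction of cost $O(\lg n)$ is performed, giving the same $O((\lg m)(\lg n))$ bound. I would also note that the $\mathit{offset}$ values along the split/merge path are updated in $O(1)$ per node and so contribute nothing extra.

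For the space bound I would argue that each wavefront segment $w(v)$ appears as a leaf in at most one $BHT$ (Property~\ref{lem:ptangencycorr}), and there are $O(n)$ segments in all, so the leaves of all $BHT$s together use $O(n)$ space; since the $BHT$s are balanced binary trees, their internal nodes add only a constant factor. Each leaf of the $WST$ merely refers to a $BHT$ (it does not recopy it), and there are $O(m)=O(n)$ bunches, so the $WST$ contributes $O(m)$ nodes. Crucially, the hull at each internal node is stored only implicitly through its bridge, exactly as in \cite{Overmars81}, so no node replicates the points of its subtree. Summing, the total space is $O(n)$.

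The hard part will be justifying the $O(\lg n)$ cost of a single bridge computation in the presence of the implicit, offset-based representation: the child ``hulls'' at a $WST$ node are not stored explicitly but are themselves recursively defined through bridges in the underlying $BHT$s, and bridges may move as the wavefront expands (Property~\ref{prop:wstbridgemovement}) and may even become \emph{dirty} when the governing bunch lies in a different subtree. I would handle this by appealing to the descent procedure of \cite{Overmars81}, which performs the tangent search without materializing the child hulls—reading each bridge in $O(1)$ and deciding the search direction locally—so that the search visits $O(\lg n)$ nodes total across the two child $BHT$s; the offset and dirty-bridge bookkeeping is charged $O(1)$ per visited node and thus does not inflate the asymptotic cost. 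Making this charging rigorous (in particular that the $\mathit{wpupdate}/\mathit{offset}$ adjustments and the dirty-bridge handling of Section~\ref{sect:dirtybridges} do not break the binary-search invariant) is the one place where the argument must be carried out with care rather than cited wholesale.
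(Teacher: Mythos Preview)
Your proposal is correct and follows essentially the same approach as the paper: height $O(\lg m)$ from Lemma~\ref{lem:numbunches}, and $O(\lg n)$ per bridge (re)construction along the root-to-leaf update path, citing the Overmars--van~Leeuwen/Preparata analysis for the structural operations. The paper's own proof is much terser---it merely notes the $O(m)$ leaf bound, defers to \cite{Prep85}, and states the two logarithmic factors---so your explicit handling of split/merge, the space accounting via Property~\ref{lem:ptangencycorr}, and the caveat about offsets and dirty bridges are all additional (and welcome) detail rather than a different route.
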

\begin{proof}
>From Lemma \ref{lem:numbunches}, the number of bunches at the leaves of the hull tree at any stage of the algorithm are $O(m)$.
The analysis is same as given in Preparata et al. \cite{Prep85}, except that we need to spend $O(\lg{m})$ to locate the leaf node $l$ at which we are interested in inserting/deleting a bunch.
And for constructing bridges at any node along the path from the from $l$ to root, we spend $O(\lg{n})$ time. 
\end{proof}

\subsubsection{Boundary Section Tree ({\it BST})}
\label{subsubsect:bstdatastr}

\begin{figure}
\center{
\subfigure[A boundary-section]{\epsfxsize=270pt \epsfbox{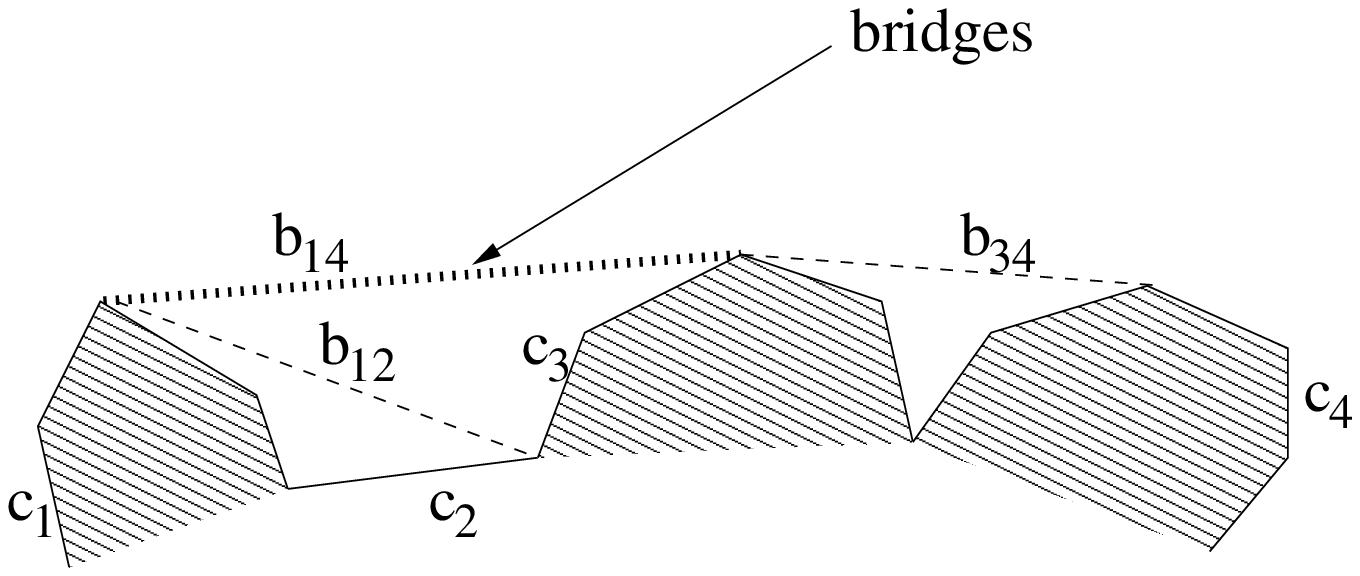}}
\subfigure[Corresponding $BST$]{\epsfxsize=140pt \epsfbox{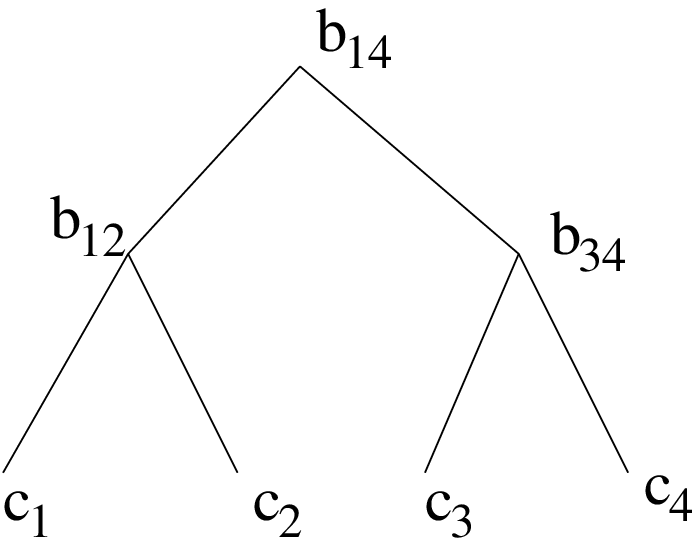}}
}
\caption{\label{fig:bstexample} A boundary-section and its $BST$}
\end{figure}

The convex hull approximation of a section of boundary is maintained as a balanced tree.
This tree is termed as a {\it boundary-section tree or $BST$}.
Each leaf of a $BST$ refers to an upper hull of a corridor convex chain or enter/exit boundary.
Every node $p$ of a $BST$ is an upper hull of the corridor convex chains or enter/exit boundaries stored at the leaves of the subtree rooted at $p$.
The bridges at internal nodes are defined same as in the case of $WST$.
See Fig. \ref{fig:bstexample}.
We remember the endpoints of bridges as in Overmars et al. \cite{Overmars81}. 
\hfil\break

In Overmars et al. \cite{Overmars81} structure, each leaf node contains a point.
However, our data structure contains an upper hull at each leaf node (node that enter/exit boundary is a degenerate upper hull).
Our data structure operations involve insertion and deletion of upper hulls.
Since all the points defining any upper hull are contiguous geometrically and do not overlap with other upper hulls stored at leaf nodes of a $BST$, the time complexity to insert/delete an upper hull from $BST$ is upper bounded by the time complexity to insert/delete a point from Overmars et al. structure.  
At each node of $BST$, we maintain the same information as in Overmars et al. structure.  

\begin{lemma}
\label{lem:bsttimeandspace}
The upper hull tree of a contiguous set of corridor convex chains or enter/exit bounding edges can be dynamically maintained at the worst-case cost of $O((\lg{m})(\lg{n}))$ per insertion/deletion/merge/split. 
And, the data structure uses $O(n)$ space.
\end{lemma}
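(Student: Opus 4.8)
The plan is to mirror the argument already used for the $WST$ in Lemma~\ref{lem:wsttimeandspace}, with bunches replaced by corridor convex chains and enter/exit boundaries. Two structural facts drive the bound, and I would establish these first. By the corridor/junction decomposition of $\cP$ there are $O(m)$ corridors, each contributing at most two convex chains and at most two enter/exit boundaries (the latter being degenerate single-segment upper hulls); hence the collection of entities that can appear as leaves of a $BST$ has size $O(m)$, so a $BST$ is a balanced tree of height $O(\lg m)$. Second, every convex chain is a section of an upper or lower hull of triangulation vertices, and since these chains are carved from the $O(n)$ triangulation vertices, the total number of points stored across all leaves, and hence in any subtree, is $O(n)$. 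These are exactly the counts that specialize the generic hull-tree analysis of Preparata et al.~\cite{Prep85} and Overmars et al.~\cite{Overmars81} to our setting.

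With these facts in place, the time analysis for insertion and deletion would run as follows. To insert or delete a leaf I would first locate it in the balanced tree in $O(\lg m)$ time. Then, along the root-to-leaf path of length $O(\lg m)$, I would reconstruct the bridge at each internal node; the bridge is the common tangent of the two child upper hulls, and in the representation of \cite{Overmars81} this tangent is recovered by a dual binary search whose cost is logarithmic in the number of points in the subtree at that node, i.e.\ $O(\lg n)$. Summing over the $O(\lg m)$ nodes on the path gives the claimed $O((\lg m)(\lg n))$ per update.

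For merge and split I would again follow the balanced-tree scheme of \cite{Prep85}: a split cuts the tree along a root-to-leaf path of length $O(\lg m)$ and recomputes the bridges (and the offsets) along the two resulting split paths, while a merge identifies the height at which the trees join and rebuilds the bridges along the junction path. In either case $O(\lg m)$ bridges are reconstructed, each at cost $O(\lg n)$ by the same tangent-finding binary search, yielding $O((\lg m)(\lg n))$ overall.

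The step needing the most care is verifying that the leaves being upper hulls rather than single points does not inflate the $O(\lg n)$ cost of one bridge reconstruction. Here I would lean on the fact that the corridor convex chains are fixed once the triangulation is computed, so each can be preprocessed as a searchable (sorted) upper hull admitting $O(\lg n)$ tangent queries; the dual binary search at an internal node then descends through the two child hulls (themselves chain fragments stitched by stored bridges, exactly as in \cite{Overmars81}) in $O(\lg n)$ steps, independent of $m$. Finally, for space, the leaves collectively hold $O(n)$ hull points while the $O(m)$ internal nodes each store only a constant-size bridge record and an offset, so the structure occupies $O(n)$ space, completing the proof.
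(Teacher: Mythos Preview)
Your proposal is correct and follows essentially the same approach as the paper: the paper's proof simply observes that the number of corridor convex chains and enter/exit boundaries is $O(m)$ and then defers to the identical analysis already given for the $WST$ in Lemma~\ref{lem:wsttimeandspace}. You have unpacked that reference in more detail (locating the leaf in $O(\lg m)$, rebuilding $O(\lg m)$ bridges at $O(\lg n)$ each, and the $O(n)$ space count), but the underlying argument is the same.
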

\begin{proof}
Since the number of upper corridor convex chains and enter/exit boundaries are $O(m)$, the proof is same as in Lemma \ref{lem:wsttimeandspace}. 
\end{proof}

\subsection{Boundary Cycle List ({\it BCL})}
\label{subsect:bcldatastr}

To help in boundary splits, all the boundary cycles are saved in a data structure.   
We use a doubly-linked circular list, termed as {\it Boundary Cycle List $L_{BC}$} to denote each boundary cycle $BC$.
Every node in $L_{BC}$ refers to either a $BST$ or $WST$.
Let $BS$ be a contiguous boundary in $BC$. 
Let $ST_1$ is either a $BST$ or $WST$, and $ST_2$ is another $BST$ or $WST$. 
Let the sections of contiguous boundaries in $ST_1$ and $ST_2$ respectively be $BS_1$ and $BS_2$. 
Also, let $BS_1 \cup BS_2 = BS$.
For every two such sections of boundary $BS_1$ and $BS_2$, there exists two nodes in $L_{BC}$ that are adjacent and referring to $ST_1$ and $ST_2$.
\hfil\break

Each event point corresponding to the strike of wavefront with the boundary, pushes to event heap a reference to an entry $p$ in list $L_{BC}$, so that $p$ refers to a $BST$ or $WST$ involved in the occurrence of this event.
New $BST$s or $WST$s can be inserted/deleted to this list in $O(1)$ time as we refer to the node at which change is happening. 
Also when boundary splits, to partition one BCL into two takes $O(1)$ time only as the computations are local.
>From Lemma \ref{lem:numbunches}, the space requirement is $O(m)$.

\subsection{Corridors and Junctions}
\label{subsect:ccjdatastr}

The partitioned polygonal domain $\cP'$ into corridors and junctions is modeled as a graph $G(V, E)$.
The set $V$ of vertices in $G$ comprise the set of junctions in $\cP'$, and the set $E$ of edges in $G$ comprise the set of corridors in $\cP'$.
Let $e \in E$ be an edge in $G$ corresponding to a corridor $C$ in $\cP'$, and let $v \in V$ be a vertex in $G$ corresponding to a junction $J$ in $\cP'$.
Then $v$ lies on $e$ if and only if $C$ and $J$ share an edge in $\cP'$.
In other words, two vertices $v_1, v_2 \in V$ are adjacent in $G$ whenever there is a corridor that is adjacent to junctions corresponding to these vertices.
Note that the maximum degree of $G$ is three.
\hfil\break

Each edge in $G$ refers to a corridor, which comprise at most two convex chains and two enter/exit boundaries.
Each of these convex chains is stored in an array data structure, which is envisaged as a tree.

\section{Shortest Distance Computations}
\label{sect:sdcomp}

\subsection{$BST$ and its Association}
\label{subsect:bstsdcomp}

The shortest Euclidean distance between a boundary-section $BS$ and its associated bunch $B$ is the minimum amount of wavefront expansion required for some segment within the bunch $B$ to strike a point in the boundary-section $BS$.
The computation starts at the root node of $BST$ corresponding to boundary-section $BS$.
\hfil\break

Let $\{l_i, \ldots, l_j, \ldots, l_k\}$ be the upper hull vertices at node $l_p$, and let $\{r_{i'}, \ldots, r_{j'}, \ldots, r_{k'}\}$ be the upper hull vertices at node $r_p$ so that $l_p$ and $r_p$ are the left and right children of node $p$.
Also, let the line segment $l_jr_{j'}$ is the bridge at node $p$.
Then we call the hulls formed by the ordered list of vertices $\{l_i, \ldots, l_j\}$, $\{r_{j'}, \ldots, r_{k'}\}$ as the outer hulls at nodes $l_p$ and $r_p$ respectively.
Similarly, we call the hulls formed by the ordered list of vertices $\{l_j, \ldots, l_k\}$, $\{r_{i'}, \ldots, r_{j'}\}$ as the inner hulls at nodes $l_p$ and $r_p$ respectively.
The definitions of outer and inner hulls are applicable to (implicit) hulls stored at any node of $BST$, other than the root.
\hfil\break

Starting from the root node of $BST$, we recurse on its nodes as described below.
Consider the shortest distance computation involved at node $p$ of $BST$.
\hfil\break

(a) Suppose that bunch $B$ associated with $BST$ does not intersect with the upper hull $UH_p$ at $p$.
Then the shortest distance from $B$ is to either of the following:
\begin{enumerate}[(1)] \itemsep -2pt
	\item \label{item:brisd} $br_p$ 
	\item \label{item:brilp} outer hull at node $l_p$
	\item \label{item:brirp} outer hull at node $r_p$
\end{enumerate}
We compute the shortest distance between the vertices of $B$ and each of these. 
In Case (\ref{item:brisd}), we found the required shortest distance.
In Case (\ref{item:brilp}), we recurse on outer hull at node $l_p$; in Case (\ref{item:brirp}), we recurse on outer hull at node $r_p$.
The recursion bottoms whenever the shortest distance is to a bridge or to a corridor convex chain or to an enter/exit boundary.
Suppose recursively traversing $BST$ lead the shortest distance to happen at a leaf $l$, which refers to a corridor convex chain.
Since we stored each corridor convex chain as a tree, we continue recursing over the tree referred by leaf $l$ until we find an edge with which we can compute the shortest distance from $B$.  
The case in which leaf refers to a corridor enter/exit boundary, is similar. 
\hfil\break

(b) Suppose that bunch $B$ associated with $BST$ does intersect with the upper hull $UH_p$ at $p$. 
Then $B$ intersects with either of the following:
\begin{enumerate}[(1)] \itemsep -2pt
	\setcounter{enumi}{3}
	\item \label{item:brisd2} $br_p$  
	\item \label{item:brilp2} outer hull at node $l_p$
	\item \label{item:brirp2} outer hull at node $r_p$
\end{enumerate}
Computing shortest distances from valid segments in $B$ to each of these, we determine which among these are traversed. 
In Case (\ref{item:brisd2}), the bridge $br_p$ is split to recurse on the inner hulls at $l_p$ and $r_p$. 
In Case (\ref{item:brilp2}), we recurse on the outer hull at $l_p$; in Case (\ref{item:brirp2}), we recurse on the outer hull at $r_p$.
\hfil\break

Now consider the sub-procedure required in (a) and (b): finding the shortest distance $d$ between bunch $B$ and a line segment $L$ (or a degenerate point $p$).
To this end, we exploit the unimodal property of shortest distance between the wavefront segments in $B$ and $L$. 
We do binary search over the valid wavefront segments in $B$ to find $d$, which is similar to computing the shortest distance between two convex chains.
%Note that if the bunch associated with the boundary-section itself an invalid one, then we do not compute the shortest distance.

\begin{lemma}
\label{lem:sdcbt}
The shortest distance between the bunch $B$ and a bridge or exit boundary located at a node of $BST$ or an edge located at a leaf node of a corridor convex chain (which is a leaf node of $BST$) can be computed in $O((\lg{m})(\lg{n}))$ time amortized over the number of bridges in all the $BST$s. 
\end{lemma}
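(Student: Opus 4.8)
The plan is to prove the stated time bound by carefully separating the cost of descending the $BST$ from the cost of the per-node primitive of computing the shortest distance between the bunch $B$ and a single line segment (bridge or edge). First I would establish the per-node primitive: at each visited node of $BST$ we must compute the shortest distance between $B$ and either a bridge $br_p$ or an outer/inner hull as described in the recursion of Section~\ref{subsect:bstsdcomp}. Since $B$ is stored as a $BHT$ whose valid segments form a convex structure, and since the distance from a point to the wavefront segments of $B$ is unimodal (as noted just before the statement), this single distance can be found by binary search over the valid segments of $B$. There are $O(m)$ bunches (Lemma~\ref{lem:numbunches}) but the number of valid segments inside one $BHT$ is $O(n)$, so one binary search costs $O(\lg{n})$, matching the factor claimed.

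Next I would bound the number of nodes visited in the top-down traversal of $BST$. The recursion described in cases (a) and (b) never visits both children fully: in the non-intersecting case it either terminates at the bridge or descends into exactly one outer hull, and in the intersecting case it splits the bridge and descends into inner hulls. Because $BST$ is a balanced tree over $O(m)$ leaves (Lemma~\ref{lem:bsttimeandspace}), its height is $O(\lg{m})$, so the descent to the point where the shortest distance is realized touches $O(\lg{m})$ nodes. At each such node we invoke the primitive once, giving $O((\lg{m})(\lg{n}))$ for reaching the relevant bridge or leaf. When the shortest distance is realized at a leaf that refers to a corridor convex chain, we must continue into the array-as-tree representation of that chain; since that chain also has $O(\lg{n})$ height and supports the same unimodal binary search, this final descent adds only another $O(\lg{n})$ factor and does not change the bound.

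The main obstacle, and the reason the statement is phrased as an amortized bound over the number of bridges in all the $BST$s, is that a single shortest-distance query may not descend a clean root-to-leaf path: the recursion can in principle explore both inner hulls after a bridge split in case~(\ref{item:brisd2}), so a worst-case single query could touch more than $O(\lg{m})$ bridges. The plan here is to charge the extra bridge examinations against the total supply of bridges across all $BST$s, arguing that each bridge is processed a bounded number of times over the life of the algorithm because, once the bunch $B$ is found to straddle a bridge, that bridge is split and the two resulting sub-bridges are charged separately. I would therefore set up a potential or counting argument: the number of bridges ever created in all $BST$s is $O(m)$ up to logarithmic maintenance factors (by Lemma~\ref{lem:wsttimeandspace} and Lemma~\ref{lem:bsttimeandspace}), and each bridge contributes $O(\lg{n})$ to a shortest-distance primitive when it is examined, so the total work amortizes to $O((\lg{m})(\lg{n}))$ per reported shortest distance.

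To finish, I would assemble the three pieces: the $O(\lg{n})$ cost of one unimodal binary search as the per-bridge primitive, the $O(\lg{m})$ bound on the number of distinct bridges charged to any one query (using the amortization to absorb the bridge-split branching), and the additional $O(\lg{n})$ descent into a corridor convex chain's own tree when the optimum lies at a leaf. Multiplying the traversal depth by the per-node primitive yields $O((\lg{m})(\lg{n}))$, and the amortization over all bridges guarantees that no individual query's bridge-splitting can inflate this beyond the claimed amortized bound. I expect the delicate part to be making the amortized charging rigorous, specifically showing that bridge splits in case~(\ref{item:brisd2}) cannot cascade to produce a super-logarithmic number of examined bridges within a single query without those extra examinations being payable from the global bridge budget.
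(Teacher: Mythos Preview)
Your proposal is correct and follows essentially the same approach as the paper: the $O(\lg n)$ per-bridge primitive via unimodal binary search over the bunch's segments, the $O(\lg m)$ depth of the $BST$, and the amortized charging of the branching in case~(\ref{item:brisd2}) to the bridge being split (since, as the paper puts it, ``once a bridge is split it won't split again''). The paper's argument is slightly terser on the amortization---it simply observes that a split bridge is treated like a struck edge and never revisited---so the ``delicate part'' you anticipate is handled by that one-line invariant rather than a separate potential argument.
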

\begin{proof}
Since there are $O(m)$ useful corridors and corridor boundaries together, there are $O(m)$ elements in all the $BST$s together at any point of execution of the algorithm.
Hence, every $BST$ is of $O(\lg{m})$ depth during the entire algorithm.
\hfil\break

Finding the shortest distance between the bunch $B$ and a bridge/edge takes $O(\lg{n})$ time by doing binary search over the possible $O(n)$ intra-bunch I-curves.
\hfil\break

Suppose that there is no change in both $BST$ and $B$.
Then no case mentioned under (a) and (b), takes more than $O((\lg{n})(\lg{m}))$ time, as in the worst case we need to traverse $BST$ along its depth doing binary search over the I-curves of $B$ during this traversal.
In Case (\ref{item:brisd2}), we may traverse the inner hulls of both the nodes at every stage.
But that traversal can be charged to the bridge that was intersecting $B$.
Once a bridge is split it won't split again, as each bridge is treated similar to an edge being struck by the wavefront.
\hfil\break

Suppose the shortest distance from bunch $B$ is found to a leaf node of $BST$, which is a corridor convex chain, $CH$.
We can compute the shortest distance between $BHT$ and $CH$ in $O(\lg{n})$ time.
The exit boundary stored at a leaf node of $BST$ is treated in the same way  as a bridge of $BST$.

%An event can cause updates to at most a constant number of entities which are part of the boundary-section.
%This can lead to updates of at most $O(\lg{m})$ bridges, where each updated bridge is in the path from the node representing the changed entity to the root in the $BST$.
%Since the total number of events are $O(m)$ (Theorem \ref{thm:numevents}), there can be at most $O(m\lg{m})$ distinct bridges.
%Since there are $O(m\lg{m})$ distinct bridges and the work at each bridge is of $O(\lg{n})$, the total work involved is of $O(m(\lg{m})(\lg{n}))$.
%Amortizing the total work over the possible $O(m)$ total number of event points during the entire algorithm, yields the time complexity of $O((\lg{m})(\lg{n}))$.
\end{proof}

\subsection{$WST$ and its Association}
\label{subsect:wstsdcomp}

The shortest Euclidean distance between a waveform-section $WS$ and its associated corridor convex chain or enter/exit boundary $C$ is the minimum amount of wavefront expansion required for some wavefront segment within the waveform-section $WS$ to strike $C$.
\hfil\break

Let us consider the case in which $WS$ is associated with a corridor convex chain $C$.
Let the tree corresponding to $C$ be $HT$.
Since the intra-bunch I-curves diverge, inter-bunch I-curves in the given waveform-section play a major role.  
The computation starts at the root node of $WST$ corresponding to the waveform-section $WS$. 
At a node $p$ of $WST$ if corridor convex chain $C$ does not intersect the bridge at $p$, then the shortest distance from $C$ to $WS$ is either to the bridge at $p$ or lies to an upper hull at its left/right child subtree root.
In the first case, the shortest distance is noted as the distance between the bridge and the corridor convex chain $C$.  
If the convex chain stored at left (resp. right) descendant node is nearer, recurse on the left (resp. right) subtree.  
In the case that the corridor convex chain $C$ intersects the bridge at $p$ of $WST$, the bridge is split to find the shortest distance between the $C$ and the upper hulls located at the subtrees of $p$.  
To consider the case in which some point $q \in C$ lies inside an upper hull $ch_p$ stored at node $p$ of $WST$, we denote the distance to a bridge of $WST$ from $q$ as positive (resp. negative) whenever $q$ lies outside (resp. inside) $ch_p$.  
Let $br$ be the closest bridge to $C$ located at a leaf node of $WS$, and this leaf node corresponds to a bunch $B$ with bunch hull tree as $BHT$.  
Then, the computation recurse on $BHT$ similar to the traversal of $WST$ explained above, considering only bridges over the valid segments in bunch $B$.  
This is accomplished by noting that for a non-root node $p$ of $BHT$, the negative $root.wpupdate+p.wpudate$ indicates that the rooted subtree with $p$ as root has only invalid segments as leaves.
%Similarly, when a bunch $B$ is invalid, the invalid mark along the path from leaf corresponding to $B$ to root of $WST$ ensures that the shortest distance computation procedure not to consider that bunch.
\hfil\break

The other case in which the waveform-section $WS$ is associated with an exit boundary $e$ is same as the computation where the corridor convex chain $CH$ is having only one edge.
\hfil\break

Suppose the computed shortest distance for a waveform-section $WS$ to reach its associated corridor convex chain or enter/exit boundary $C$ is represented as an event point $evt$ in the event heap.
Let two inter-bunch I-curves of waveform-section $WS$ intersect before the $evt$ occurs.
This require updates to both the $WST$ and the shortest distance, which are attached as satellite data associated with $evt$. 
However, the intersection of two inter-bunch I-curves results in a Type-IV event, which in turn takes care of these updates.

\begin{lemma}
\label{lem:sdcwt} 
The shortest distance between the corridor convex chain or enter/exit boundary $C$ associated with the waveform-section $WS$, and a bridge/segment located at a node of $WST$ (or, a node of $BHT$ in $WST$) can be computed with $O((\lg{m})(\lg{n}))$ time complexity amortized over the number of bridges in all $WST$s.
\end{lemma}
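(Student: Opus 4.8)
The plan is to mirror the argument of Lemma~\ref{lem:sdcbt}, interchanging the roles of the boundary and the wavefront: there the tree being descended is a $BST$ and the object searched against each bridge is a single bunch, whereas here the tree being descended is the $WST$ and the fixed object searched against each bridge is the corridor convex chain (or enter/exit boundary) $C$, stored as a balanced tree. First I would fix the depth bounds. By Lemma~\ref{lem:numbunches} there are $O(m)$ bunches at any time, so all $WST$s together carry $O(m)$ leaves and each $WST$ has depth $O(\lg m)$; each leaf refers to a $BHT$ whose $O(n)$ leaves give it depth $O(\lg n)$, and $C$ is likewise a tree over $O(n)$ vertices of depth $O(\lg n)$.

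The core subprocedure I would isolate is the computation of the expansion distance between the fixed chain $C$ and a single bridge/segment sitting at a node (either of the $WST$ or of a $BHT$ inside it). Because $C$ is convex, this expansion distance is unimodal along $C$, so a binary search over the $O(n)$ vertices of $C$ evaluates it in $O(\lg n)$ time; this is the symmetric counterpart of the $O(\lg n)$ binary search over a bunch's intra-bunch I-curves used in Lemma~\ref{lem:sdcbt}, whose unimodality is exactly Lemma~\ref{lem:divergeprop}. In the non-intersecting case of the recursion of Subsection~\ref{subsect:wstsdcomp}, at each visited node the minimum is attained either at the node's bridge or in one child subtree, and the $O(\lg n)$ subprocedure decides which; following the closer child descends the $WST$ in $O(\lg m)$ steps, for a running cost of $O((\lg m)(\lg n))$.

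Once this descent localizes to the bunch $B$ at a $WST$ leaf, the remaining task is the minimum distance between the convex chain $C$ and the convex chain formed by the valid segments of $B$. I would compute this in $O(\lg n)$ by the standard distance-between-two-convex-chains search, restricting attention to valid segments via the sign test on $root.wpupdate + p.wpupdate$ (Property~\ref{lem:wpupdate}) to prune $BHT$ subtrees that are entirely invalid; the enter/exit boundary case is the degenerate instance in which $C$ is a single edge. For the intersecting case, when $C$ crosses the bridge at a node I would split that bridge and recurse into the inner hulls of both children, charging the extra traversal to the split bridge; exactly as in Lemma~\ref{lem:sdcbt}, a bridge, once split, is treated like an edge that the wavefront has struck and never splits again, so this extra work amortizes to $O((\lg m)(\lg n))$ per bridge over all $WST$s.

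The main obstacle I anticipate is the leaf step: arguing that after localizing to $B$ the $C$-versus-$BHT$ distance is obtainable in $O(\lg n)$ \emph{total} rather than $O(\lg n)$ per $BHT$ node, which needs the joint convexity and unimodality of the two chains together with the validity pruning in order to avoid a second logarithmic factor. A secondary subtlety is the bookkeeping of the amortization, so that the two-sided recursion triggered by a bridge split is provably paid for once, given that the intersections discovered during a single $WST$ descent must each correspond to a distinct, not-yet-split bridge.
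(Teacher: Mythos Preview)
Your proposal is correct and follows essentially the same approach as the paper's own proof: bound the $WST$ depth by $O(\lg m)$ via Lemma~\ref{lem:numbunches}, charge $O(\lg n)$ per visited node for the distance to $C$ using the tree representation of $C$, amortize the two-sided recursion triggered by a bridge split to that bridge (which is never split again), and handle the $BHT$ leaf in $O(\lg n)$ amortized by traversing only valid-segment bridges. If anything, your write-up is more explicit than the paper's about the unimodality justification for the $O(\lg n)$ binary search and about the validity pruning via the sign of $root.wpupdate + p.wpupdate$; the paper states the $BHT$ step as ``similar to the traversal of $WST$ bridges, this is of $O(\lg n)$ amortized complexity'' without elaborating on the joint-convexity point you flag as the main obstacle.
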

\begin{proof}
We analyze the case in which $WS$ is associated with a corridor convex chain; the other case when $WS$ is associated with an exit boundary is similar to this.
According to the definition of $WST$, the waveform is formed with bunches at the leaves.
The total number of bunches at any point of execution of the algorithm are $O(m)$ (from Lemma \ref{lem:numbunches}).  
Hence, any $WST$ is of $O(\lg{m})$ depth during the entire algorithm.
For a bridge $br$ at node $p$ of $WST$, we need to compute the shortest distance between $C$ and $br$, between $C$ and the two upper hulls implicitly stored at $p$'s immediate child nodes to find the closest bridge.  
Since $WST$ is of $O(\lg{m})$ depth and the tree corresponding to $C$ is of $O(\lg{n})$ depth, every shortest distance computation takes $O((\lg{m})(\lg{n}))$.
This complexity includes the computation required in determining whether the closest point $pt \in C$ lies inside/outside the upper hull stored at node $p$.
Once a bridge $br$ in $WST$ is determined as closest to $C$, $br$ will never be considered again.  
\hfil\break

%We next compute the number of bridges created during the events in the course of the algorithm.
%An event can cause changes to at most a constant number of bunches.
%This can lead to updates of at most $O(\lg{m})$ bridges, where each updated bridge is in the path from the node representing the changed bridge to the root in the $WST$.  
%Since there are $O(m)$ events (Theorem \ref{thm:numevents}) in total, leading to $O(m\lg{m})$ distinct bridges at most. 
%Since there are $O(m\lg{m})$ distinct bridges and the work at each bridge is of $O(\lg{n})$, the total work involved is of $O(m(\lg{m})(\lg{n}))$.  
%Amortizing the total work over the possible $O(m)$ total number of event points (Lemma \ref{thm:numevents}), yields the time complexity as $O((\lg{m})(\lg{n}))$.  
%\hfil\break

We next consider the complexity of computing shortest distances' within bunch hull trees.
Let $BHT$ is a bunch hull tree located at a leaf node $p_l$ of $WST$.
Also, let the bridge at node $p_l$ is determined as closest to a bridge $br'$ in $HT$ among all the bridges that are not struck in $HT$.  
Then we need to traverse the bridges over the valid segments of that $BHT$; similar to the traversal of $WST$ bridges, this is of $O(\lg{n})$ amortized complexity.  
\end{proof}

\section{IIntersect Procedure}
\label{sect:IIntersectProc}

Let $SB = \{S_1, S_2, \ldots, S_k\}$ be the sequence of contiguous sections of boundary such that $S_i$ represents either a boundary-section associated to a bunch, or a corridor convex chain (or enter/exit boundary) associated with a waveform-section. 
The IIntersect procedure finds the corridor convex chain or enter/exit boundary in $SB$ with which a given I-curve($w(c_1), w(c_2))$ first intersects.
We find the intersection of I-curve($w(c_1), w(c_2)$) with each of $S_1, S_2, \ldots, S_k$ in that order (the cost is amortized).
If the I-curve($w(c_1), w(c_2)$) does not intersect with $SB$, the procedure returns the same.
Since an I-curve is not a straight-line, it may intersect a $SB$ multiple times.  
However, we are interested in the first point of intersection along the given I-curve($w(c_1), w(c_2)$) with $SB$.
\hfil\break

First, consider finding the intersection of I-curve($w(c_1), w(c_2)$) with $S_i \in SB$.
Let $HT$ be the hull tree of $S_i$.
We denote the upper hull at a node $p$ of $HT$ with $H_p$, outer upper hull at $p$ with $OH_p$, and the inner inner hull at $p$ with $IH_p$. 
The computation starts at the root node of $HT$.
Consider the computation at a node $p$ of $HT$.
Let $q, r$ be the left and right children of $p$. 
The bridge $br_p$ at $p$ be $p_1p_2$. 
The following are the possible places at which I-curve($w(c_1), w(c_2)$) intersects: 
\begin{enumerate}[(1)] \itemsep -2pt
	\item \label{one} $OH_{r}$
	\item \label{two} $OH_{q}$
	\item \label{three} $br_p$
	\item \label{four} None 
\end{enumerate}

>From the definition of I-curve, if I-curve intersects bridge $br_p$ then $p_1$ (resp. $p_2$) is closer to $c_1$ and $p_2$ (resp. $p_1$) is closer to $c_2$.  
This fact is used in differentiating Case (\ref{three}) from the first two Cases.
Let the angle $c_1p_1$ (resp. $c_2p_1$) makes at $c_1$ (resp. $c_2$) is less than the angle made by $c_1p_2$ (resp. $c_2p_2$) at $c_1$ (resp. $c_2$).   
The conditions in which Case (\ref{one}) (resp. Case (\ref{two})) occur is: $|c_1p_1|<|c_2p_1|,|c_1p_2|<|c_2p_2|$ and the line segment $c_1p_1$ (resp. $c_2p_1$) is in the left (resp. right) half-plane defined by the line $c_2p_2$ (resp. $c_1p_2$); or, $|c_1p_1|>|c_2p_1|,|c_1p_2|>|c_2p_2|$ and the line segment $c_1p_1$ (resp. $c_2p_1$) is in the left (resp. right) half-plane defined by the line $c_2p_2$ (resp. $c_1p_2$).
The Case (\ref{three}) occur if $p_1$ is closer to $c_1$ (resp. $c_2$) than $c_2$ (resp. $c_1$).
The Case (\ref{four}) occur if we reach a leaf node and do not find the intersection point.
\hfil\break

For Case (\ref{one}) (resp. Case (\ref{two})), it is guaranteed that the I-curve does not first intersect with any of the elements located in the right (resp. left) subtree of $r$. 
So we recursively traverse left (resp. right) subtree of $r$.
For the Case (\ref{three}), we need to split the bridge $br$ to further find the intersection of I-curve with either of $IH_{q}, IH_{r}$.
Since we do not know which inner hull is intersected by the I-curve, we traverse both left and right inner hulls at $r$.
To facilitate traversing inner hulls, we determine the visible cones from both the centers $c_1$ and $c_2$ by computing tangents to inner hulls.
\hfil\break

The worst-case occurs when we need to compute the tangents to inner hulls along the depth of $HT$, leading to the time complexity of $O((\lg{m})(\lg{m}))$. 
Since the I-curve intersects at most $O(\lg{m})$ bridges and no bridge splits more than once, the amortized time complexity is $O(\lg{m})$. 
%\rem{??? describe more? move this as a Lemma in Analysis}
Since the number of boundary- and waveform-sections are bonded by $O(m)$ i.e., since the number of elements in $SB$ is $O(m)$, finding the intersection of the given I-curve with the $SB$ amortized over splits, mergers and Type-IV intersections takes $O(\lg^2{m})$ amortized time.

\section{Merging}
\label{sect:merging}

\begin{figure} [htop]
\centerline{
\subfigure[Before merge: $c_1, \ldots c_8$ are associated with $A$, and some of them are reachable from $B$]{\epsfxsize=420pt \epsfbox{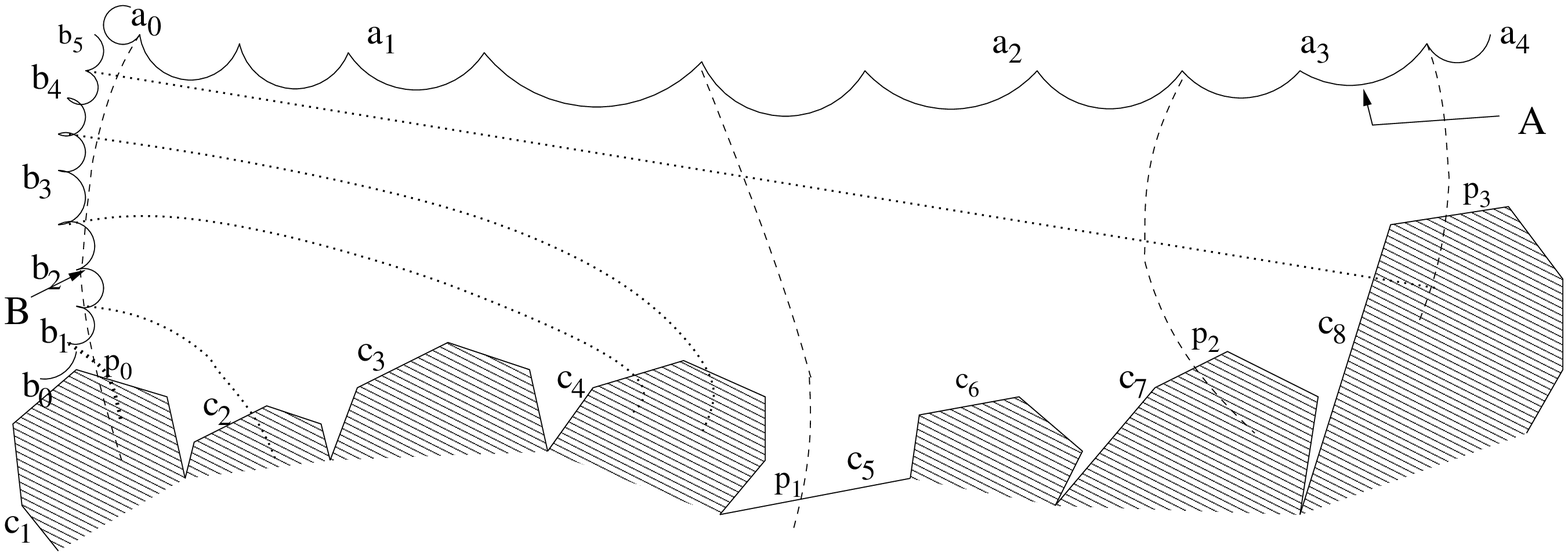}}
}
\centerline{
\subfigure[After merge: $c_1, \ldots, c_4$ are associated with $A$, and $c_4, \ldots, c_8$ are associated with $B$]{\epsfxsize=420pt \epsfbox{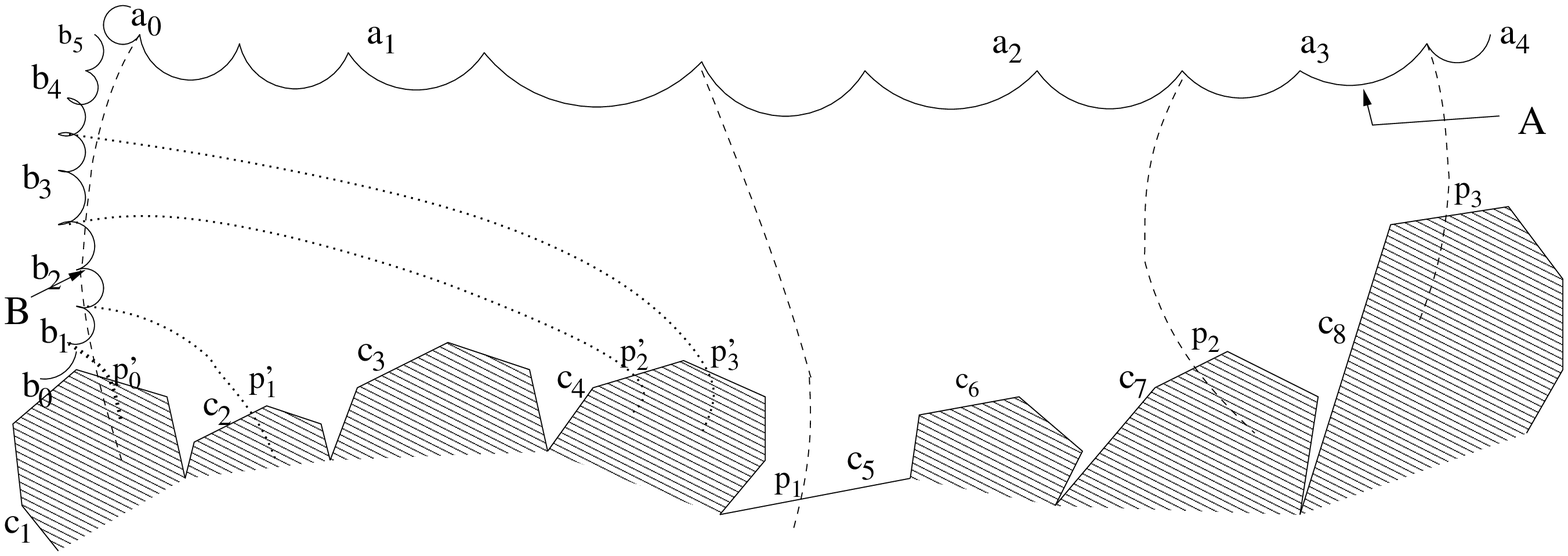}}
}
\caption{\label{fig:merge} Modification to associations of two sections of wavefronts in Merge procedure}
\end{figure}

Consider a section of boundary $SB_A$ associated with a section of wavefront $A$. 
Also, consider a section of boundary $SB_B$ associated with a section of wavefront $B$. 
Suppose $B$ struck $e$, causing the reachable edges of $B$ to include new bounding edges from $SB_A$.
Then we need to update the associations of $B$ by considering the proximity of bunches in $A \cup B$ with the corridor convex chains or enter/exit boundaries in $SB_A$ such that:
\begin{enumerate} \itemsep -2pt
	\item the sections of boundary associated with both $A$ and $B$ are contiguous,
	\item each element in $SB_A$ is associated with either $A$ or $B$,
	\item at most one element in $SB_A$ is associated with both $A$ and $B$.
\end{enumerate}
The process of updating the associations of $A$ and $B$ and initiating/updating boundary- and waveform-sections for all elements in $SB_A \cup SB_B$ is termed as the {\it merger} of sections of wavefronts' $A$ and $B$.
\hfil\break

Consider Fig. \ref{fig:merge}(a).
Here, the section of wavefront $A$ comprising bunches $a_1, \ldots, a_3$, and $SB_A$ comprise corridor convex chains or enter/exit boundaries $c_1, \ldots, c_8$.
Specifically, $BS(a_1) = \{c_1, \ldots, c_5\}$, $BS(a_2) = \{c_5, c_6, c_7\}$, and $BS(a_3) = \{c_7, c_8\}$.
Suppose an event caused part of $SB_A$ to be reachable from a waveform-section $B$ consisting of bunches $b_1, \ldots, b_5$.
The associations of $SB_A \cup SB_B$ after the wavefront merger is shown in Fig. \ref{fig:merge}(b) with inter-bunch I-curves: $BS(b_1) = \{c_1, c_2\}$, $BS(b_2) = \{c_2, c_3, c_4\}$, $WS(c_4) = \{b_2, b_3, a_1\}$,  $BS(a_1) = \{c_4, c_5\}$, $BS(a_2) = \{c_5, c_6, c_7\}$, and $BS(a_3) = \{c_7, c_8\}$.
Also, the corresponding boundary- and waveform-section trees are initiated/updated.
As mentioned below, new associations follow all three characterizations:
\begin{enumerate} \itemsep -2pt
	\item the contiguous section of boundary $c_1, \ldots, c_4$ is associated with $B$, and the contiguous section of boundary $c_4, \ldots, c_8$ is associated with $A$,
	\item every element in $c_1, \ldots, c_8$ is associated with either $A$ or $B$, and
	\item only $c_4$ is associated with both $A$ and $B$.
\end{enumerate}

The inter-bunch I-curves, bunches, and the corresponding sections of boundary associated together define Voronoi regions.
For example, before invoking the merge procedure (Fig. \ref{fig:merge}(a)), the Voronoi regions are:
I-curve($a_0, a_1$), $a_1$, I-curve($a_1, a_2$), section of boundary along $SB_A$ between $p_0$ and $p_1$; 
I-curve($a_1, a_2$), $a_2$, I-curve($a_2, a_3$), section of boundary along $SB_A$ between $p_1$ and $p_2$;
I-curve($a_3, a_4$), $a_3$, I-curve($a_3, a_4$), section of boundary along $SB_A$ between $p_2$ and $p_3$.
After invoking the merge procedure (Fig. \ref{fig:merge}(b)), the Voronoi regions are: 
I-curve($b_0, b_1$), $b_1$, I-curve($b_1, b_2$), section of boundary along $SB_A$ between $p_0'$ and $p_1$; 
etc.,;
I-curve($b_3, b_4$), $b_4, a_1$, I-curve($a_1, a_2$), section of boundary along $SB_A$ between $p_3'$ and $p_1$;
etc.,;
At most one Voronoi region is bounded by one inter-bunch I-curve from $A$ and another from $B$.
In updating the Voronoi regions, some of the inter-bunch I-curves from $A$ and $B$ are not used.
For example, although I-curve($b_4, b_5$) intersects with $c_8$, section of boundary consisting of $c_5, \ldots, c_8$ is closer to $A$ than the bunches in $B$. 
Although this example shows all the bunches in $A$ (resp. $B$) are contiguous along the wavefront, it is not a requirement.
\hfil\break

Consider any corridor convex chain or enter/exit boundary, say $c_3$ in $SB_A$, that is associated with bunch $a_1$.
For wavefront segment $w(a_1') \in a_1$, let $d'$ be the Euclidean distance between $w(a_1')$ and $c_3$.
Given the intersection of inter-bunch I-curves of $B$ with $SB_A$, we know that $c_3$ is closer to bunch $b_2$ among all the bunches in $B$.
For a wavefront segment $w(b_2') \in b_2$, let $d''$ be the Euclidean distance between $w(b_2')$ and $c_3$.
If $d'' < d'$, then the association of $c_3$ is changed to $b_2$.
\hfil\break

Herewith we describe the procedure to associate a section of $SB_A$ to bunches in $B$.
%The other procedure of associating a section of $SB_B$ to bunches in $A$ is symmetrical.
For each of the corridor convex chain or enter/exit boundary $c$ occurring in $SB_A$, the proximity of $c$ to its association in $A$ requires to be compared against the proximity of $c$ with every bunch in $B$, and re-associated when necessary.  
Before $B$ strike $SB_A$, suppose that there is no Type-IV event due to bunches in $B$.
Then the intersection points of inter-bunch I-curves from $B$ are in sorted order along $SB_A$.
For example, in Fig. \ref{fig:merge}, I-curve($b_3, b_4$) is intersecting $c_4$;
if $c_1, \ldots, c_4$ are associated with $b_1, b_2, b_3$, then the section of boundary $c_4, \ldots, c_8$ is definitely not associated with $b_1, b_2, b_3$.
This unimodal property accommodates binary search over the inter-bunch I-curves of $B$ in updating the associations.
Since the inter-bunch I-curves are higher order curves, we use the IIntersect procedure described in Section \ref{sect:IIntersectProc} in determining which element of $SB_A$ is intersected by an I-curve under consideration.
\hfil\break

The following Lemma says that there exists a contiguous section of boundary that can be associated with a section of wavefront.

\begin{lemma}
\label{lem:contigprop3} There exists an association {\cA} such that the sequence of boundary edges on a boundary cycle that are associated with a section of wavefront is a contiguous sequence.
This is known as the {\it contiguity property for sections of wavefront}.
\end{lemma}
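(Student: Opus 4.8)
The plan is to reduce Lemma~\ref{lem:contigprop3} to the already-established contiguity of a single bunch (Lemma~\ref{lem:contigprop2}) together with the global non-crossing property of shortest paths (Lemma~\ref{lem:spnoncrossing}), mirroring the single-segment argument of Lemma~\ref{lem:contigprop1}. Recall that a section of wavefront $W$ is a collection of bunches $B_1,\ldots,B_p$ that need not be contiguous along the wavefront, and that by Lemma~\ref{lem:contigprop2} the boundary edges associated with each \emph{individual} bunch already form a contiguous run on the boundary cycle. I would argue by contradiction: suppose that, under some valid association $\cA$, the boundary edges associated with $W$ are not contiguous. Then along the boundary cycle there is a ``sandwiched'' configuration in which a run $e_i,\ldots,e_j$ and a run $e_k,\ldots,e_l$ are both associated with $W$, while the intervening run $e_j,\ldots,e_k$ is associated with a section of wavefront $W'$ disjoint from $W$ (no bunch of $W'$ shares a center with a bunch of $W$).

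The key step is to promote the local reassociation of Lemma~\ref{lem:contigprop1} from single centers to the sets of centers carried by $W$ and $W'$. By the definition of association, there is a point $p'$ on $e_i,\ldots,e_j$ whose shortest path to $s$ passes through some center $v_a$ of a bunch in $W$, a point $p''$ on $e_k,\ldots,e_l$ whose shortest path passes through some center $v_{a'}$ of a bunch in $W$, and a point $p_z$ on the middle run whose shortest path passes through some center $v_b$ of a bunch in $W'$. Since $p_z$ lies on the boundary cycle between $p'$ and $p''$, the segment $p_zv_b$ must cross either $p'v_a$ or $p''v_{a'}$; invoking the non-crossing property (Lemma~\ref{lem:spnoncrossing}) at the crossing point $p$ shows that $p$, and hence $p_z$, also admits a shortest path through a center of $W$. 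Consequently the middle edge may legitimately be reassociated to $W$, and repeating this across the whole sandwiched run merges the two outer runs of $W$ into a single contiguous run $e_i,\ldots,e_l$.

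To finish, I would observe that each such reassociation strictly decreases the number of maximal runs into which the $W$-associated edges are split (equivalently, the number of boundary gaps of $W$, summed over all sections of wavefront), so that repeated application, as in the closing argument of Lemma~\ref{lem:contigprop1}, terminates in an association $\cA$ under which the edges associated with every section of wavefront form a contiguous sequence. This is precisely the invariant maintained operationally by the merge procedure of Section~\ref{sect:merging}, which redistributes a section of boundary between two sections of wavefront using their inter-bunch I-curves.

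The step I expect to be the main obstacle is handling the fact that a section of wavefront is not necessarily contiguous in the wavefront, so---unlike the single-bunch case of Lemma~\ref{lem:contigprop2}, where a common convex chain supplies one angular order and diverging intra-bunch I-curves (Lemma~\ref{lem:divergeprop})---the centers of $W$ may come from several bunches and be interleaved with the centers of $W'$ along the wavefront. The resolution is that the non-crossing property of Lemma~\ref{lem:spnoncrossing} is a purely geometric statement about two shortest paths, independent of any ordering along the wavefront; once the sandwiched configuration forces the two paths to cross, the reassociation goes through regardless of how the bunches of $W$ and $W'$ are interleaved, which is exactly what lets the single-segment argument lift verbatim to sections of wavefront.
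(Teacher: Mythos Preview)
Your proposal is correct and takes essentially the same approach as the paper, which simply states that the result is immediate from Lemma~\ref{lem:spnoncrossing} and Lemma~\ref{lem:contigprop2}. You have merely spelled out in detail the sandwich-and-reassociate argument that the paper leaves implicit in its one-line proof.
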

\begin{proof}
Immediate from Lemmas \ref{lem:spnoncrossing} and \ref{lem:contigprop2}.
\end{proof}

This property says that there always exists an implicit I-curve between $A$ and $B$ that separates the contiguous section of boundary that is associated with $A$ from the contiguous section of boundary that is associated with $B$.
Suppose all the elements that occurred before $c \in SB_A$ along $SB_A$ are associated with $B$, and $c$ needs to be associated with both $A$ and $B$ (or, with $A$ only).
Then no element in $SB_A$ that occur after $c$ needs to be associated with $B$. 
This is considered in halting the binary search procedure.
\hfil\break

Here is the description of ASSOCATOB procedure.
Let the bunches in $B$ be $b_1, \ldots, b_q$.
Starting from $b_1$, the procedure finds a bunch $b_j$ in $B$ such that between the points of intersection of inter-bunch I-curve($b_{j-1}, b_j$) and the inter-bunch I-curve($b_j, b_{j+1}$), there exists a contiguous section of boundary $SB' \in SB_A$ and $|SB'| > 1$ (line 4 of ASSOCATOB).
If such an I-curve is found, $RV(b_j)$ is set to $SB'$; and, $RV(b_1), \ldots, RV(b_j)$ are set to $\phi$ (lines 5-8).
In the next iteration, the procedure starts from $b_j$.
The iterations terminate when an element in $RV(A)$ needs to be left with its current association (line 11). 
Then it adjusts the $RV$s of $A$; and, updates $BST$s and $WST$s of bunches in both $A$ and $B$ (lines 12-13).
After the change in associations, the event points corresponding to the shortest distance between boundary-sections and their associated bunches, and the shortest distance between waveform-sections and their associated corridor convex chains/exit boundaries are updated in the event min-heap.  
Also, for every corridor convex chain $C$ whose associations are changed within the procedure, similar to the procedure described in Type-I/Type-II events, we find the tangent to $C$ and push the corresponding Type-III event to min-heap (line 14).
Note that the Type-III event generated herewith could cause Case (4) of $BHT$ initialization (See Subsection \ref{subsect:bhtdatastr}.).
\hfil\break

Also, suppose $C \in SB_A$ is the corridor convex chain or enter/exit boundary encountered by the binary search procedure in ASSOCATOB, and $C$ is found to be associated with $A$.
Then the procedure does not process elements in $SB_A$ that occur after $C$.
Let $A^R$ (resp. $B^R$) denote the sequence bunches in $A$ (resp. $B$) ordered in reverse direction.
Since the procedure does not know from which end of the sequence of bunches in $B$ that it requires to start associating with $SB_A$, we need to invoke with arguments: $A, B$; $A, B^R$; $A^R, B$; $A^R, B^R$.  
Although explicitly we do not state, all these eight invocations of ASSOCATOB are considered in MERGE procedure.
\hfil\break

\begin{figure}
\centerline{\epsfxsize=370pt \epsfbox{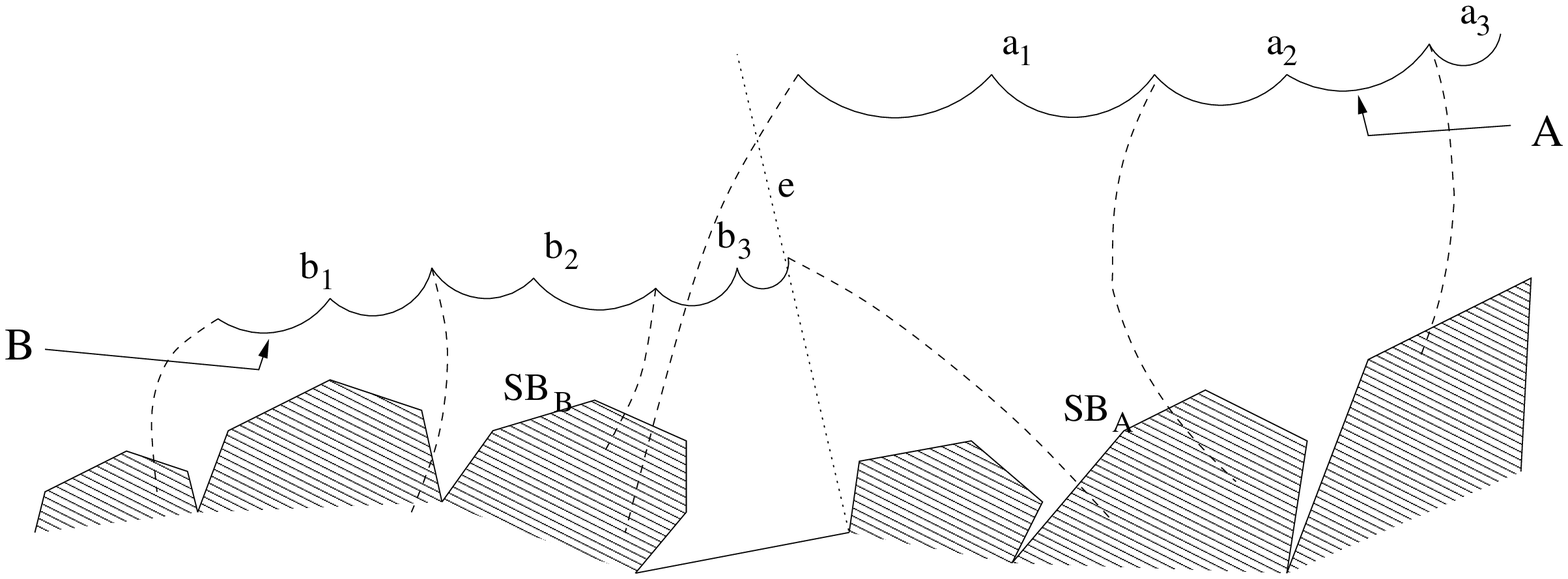}}
\caption{\label{fig:assocbtoa}Associating $SV_A \cup SV_B$ to $A \cup B$}
\end{figure}

As mentioned earlier, we invoked the merge procedure when $B$ struck an edge $e$. 
Suppose $B$ was associated with a section of boundary $SB_B' = SB_B \cup \{e\}$.
When it strikes $e$, the section of boundary $SB_B$ is reachable from $A$.
See Fig. \ref{fig:assocbtoa}.
Then the merge procedure needs to extend the three characterizations given in the beginning of this Section to $SB_A \cup SB_B$ and $A \cup B$, so that to associate $SB_A \cup SB_B$ with the bunches in $A$ and $B$.
In other words, we also require to invoke ASSOCATOB with arguments: $B, A$; $B^R, A$; $B, A^R$; $B^R, A^R$.
Considering all these possibilities, the MERGE procedure invokes ASSOCATOB eight times for merging the two given sections of wavefront.
\hfil\break

As a whole, the merger takes two sets, say $A$ and $B$, of bunches and their associations in terms of $WST$s and $BST$s as input, and outputs a set $C \subseteq A \cup B$ of bunches with initiated/updated $WST$s and $BST$s.  

\begin{algorithm}
\caption{Associates corridor convex chains or enter/exit boundaries in $RV$s of a section of wavefront $A$ to $RV$s of another section of wavefront $B$; 
also, updates $WST$s and $BST$s for the bunches in $B$}
{\bf procedure} ASSOCATOB($A, B$)
\begin{algorithmic}[1]
\REQUIRE{the re-association starts by considering the first bunch in each  of $A$ and $B$.
The bunches in both $A$ and $B$ are considered sequentially from there on; 
the procedure terminates when it finds the first convex chain or enter/exit boundary among the sequence of convex chains/exit boundaries of $RV(A)$ which cannot be associated with a bunch in $B$  
}
\vspace*{.1in}

\STATE Let \hfil\break
   $A$ : $a_1,a_2,\ldots,a_p$, a sequence of bunches \hfil\break
   $B$ : $b_1,b_2,\ldots,b_q$, a sequence of bunches \hfil\break
   $RV(A)$ : $\bigcup_{i=1}^{i=p}RV(a_i)$ = $c_1,c_2,\ldots,c_r$, contiguous sequence comprising corridor convex chains/exit boundaries in $\partial B$ (note that this sequence is not computed explicitly)
\STATE $prevb:=0$, $prevc:=0$
\REPEAT
	\STATE by doing binary search over the range $[prevb+1,q]$ find $j$, 
               where $j$ = $\min \{i:  i \in [prevb+1,q]$ and   I-curve($b_i,b_{i+1}$) 
               satisfies $c_s$ := IIntersect(I-curve($b_i,b_{i+1}),RV(A)$),
               $c_s \ne c_{prevc} \}$ and points on $c_s$ can be associated with $b_i$ 
	       instead  of $a_k$ where currently $c_{s-1}$ and $c_s \in RV(a_k)$.
	\STATE set $RV(b_{prevb+1})$,\ldots,$RV(b_{j-1})$,$RV(b_j)$ to $\phi$.
	\IF{a corridor convex chain or enter/exit boundary $c \in \{c_{prevc+1},\ldots,c_s\}$ 
            can be entirely associated with $b_j$}
		\STATE add \{$c_{prevc+1},\ldots,c_s$\} to $RV(b_j)$.
	\ENDIF
	\STATE $prevb$:=$j$, $c_{prevc}$:=$c_s$
\UNTIL{$prevb > q-1$}
\STATE remove corridor convex chains or enter/exit boundaries from $RV(A)$ so that 
       $RV(A) \cap RV(B) = \phi$.
\STATE for every bunch $b_i \in B$, if $RV(b_i)$ is modified and $|RV(b_i)| = 0$,
       and further, $b_i$ is associated to a corridor convex chain or enter/exit boundary $c$ 
       then update $WST(c)$ s.t. $WS(c)$ is the sequence of bunches associated 
       to $c$.
\STATE for every bunch $b_i \in B$, if $RV(b_i)$ is modified and $|RV(b_i)| > 1$
       update $BST(b_i)$ s.t. $BS(b_i)$ is the same as $RV(b_i)$.
\STATE for every corridor convex chain $C$ whose associations are changed, determine and push a Type-III event to min-heap. 
\end{algorithmic}
\end{algorithm}

\begin{algorithm}
\caption{Modifies $RV$s, $WST$s and $BST$s for the two given sections of the wavefront: $SW_1, SW_2$}
{\bf procedure} MERGE($SW_1,SW_2$)
\begin{algorithmic}[1]
\vspace*{.1in}
	\STATE ASSOCATOB($SW_1,SW_2$)
	%\STATE ASSOCATOB($SW_1, SW_2^R$)
	%\STATE ASSOCATOB($SW_1^R, SW_2$)
	%\STATE ASSOCATOB($SW_1^R, SW_2^R$)
	\STATE ASSOCATOB($SW_2,SW_1$)
	%\STATE ASSOCATOB($SW_2^R, SW_1$)
	%\STATE ASSOCATOB($SW_2, SW_1^R$)
	%\STATE ASSOCATOB($SW_2^R, SW_1^R$)
\end{algorithmic}
\end{algorithm}

\begin{theorem}
\label{lem:assocatobcorr}
The ASSOCATOB($A, B$) procedure correctly updates the associations of bunches in both the sections of wavefronts' $A$ and $B$.
\end{theorem}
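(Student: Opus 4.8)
The plan is to establish correctness by showing that, upon termination, ASSOCATOB produces an association $\cA$ satisfying the three characterizations stated at the opening of Section~\ref{sect:merging}: that the boundary sections associated with $A$ and with $B$ are each contiguous, that every element of $SB_A$ is associated with exactly one of $A$ or $B$ (with at most one shared element), and that the new associations respect proximity, i.e.\ each corridor convex chain or enter/exit boundary is associated with the bunch realizing its shortest distance to $s$. The backbone of the argument is Lemma~\ref{lem:contigprop3} (contiguity for sections of wavefront), which guarantees that there \emph{exists} a valid association in which the boundary split between $A$ and $B$ occurs at a single contiguous cut; the task is then to verify that the binary-search loop of ASSOCATOB actually \emph{finds} this cut.

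First I would argue the loop invariant. At the start of each iteration, with counters $prevb$ and $prevc$, I claim that all bunches $b_1,\ldots,b_{prevb}$ have had their $RV$s finalized and that the boundary elements $c_1,\ldots,c_{prevc}$ already processed are associated consistently with proximity. The key to maintaining this is the \emph{unimodality} observed just before the statement: since (absent an intervening Type-IV event) the intersection points of the inter-bunch I-curves of $B$ with $SB_A$ appear in sorted order along $SB_A$, once $c_1,\ldots,c_s$ are assigned to $b_1,\ldots,b_j$ the remaining boundary $c_{s+1},\ldots,c_r$ cannot be closer to any earlier bunch. This is precisely what legitimizes the binary search in line~4 and what the text cites as the halting condition (line~11). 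I would invoke the non-crossing property (Lemma~\ref{lem:spnoncrossing}) here to rule out the pathological case where an I-curve of $B$ re-intersects $SB_A$ after the cut, which is why IIntersect is used to locate the \emph{first} intersection.

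Next I would verify termination and the correctness of the distance comparison. The loop advances $prevb$ strictly (line~9, $prevb:=j$ with $j>prevb$), so it terminates after at most $q$ iterations. For each candidate boundary element $c_s$, correctness of the re-association hinges on the comparison described in the paragraph preceding the algorithm: $c_s$ is reassigned from its current owner $a_k$ to $b_j$ iff the Euclidean distance from a valid segment of $b_j$ to $c_s$ is smaller, which is exactly the condition ``$c_s$ can be associated with $b_j$ instead of $a_k$'' in line~4. This reduces to the unimodal shortest-distance computation of Section~\ref{sect:sdcomp}, so each such test is well-defined and locates the true nearest bunch.

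The main obstacle, and where I would spend the most care, is the asymmetry of the single call ASSOCATOB$(A,B)$: it only reassigns elements of $RV(A)$ to bunches of $B$, and it does not know from which end of $B$ the reassignment should begin, nor does it treat elements of $SB_B$ that become reachable from $A$. The clean resolution is to defer full correctness to MERGE, which invokes ASSOCATOB on all the orientation combinations ($A,B$; $A,B^R$; $A^R,B$; $A^R,B^R$; and symmetrically $B,A$, etc.). I would therefore prove that \emph{this particular invocation} correctly handles the one-directional reassignment it is responsible for, and state that the contiguous cut guaranteed by Lemma~\ref{lem:contigprop3} is detected by at least one of the orientations, so that the union over the eight calls yields the globally correct association. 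The subtle point to check is that the calls do not conflict---i.e.\ once line~12 enforces $RV(A)\cap RV(B)=\phi$, the at-most-one-shared-element condition is preserved across the composite MERGE invocation---which follows because each boundary element changes ownership monotonically (toward the closer bunch) and never oscillates.
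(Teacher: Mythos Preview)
Your proposal follows essentially the same route as the paper: both ground correctness in Lemma~\ref{lem:contigprop3} to guarantee that a contiguous cut exists, then argue that the monotone advance of the binary search (your strict increase of $prevb$; the paper's ``once $j$ in $b_j$ is incremented it never gets decremented'') locates that cut while preserving contiguity for every bunch, with IIntersect handling the first-intersection subtlety. The one point the paper makes explicit that you only bracket out as an assumption is the case where two inter-bunch I-curves of $B$ intersect each other before reaching $SB_A$; the paper resolves this by noting that the corresponding Type-IV event fires first and updates the boundary-/waveform-sections of $B$, so ASSOCATOB never actually sees the unsorted configuration you exclude.
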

\begin{proof}
Let $SB_A$ and $SB_B$ be the sections of boundary associated with bunches in $A$ and $B$ respectively.
We intend to prove three characterizations given in the beginning of this Section.
Primarily, we prove that for every bunch $b$ in $A \cup B$, section of boundary associated with $b$ is contiguous along a boundary cycle whenever $SB_A \cup SB_B$ is contiguous before invoking the procedure.
\hfil\break

The procedure assumes that both $SB_A, SB_B$ are contiguous along a boundary cycle.
To justify this assumption, note that the association of bunches are updated either in ASSOCATOB procedure or in the procedure which splits a section of wavefront.
Since the wavefront split procedure maintains the contiguity property, the input bunch associations are guaranteed to be contiguous, provided that ASSOCATOB procedure maintains the contiguity.
It also assumes that $SB_A \cup SB_B$ is contiguous before invoking the procedure.
This is valid since we always invoke the procedure with arguments which obey this precondition.
\hfil\break

First, note that there exists a contiguous association of $SB_A \cup SB_B$ to bunches in $A \cup B$, due to Lemma \ref{lem:contigprop3}. 
The re-association of $SB_A$ to $B$ starts from the first element in $SB_A$.
Consider the first bunch $b_1$ in $B$.
The procedure assigns all the corridor convex chains or enter/exit boundaries prior to the intersection of I-curve($b_1, b_2$) with $SB_A$ to $b_1$, if these boundary elements are determined closer to $b_1$ as compared to segments in $A$. 
Let this set be $SB'$.
Since the associations start from the first element in $SB_A$, and the elements in $SB_A \cup SB_B$ are contiguous,  it is guaranteed that $SB'$ together with the section of boundary already associated with bunch $b_1$ prior to the invocation of this procedure is contiguous.
Inductively, this argument can be extended to all the bunch re-associations in $B$.
The binary search (line 4) over the inter-bunch I-curves returns the correct next I-curve which does not intersect $c_{prevc}$.  
%Suppose inter-bunch I-curve($a_i, a_{i+1}$) and inter-bunch I-curve($a_j, a_{j+1}$) for $i<j$ intersect the boundary at $c_{r'}$ and $c_{r''}$ respectively, where $c_{r'}$ occurs prior to $c_{r''}$ in $RV(A)$ (because of contiguity property).  
%By assigning the last convex chain or enter/exit boundary $c_s$ to $RV(b_j)$ (line 7) and by assigning $c_s$ to $c_{prevc}$ (line 9), we are making sure that each convex chain or enter/exit boundary is assigned to only one bunch in $B$.  
Once $j$ in $b_j$ is incremented (line 4) it never gets decremented, hence associating boundary edges from $A$ to $B$ (line 7) maintains the contiguity for each bunch.
In every iteration of the repeat loop (line 3), we iterate sequentially along the I-curves among bunches in $B$.  
Hence $SB_B$ is contiguous after the re-associations.
All the convex chains or enter/exit boundaries removed from $SB_A$ (line 11) are contiguous, hence $SB_B$ is contiguous after the invocation of ASSOCATOB.
For the bunches in $A$ whose associations are left as they were, based on the preconditions, contiguity is obviously maintained.
\hfil\break

Note that ASSOCATOB procedure does not take into account the case in which two inter-bunch I-curves of $B$ intersect before they intersect the section of boundary associated with $A$. 
The correctness for this case relies on the fact that the boundary- and/or waveform-sections of bunches in $B$ are updated whenever the Type-IV event corresponding to this inter-bunch I-curve intersection occurs.  
\hfil\break

Also, the correctness of IIntersect procedure is discussed in Section \ref{sect:IIntersectProc}.
\end{proof}

\begin{cor}
\label{cor:merge}
The MERGE($A, B$) procedure correctly updates the associations of bunches in both the sections of wavefronts' $A$ and $B$.
\end{cor}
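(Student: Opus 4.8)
The plan is to derive Corollary~\ref{cor:merge} directly from Theorem~\ref{lem:assocatobcorr}, exploiting the fact that \texttt{MERGE} is nothing more than two invocations of \texttt{ASSOCATOB} (Algorithm~2). First I would recall the asymmetry inherent in \texttt{ASSOCATOB}$(A,B)$: as established in the theorem, it re-associates the section of boundary $SB_A$ (currently held by the bunches of $A$) to the bunches of $B$ wherever the latter are closer, and halts at the first element of $RV(A)$ that must retain its association with $A$. Thus a single call treats $A$ and $B$ asymmetrically and only migrates associations in one direction. The whole point of \texttt{MERGE} is to make the update symmetric, so that each element of $SB_A \cup SB_B$ ends up associated with whichever of $A$ or $B$ is genuinely closest, while preserving the three characterizations (contiguity, total coverage, and at most one shared element) stated at the start of Section~\ref{sect:merging}.

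The key steps, in order, would be as follows. I would first invoke Theorem~\ref{lem:assocatobcorr} on the call \texttt{ASSOCATOB}$(SW_1, SW_2)$ to conclude that, afterward, every element of $SB_A$ that is closer to a bunch of $SW_2$ has been correctly re-associated, each affected bunch retains a contiguous boundary-section, and $RV(SW_1) \cap RV(SW_2) = \phi$ is enforced (line~11 of Algorithm~1). Next I would invoke the theorem again on the symmetric call \texttt{ASSOCATOB}$(SW_2, SW_1)$, which handles the complementary direction: elements of $SB_B$ that are closer to bunches of $SW_1$ get migrated. The crux is to argue that the two passes together cover $SB_A \cup SB_B$ completely and consistently: after both calls, every corridor convex chain or enter/exit boundary in $SB_A \cup SB_B$ is associated with exactly the closer of the two wavefront sections, and by the contiguity property for sections of wavefront (Lemma~\ref{lem:contigprop3}) the boundary associated with each of $A$ and $B$ forms a contiguous run separated by a single implicit I-curve, giving the at-most-one-shared-element characterization.

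I would then remark that the eight actual sub-invocations alluded to in the text (the four orientation combinations $A,B$; $A,B^R$; $A^R,B$; $A^R,B^R$, and their mirror images for $SB_B$) are needed only because \texttt{ASSOCATOB} does not know from which end of each bunch sequence the closer region begins; since correctness of each individual invocation is already guaranteed by Theorem~\ref{lem:assocatobcorr}, taking the combination that actually yields a contiguous, fully-covering re-association produces the desired symmetric merge. The correctness of the downstream updates---the $WST$ and $BST$ modifications (lines~12--13) and the pushed Type-III events (line~14)---follows because \texttt{ASSOCATOB} already performs and justifies them per call.

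The main obstacle I anticipate is the consistency argument at the boundary between the two passes: I must rule out the possibility that the first call \texttt{ASSOCATOB}$(SW_1,SW_2)$ leaves an element $c$ provisionally associated with $SW_1$ that the second call would then want to re-migrate in a way that breaks contiguity, or conversely that some element is claimed by neither pass. The resolution is that each pass halts precisely at the first element that must remain with the source section (line~11), so the ``handoff'' point---the single shared element $c_4$ in the running example of Fig.~\ref{fig:merge}---is exactly the place where the implicit separating I-curve of Lemma~\ref{lem:contigprop3} lies; because that separating I-curve is unique, the two contiguous runs meet in at most one element and jointly exhaust $SB_A \cup SB_B$, which is exactly the three characterizations. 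The remaining subtlety, the interaction with intra-$B$ I-curve crossings, is already delegated to the Type-IV event handler, as noted in the proof of Theorem~\ref{lem:assocatobcorr}, so no new work is required here.
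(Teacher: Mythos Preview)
Your proposal is correct and follows the same approach as the paper. In fact, the paper offers no explicit proof for this corollary at all: it is stated immediately after Theorem~\ref{lem:assocatobcorr} and is meant to follow directly, since \texttt{MERGE} (Algorithm~2) consists of nothing but the two calls \texttt{ASSOCATOB}$(SW_1,SW_2)$ and \texttt{ASSOCATOB}$(SW_2,SW_1)$; your elaboration of the consistency between the two passes and the role of the eight orientation variants simply spells out what the paper leaves implicit.
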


\section{Boundary Split}
\label{sect:boundarysplit}

As described in Section \ref{sect:defsandprops}, a boundary cycle may split. 
Suppose the edge $e$ of a junction $J$ (or $C$) is associated with the wavefront from both of its sides.
This causes $e$ to appear twice in a boundary cycle $BC$.
In Fig. \ref{fig:boundarysplit}, the bounding edges $b_i$ and $b_j$ of $BC$ correspond to $e$.
Whenever edge $e$ is struck from either of its sides, the boundary cycle splits into two disjoint boundary cycles, say $BC_i$ and $BC_j$.
Let $REG_i, REG_j$ be the regions bounded by $BC_i$ and $BC_j$ respectively. 
Both the boundary cycles $BC_i$ and $BC_j$ are surrounded by the wavefront, whereas at least two edges of $J$ (or $C$) are struck.
Considering the non-crossing property of shortest paths (Lemma \ref{lem:spnoncrossing}): 
\begin{itemize} \itemsep -2pt
	\item a shortest path to $t$ could either traverse only some edges/vertices of $REG_j$, or 
	\item traverse some edges/vertices of $REG_i$, one/two edges of $J$ (or $C$), or 
	\item traverse some edges/vertices of $REG_i$, one/two edges of $J$ (or $C$), and some edges/vertices of $REG_j$ in that order.
\end{itemize}
In other words, the only way in which a shortest path traversing $REG_i$ can reach $t$ is through $J$ (or $C$) as shown in Fig. \ref{fig:gatewaydef}.
Hence we refer $J$ (resp. $C$) as a gateway, say $g$.

\begin{figure}
\centerline{\epsfxsize=350pt \epsfbox{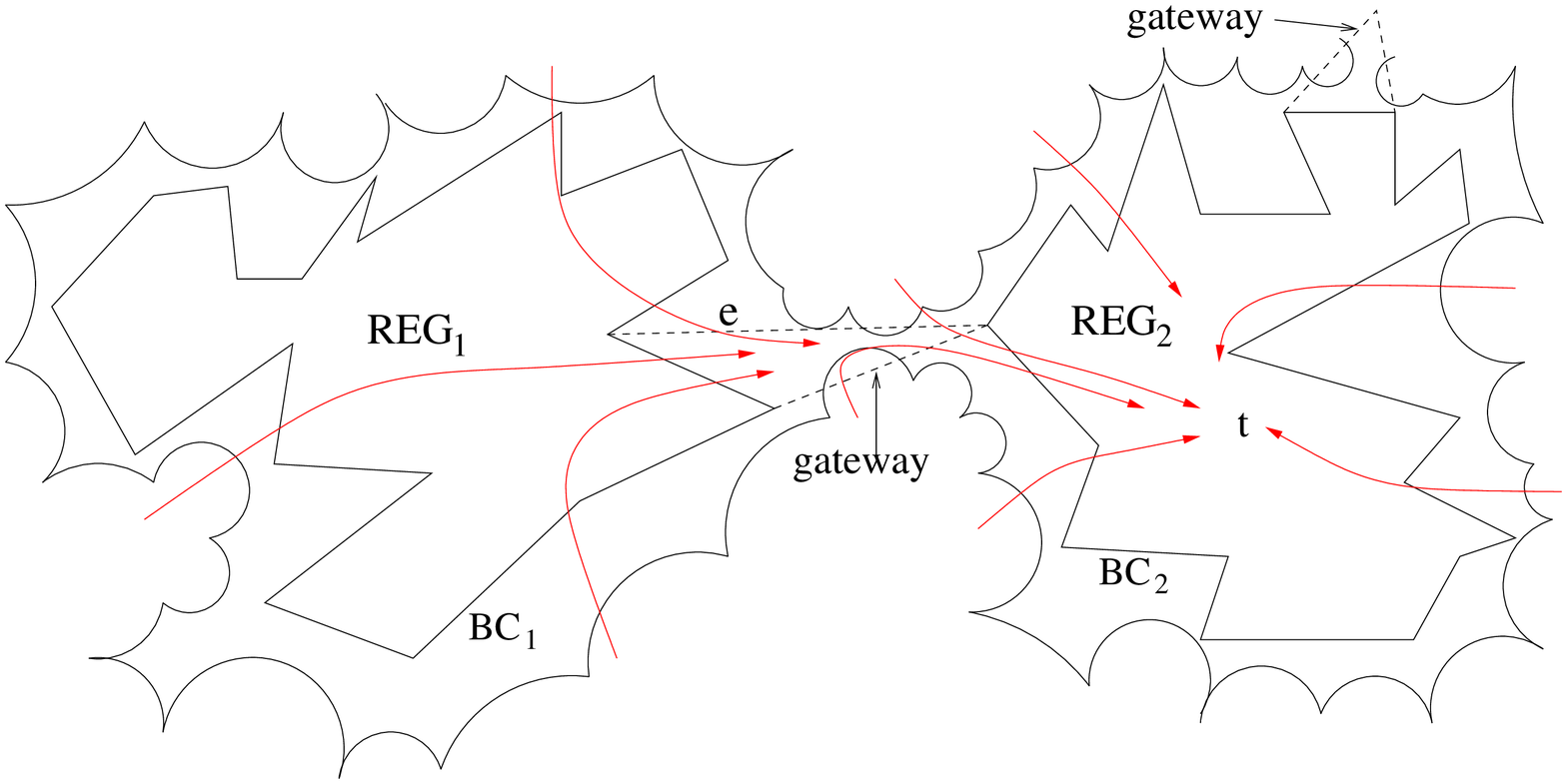}}
\caption{\label{fig:gatewaydef}{Wavefront segments that could cause a shortest path to $t$}}
\end{figure}

\begin{definition}
For a bounding edge $e \in S$ of a junction/corridor $g$, suppose $e$ is the last edge that struck, which resulted in the split of a boundary cycle $BC$ into $BC_i$ and $BC_j$, then $g$ is termed as a {\bf gateway}.
For every edge $e'$ of $g$, if $e'$ incident to both $BC_i$ and $BC_j$ then $e'$ is termed as a {\bf gateway-edge} of $g$.
\end{definition}

We denote that the gateway $g$ is {\it attached} to $REG_i$ (resp. $REG_j, BC_i, BC_j$), or $REG_i$ (resp. $REG_j, BC_i, BC_j$) is attached to $g$.
Taking Euclidean metric with the non-crossing nature of shortest paths (Lemma \ref{lem:spnoncrossing}) into consideration, each gateway is assigned a orientation, so that it suffice to expand the wavefront traversing through it over only one region attached to it.
Since in our case the destination is in $REG_j$, this is the region needed to be traversed by the wavefront expanding through $g$. 
The gateway $g$ is termed as the {\it outgoing gateway w.r.t. $REG_i$ (resp. $BC_i$)}.

\begin{lemma}
\label{lem:outgoinggw} Every region $REG$ either contains the destination $t$ or is attached with one outgoing gateway.
\end{lemma}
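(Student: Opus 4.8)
The plan is to prove the statement by induction on the sequence of boundary splits performed by the algorithm, maintaining throughout the stronger invariant that every current untraversed region either contains $t$ or has \emph{exactly} one outgoing gateway (though its boundary may meet other gateways, all of which are oriented \emph{into} the region). For the base case, observe that before the first boundary split there is a single untraversed region bounded by one boundary cycle, and since $t$ lies in the free space this region contains $t$; the invariant holds trivially.

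For the inductive step, I would consider the split at a junction/corridor $g$ that partitions a region $REG$ into $REG_i$ and $REG_j$, two untraversed regions connected within the untraversed area only through $g$. Every other region is untouched, so only $REG_i$ and $REG_j$ need to be checked. First I would treat the case $t \in REG$: then $t$ lies in exactly one part, say $REG_j$, so $REG_j$ satisfies the invariant by containing $t$, while any path from a point of $REG_i$ to $t$ must cross into $REG_j$ through $g$; hence $g$, oriented toward $REG_j$, is the (unique, newly introduced) outgoing gateway of $REG_i$. In the complementary case $t \notin REG$, the induction hypothesis supplies $REG$ with a unique outgoing gateway $g'$; I would orient the new gateway $g$ toward whichever of $REG_i, REG_j$ carries $g'$, say $REG_j$, so that $REG_j$ retains $g'$ as its outgoing gateway while $REG_i$ receives $g$, the exit route out of $REG_i$ toward $t$ reading $REG_i \to g \to REG_j \to g' \to \cdots \to t$.

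Two points require care and form the crux of the argument. First, I must ensure $g' \neq g$ and that $g'$ lies wholly on one of the two sub-cycles rather than straddling the pinch at $g$; this follows because each corridor enter/exit boundary is struck, and hence can induce a split, at most once, so the gateway created now is new and the previously created $g'$ survives intact on exactly one side of the pinch. Second, I must establish \emph{uniqueness} of the outgoing gateway rather than mere existence, and here the non-crossing property of shortest paths (Lemma \ref{lem:spnoncrossing}) is essential: from any point of a region not containing $t$, every shortest path to $t$ must leave through some gateway on its boundary, and by non-crossing all such paths exit through a common gateway, so no region can possess two distinct outgoing gateways. Combined with the bookkeeping that $REG$ passes at most one outgoing gateway to its children and the split contributes exactly one new gateway $g$ delivered to precisely one child, this gives each child that does not contain $t$ exactly one outgoing gateway.

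I expect the main obstacle to be the careful handling of orientations and the straddling issue in the second case: verifying that the old outgoing gateway $g'$ attaches entirely to one sub-region and that orienting $g$ toward that same sub-region is globally consistent with the direction toward $t$. Once the once-only striking of enter/exit boundaries pins down $g' \neq g$ and fixes which side $g'$ survives on, and the non-crossing property (Lemma \ref{lem:spnoncrossing}) secures uniqueness, the two-case induction closes and the lemma holds for every region present at any stage of the execution.
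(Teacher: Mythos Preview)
Your proposal is correct and follows essentially the same inductive argument as the paper: induct on the sequence of boundary splits, with the base case being the single initial region containing $t$, and in the inductive step orient the new gateway $g$ toward whichever of the two sub-regions contains $t$ or inherits the old outgoing gateway $g'$. The paper's proof is terser and does not separately argue uniqueness or invoke Lemma~\ref{lem:spnoncrossing}; uniqueness in the paper simply falls out of the construction (each split creates one new gateway, oriented one way), so your appeal to the non-crossing property is extra machinery you do not actually need here.
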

\begin{proof}
We prove this statement using induction.
When the algorithm starts, there is only one region, which does contain $t$.
Assume that the induction hypothesis holds for the input polygonal region having $k$ regions.
Now we extend the argument when a boundary split occurs, creating $k+1$ regions.
Let a region $REG$ is split into two regions, $REG_i$ and $REG_j$.
Suppose $g'$ is the gateway attached to both $REG_i$ and $REG_j$.
>From the induction hypothesis, we know that the region $REG$ either contains $t$ or an outgoing gateway $g$ is attached to it.
For the former case, where $REG_j$ (resp. $REG_i$) contains $t$, we orient $g'$ to be an outgoing gateway w.r.t. $REG_i$ (resp. $REG_j$).
In the case that $t$ is not located in $REG$ and $g$ is attached to $REG_j$ (resp. $REG_i$), again we orient $g'$ to be an outgoing gateway w.r.t. $REG_i$ (resp. $REG_j$).
\end{proof}

\begin{figure}
%\center{
%\subfigure[Corresponding boundary splits with gateways]{\epsfxsize=360pt \epsfbox{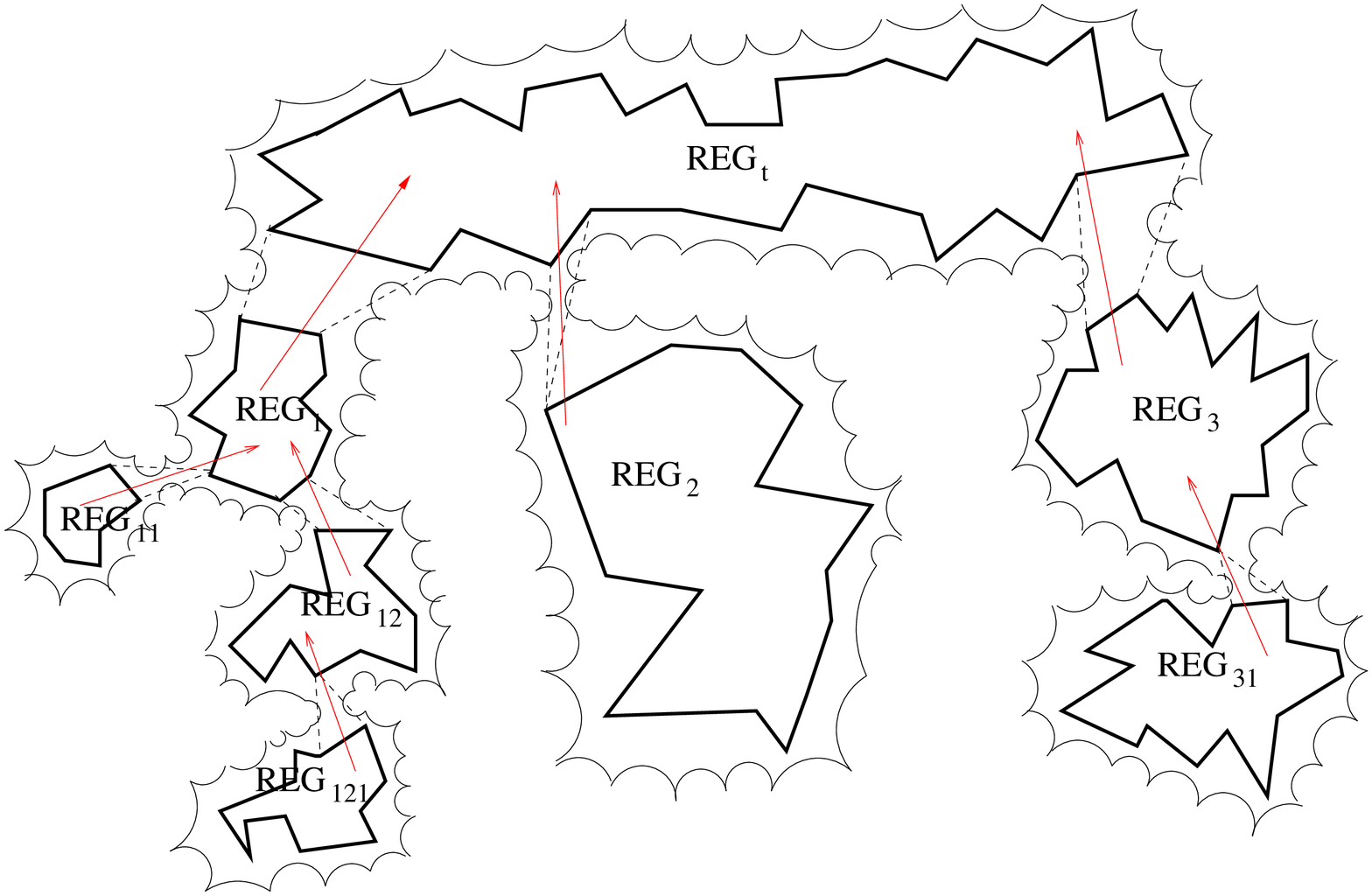}}
%\subfigure[Boundary Cycle Tree]{\epsfxsize=200pt \epsfbox{}}
\centerline{\epsfxsize=370pt \epsfbox{figs/bctb.eps}}
%}
\caption{\label{fig:bct}Boundary Cycle Tree and the corresponding region splits}
\end{figure}

The wavefront propagation and the initiation of boundary cycles can be represented as a tree.
The initial shortest path wavefront $w(s)$ is of $\epsilon$ radius, and the untraversed region contains $t$.
Due to the wavefront progression, suppose $REG$ splits into two, now one of these regions contains $t$. 
Let us denote the region containing $t$ by $REG_t$ and the other region by $REG_1$.
Both $REG_t$ and $REG_1$ are connected with a gateway $g_1$.
The orientation of $g_1$ is from $REG_1$ to $REG_t$.
Then we represent $REG_1$ and $REG_t$ as nodes and gateway $g_1$ as an edge in a graph as described below.
\hfil\break

At any point of the algorithm, all the untraversed regions together with the gateways that connect respective regions together are represented with an oriented {\bf Boundary Cycle Tree}, $BCT(V, A)$, where the set $V$ comprises of the set of untraversed regions in the polygonal domain and the set $A$ comprises of gateways, such that every arc $a=(v', v'') \in A$ represents a gateway from the boundary cycle represented at $v'$ to the boundary cycle represented at $v''$.
The nodes and edges are added to the (logical) boundary cycle tree as the algorithm proceeds i.e., whenever there is boundary split, one new node corresponding to the new region and one edge corresponding to the gateway are added to BCT. 
For example, in Fig. \ref{fig:bct}, consider a path $REG_t, REG_{1}, REG_{12}, REG{121}$ in BCT.
This corresponds to a boundary splits that occurred over time among the regions along this path, the first split being at the root.
Also, a shortest path to $t$ that occurs along this path must traverse across a suffix of the regions corresponding to nodes $REG_{121}, REG_{12}, REG_{1}, REG_t$ in the boundary cycle tree.
In other words, consider a section of wavefront $W$ that traverses the region at node $v$ of BCT; for $W$ to cause a shortest path $SP$ to $t$, $SP$ must traverse across the regions and gateways at all the ascendant nodes of $v$.
\hfil\break

A shortest path from $s$ to $t$ may possibly goes through the region corresponding to a node in the tree and traverses across all the gateways occurring in the path from that node to the root.
At every node $v$ of the boundary cycle tree, we merge all the wavefronts that could traverse the gateway corresponding to $v$, say $g$.
In other words, all the wavefronts that traverse the regions/gateways associated with the nodes of the subtree rooted at $v$ are merged at $g$.
\hfil\break

To determine the location of either $t$ or the outgoing gateway w.r.t. a boundary cycle $BC$, we use the algorithm given in \cite{Vaidya86}.
This facilitates in orienting the gateways.
\hfil\break

\begin{lemma}
\label{lem:gatewayscorr}
Given that there exists a path from $s$ to $t$, a shortest path can be found using some (none) of the gateways at any stage of the wavefront progression.
\end{lemma}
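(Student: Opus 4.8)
The plan is to proceed by induction on the number of boundary splits performed so far---equivalently, on the number of nodes of the Boundary Cycle Tree $BCT$---with the non-crossing property of shortest paths (Lemma \ref{lem:spnoncrossing}) as the main tool. For the base case I would observe that before any split the wavefront occupies a single untraversed region containing $t$ and no gateway exists; the continuous Dijkstra wavefront then expands without restriction, so the path recovered when it strikes $t$ uses no gateway, which is the ``none'' alternative of the statement.

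For the inductive step I would take a split that divides a region $REG$ into $REG_i$ and $REG_j$ joined by a gateway $g$, and invoke Lemma \ref{lem:outgoinggw} to assume without loss of generality that $REG_j$ is the region containing $t$ (or inheriting the outgoing gateway of $REG$), so that $g$ is oriented from $REG_i$ to $REG_j$. The first key step is to show that $g$ is the unique passage out of $REG_i$ toward $t$: since $BC_i$ is entirely surrounded by the wavefront except along the gateway-edges of $g$, any $s$--$t$ path meeting the interior of $REG_i$ must cross $g$.

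Next I would show that propagating the wavefront through $g$ into $REG_j$ only---never back into $REG_i$---loses no shortest path. Here I would apply Lemma \ref{lem:spnoncrossing}: a shortest path that left $REG_i$ through $g$ and then re-entered $REG_i$ would have two segment interiors crossing at $g$, contradicting non-crossing. Hence it suffices to merge at $g$ all wavefront sections traversing $REG_i$, exactly as the $BCT$ construction prescribes, and to feed them only into $REG_j$. Applying the induction hypothesis to the subtree of $BCT$ rooted at $REG_j$, with $g$ acting as a feeding source, yields the shortest $s$--$t$ path using $g$ together with the gateways along the root path of that subtree---the ``some'' alternative.

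The step I expect to be the main obstacle is justifying that $REG_i$ is truly sealed except at the gateway-edges and that the wavefront merger at $g$ aggregates the correct shortest-distance information. The delicate point is excluding a shortest path that oscillates across $g$; this is precisely what repeated application of the non-crossing property rules out. I would also have to verify that ``surrounded by the wavefront'' genuinely closes off $REG_i$, which should follow from the definition of a boundary split together with the fact that at least two edges of the splitting junction or corridor have already been struck.
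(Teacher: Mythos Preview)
Your proposal is essentially the same approach as the paper's: induction on the number of regions (boundary splits), a trivial base case with no gateways, and an inductive step that uses Lemma~\ref{lem:outgoinggw} to orient the new gateway $g$ from $REG_i$ toward the side $REG_j$ containing $t$ (or the inherited outgoing gateway), then argues that restricting propagation through $g$ in that direction preserves some shortest $s$--$t$ path.

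The one substantive difference is in how the ``oscillation'' case is dispatched. The paper enumerates three ways a shortest path can meet $REG$ (through $REG_j$ only; $REG_i$ then $REG_j$; or $REG_j$ then $REG_i$ then $REG_j$) and dismisses the last by a direct appeal to the Euclidean metric---a detour into $REG_i$ and back cannot shorten the path. You instead invoke Lemma~\ref{lem:spnoncrossing}. Be aware that Lemma~\ref{lem:spnoncrossing} concerns two distinct shortest paths, not self-crossing of a single path, so your phrasing ``would have two segment interiors crossing at $g$'' is not quite what that lemma delivers; the paper's triangle-inequality style argument is the cleaner justification here. Since you already flag this as the delicate step, tightening it to the metric argument aligns you exactly with the paper.
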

\begin{proof}
The proof is by induction on the number of regions present when the shortest path from $s$ to $t$ is found.
Consider the base case in which $t$ resides on or within one boundary cycle by the time the shortest path is found.  
Here, we find a shortest path from $s$ to $t$ without using any gateways. 
Assume that the induction hypothesis holds for the given polygonal region having $k$ regions.  
Now we extend the argument for $k+1$ regions.
Let a region $REG$ be split into two regions, say $REG_i$ and $REG_j$.  
Suppose the region $REG_j$ has either $t$ or an outgoing gateway $g'$ w.r.t. $REG$ attached to it.  
Also, suppose $g$ is the outgoing gateway w.r.t. $REG_i$ attached to both $REG_i$ and $REG_j$.  
See Fig. \ref{fig:pathinbc}.
\hfil\break

\begin{figure}
\centerline{\epsfxsize=370pt \epsfbox{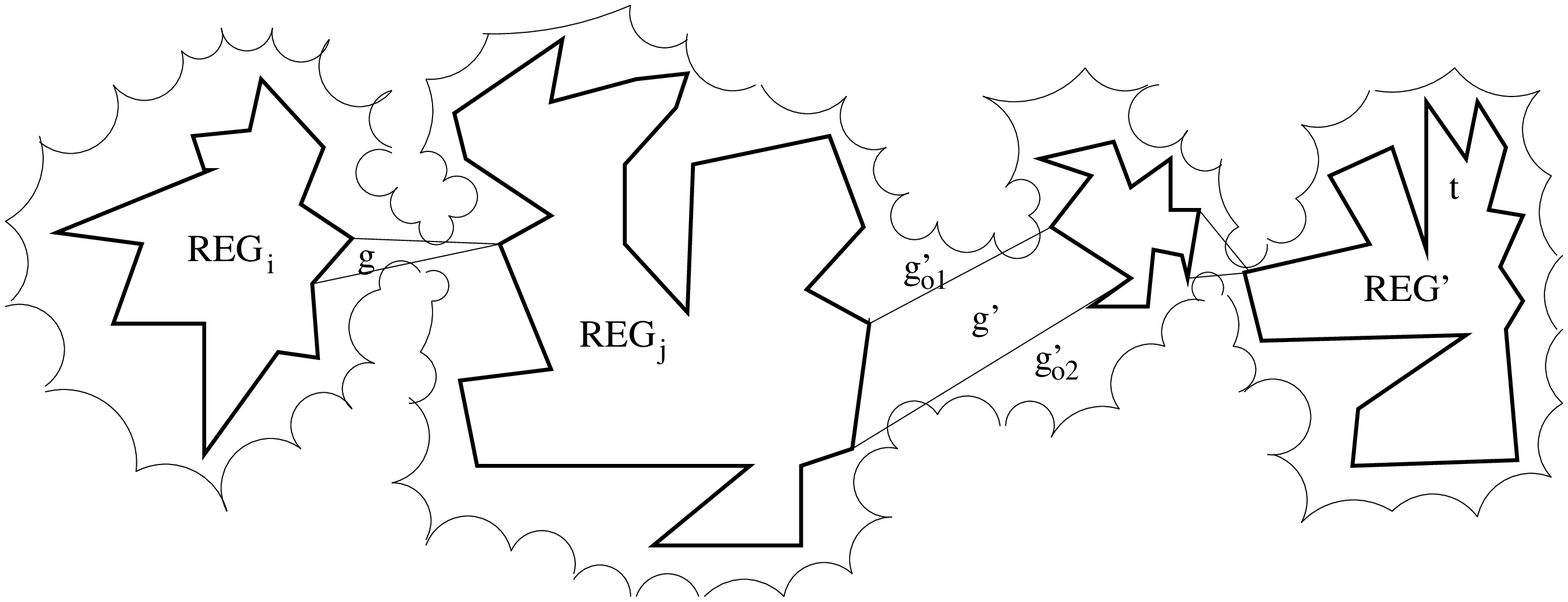}}
\caption{\label{fig:pathinbc}A path in the boundary cycle tree}
\end{figure}

>From the induction hypothesis, we know that with $k$ gateways, there is a shortest path from $s$ to $t$ where that shortest path uses some (none) of the gateways.  
If this shortest path does not traverse across $REG$, there is nothing to prove.
Otherwise, this shortest path may go in one of these ways: traversing across $REG_j$ only; traversing across $REG_i$ first and then entering $REG_j$; traversing across $REG_j$ first and later entering $REG_i$ before reentering $REG_j$.  
In the first case, we can find a shortest path without using $g$.  
For the second case, the orientation of $g$ with the non-crossing property of the wavefront assure that such a shortest path is retained.  
The Euclidean metric makes the last case inessential, hence the orientation of $g$ eliminates this case altogether. 
\end{proof}

Let $A$ and $B$ be the sections of wavefronts associated with the edges $b_i$ and $b_j$ (the bounding edge that is just struck) respectively.
Based on the above Lemma, $A$ and $B$ need to be associated with $BC_j$ only.
Let $W$ be the set of bunches associated with boundary cycle $BC_j$ that either contains $t$ or attached with an outgoing gateway.
Let $A'$ denote the listing of bunches in $W$ ordered so as to end with the last  bunch $A$ associated with the bounding edge $b_{i}$ while traversing the wavefront.
Similarly, we define $B'$ to be the wavefront bunches ordered in the reverse direction so as to end with $b_j$.
We merge $A'$ and $B'$ by invoking the MERGE procedure.
This determines the association of bunches in $A$ and $B$ to the bounding edges of the boundary cycle $BC_j$.
\hfil\break

\paragraph{Wavefront propagation along the gateways} \hfil\break

We next show how to process the wavefront propagation over the boundary cycles.
The orientation of the gateways imposes an ordering of boundary cycles for processing the regions.
Consider the current wavefront associated with region $REG_i$ and the outgoing gateway $g$ attached to it. 
We expand the shortest path wavefront over $REG_i$, until every corridor/junction bounding edge in that region is struck.  
When this happens, we say that $REG_i$ is traversed.
When $BC_i$ is a degenerate cycle comprising one vertex, which is not $t$, this traversal is not required and the algorithm can proceed to processing of $REG_j$.
During the traversal of $REG_i$, all the bunches that strike gateway-edges of $g$ are combined into one section of wavefront, termed as $B$.
Suppose $g'$ is the outgoing gateway attached to $REG_j$.
Also, let $g'_{o1}, g'_{o2}$ be the sides of edges of $g$ from outside of $g$. 
Consider any region $REG'$ corresponding to an ancestor of node corresponding to $REG_i$ in the boundary cycle tree.
See Fig. \ref{fig:pathinbc}.
>From non-crossing property of sections of wavefront (Lemma \ref{lem:spnoncrossing}), note that there is no way for a wavefront segment in $B$ to reach a point located in $REG'$, without traversing some of: $g'_{o1}$, $g'_{o2}$, edges in $BC_j$.    
Since wavefront segments in $B$ could potentially expand and traverse into $REG_j$ and the regions that are ancestors of $REG_j$, the section of wavefront $B$ is associated with the bounding edges of $BC_j, g'_{o1}$ and $g'_{o2}$ so that to cover all possible non-crossing shortest paths. 
\hfil\break

We backtrack in time and restart expanding the shortest path wavefront from the time at which $BC$ was split into $BC_i$ and $BC_j$, say time $t_j$.
During this re-traversal, we are interested in associations of $B$ with $REG_j, g'_{o1}$ and $g'_{o2}$.
Let $A$ be the sections of wavefront associated to the bounding edges of $BC_j$, $g'_{o1}$ and $g'_{o2}$.
We invoke MERGE procedure to merge $B$ with $A$.
Let $A'$ denote the listing of bunches in $A$ starting with the first bunch among the bunches in $A$ associated with the bounding edge $b_{i+1}$. %while traversing the wavefront in the clockwise direction.
Then we invoke MERGE procedure with $A'$ and $B$ as arguments.
The combined wavefront will then expand into the region $REG_j$, starting at time $t_j$.
\hfil\break

%\rem{??? offset more details}
However, note that $B$ is the result of wavefront propagation that specifies the bunches at a future time $t'$, whereas $A$ is the wavefront section when the boundary split occurs (i.e., at time $t_j$).
We thus create an offset at the root of $B$ equal to time $-t'$.
This negative value ensures that the wavefront section $B$ and $A$ are processed at the same time.
A segment that has a negative offset is not active and will not be considered for the shortest distance computations.
The computations will take into account the offsets of segments in $B$. 
\hfil\break

%The merge will result in waveform segments where a waveform segment may contain
%an invalid bunch. The invalid bunches are not incorporated into shortest
%distance or hull computations until the time indicated by their time stamps.
%They may however be eliminated by the association procedures which compares I-curves
%and determines the elements of the boundary closest to a particular bunch.
%Also, upper hull computations at internal nodes of $WS$ tree do not consider 
%the bridges corresponding to these invalid bunches.  
%The insertion of bunch $w$'s information (essential for computing inter-bunch I-curves related to $w$) into $WS$, facilitates the consideration of bunch $w$ whenever $WS$ will split and/or merge with the other sections of wavefront as the wavefront progresses over/towards boundary cycle $BC_j$ and gateway-edges of $g'$.  
%When bunch $w$ traverses the last untraversed edge it can traverse in $REG_i$, it is inserted into the appropriate boundary-/waveform-section trees which contain either $w$ or parts (sequence of segments) of $w$ which are formed while $w$ got split.
%This includes reverting the invalid marks and updating the upper 
%hulls along the paths from leaves to roots of corresponding boundary-/waveform-section trees.

\begin{lemma}
\label{lem:mergecorr}
The associations of all sections of wavefront involved in an invocation of a merge procedure are updated correctly.
\end{lemma}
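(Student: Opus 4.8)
The plan is to treat Corollary~\ref{cor:merge} as the atomic correctness statement for a single MERGE$(A,B)$ call and to reduce Lemma~\ref{lem:mergecorr} to checking that every context in which MERGE is invoked meets the hypotheses of that corollary, plus verifying that the extra bookkeeping specific to boundary splits preserves the association semantics. Concretely, I would first observe that Corollary~\ref{cor:merge} (resting on Theorem~\ref{lem:assocatobcorr}) already guarantees, for two sections of wavefront whose boundary sections $SB_A,SB_B$ are contiguous along a boundary cycle and with $SB_A\cup SB_B$ contiguous, that the resulting partition of $SB_A\cup SB_B$ among the bunches of $A\cup B$ is correct and contiguous. So the remaining work is (i) to verify the contiguity precondition in each caller, (ii) to justify restricting the merge to a single boundary cycle, and (iii) to show that the time-offset mechanism does not invalidate the distance comparisons underlying the associations.

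For (ii) I would appeal to the gateway machinery. By Lemma~\ref{lem:outgoinggw} the region $REG_j$ either contains $t$ or carries an outgoing gateway, and by Lemma~\ref{lem:gatewayscorr} any shortest path to $t$ that enters $REG_i$ must leave it through the gateway $g$ into $REG_j$. Combined with the non-crossing property (Lemma~\ref{lem:spnoncrossing}), this shows that no segment of $A\cup B$ can reach $t$ without crossing the bounding edges of $BC_j$ or the outside sides $g'_{o1},g'_{o2}$ of the next outgoing gateway. Hence associating $A$ and $B$ only against $BC_j$ (together with $g'_{o1},g'_{o2}$) loses no shortest path, and the contiguity of these bounding edges along $BC_j$ supplies the precondition required by Corollary~\ref{cor:merge}; this settles (i) and (ii) simultaneously, since each caller hands MERGE exactly such a contiguous list.

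The hard part will be (iii), the time offset. Here $B$ records the wavefront at a future time $t'$ while $A$ records it at the split time $t_j$, so the two cannot be compared at face value. I would argue that installing the offset $-t'$ at the root of $B$ renders every segment of $B$ \emph{invalid} until the re-expansion, restarted at $t_j$, has advanced far enough that the segment would genuinely have been created; because the shortest-distance routines of Section~\ref{sect:sdcomp} skip invalid segments and fold the stored offsets into the distance before comparing, each proximity test inside ASSOCATOB between a boundary element and a bunch of $B$ is performed at the \emph{true} shortest-path distance. Consequently the weighted-Voronoi interpretation of an association---a bounding edge is associated with the bunch whose center yields the smaller true distance---is preserved verbatim despite the translation in time, so the argument behind Corollary~\ref{cor:merge} transfers unchanged to the combined wavefront.

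Finally, I would note that because the procedure cannot know from which end the re-association should begin, MERGE runs ASSOCATOB on $A,B$ and $B,A$ with both orientations and their reversals; by Theorem~\ref{lem:assocatobcorr} each call maintains contiguity, so every way the two contiguous boundary sections can abut is examined and a single well-defined contiguous partition of $SB_A\cup SB_B$ results. The one genuinely delicate point, and the place where the proof departs from the already-established MERGE correctness, is the offset argument of step (iii); everything else is an application of Corollary~\ref{cor:merge} once the preconditions are checked.
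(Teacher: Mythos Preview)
Your proposal follows essentially the same approach as the paper: reduce to Corollary~\ref{cor:merge}, justify via Lemmas~\ref{lem:outgoinggw} and~\ref{lem:gatewayscorr} (together with Lemma~\ref{lem:spnoncrossing}) that it suffices to associate $A\cup B$ only to the bounding edges of $BC_j$ and the outside sides of the next outgoing gateway, and then argue that the time-offset mechanism keeps the distance comparisons honest.

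The one place where the paper is more explicit than your plan is in the backtracking analysis. You treat the offset purely as a device to align distance values, arguing that segments of $B$ are marked invalid until the restarted expansion catches up. The paper additionally enumerates three structural hazards for a bunch $w\in B$ computed at the future time $t'$ but reinserted at $t_j$: (1) $w$ may not yet be alive, (2) $w$ may still be associated with an untraversed edge of $BC_i$, and (3) $w$ may be associated to an edge of $BC_j$ that has \emph{already} been traversed by the time of the backtrack. Your offset argument covers case~(1) and implicitly case~(2) through the invalid marks, but you do not address case~(3), where the boundary element that $B$ would be associated with has disappeared from $\partial B$ before the merge; the paper disposes of this by observing that all bunches that will eventually traverse $g$ are initiated at backtrack time, so $B$ is always merged with a live section of wavefront associated to the same boundary element. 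This is a small but real piece of the argument that your plan should make explicit.
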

\begin{proof}
Suppose the boundary cycle $BC$ is split into $BC_i$ and $BC_j$. 
Let $REG_i, REG_j$ be the regions bounded by $BC_i, BC_j$, respectively.
Let $g', g$ be outgoing gateways w.r.t. $BC, BC_i$ respectively.
Further, let $g'_{o1}, g'_{o2}$ be the sides of $g'$ from outside $g'$. 
The region $REG_j$ is guaranteed to either contain $t$ or attached with an outgoing gateway.
In both these cases, from  Lemma~\ref{lem:gatewayscorr} it suffice to associate the sections of wavefront traversing through gateway $g$ solely to the bounding edges of $BC_j, g'_{o1}, g'_{o2}$. 
Hence, while progressing the wavefront over $REG_i$, we combine all the sections of wavefront associated with the gateway-edges of $g$ into a section of wavefront.   
And, we associate this section of wavefront with the bounding edges of boundary cycle $BC_j$ and the gateway-edges of $g'$.
To reduce the number of event points, at first we traverse $REG_i$ and then we determine these associations by invoking MERGE procedure.  
After these associations are determined, we offset the wavefront segments that traverse region $REG_i$ by the time difference from the time at which $BC$ was split.
\hfil\break

Suppose a bunch $w$ was determined to be in a section of wavefront $B$ which struck gateway-edges of $g$.  
As the wavefront progresses, the bunch $w$ may get split or merged with other sections of wavefronts.
Since we are, in effect, restarting the wavefront from the time at which $BC$ was split into $BC_i$ and $BC_j$, the correctness proof needs to take into account the following:

\begin{enumerate} \itemsep -2pt
	\item bunch $w$ may either not be alive.
	\item bunch $w$ associated with an untraversed edge in $BC_i$.
	\item bunch $w$ may be associated to an edge of $BC_j$ while that edge was already traversed.
\end{enumerate}

In the first two cases, although $w$ is split and may be part of various boundary-/waveform-sections, the invalid marks in the corresponding boundary-/waveform-section trees guarantee that these can not cause Type-I/Type-II events.  
To handle the third case, as soon as we backtrack to the time at which boundary cycle $BC$ was split, we are initiating all possible bunches corresponding to boundary-sections and waveform-sections which will traverse $g$ in future (as determined by the wavefront propagation within $REG_i$).  
This facilitates in moving a bunch (resp. waveform-section) in $B$ forward with another section of wavefront which is also associated with the same boundary-section (resp. edge) as $B$.
Hence the third case is not possible.
\hfil\break

The correctness of updating associations, boundary-sections and waveform-sections of a given two sections of wavefront relies on the MERGE procedure, whose correctness is proved in the Lemma \ref{cor:merge}.
\end{proof}

\begin{lemma}
\label{lem:nummerges} The total number of merges during the entire algorithm are $O(m)$.
\end{lemma}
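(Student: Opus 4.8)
The plan is to account for every invocation of the MERGE procedure by charging it to a discrete combinatorial event whose total count is $O(m)$. First I would identify the two situations in which MERGE is called. The first is at a boundary split: the Property asserting that a waveform merger is required whenever there is a boundary split ties each split to a merge, and the boundary-split discussion confirms this by invoking MERGE on the ordered bunch lists $A'$ and $B'$ to re-associate the wavefront with $BC_j$. The second is during wavefront propagation across a gateway, where, after $REG_i$ is fully traversed, the section of wavefront $B$ that struck the gateway-edges is combined and MERGEd forward into $REG_j$. Each situation triggers only a constant number of MERGE calls, so it suffices to bound the number of boundary splits and the number of gateway traversals.

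Next I would bound the number of boundary splits by $O(m)$ via an edge-charging argument analogous to the one used in the proof of Lemma~\ref{lem:numbunches}. A boundary split occurs precisely when a bounding edge that appears twice in a boundary cycle $BC$ (as $b_i=(u_i,v_i)$ and as $b_j=(v_i,u_i)$) is struck from either side; such an edge is a side of a junction edge or of a corridor enter/exit boundary. Once struck, the edge is marked traversed and, as already noted for enter/exit boundaries, it no longer participates in any association or future strike, so it can cause at most one split. Since there are $O(m)$ junctions and $O(m)$ corridors, the number of candidate edges is $O(m)$, giving at most $O(m)$ boundary splits overall.

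Then I would bound the gateway traversals. Each boundary split adds exactly one new node (region) and one new arc (gateway) to the boundary cycle tree $BCT$, so by the previous paragraph there are $O(m)$ gateways. By Lemma~\ref{lem:outgoinggw} every region is attached to at most one outgoing gateway, and the orientation of the gateways imposes a single processing order on the regions; consequently each gateway is traversed — its attached region $REG_i$ expanded and the resulting wavefront merged forward — exactly once. This yields $O(m)$ gateway-propagation merges. Combining the two counts, $O(m)$ boundary-split merges plus $O(m)$ gateway-traversal merges, with $O(1)$ MERGE calls per event, gives $O(m)$ merges in total.

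The main obstacle I anticipate is the bookkeeping needed to rule out double-counting and cascading. In particular I would verify that a single MERGE call (which internally invokes ASSOCATOB a constant number of times) does not itself spawn further boundary splits or gateway traversals beyond those already charged, and that the re-traversal performed during gateway propagation — restarting the wavefront from the split time $t_j$ with negative offsets at the root of $B$ — reuses the same gateway rather than generating fresh candidate edges. The offset mechanism and the invalid marks in the boundary- and waveform-section trees should ensure that no already-consumed edge is re-charged, but making this precise is the delicate part of the argument.
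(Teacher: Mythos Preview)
Your argument is correct and is a careful, fully worked-out version of what the paper states in a single sentence: the paper's entire proof is that ``the merging is done in a partially traversed junctions or corridors, and there are $O(m)$ corridors or junctions.'' You arrive at the same charging scheme---each merge is tied to a junction or corridor edge that can be consumed only once---but you explicitly separate the boundary-split merges from the gateway-propagation merges and bound each by $O(m)$, whereas the paper treats a gateway as just another (partially traversed) junction/corridor and absorbs both cases into one line. Your extra bookkeeping about cascading and re-traversal is not needed for the bound but does no harm.
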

\begin{proof}
The merging is done in a partially traversed junctions or corridors, and there are $O(m)$ corridors or junctions. 
\end{proof}

\begin{lemma}
\label{lem:numgateways} The total number of gateways/gateway-edges are $O(m)$.
\end{lemma}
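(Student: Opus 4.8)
The plan is to count gateways via boundary splits and then count gateway-edges via a constant factor per gateway.

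First I would use the construction of the boundary cycle tree $BCT(V,A)$ described above: each boundary split adds exactly one new node (the freshly created region) and exactly one new arc (the gateway orienting that region toward its parent). Hence the number of gateways equals the number of arcs of $BCT$, which equals the total number of boundary splits performed over the whole execution. So it suffices to show that only $O(m)$ boundary splits occur.

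Next I would bound the number of boundary splits. A split is triggered only when a bounding edge $e$ that appears twice in a single boundary cycle (once for each of its two sides) is struck by the wavefront. The edges that can play this role are precisely the corridor enter/exit bounding edges, which coincide with the junction edges (a junction is a triangle whose at most three edges are each shared with a distinct adjacent corridor). Since the decomposition has $O(m)$ corridors and $O(m)$ junctions, and each corridor carries only two enter/exit boundaries while each junction has at most three edges, there are only $O(m)$ distinct such edges. Moreover, each enter/exit boundary is struck at most once and is marked traversed thereafter, so it no longer participates in associations or in any boundary cycle; consequently a given edge can cause at most one split. Combining the $O(m)$ candidate edges with the at-most-one-split-per-edge property yields $O(m)$ boundary splits, and therefore $O(m)$ gateways.

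Finally, for the gateway-edges I would appeal directly to the definition: each gateway $g$ is a single junction or a single corridor, and its gateway-edges form a subset of the edges of $g$. A junction has at most three edges and a corridor has two enter/exit bounding edges, so each gateway contributes only $O(1)$ gateway-edges. Summing over the $O(m)$ gateways gives $O(m)$ gateway-edges. The step I expect to be the main obstacle is making the second paragraph fully rigorous — in particular verifying that the ``struck at most once'' property genuinely precludes a second split at the same edge even in the doubly-appearing-in-the-cycle situation, uniformly for both corridor boundaries and junction edges. I would handle this by mirroring the amortized counting already used in Lemma~\ref{lem:numbunches}, where once an enter/exit boundary causes a bunch split the same boundary cannot cause another, so that the $O(m)$ bound transfers from candidate edges to actual splits.
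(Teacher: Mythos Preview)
Your argument is correct and follows essentially the same idea as the paper. The paper's proof is terser: it observes that since there are $O(m)$ corridors and junctions in total, the number of boundary cycles that can ever arise is $O(m)$, and then invokes Lemma~\ref{lem:outgoinggw} (each boundary cycle other than the one containing $t$ carries exactly one outgoing gateway) to conclude. Your route through the boundary cycle tree and explicit split-counting unpacks the same reasoning --- the bijection between splits, new regions, and gateways is exactly what Lemma~\ref{lem:outgoinggw} encodes --- and your separate constant-per-gateway bound for gateway-edges is something the paper leaves implicit. The concern you flag about the ``struck at most once'' property is not an actual obstacle: the paper states directly that each enter/exit boundary is marked traversed upon first strike and thereafter removed from all associations, so no edge can trigger a second split.
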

\begin{proof}
Since there are a total of $O(m)$ corridors/junctions together, the number of possible boundary cycles are $O(m)$.  
>From Lemma \ref{lem:outgoinggw}, each boundary cycle except the one having $t$ is attached with one outgoing gateway.  
Hence the complexity.
\end{proof}

\begin{theorem}
\label{thm:typeItypeIIevtsduetoGateways} The total number of Type-I/Type-II events due to gateways are $O(m)$. 
\end{theorem}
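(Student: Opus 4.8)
The plan is to identify each Type-I/Type-II event created during gateway processing with a single newly-formed association, and then charge these associations to the $O(m)$ corridor convex chains and enter/exit boundaries. First I would recall from Section~\ref{sect:boundarysplit} that every gateway arises from exactly one boundary split, and that handling a split (together with the subsequent wavefront propagation along the gateway) invokes the MERGE procedure a constant number of times, combining the incoming section of wavefront $B$ that traverses the outgoing gateway with the section $A$ already associated with the parent boundary cycle $BC_j$ and the parent gateway-edges. By Lemma~\ref{lem:numgateways} there are $O(m)$ gateways, so by Lemma~\ref{lem:nummerges} there are $O(m)$ such merges in all. Since a Type-I (resp. Type-II) event is pushed to the min-heap by a merge exactly when that merge establishes a new association of a corridor convex chain/enter-exit boundary with a waveform-section (resp. of a bunch with a boundary-section), it suffices to bound the total number of associations created by all gateway-triggered merges.

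The heart of the argument is to show that each corridor convex chain or enter/exit boundary $c$ acquires a new association at a gateway merge only $O(1)$ times over the entire run. When a child region's combined wavefront $B$ crosses the outgoing gateway and is merged into $REG_j$ by MERGE (Theorem~\ref{lem:assocatobcorr}), the bounding edges it can re-acquire form, by the contiguity properties (Lemmas~\ref{lem:contigprop2} and~\ref{lem:contigprop3}), a single contiguous run of $BC_j$; and by the non-crossing property (Lemma~\ref{lem:spnoncrossing}) the runs grabbed by wavefronts entering through different gateways of $REG_j$ cannot interleave. Combined with the orientation of gateways (Lemma~\ref{lem:outgoinggw}), which forces every region except the one holding $t$ to be traversed toward $t$ through a unique outgoing gateway, and with the monotone (continuous-Dijkstra) order in which events are processed, this guarantees that once $c$ has been absorbed by the first wavefront that carries it across its gateway toward $t$, no later wavefront can re-acquire $c$ without crossing an already-present section of wavefront. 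Hence each $c$ changes its gateway-association $O(1)$ times.

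Summing over all corridor convex chains and enter/exit boundaries then finishes the proof: each of the $O(m)$ corridors contributes at most two convex chains and two enter/exit boundaries and each of the $O(m)$ junctions at most three edges, so there are $O(m)$ such objects in total; with $O(1)$ associations created per object and one event per created association, the number of Type-I/Type-II events due to gateways is $O(m)$.

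I expect the middle step to be the main obstacle. A single merge may reassign arbitrarily many bounding edges at once, so there is no useful per-merge bound and the count must be amortized across the lifetime of each bounding edge. The delicate part is to make precise, using the gateway orientation and the non-crossing lemma, that a bounding edge of $BC_j$ is never re-acquired after the first wavefront absorbs it across the gateway --- ruling out a later child of $REG_j$ winning back an edge it does not reach first --- and to verify that the backtracking-in-time mechanism of Section~\ref{sect:boundarysplit}, which reactivates bunches with negative offsets, marks superseded segments inactive and therefore does not manufacture additional spurious associations.
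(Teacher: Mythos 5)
There is a genuine gap, and it sits exactly where you predicted. Your reduction rests on the claim that each corridor convex chain or enter/exit boundary acquires a new association at a gateway-triggered merge only $O(1)$ times over the whole run, but you never establish this; you only list the ingredients (contiguity, non-crossing, gateway orientation) that you hope would yield it. The claim is not obviously true: a bounding edge deep inside $BC_j$ can have its association changed by the merge at gateway $g$, and nothing you have shown prevents it from being re-associated again at a later merge triggered by a different gateway of a descendant region, or after a subsequent split of the section of wavefront that absorbed it. Indeed, the paper itself does not amortize merge work per bounding edge at all --- Theorem~\ref{thm:mergecompl} charges the binary searches in ASSOCATOB to bunches with non-$\phi$ $RV$s and to the $O(m)$ waveform-/boundary-sections, precisely because a single element of $\partial B$ may be touched by more than one merge. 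So the central step of your argument is both unproven and not clearly provable in the form you state it.

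You have also shifted what is being counted. Per Lemma~\ref{lem:typeItypeIInumevents}, the ``Type-I/Type-II events due to gateways'' are the strikes of gateway-edges by sections of wavefront in the merge procedure --- not the collection of all events pushed to the min-heap as a consequence of associations formed during gateway merges (those are charged elsewhere, to the $O(m)$ convex chains and enter/exit boundaries and to Theorem~\ref{thm:mergecompl}). Once the target is identified correctly, the paper's proof is a one-line counting argument: a gateway-edge is processed only after all processing of its region $REG$ is complete, so each side of a gateway-edge can cause only $O(1)$ Type-I/Type-II events, and by Lemma~\ref{lem:numgateways} there are $O(m)$ gateway-edges. Your elaborate per-bounding-edge amortization is not needed for this statement.
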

\begin{proof}
Let $g$ be an outgoing gateway w.r.t. boundary cycle $BC$, and let $ge$ be a gateway-edge of $g$.
Let $REG$ be the region bounded by $BC$.
A gateway-edge is processed only when all the processing is complete for $REG$.
Hence, a side of gateway-edge can cause at most $O(1)$ Type-I/Type-II events. 
Since there are $O(m)$ gateway-edges (Lemma \ref{lem:numgateways}), the total number of Type-I/Type-II events due to gateways are $O(m)$.
Therefore, the total number of Type-I/Type-II events are $O(m)$.
\end{proof}

\begin{lemma}
\label{lem:tinsplits}
The amortized time complexity in orienting all the gateways is $O(m(\lg{m})(\lg{n}))$. 
\end{lemma}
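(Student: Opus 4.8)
The plan is to reduce the global bound to a per-gateway cost and then invoke the counting Lemma. By Lemma~\ref{lem:numgateways} the total number of gateways (and gateway-edges) is $O(m)$, so it suffices to show that orienting a single gateway costs $O((\lg{m})(\lg{n}))$ amortized. Orienting a gateway is exactly the decision made in the inductive step of Lemma~\ref{lem:outgoinggw}: when a boundary cycle $BC$ splits into $BC_i$ and $BC_j$ creating the new gateway $g'$, we must decide which of $REG_i,REG_j$ either contains $t$ or inherits the outgoing gateway $g$ already attached to $REG$. That decision is a single location/separation query, resolved by the algorithm of \cite{Vaidya86} run against the convex-hull representation of the boundary cycle.

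First I would account for the structural work accompanying a split. Partitioning the boundary cycle list $L_{BC}$ is $O(1)$ (Section~\ref{subsect:bcldatastr}), since the change is local to the node where $e$ is struck. The convex-hull trees attached to the two sub-cycles must be separated, and by Lemmas~\ref{lem:wsttimeandspace} and~\ref{lem:bsttimeandspace} each such $WST$/$BST$ split costs $O((\lg{m})(\lg{n}))$; only a constant number of these trees are affected per boundary split. The orientation query itself descends the relevant boundary-cycle hull, which is of depth $O(\lg{m})$ because the number of bunches and of corridor convex chains/enter-exit boundaries is $O(m)$ (Lemma~\ref{lem:numbunches}), spending $O(\lg{n})$ per level for the bridge/tangent computations exactly as in Lemmas~\ref{lem:sdcbt} and~\ref{lem:sdcwt}. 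Hence the total work attributable to one split, including the orientation, is $O((\lg{m})(\lg{n}))$.

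Next I would set up the charging. Each gateway $g'$ is created exactly once, at the unique split that produces it, so I would charge the $O((\lg{m})(\lg{n}))$ cost of that split and its orientation query to $g'$. The key point to verify is that orienting $g'$ does not trigger re-orientation of previously fixed gateways: in Lemma~\ref{lem:outgoinggw} the inductive step reuses the parent region's already-determined answer (whether $t$ or an outgoing gateway lies in $REG$) and only fixes the direction of the single new arc $g'$ in the boundary cycle tree. Thus the orientations form a monotone, non-cascading process, each gateway incurring $O(1)$ location queries. Summing the per-gateway cost over the $O(m)$ gateways yields the claimed $O(m(\lg{m})(\lg{n}))$, the amortization coming from the bridge-construction bounds of Lemmas~\ref{lem:sdcbt} and~\ref{lem:sdcwt}, which are themselves amortized over the bridges of all $BST$s/$WST$s.

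The hard part will be the cost analysis of the location query from \cite{Vaidya86} on the \emph{dynamic} hull trees, rather than the counting, which is routine. Specifically, I expect the main obstacle to be establishing that deciding the side containing $t$ (or the inherited gateway's attachment vertex) can be carried out by a single root-to-leaf descent of the boundary-cycle hull with only $O(\lg{n})$ tangent work per level, so that the query composes with the already-amortized bridge bounds without introducing an extra logarithmic factor. Once the query is shown to share the same $O((\lg{m})(\lg{n}))$ traversal template used for the shortest-distance computations, the charging argument above closes the bound.
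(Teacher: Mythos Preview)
Your argument has a genuine gap at the per-split cost. You assert that orienting a single gateway costs $O((\lg m)(\lg n))$ because ``the orientation query itself descends the relevant boundary-cycle hull, which is of depth $O(\lg m)$.'' But no such global hull exists: a boundary cycle is stored as a doubly linked list $L_{BC}$ of $BST$s and $WST$s (Section~\ref{subsect:bcldatastr}), not as a single balanced hull tree over the whole cycle. The shortest-distance template of Lemmas~\ref{lem:sdcbt}--\ref{lem:sdcwt} applies to one $WST$ or $BST$ at a time, not to an entire boundary cycle, and in any case those lemmas answer ``distance to the hull,'' not ``which side of a closed curve contains $t$.'' So the analogy you rely on does not transfer, and the $O((\lg m)(\lg n))$ per-split bound is unsupported.

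The paper's proof takes a different route and does \emph{not} bound a single split by $O((\lg m)(\lg n))$. It uses the smaller-half trick from \cite{Vaidya86}: when $BC$ splits into pieces of sizes $m_1$ and $m_2$ (counted in corridor chains and enter/exit boundaries), one walks the \emph{smaller} piece to decide whether $t$ (or the inherited outgoing gateway) lies in it, paying $O(\min(m_1,m_2)\lg n)$, the extra $\lg n$ coming from the $O(n)$-vertex convex chains at the leaves. This yields the recurrence $T(m)=T(m_1)+T(m_2)+\min(m_1,m_2)\lg n$, whose solution is $O(m(\lg m)(\lg n))$ because $\sum \min(m_1,m_2)$ over a binary decomposition of $m$ elements is $O(m\lg m)$. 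Your charging of one query per gateway is fine; what fails is the cost you assign to that query. If you want to salvage your outline, replace the hull-descent claim with the smaller-half location argument and the recurrence above.
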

\begin{proof}
We can solve point location problem among two polygons, where each is having $O(m_1)$ and $O(m_2)$ edges respectively, in $O(min(m_1,m_2))$ time (this approach has been used before in \cite{Vaidya86}).  
However, we are interested in the case of two polygons bounded with $O(m_1)$ and $O(m_2)$ edges/convex chains with each convex chain having $O(n)$ edges.  
The planar point location algorithm given in \cite{Vaidya86} can be extended to yield an algorithm with the time complexity $O(min(m_1,m_2)\lg{n})$ for this case. 
For each boundary split, we need to locate either $t$ or a gateway, and, the boundary splits occur recursively.
Then the recurrence representing this recursion is $T(m)=T(m_1) + T(m_2) + \min \{m_1, m_2\} \lg n$, where $T(m)$ is the time required to solve the location problem in a polygonal region with $O(m)$ corridor chains and entry/exit boundaries.
This results in an overall complexity of $O(m(\lg{m})(\lg{n}))$.
\end{proof}

\begin{lemma}
\label{lem:typeIIInumeventsinmerger}
%\rem{??? Type-III num events due to merger}
The total number of Type-III events generated in merge procedure are $O(m)$.
\end{lemma}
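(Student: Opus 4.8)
The plan is to count the Type-III events pushed in line~14 of ASSOCATOB --- the only place the merge procedure schedules Type-III events --- by charging each such event either to a newly created bunch or to the corridor convex chain whose association changed. First I would recall the two bounds already in hand: by Lemma~\ref{lem:nummerges} there are $O(m)$ merges over the whole run, and by Lemma~\ref{lem:numbunches} there are $O(m)$ bunches alive at any instant together with $O(m)$ bunch splits in total. The task is to turn these into an $O(m)$ bound on association changes, since line~14 emits exactly one Type-III event per corridor convex chain whose association is altered during the merge.

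The core observation I would use is that a Type-III event for a chain $C$ is the tangent from the newly associated bunch to $C$, and that when it is eventually processed it drives the BHT-initialization dichotomy of Subsection~\ref{subsect:bhtdatastr}: it either creates a fresh bunch (Case~(1)), performs a split that deletes the old BHT (Case~(4)), or does nothing because a closer section of wavefront already dominates that region (Cases~(2) and~(3), justified by Lemma~\ref{lem:spnoncrossing}). Thus each Type-III event that actually yields a bunch can be charged against the $O(m)$ bunch creations and splits counted in Lemma~\ref{lem:numbunches}, and the remaining no-op events I would charge instead to the corridor convex chain or enter/exit boundary that triggered them.

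To bound the latter charges I would exploit the one-strike property of boundary elements: each corridor enter/exit boundary is struck at most once and then leaves all associations, and by the reasoning in the proof of Lemma~\ref{lem:numbunches} each corridor convex chain supports at most two live bunches (one per enter/exit side), so along the side facing the outgoing gateway its association stabilizes once it is assigned to the bunch that will ultimately strike it. Combining this with the contiguity of associations (Lemmas~\ref{lem:contigprop2} and~\ref{lem:contigprop3}), which guarantees that each merge re-associates a single contiguous block and hence cannot ping-pong a chain's association back and forth, I would argue that each of the $O(m)$ boundary elements contributes only $O(1)$ association changes that survive as genuine Type-III work; summed over the $O(m)$ merges this yields $O(m)$ Type-III events.

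The main obstacle I anticipate is precisely this amortization: a naive count allows up to $\Theta(|SB_A|)$ association changes per merge across $O(m)$ merges, i.e. a quadratic bound, so the entire content of the lemma is to show that these re-associations do not accumulate. The crux is establishing that a chain, once re-associated to the bunch on its $t$-facing side, is never re-associated again in a way that schedules a new Type-III event; this rests on the non-crossing property (Lemma~\ref{lem:spnoncrossing}) together with the gateway orientation that confines subsequent wavefront expansion to one side of each split. If that monotonicity of association holds, the $O(m)$ bound is immediate, and verifying it carefully against the offset and backtracking bookkeeping of the gateway merges is where the real care is needed.
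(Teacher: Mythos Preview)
The paper's own proof of this lemma is a one-line deferral: it simply points forward to Lemma~\ref{lem:typeIIInumevents}, where the total number of Type-III events over the entire algorithm is bounded by $O(m)$, and observes that this subsumes the merge-generated ones. So there is no separate argument here to compare against; the content lives entirely in the later lemma.

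Your approach is genuinely different from what the paper eventually does there. You try to bound, directly, the number of association changes in line~14 of ASSOCATOB, splitting into events that actually create or split a bunch (charged to Lemma~\ref{lem:numbunches}) and no-op events (charged to boundary elements via a claimed $O(1)$-per-element monotonicity). The paper instead never counts association changes at all: in Lemma~\ref{lem:typeIIInumevents} it charges every Case~(4) bunch initiation to the \emph{generating} bunch $B_p$, then argues that if $B_p$ generates more than two bunches it must have been split at a junction vertex, of which there are $O(m)$. This sidesteps the amortization of association changes entirely.

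The obstacle you flag at the end --- proving that a chain, once re-associated toward the $t$-side, is never re-associated again --- is real, and you have not discharged it. Contiguity (Lemmas~\ref{lem:contigprop2}, \ref{lem:contigprop3}) constrains the shape of a single merge's re-association but does not by itself rule out the same chain being re-associated in several distinct merges, especially given the backtracking and offset machinery at gateways. Your charging of no-op Type-III events to boundary elements therefore lacks the monotonicity lemma it needs. If you want a self-contained argument, the cleaner route is the paper's: abandon the per-chain accounting and instead follow each Type-III event to the bunch it creates or deletes, then charge bunch proliferation to junction vertices.
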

\begin{proof}
We will defer the proof till later where in fact we will show that the total 
number of Type-III events in the entire algorithm are bounded.
\end{proof}

\section{Event Points}
\label{sect:evtpts}

\subsection{Type-I and Type-II Events}
\label{subsect:typeItypeII}

The Type-I event determination involves computing the shortest distance for a segment in a waveform-section to strike the associated corridor convex chain or enter/exit boundary. 
The Type-II event determination involves finding the shortest distance for the bunch to expand before striking either a corridor convex chain or enter/exit boundary in the boundary-section associated with it.  
These shortest distance computations are explained in Section \ref{sect:sdcomp}.  
Both the Type-I and Type-II events change $\partial B$.  
\hfil\break

\begin{figure}
\centerline{
\subfigure[Split due to a junction]{\epsfysize=120pt \epsfbox{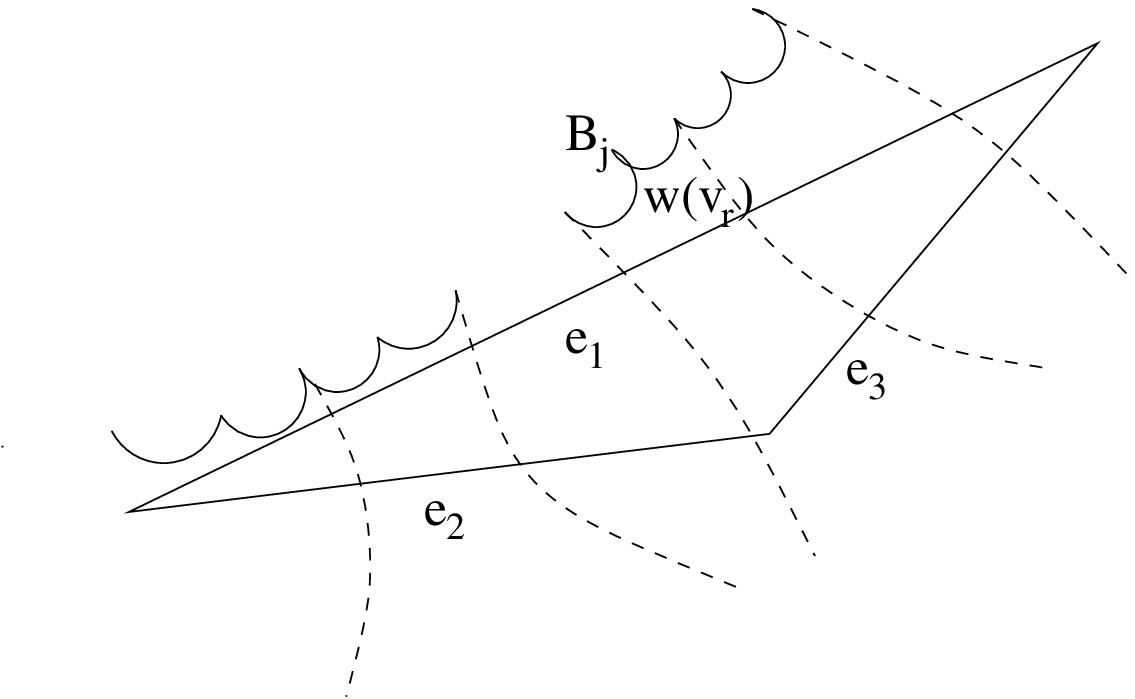}}
\subfigure[Split due to a corridor]{\epsfysize=200pt \epsfbox{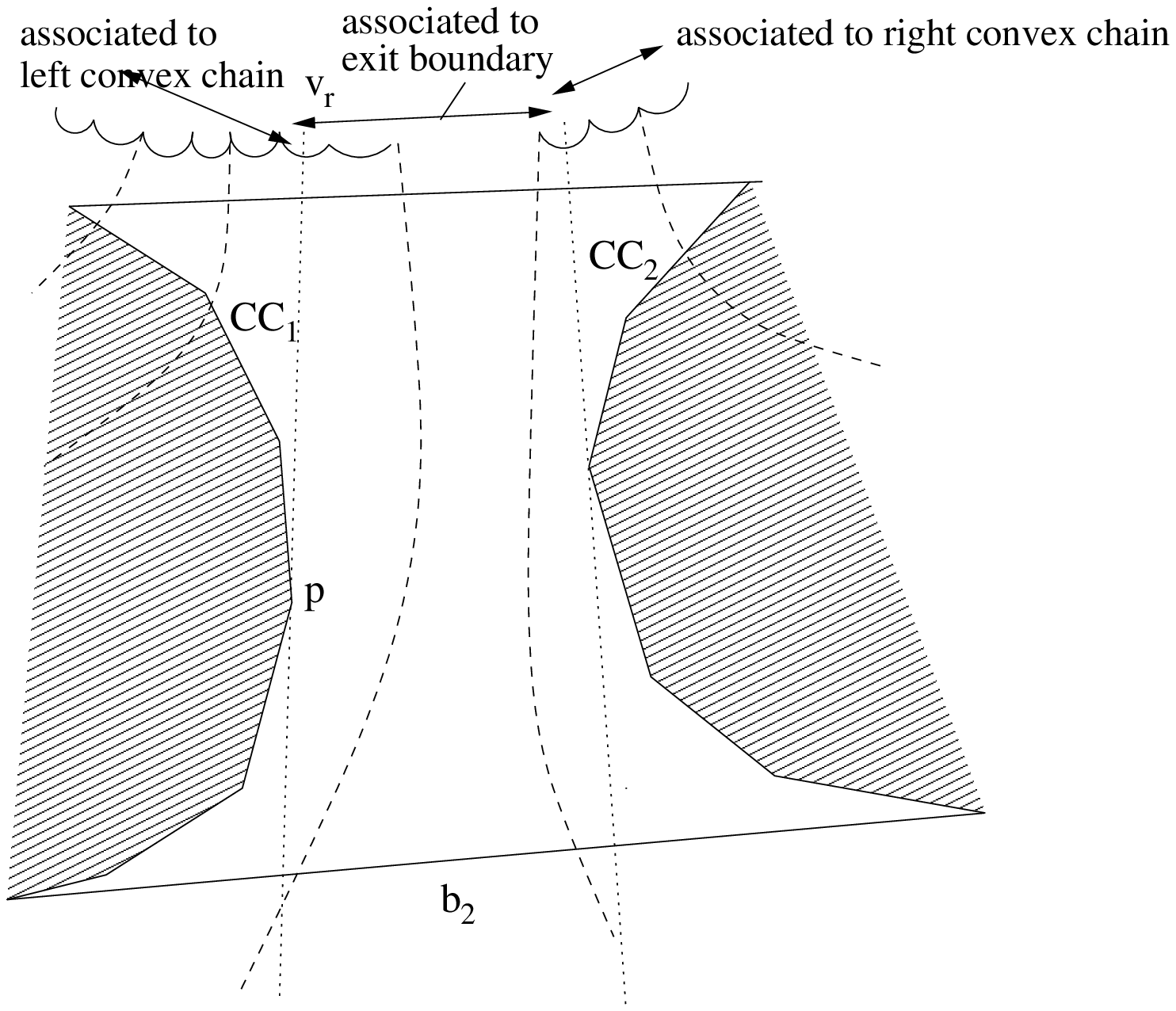}}
}
\caption{\label{fig:bunchsplit} Split of a section of wavefront}
\end{figure}

Let $J$ be an untraversed junction i.e., no edge/vertex of it has been traversed.
When a section of wavefront $SW = \{B_i, \ldots, B_j, \ldots, B_k\}$ strikes an edge $e_1=(v_1, v_2)$ of $J$, $SW$ may be split into at most two sections: a sequence of wavefront segments $SW_1$ that could possibly strike an edge $e_2$ of $J$ and a sequence of wavefront segments $SW_2$ that could possibly strike an edge $e_3$ of $J$ such that $SW_1 \cup SW_2 = SW$. 
See Fig. \ref{fig:bunchsplit}(a).
The edge $e_1$ in $\partial B$ is replaced by $e_2$ and $e_3$.
We intend to determine two successive inter-bunch I-curves, I-curve($B_{j_{prev}}, B_{j}$), I-curve($B_j, B_{j_{succ}}$) in the wavefront such that I-curve($B_{j_{prev}}, B_{j}$) intersects $e_2$ and I-curve($B_j, B_{j_{succ}}$) intersects $e_3$.
Here, $B_{j_{prev}}$ (resp. $B_{j_{succ}}$) is the bunch that precede (resp. succeed) $B_j$ in the wavefront.
This allows associating $B_i, \ldots, B_{j-1}$ to $e_2$ and $B_{j+1}, \ldots, B_k$ to $e_3$.
We do binary search over the inter-bunch I-curves of $SW$ to find such $j$.
In this binary search, primitive operation computes the intersection of a inter-bunch I-curve with either $e_2$ or $e_3$.
Since I-curves are higher-order curves, explicitly computing their intersections with $e_2$ and/or $e_3$ is not efficient.
Hence we use IIntersect procedure described in Section \ref{sect:IIntersectProc}.
Once we determine such $j$, we do binary search over the I-curves of bunch $B_j=(w(v_l), \ldots, w(v_r), \ldots, w(v_u))$ to compute two I-curves such that I-curve($w(v_{r-1}), w(v_r)$) would intersect $e_2$ and I-curve($w(v_r), w(v_{r+1})$) would intersect $e_3$.
Since intra-bunch I-curves are straight-lines, we compute the intersection between an I-curve and an edge.
Then using bunch split procedure listed in Section \ref{subsect:bhtdatastr}, we split $B_j$ into $B_j', B_j''$ such that $B_j'$ comprise the wavefront segments $\{w(v_l), \ldots, w(v_r-1)\}$, and $B_j''$ comprise the wavefront segments $\{w(v_r), \ldots, w(v_u)\}$.
Then we insert $w(v_r)$ to $B_j'$ so that $w(v_r)$ is present in both $B_j'$ and $B_j''$ as it can strike either $e_2$ or $e_3$.
After that, $SW_1$ comprise $B_i, \ldots, B_{j-1}, B_j'$, and $SW_2$ comprise $B_j'', B_{j+1}, \ldots, B_k$.
\hfil\break

Let $C$ be an untraversed corridor i.e., no edge/vertex of it has been traversed.
Also, let $C$ has $CC_1, CC_2$ convex chains, and $b_1, b_2$ enter/exit bounding edges.
When a section of wavefront $SW = \{B_i, \ldots, B_j, \ldots, B_k, \ldots, B_l\}$ strikes an enter boundary $b_1$ of $C$, $SW$ may be split into at most three sections: 
a sequence of wavefront segments $SW_1$ that could possibly strike $CC_1$, a sequence of wavefront segments $SW_2$ that could possibly strike an edge $b_2$, and a sequence of wavefront segments $SW_3$ that could possibly strike an edge $CC_2$ of $C$ such that $SW_1 \cup SW_2 \cup SW_3 = SW$. 
See Fig. \ref{fig:bunchsplit}(b).
The $b_1$ in $\partial B$ is replaced by $CC_1, b_2, CC_2$.
We determine two successive inter-bunch I-curves, I-curve($B_{j_{prev}}, B_j$), I-curve($B_j, B_{j_{succ}}$) in $SW$ such that I-curve($B_{j_{prev}}, B_j$) intersects $CC_1$ and I-curve($B_j, B_{j_{succ}}$) intersects $b_2$.  
Here, $B_{j_{prev}}$ (resp. $B_{j_{succ}}$) is the bunch that precede (resp. succeed) $B_j$ in the wavefront.
Once we determine such $j$, we do binary search over the I-curves of bunch $B_j$ to find $w(v_r) \in B_j$ which could possibly intersect both $CC_1$ and $b_2$. 
As in the case of junction $J$, we split $B_j$ such that $w(v_r) \in B_j$ is in both $B_j'$ and $B_j''$.
The binary search follows the same procedure as listed above (the case of junction $J$).
Similarly, we determine two successive inter-bunch I-curves, I-curve($B_{k_{prev}}, B_k$), I-curve($B_k, B_{k_{succ}}$) in $SW$ such that I-curve($B_{k_{prev}}, B_k$) intersects $b_2$ and I-curve($B_k, B_{k_{succ}}$) intersects $CC_2$ and split $B_k$ into $B_k'$ and $B_k''$.
Here, $B_{k_{prev}}$ (resp. $B_{k_{succ}}$) is the bunch that precede (resp. succeed) $B_k$ in the wavefront.
Then $SW_1$ comprise $\{B_i, \ldots, B_{j-1}, B_j'\}$, $SW_2$ comprise $\{B_j', B_{j+1}, \ldots, B_{k-1}, B_k'\}$, whereas $SW_3$ comprise $\{B_k', B_{k+1}, \ldots, B_l\}$.
Due to the non-crossing nature of I-curves (Lemma \ref{lem:spnoncrossing}), point of tangency corresponding to Type-III event on $CC_1$ due to $SW_1$ is caused by the wavefront segment $w(v_r)$.
The distance to point of tangency $p$ on $CC_1$ along $v_rp$ from the periphery of $w(v_r)$ is pushed to the event heap. 
Another Type-III event due to the interaction of $CC_2$ and $SW_3$ is also pushed to the min-heap.
As a whole two Type-III event points, and a shortest distance between $SW_2$ and $b_2$ are the event points determined during this procedure.
Note that no other interactions between $SW_1$ and $CC_1$ or $SW_3$ and $CC_2$ worth further consideration, as the destination $t$ is in its own corridor. 
\hfil\break

Let a bounding edge of $J$ (resp. $C$) was traversed, and the wavefront struck the other bounding edge of $J$ (resp. $C$). 
This causes boundary split.
The merging procedure is explained in section \ref{sect:merging}.
\hfil\break

Consider a junction $J = (e_1, e_2, e_3)$ that is traversed.
Suppose the edge $e_1$ is traversed whereas the edges $e_2$ and $e_3$ are not traversed when $\cW (d)$.
Also, suppose the bounding edge $e_2$ is struck when $\cW (d')$ 
This causes boundary split.  
While updating the relevant associations of sections of wavefront, the merging procedure described in Section \ref{sect:merging} may in turn cause Type-III events. 
The same is true when a boundary splits due to merger in corridors.
\hfil\break 

A section of wavefront $SW$ striking a junction (resp. corridor) boundary vertex is handled similar to the above cases, except that both the edges adjacent to that vertex are considered as struck by the wavefront.

\begin{lemma}
\label{lem:numsplits} The total number of bunch/waveform-section splits during the entire algorithm are of $O(m)$.
\end{lemma}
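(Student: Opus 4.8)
The plan is to set up an amortized charging argument: every bunch split and every waveform-section split is attributed to a single triggering event, and I bound the number of triggering events of each kind. From the Type-I/Type-II handling described in Subsection~\ref{subsect:typeItypeII}, a section of wavefront is split (into at most two pieces at a junction, at most three at a corridor) only when it strikes an \emph{untraversed} junction edge or corridor enter/exit boundary, and each such strike splits at most two bunches, namely those straddling the delimiting inter-bunch I-curves. So I would first classify the split-causing events as (i) first strikes of untraversed junctions/corridors (Type-I/Type-II), (ii) bunch initiations that fall into Case~(4) of Subsection~\ref{subsect:bhtdatastr} (Type-III), and (iii) re-associations performed inside the merge procedure when a boundary splits.

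First I would bound (i). The key structural fact, already used in Lemma~\ref{lem:numbunches}, is that once any edge of a junction is struck the junction becomes traversed and any later strike of its remaining edges produces only a boundary split handled by a merge, not a fresh section split; likewise, each corridor enter/exit boundary is struck by the wavefront at most once. Since the corridor/junction decomposition has $O(m)$ junctions and $O(m)$ corridors, each with a constant number of bounding edges, there are $O(m)$ first strikes, i.e. $O(m)$ Type-I/Type-II split events (consistent with Theorem~\ref{thm:typeItypeIIevtsduetoGateways}). Each such event contributes $O(1)$ bunch splits and $O(1)$ waveform-section splits, so source (i) yields $O(m)$ splits.

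Next I would dispose of (ii) and (iii). A Case~(4) initialization performs a single $BHT$ split (the operation of Subsection~\ref{subsect:bhtdatastr}, Lemma~\ref{lem:bunchsplittime}) and is charged to the Type-III event that initiated the new bunch; merge-induced splits are charged to the merge, of which there are $O(m)$ by Lemma~\ref{lem:nummerges}. To keep the latter charge constant per merge I would invoke the contiguity property for bunches (Lemma~\ref{lem:contigprop2}) together with the non-crossing property (Lemma~\ref{lem:spnoncrossing}): a merge re-associates the contiguous boundary of one section with the other and, because both associations stay contiguous, it introduces at most one new association interface, so at most one straddling bunch must be split. Summing the three $O(m)$ contributions gives the claimed $O(m)$ total.

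The main obstacle I anticipate is precisely this last point, namely showing that the re-association loop of \textsc{AssocAtoB} cannot cascade into $\omega(1)$ bunch splits per merge. That procedure iterates over the inter-bunch I-curves of one section and may touch many bunches, so one must argue that only the unique bunch whose boundary-section is actually cut by the new contiguous/contiguous interface is split, while all other re-associations merely relabel whole bunches without splitting them. A secondary care point is the forward dependency on the total count of Type-III events (Lemma~\ref{lem:typeIIInumeventsinmerger} and the global analysis): the bound on source (ii) is only completed once that count is shown to be $O(m)$, so I would record the charging here and close it when the Type-III bound is established.
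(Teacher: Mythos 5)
Your proposal is correct in substance, but it is considerably more elaborate than the paper's own proof, which is essentially just your source~(i): the paper observes that a split occurs only when a waveform-section strikes the boundary of a junction or corridor, that such boundary edges number $O(m)$ in total, and that each is struck (as an untraversed edge) only once --- and stops there. Your additional bookkeeping for sources~(ii) and~(iii) is a legitimate refinement the paper glosses over: in the paper's own terms, the only two triggers for a bunch split are reason~(b) of the splitting subsection (segments of a bunch no longer share an association, which is exactly the junction/corridor first-strike case) and reason~(a) (Case~(4) of $BHT$ initialization, triggered by a Type-III event); merges do not split bunches directly but only through the Type-III events they push (line~14 of \textsc{AssocAtoB}), so your worry that the re-association loop could cascade into $\omega(1)$ splits per merge resolves itself --- every merge-induced split is already counted under your source~(ii) and bounded by the $O(m)$ Type-III events of Lemma~\ref{lem:typeIIInumevents}. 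What your approach buys is an explicit charging scheme that makes the forward dependency on the Type-III count visible (the paper defers this silently, cf.\ Lemma~\ref{lem:typeIIInumeventsinmerger}); what the paper's terser argument buys is brevity, at the cost of not accounting for splits that do not coincide with a first strike of a boundary edge.
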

\begin{proof}
The waveform-sections are associated with the edges which are not struck as of now.  
The split occurs when a waveform-section strikes the boundary of a junction/corridor and there are $O(m)$ such boundary edges in all the junctions/corridors together.
\end{proof}

\begin{lemma}
\label{lem:splitcompl} The complexity of bunch/waveform-section splits during the entire algorithm are \\ $O(m (\lg{m})(\lg{n}))$.
\end{lemma}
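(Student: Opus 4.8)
The plan is to obtain the bound as (number of splits) $\times$ (amortized cost per split). By Lemma~\ref{lem:numsplits} the total number of bunch/waveform-section splits over the whole run of the algorithm is $O(m)$, so it suffices to argue that one split is carried out in $O((\lg m)(\lg n))$ amortized time; multiplying then gives $O(m(\lg m)(\lg n))$.

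First I would separate the work done at a single split (as described in the Type-I/Type-II handling of Section~\ref{sect:evtpts}) into a \emph{surgery} phase and a \emph{locating} phase, and dispose of the surgery phase first since it is the routine part. Once the split points are known, the split amounts to: splitting the straddling $BHT$, which by Lemma~\ref{lem:bunchsplittime} costs $O(\lg n)$; performing the induced split/insert/delete operations on the $WST$ (the section of wavefront $SW$ is cut into the at most three pieces $SW_1, SW_2, SW_3$); and updating the $BST$ because $\partial B$ changes (an edge $e_1$ is replaced by $e_2, e_3$, or an enter boundary $b_1$ by $CC_1, b_2, CC_2$). By Lemmas~\ref{lem:wsttimeandspace} and~\ref{lem:bsttimeandspace} each such $WST$/$BST$ operation costs $O((\lg m)(\lg n))$, so the surgery phase already sits within the per-split budget.

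It then remains to bound the locating phase, where the real content lies. Locating the segment $w(v_r)$ inside the straddling bunch $B_j$ is a binary search over the $O(n)$ intra-bunch I-curves; these are straight lines by Lemma~\ref{lem:divergeprop}, so each comparison is a constant-time line/edge test and the whole search costs $O(\lg n)$ per split, i.e. $O(m\lg n)$ overall. The remaining step is the binary search over the $O(m)$ inter-bunch I-curves of $SW$ to find $B_j$ (and, in the corridor case, $B_k$), where each of the $O(\lg m)$ probes intersects one inter-bunch I-curve with a boundary edge via the IIntersect procedure of Section~\ref{sect:IIntersectProc}.

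The main obstacle is bounding the cost of these IIntersect probes, since an inter-bunch I-curve is a higher-order curve rather than a line, and a single probe can in the worst case walk inner hulls and recompute tangents down the depth of a hull tree. To recover the claimed bound I would not charge each probe its worst-case cost but instead invoke the amortized analysis already established for IIntersect in Section~\ref{sect:IIntersectProc}: a given bridge is split at most once over the entire algorithm, so the excess bridge-splitting cost is charged to the $O(m)$ bridges (there are $O(m)$ leaves in every hull tree by Lemma~\ref{lem:numbunches}) rather than billed to individual probes. With the inter-bunch search thus kept within $O((\lg m)(\lg n))$ amortized per split and the intra-bunch resolution at $O(\lg n)$, the locating phase is $O((\lg m)(\lg n))$ amortized; combined with the $O((\lg m)(\lg n))$ surgery phase, each split is $O((\lg m)(\lg n))$, and summing over the $O(m)$ splits from Lemma~\ref{lem:numsplits} yields the stated $O(m(\lg m)(\lg n))$.
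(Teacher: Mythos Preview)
Your proposal is correct and follows essentially the same approach as the paper: bound the number of splits by $O(m)$ via Lemma~\ref{lem:numsplits}, bound the per-split cost by $O((\lg m)(\lg n))$ using Lemmas~\ref{lem:bunchsplittime}, \ref{lem:wsttimeandspace}, and \ref{lem:bsttimeandspace}, and multiply. The paper's own proof is considerably terser---it simply states that a split costs $O(\lg n)$ for the binary search over I-curves and $O((\lg m)(\lg n))$ for the $WST$/$BST$ updates---whereas you carefully separate the intra-bunch search (straight-line I-curves, $O(\lg n)$) from the inter-bunch search (higher-order I-curves handled via IIntersect) and explicitly invoke the bridge-splitting amortization from Section~\ref{sect:IIntersectProc}; this extra care is warranted and fills in a step the paper leaves implicit.
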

\begin{proof}
Each split takes $O(\lg{n})$ time as it involves binary search over the I-curves.
The appropriate boundary- and waveform-section updates take $O((\lg{m})(\lg{n}))$ time (Lemmas \ref{lem:wsttimeandspace}, \ref{lem:bsttimeandspace}).
Combining this with Lemma \ref{lem:numsplits}, leads to the proof.
\end{proof}

\begin{lemma}
\label{lem:typeItypeIInumevents}
The total number of Type-I and Type-II events together are $O(m)$.
\end{lemma}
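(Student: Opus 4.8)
The plan is to charge every Type-I and Type-II event to a structural element of the coarse subdivision that is consumed at most a constant number of times, so that the $O(m)$ bounds on those elements propagate to the event count. Both event types, as noted at the start of Section~\ref{subsect:typeItypeII}, represent the wavefront striking $\partial B$ and modifying it; I would therefore classify the events according to the modification they induce on $\partial B$, namely whether the struck junction or corridor was previously untraversed or already traversed.

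First I would treat the events that strike an \emph{untraversed} junction or corridor, i.e.\ the wavefront-split events. When a section of wavefront strikes a junction edge, that edge is replaced in $\partial B$ by the other two edges of the junction; when it strikes a corridor enter/exit boundary, that boundary is replaced by the opposite boundary together with the two corridor convex chains. In either case the struck edge is a distinct junction edge or enter/exit boundary that is removed from $\partial B$, and, as established earlier, each enter/exit boundary is struck at most once. Since each junction has three edges and each corridor at most two enter/exit boundaries, and there are $O(m)$ junctions and corridors, this accounts for only $O(m)$ events. Next I would treat the events that strike an \emph{already-traversed} junction or corridor; these are precisely the events that trigger a boundary split, and they are bounded by $O(m)$ by Theorem~\ref{thm:typeItypeIIevtsduetoGateways}. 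Finally, the tangential contacts of a section of wavefront with a corridor convex chain spawn Type-III bunch initiations rather than further Type-I/II activity, and by Lemma~\ref{lem:numbunches} only $O(m)$ bunches are created over the whole run, so these contribute $O(m)$ as well. Summing the three groups yields the claimed bound.

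The hard part will be showing that this charging is genuinely well defined, i.e.\ that no single bounding edge can generate more than a constant number of Type-I/II events before it is removed from $\partial B$ or turned into a gateway. This rests on the contiguity properties (Lemmas~\ref{lem:contigprop1}, \ref{lem:contigprop2}, and~\ref{lem:contigprop3}), which guarantee that each corridor convex chain or enter/exit boundary participates in a single association at a time, so that each element is the target of at most one pending event; and on the diverging-I-curve property (Lemma~\ref{lem:divergeprop}) together with the non-crossing property (Lemma~\ref{lem:spnoncrossing}), which ensure that the continued advance of the wavefront along a convex chain after the first tangential contact is absorbed into intra-bunch propagation and produces no additional Type-I/II events. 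I would also need to confirm that the wavefront-split and boundary-split cases are exhaustive and mutually exclusive, so that no event is double-counted; this follows from the dichotomy between striking an untraversed versus an already-traversed junction or corridor.
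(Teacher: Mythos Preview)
Your proposal is correct and takes essentially the same approach as the paper: charge first-time strikes to the $O(m)$ corridor convex chains and enter/exit boundaries, and charge strikes on already-traversed elements to gateways via Theorem~\ref{thm:typeItypeIIevtsduetoGateways}. The paper's own proof is terser---it simply observes there are $O(m)$ corridors for case~(1) and cites Theorem~\ref{thm:typeItypeIIevtsduetoGateways} for case~(2)---whereas your untraversed/traversed dichotomy is the same split, and your contiguity and diverging-I-curve justifications make explicit the ``at most one event per element'' claim that the paper leaves implicit.
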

\begin{proof}
The Type-I or Type-II events occur due to the following:
\begin{enumerate}[(1)] \itemsep -2pt
	\item A section of wavefront strikes either a corridor convex chain or a corridor enter/exit boundary.
	\item A section of wavefront striking gateway-edges in merge procedure.
\end{enumerate}
Since there $O(m)$ corridors, there are $O(m)$ Type-I/Type-II events due to (1). 
>From Lemma \ref{thm:typeItypeIIevtsduetoGateways}, the number of Type-I/Type-II events due to (2) are bounded with $O(m)$.
\end{proof}

\begin{lemma}
\label{lem:typeItypeIIhandlingcost}
The total time in determining and handling all Type-I/Type-II events is $O(m(\lg{m})(\lg{n}))$.
\end{lemma}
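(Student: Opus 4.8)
The plan is to decompose the total cost into a \emph{determination} cost and a \emph{handling} cost, and to bound each by multiplying the event count from Lemma~\ref{lem:typeItypeIInumevents} against an amortized per-event cost, supplemented by the aggregate split/merge bounds already established. By Lemma~\ref{lem:typeItypeIInumevents} there are only $O(m)$ Type-I and Type-II events in total, so it suffices to show that the work attributable to each such event is $O((\lg{m})(\lg{n}))$ in an amortized sense. Throughout I will use $m \le n$, so that in particular $\lg^2{m} = (\lg{m})(\lg{m}) \le (\lg{m})(\lg{n})$, which lets me absorb the IIntersect cost into the target bound.

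First I would handle determination. A Type-I event is detected by computing the shortest distance between a waveform-section and its associated corridor convex chain or enter/exit boundary, and a Type-II event by computing the shortest distance between a bunch and its associated boundary-section. By Lemmas~\ref{lem:sdcwt} and~\ref{lem:sdcbt} each such shortest-distance computation is $O((\lg{m})(\lg{n}))$ amortized over the bridges of the $WST$s and $BST$s. Since the number of bridges over all section trees is $O(m)$ at any time (Lemma~\ref{lem:numbunches}) and there are $O(m)$ events, the determination work over the entire algorithm telescopes to $O(m(\lg{m})(\lg{n}))$.

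Next I would account for handling, which by the Type-I handler description consists of (i) bunch/waveform-section splits, (ii) boundary splits and the ensuing merges, (iii) $WST$/$BST$ insertions and deletions reflecting the new associations, and (iv) pushing the resulting event points into the min-heap. For (i), Lemma~\ref{lem:splitcompl} already bounds the \emph{total} split cost over the whole algorithm by $O(m(\lg{m})(\lg{n}))$, so this need not be recharged per event. For (ii), Lemma~\ref{lem:nummerges} gives $O(m)$ merges in total; each invocation of MERGE calls ASSOCATOB a constant number of times, and each ASSOCATOB performs binary searches over the inter-bunch I-curves whose primitive step is an IIntersect call costing $O(\lg^2{m})$ amortized over splits, mergers, and Type-IV events (Section~\ref{sect:IIntersectProc}), together with $O(1)$ amortized $WST$/$BST$ updates at $O((\lg{m})(\lg{n}))$ apiece (Lemmas~\ref{lem:wsttimeandspace} and~\ref{lem:bsttimeandspace}); summed over $O(m)$ merges this is $O(m(\lg{m})(\lg{n}))$. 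For (iii) each event triggers $O(1)$ tree insertions/deletions, each $O((\lg{m})(\lg{n}))$, giving $O(m(\lg{m})(\lg{n}))$. For (iv), the heap holds $O(m)$ events, so each push is $O(\lg{m})$ and the total is $O(m\lg{m})$, which is dominated. The Type-III events that the handler may push are not charged here; their count is bounded separately (Lemma~\ref{lem:typeIIInumeventsinmerger}, deferred).

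The main obstacle is item (ii): the cost of the association updates inside MERGE and ASSOCATOB must be amortized correctly rather than bounded naively, because a single boundary split can in principle touch many bunches and many boundary elements. The delicate point is to argue that across all merges the number of re-association steps, IIntersect invocations, and section-tree updates is $O(m)$, relying on the facts that a bunch index $j$ in ASSOCATOB never decreases (monotone progress), that each bridge splits at most once (so IIntersect amortizes), and that the number of boundary cycles and gateway-edges is $O(m)$ (Lemmas~\ref{lem:numgateways} and~\ref{thm:typeItypeIIevtsduetoGateways}). Once these amortization arguments are in place, every contributing term is $O(m(\lg{m})(\lg{n}))$ and the claimed bound follows by summation.
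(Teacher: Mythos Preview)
Your decomposition and the lemmas you invoke mirror the paper's proof almost exactly: it too cites Lemmas~\ref{lem:sdcwt} and~\ref{lem:sdcbt} for determination, Lemmas~\ref{lem:wsttimeandspace} and~\ref{lem:bsttimeandspace} for the section-tree updates, defers Type-III costs, and multiplies by the $O(m)$ event count from Lemma~\ref{lem:typeItypeIInumevents}.

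Two differences are worth noting. First, the paper does not argue the merge cost inline as you do in item~(ii); it simply forward-references Theorem~\ref{thm:mergecompl}, which packages exactly the amortization you sketch (monotone progress of $j$ in ASSOCATOB, $O(m)$ non-$\phi$ $RV$s, etc.) into a single $O(m(\lg m)(\lg n))$ aggregate bound. Your inline version is fine but is essentially re-deriving that theorem. Second, and more substantively, you have omitted one cost component that the paper calls out explicitly: when a boundary split occurs, the algorithm must \emph{orient} the new gateway by locating $t$ (or the outgoing gateway) among the two resulting regions via the point-location procedure of~\cite{Vaidya86}. This is not covered by your items (i)--(iv); it is neither a split of a $WST$/$BST$ nor part of MERGE/ASSOCATOB. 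The paper accounts for it via Lemma~\ref{lem:tinsplits}, which bounds the total orientation cost over all boundary splits by $O(m(\lg m)(\lg n))$ through the recurrence $T(m)=T(m_1)+T(m_2)+\min\{m_1,m_2\}\lg n$. You should add this as a fifth line item under handling.
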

\begin{proof}
>From Lemma \ref{lem:sdcwt}, determining Type-I event is of complexity $O((\lg{m})(\lg{n}))$.
>From Lemma \ref{lem:sdcbt}, determining Type-II event is of complexity $O((\lg{m})(\lg{n})$.
\hfil\break

Both of these event types involve updating $WST$ and/or $BST$, either by splitting or by invoking the MERGE procedure.
>From Lemma \ref{lem:wsttimeandspace}, updating $WST$ takes $O((\lg{m})(\lg{n}))$ amortized time. 
>From Lemma \ref{lem:bsttimeandspace}, updating $BST$ takes $O((\lg{m})(\lg{n}))$ amortized time.
>From Lemma \ref{lem:tinsplits}, amortized time in orienting all the gateways is $O(m(\lg{m})(\lg{n}))$.
>From Theorem \ref{thm:mergecompl}, amortized time involved in the merge procedure is $O((\lg{m})(\lg{n}))$.
\hfil\break

Type-III event determination and handling costs are considered in \ref{lem:typeIIIhandlingcost}.

\end{proof}

\subsection{Type-III Event}
\label{subsect:typeIII}

This event occurs when a wavefront segment $w(v_r)$ in a section of wavefront $SW$ strikes a corridor convex chain $CC$ at a point of tangency $p$ along the tangent $v_rp$ from $v_r$ to $CC$.
The event determination procedure is listed as part of Type-I and Type-II event handling.
See Subsection \ref{subsect:typeItypeII}.
The event possibly causes either the initiation of a new bunch from $p$, or modifying an existing bunch initiated from a vertex of the convex chain $CC$.
The cases in which a bunch is initiated/modified are explained in Section \ref{subsect:bhtdatastr}.

\begin{lemma}
\label{lem:typeIIInumevents}
The total number of Type-III events are $O(m)$.
\end{lemma}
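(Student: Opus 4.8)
The plan is to count Type-III events by their \emph{source}, since each Type-III event is precisely a tangential strike of a wavefront segment against a corridor convex chain, i.e.\ an attempt to initiate (or re-initiate) a bunch from that chain. There are exactly three places in the algorithm where a Type-III event is pushed onto the min-heap: (i) during the handling of a Type-I/Type-II event in which a section of wavefront strikes a corridor enter/exit boundary (item~5 of the Type-I handler, and the split procedure of Subsection~\ref{subsect:typeItypeII}, which schedules one tangency event per convex chain of the entered corridor); (ii) when Case~(4) of $BHT$ initialization re-initiates a bunch by splitting; and (iii) line~14 of the ASSOCATOB procedure, which schedules a tangency event for every convex chain whose association is altered by a merge. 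I would bound each source separately and sum.

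For source (i), I would use that a corridor has at most two convex chains and two enter/exit boundaries, that there are $O(m)$ corridors, and that each corridor enter/exit boundary is struck by the wavefront at most once (as fixed when associations were defined). Hence at most $O(m)$ boundary strikes occur, each scheduling only $O(1)$ tangency events (one per convex chain of the corridor being entered), giving $O(m)$ Type-III events from this source. For source (ii), the number of bunch/waveform-section splits is $O(m)$ by Lemma~\ref{lem:numsplits}, and each split triggers only $O(1)$ Case-(4) re-initiations, so this too contributes $O(m)$. Throughout, the non-crossing property of shortest paths (Lemma~\ref{lem:spnoncrossing}) is what guarantees that a repeated tangential contact with an already-initiated chain falls into Cases~(2) or~(3) and creates nothing new, so these ``no-op'' strikes need not be separately counted.

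Source (iii) is the main obstacle and is exactly the content deferred in Lemma~\ref{lem:typeIIInumeventsinmerger}. A single merge can re-associate an entire contiguous block of convex chains, so a naive per-merge count of $O(m)$ association changes over the $O(m)$ merges (Lemma~\ref{lem:nummerges}) would only yield the too-weak bound $O(m^2)$. The fix I propose is an amortized charging argument: charge each merge-generated Type-III event to the element of $\partial B$ whose association it changes, and argue that over the whole run each convex chain or enter/exit boundary changes its bunch-association only $O(1)$ times. This is where the gateway machinery pays off --- by Lemma~\ref{lem:numgateways} there are only $O(m)$ gateways, every merge is triggered by a boundary split at a gateway, and the orientation of gateways (Lemma~\ref{lem:outgoinggw}) together with the monotone, non-crossing progression of the wavefront ensures that once a chain is associated with the bunch whose shortest path will ultimately cross it, that association is final. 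Since $\partial B$ contains $O(m)$ convex chains and enter/exit boundaries, the total number of association changes, hence of merge-generated Type-III events, is $O(m)$.

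Summing the three sources gives $O(m)$ Type-III events in total, which simultaneously discharges the obligation left open in Lemma~\ref{lem:typeIIInumeventsinmerger}. The delicate point I would spend the most care on is the $O(1)$-re-association claim for each boundary element: I would establish it by arguing that an association change on a fixed chain can only be caused by a strictly closer bunch appearing \emph{across} the gateway separating the two regions, and that the non-crossing property (Lemma~\ref{lem:spnoncrossing}), combined with the fact that the wavefront commits to traversing only the region oriented toward $t$, forbids such an improvement from recurring on the same chain.
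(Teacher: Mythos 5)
Your decomposition by source is reasonable, and sources (i) and (ii) are handled essentially as the paper does (the first bunch on each of the $O(m)$ corridor convex chains, and the $O(m)$ splits of Lemma~\ref{lem:numsplits}). The problem is source (iii), which you yourself identify as the crux: your entire bound there rests on the claim that each corridor convex chain or enter/exit boundary changes its bunch-association only $O(1)$ times over the whole run, and that claim is not established. The justification you give --- ``once a chain is associated with the bunch whose shortest path will ultimately cross it, that association is final'' --- is true but vacuous: it says nothing about how many provisional associations precede the final one. A chain that lies deep in the region containing $t$ survives in the residual boundary cycle through many successive boundary splits; at each of the corresponding merges a strictly closer section of wavefront can arrive through the newly opened gateway and capture that chain. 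The non-crossing property (Lemma~\ref{lem:spnoncrossing}) and the orientation of gateways (Lemma~\ref{lem:outgoinggw}) prevent shortest paths from crossing, but they do not prevent the nearest bunch to a fixed chain from being updated once per merge in which that chain participates, and a single chain can participate in $\omega(1)$ of the $O(m)$ merges. So the step that was supposed to discharge Lemma~\ref{lem:typeIIInumeventsinmerger} is exactly the step that is missing.

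The paper closes this gap by a different charging scheme that never needs a per-chain $O(1)$ re-association bound. It observes that when a merge re-initiates a bunch on a chain $C$ (Case~(4) of Subsection~\ref{subsect:bhtdatastr}), an old bunch $B_1$ on $C$ is simultaneously removed, so at most one bunch ever exits through the bounding edge of the corridor defining $C$; the removal of $B_1$ is charged to the bunch $B_p$ that generated $B_1$. It then bounds how many bunches any single bunch $B_p$ can generate: generating more than two forces $B_p$ to have been split at a junction vertex, and such splits are charged to the $O(m)$ junction vertices. In short, the paper amortizes over \emph{bunch creations}, charged ultimately to corridors and junction vertices, rather than over \emph{association changes} of boundary elements. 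If you want to salvage your route, you would have to prove the $O(1)$ re-association claim outright (which appears false as stated) or replace it with a charge of each re-association to the newly arriving bunch, and then separately bound the number of bunches --- which is essentially the paper's argument in disguise.
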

\begin{proof}
Since there are $O(m)$ corridor convex chains, number of $BHT$s initiated (in Case (1) of Subsection \ref{subsect:bhtdatastr}) are $O(m)$.
%\rem{??? Type-III num events due to merger}
%From Lemma \ref{lem:typeIIInumeventsinmerger}, there are $O(m)$ Type-III events (corresponding to Case (4) of Subsection\ref{subsect:bhtdatastr}) due to mergers.
Let $C = \{ v_1 , v_2 , \ldots v_{n'} \}$ be a convex chain from which a 
bunch is initiated.
New bunches from the same convex chain
may be introduced during the merge procedure (corresponding
to Case (4) of  Subsection\ref{subsect:bhtdatastr}).
We note that at the introduction of a new bunch hull tree an old bunch of 
segments, call it $B_1$ is removed  from the wavefront. Thus
at most one bunch $B(v_z, v_{n'}$ exits the bounding edge of the
corridor defining the chain $C$. Let $B_p$ be the bunch that
led to the generation of the bunch $B_1$.
The removal of bunches can thus be charged to $B_p$. We thus need to
bound the number of bunches that a particluar bunch, say $B_p$ can
generate. If a bunch $B_p$ generates more than  two bunches, then
it does so in two different corridors. In this case, the bunch
$B_p$ is split into two at a  vertex $v_J$ in junction $J$.
The shortest path to this junction vertex is thus determined due
to a segment in $B_p$ (due to the non-crossing property of bunches).
Thus the split of bunches can be charged to junction vertices. These
are $O(m)$ in number. And each bunch that is generated and removed,
as  $B_1$ is above,  can be charged to a bunch or split portion of a bunch. 
This gives the desired bound of $O(m)$.
\end{proof}

\subsection{Type-IV Event}
\label{subsect:typeIV}

The intersection of inter-bunch I-curves within a waveform-section are captured with this event.
Using these points of intersection, we can detect when two non-adjacent bunches within a waveform-section meet.
Since the intra-bunch I-curves are diverging, only the intersection of inter-bunch I-curves and I-curves from two different bunches are considered.

\begin{lemma}
\label{lem:nonadjicurves}
Let $d'$ be the shortest distance between a waveform-section $WS$ and its associated vertex/edge $e$, which corresponds to an event $evt$.
Even though some of the I-curves in that waveform-section intersect among themselves before $evt$ occurs, $WS$ does strike $e$ for the first time when the $evt$ occurs (provided that the association of $e$ does not change due to mergers).
\end{lemma}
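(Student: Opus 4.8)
The plan is to split the assertion into two halves: that no segment of $WS$ meets $e$ at any distance strictly less than $d'$, and that $e$ is in fact met at distance exactly $d'$. The first half is essentially definitional. The value $d'$ was obtained as the minimum, over the valid segments currently populating $WS$, of the expansion needed to reach the associated chain/edge $e$ (Lemma~\ref{lem:sdcwt}). A Type-IV event only deletes bunches from $WS$; it neither inserts a new segment nor advances any surviving one. Consequently, after any number of intervening Type-IV events the shortest distance from the remaining segments to $e$ can only be $\ge d'$, so no point of $e$ is reached before $d'$. The only way $evt$ could be premature is therefore that the bunch carrying the segment responsible for the strike is deleted before distance $d'$ is reached, pushing the actual strike past $evt$; ruling this out is the real content.

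To rule it out I would fix a valid segment $w(v^*)$ in $WS$ and a point $p^*\in e$ with $d(s,p^*)=d'$ realized along the tangent $v^*p^*$. If no Type-IV event deletes the bunch of $w(v^*)$ before $d'$, that segment itself strikes $p^*$ at $d'$ and we are done. Otherwise a Type-IV event fires at some $d_q<d'$ and removes a section $SW'$ of the wavefront that contains the bunch of $w(v^*)$; by the description of the Type-IV handler, $SW'$ lies between two surviving bunches $b_i,b_j$ and the region enclosed by $SW'\cup\{b_i\}\cup\{b_j\}$ does not contain $t$. The crucial observation is that $e$ is the chain/edge associated with $WS$, hence it lies on the boundary cycle oriented toward $t$ (the orientation fixed in Lemma~\ref{lem:gatewayscorr}), so $p^*$ sits on the $t$-side of the pinched-off region rather than inside it. Because $d(s,p^*)=d'$ is a purely geometric quantity, and the deletion of $SW'$ is licensed precisely because any shortest path leaving $SW'$ toward the $t$-side must cross $b_i$ or $b_j$ (Lemma~\ref{lem:spnoncrossing}), the shortest distance to $p^*$ is unchanged by the deletion: for every $d>d_q$ the point $p^*$ falls in the Voronoi region of some surviving segment $w(u')$, which reaches $p^*$ along the tangent $u'p^*$ at distance exactly $d'$. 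Thus $e$ is struck at $d'$ by $w(u')$.

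Combining the two halves, $e$ is untouched below $d'$ and is struck at $d'$ either by $w(v^*)$ itself or by the surviving replacement $w(u')$, so the first strike of $e$ by $WS$ coincides with $evt$. The step I expect to be the main obstacle is the geometric claim that $p^*$ lies strictly on the $t$-side of every pinch-off produced by an intervening Type-IV event and is therefore never engulfed by a deleted region; making this precise is exactly a combined application of the gateway orientation (Lemma~\ref{lem:gatewayscorr}) and the non-crossing property (Lemma~\ref{lem:spnoncrossing}) guaranteeing that the exchanged shortest path still terminates at $p^*$ through a surviving segment. This also explains the stated proviso: a merger can enlarge $WS$ with fresh bunches or reassign $e$ to a different section of the wavefront, which genuinely recomputes $d'$, whereas a pure Type-IV deletion leaves the true shortest distance to every $t$-side point, and in particular to $p^*$, intact.
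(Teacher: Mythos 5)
Your proposal is correct in substance but argues along a genuinely different line from the paper. The paper's proof is a purely local, metric observation: it takes the two intersecting inter-bunch I-curves I-curve$(w(u_1),w(u_2))$ and I-curve$(w(v_1),w(v_2))$, considers an arbitrary point $q$ external to the wavefront, and notes that since every surviving segment expands by the same amount $d''+\epsilon$, the inequality $\Vert qx\Vert < \Vert qy\Vert$ between the distances to $w(u_1)$ and $w(v_2)$ is preserved after subtracting the common offset; hence the proximity ordering among survivors, and therefore the identity and time of the first strike of $e$, is unaffected by deleting the pinched-off bunches. You instead fix the witness point $p^*\in e$, observe that $d(s,p^*)$ is a geometric invariant, and use the non-crossing property (and the $t$-side location of $e$) to argue that if the minimizing bunch is deleted, some surviving segment of $WS$ realizes the same shortest path to $p^*$ and strikes it at exactly $d'$. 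Your decomposition is arguably sharper: it isolates and explicitly addresses the one dangerous case --- the bunch responsible for $d'$ being deleted before $d'$ is reached --- which the paper's ordering argument handles only implicitly (it never explicitly rules out that a \emph{deleted} segment was the minimizer). On the other hand, your route leans on heavier machinery than necessary: the gateway orientation of Lemma~\ref{lem:gatewayscorr} concerns boundary splits, whereas the deletion in a Type-IV event is licensed directly by Lemma~\ref{lem:spnoncrossing} together with the location of $t$ relative to the region enclosed by $SW'\cup\{b_i\}\cup\{b_j\}$; the non-crossing exchange argument alone already gives you that the shortest path to $p^*$ passes through a surviving segment of $WS$ at the same length. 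Both arguments also share the same unavoidable caveat, which you correctly identify: the claim fails if a merger reassigns $e$, which is exactly the lemma's stated proviso.
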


\begin{proof}

\begin{figure}
\centerline {
\subfigure[A view of I-curve intersection in a $WST$]{\epsfysize=140pt \epsfbox{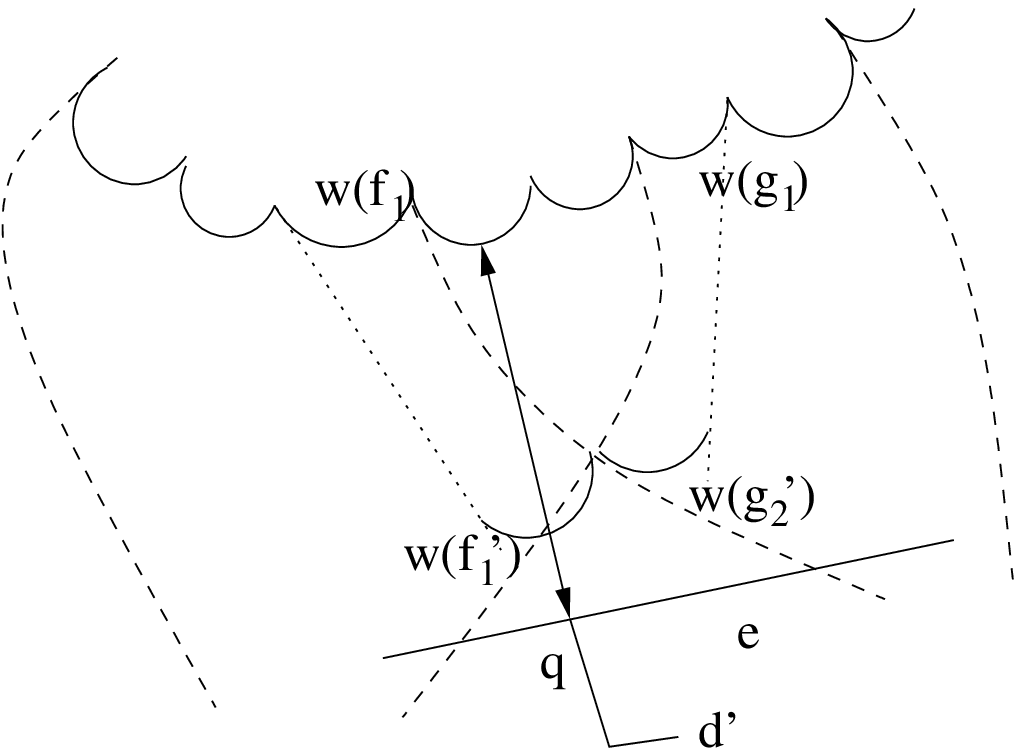}}
\hspace{0.2in}
\subfigure[Event point distance before and after I-curve intersection]{\epsfxsize=210pt \epsfbox{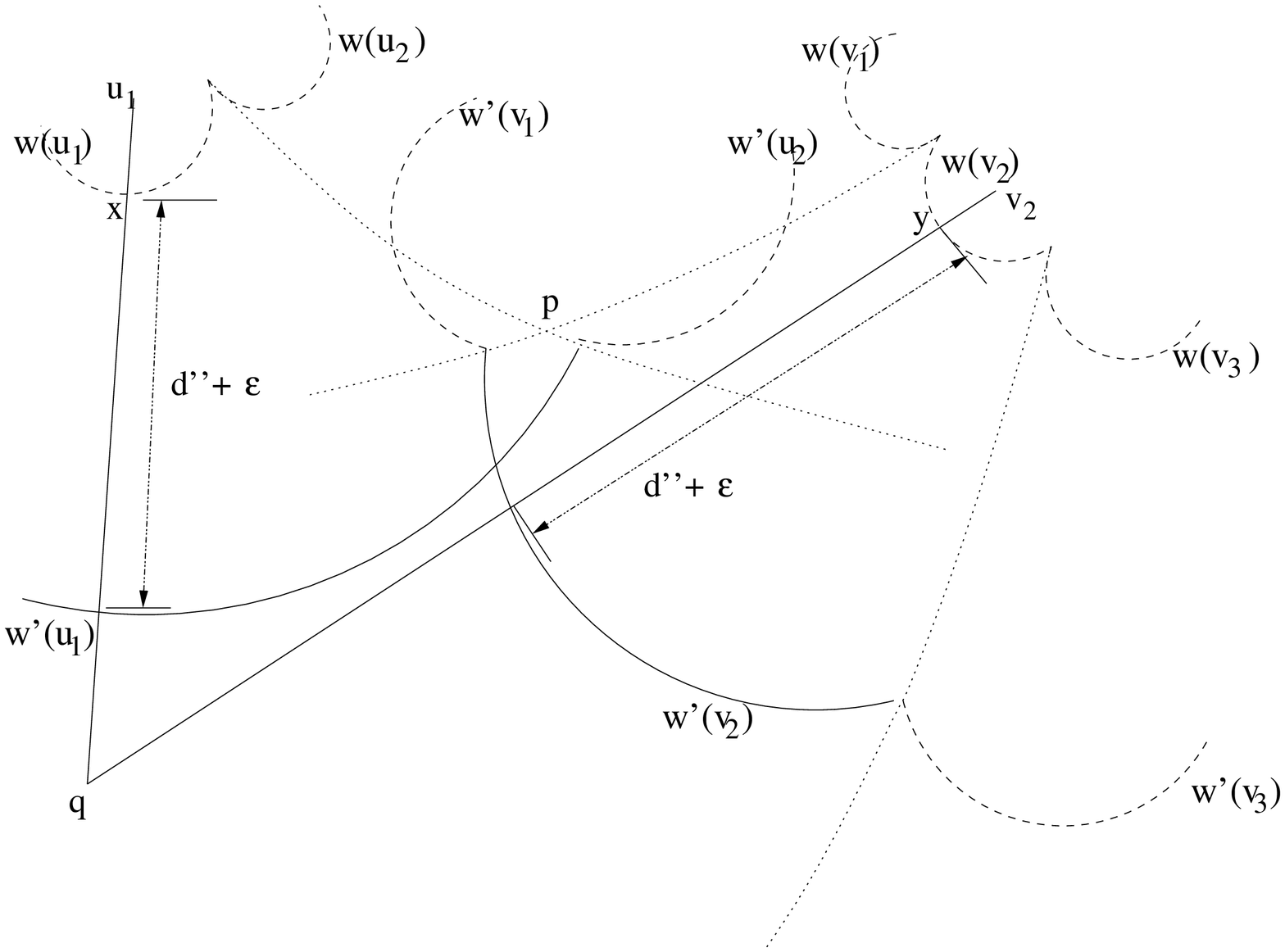}}
}
\caption{\label{fig:nonadjicurves} I-curve intersection and the computed shortest distance}
\end{figure}

\noindent
See Fig. \ref{fig:nonadjicurves}.
Suppose the inter-bunch I-curves in $WS$, I-curve$(w(u_1), w(u_2))$ and I-curve$(w(v_1), w(v_2))$, intersect at point $p$, whereas the Euclidean distance from $p$ to $WS$ is $d''$, for $d'' < d'$.
Consider the wavefront just after the $evt$ occurred i.e., after it traverses an Euclidean distance $\epsilon$ (for a positive constant $\epsilon$).
At this wavefront progression, let $WS'$ be the waveform-section comprising bunches in $WS$ sans the the bunches removed due to the inter-bunch I-curve intersection. 
Let $w'(u_1)$ (resp. $w'(u_2), w'(v_1), w'(v_2)$) be the wavefront segment in $WS'$ corresponding to the wavefront segment $w(u_1)$ (resp. $w(u_2), w(v_1), w(v_2)$) in $WS$.
Consider any point $q$ s.t. $q$ is external to both $WS$ and $WS'$, and, $q$ is closer to $w(u_1)$ than $w(v_2)$ (symmertic cases can be argued similarly).
Let $x (y)$ be the point of intersection of line segment $qu_1$ (resp. $qv_2$) with $w(u_1)$ (resp. $w(v_2)$).
Given that $\Vert qx \Vert < \Vert qy \Vert$, it is trivial to note that $(\Vert qx \Vert-(d''+\epsilon)) < (\Vert qy \Vert -(d''+\epsilon))$.
In other words, it is guaranteed that the point $q$ is closer to $w'(u_1)$ than $w'(v_2)$ whenever $q$ is closer to $w(u_1)$ than $w(g_2)$.
\end{proof}

However, for the sake of utilizing I-curves for further computations, and to avoid overlap of wavefront segments within the wavefront, we detect their intersection.
There are three types of I-curve intersections possible in a waveform-section:
\begin{enumerate}[(1)]  \itemsep -2pt
\item \label{adjint} Intersection of adjacent inter-bunch I-curves
\item \label{nonadjint} Intersection of non-adjacent inter-bunch I-curves
\item \label{partialbelim} Non-adjacent I-curve intersection causing partial elimination of bunches
\end{enumerate}

The dirty bridges discussed in Section \ref{sect:dirtybridges} takes care of Case (\ref{partialbelim}).
Since explicitly computing the I-curve intersections occurring within a $WS$ is compute intensive,  both the I-curve intersections mentioned in Case (\ref{adjint}) and Case (\ref{nonadjint}) are detected by finding the shortest distance between the sibling hulls stored at the internal nodes of $WST$.  
Let $UH_l, UH_r$ be the hulls at sibling nodes $v_l$ and $v_r$ of a $WST$ respectively.
Let $b^{rm}_l$ (resp. $b^{lm}_r$) be the rightmost (resp. leftmost) bunch in the bunches stored at the leaves of $v_l$ (resp. $v_r$).
Also, let $UH^{rm}_l$ (resp. $UH^{lm}_r$) be the hull of $b^{rm}_l$ (resp. $b^{lm}_r$).
Since two adjacent bunches always intersect along an I-curve, we avoid detecting event corresponding to the intersection of adjacent bunches.
Hence we compute the shortest distances' $d', d''$ between the hulls $UH_l-UH^{rm}_l, UH_r$ and $UH_l, UH_r-UH^{lm}_r$ respectively.
And, we push the Type-IV event point with distance $\min(\frac{d'}{2}, \frac{d''}{2})$ to the event heap.

\begin{lemma}
\label{lem:icurvehullinter}
Two I-curves in a $WST$ intersect if and only if Type-IV event occurs. 
\end{lemma}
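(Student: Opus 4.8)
The plan is to prove the equivalence through the additively weighted Voronoi structure induced on the plane by the bunches of the waveform-section, using the fact that an inter-bunch I-curve is exactly the Voronoi edge separating two consecutive bunch regions. First I would reduce to inter-bunch I-curves: by Lemma~\ref{lem:divergeprop} the intra-bunch I-curves of any bunch diverge and never meet, so every intersection of two I-curves stored in a $WST$ is an intersection of two inter-bunch I-curves, say I-curve($b_a,b_{a+1}$) and I-curve($b_c,b_{c+1}$) with $a<c$. I would also record that two adjacent bunches permanently share their separating I-curve; this shared curve is not an intersection of two distinct I-curves, which is precisely why the detection drops the boundary bunches $UH^{rm}_l$ and $UH^{lm}_r$ when forming $d'$ and $d''$.

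For the direction in which two I-curves intersect, let $p$ be their common point. Because I-curves are Voronoi edges, two of them can meet only at a vertex of the bunch Voronoi structure, where at least three bunch regions coincide. By the linear ordering of the bunches along the wavefront and the contiguity of their associated regions (Lemma~\ref{lem:contigprop2}), such a vertex is a meeting of three \emph{consecutive} bunches whose middle region terminates at $p$. The two flanking bunches, which were separated by that middle region an instant earlier and are therefore non-adjacent, now have their upper hulls touch at $p$ for the first time; this is exactly a Type-IV event.

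For the converse, suppose the hulls $UH_i$ and $UH_j$ of two non-adjacent bunches ($j\ge i+2$) touch for the first time at a point $p$. Just before $p$ the bunches $b_{i+1},\dots,b_{j-1}$ lie between $b_i$ and $b_j$ along the front, each sweeping a region bounded by two consecutive I-curves $I_{l-1}$ and $I_l$. For $b_i$ and $b_j$ to reach a common point while every intermediate bunch is ordered strictly between them, at least one intermediate region must terminate at $p$, making $p$ a vertex of the bunch Voronoi structure; equivalently, the two I-curves bounding that vanishing region intersect at $p$. Hence two inter-bunch I-curves cross whenever a Type-IV event occurs, which together with the previous paragraph yields the stated equivalence. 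I would finish by observing that the sibling-hull computation of $\min(\frac{d'}{2},\frac{d''}{2})$ over the pairs $(UH_l-UH^{rm}_l,\,UH_r)$ and $(UH_l,\,UH_r-UH^{lm}_r)$ meets only bunches separated by at least one dropped boundary bunch, hence non-adjacent, and so schedules this first meeting at the correct wavefront distance.

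The step I expect to be the main obstacle is the converse in the presence of several intermediate bunches: I must argue that the \emph{first} touching of $UH_i$ and $UH_j$ coincides with a genuine I-curve crossing (a Voronoi vertex) rather than with a point still interior to some intermediate region, and I must separate the full-elimination case from the partial-elimination situation of Case~(\ref{partialbelim}), whose bookkeeping is deferred to the dirty bridges of Section~\ref{sect:dirtybridges}. The crux that makes the equivalence exact is the observation that the ordering of bunches along the wavefront (Lemma~\ref{lem:contigprop2}) forces every such Voronoi vertex to be the collapse of three consecutive bunches, so the two newly adjacent bunches are always non-adjacent beforehand and the meeting is always witnessed by exactly two crossing I-curves.
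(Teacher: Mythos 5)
Your geometric half — recasting an I-curve crossing as a vertex of the weighted Voronoi structure of the bunches, so that the two flanking, previously non-adjacent bunches meet there — is a legitimate and somewhat more explicit route than the paper's one-line assertion that $p \in (b_i \cap b_j)$ when the wavefront reaches $p$. But the proposal has a genuine gap in the other half of the equivalence: you never show that the meeting of a \emph{specific} pair $b_i, b_j$ is actually examined by the algorithm. Type-IV detection does not compute distances between arbitrary pairs of bunch hulls; it only computes, at each internal node of the $WST$, the two sibling-hull distances $(UH_l - UH^{rm}_l,\, UH_r)$ and $(UH_l,\, UH_r - UH^{lm}_r)$. Completeness therefore requires exhibiting an internal node whose computation witnesses the pair $(b_i,b_j)$. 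The paper supplies this with its least-common-ancestor argument: at the LCA $v$ of the leaves storing $b_i$ and $b_j$, one bunch lies under $v_l$ and the other under $v_r$, and since $b_i,b_j$ are non-adjacent they cannot simultaneously be the rightmost bunch of $v_l$ and the leftmost bunch of $v_r$, so at least one of the two sibling-hull computations at $v$ contains both hulls and detects the intersection. Your closing sentence asserts that the sibling-hull computation ``schedules this first meeting at the correct wavefront distance'' without identifying this node or ruling out the case where the pair is dropped by the $UH^{rm}_l$/$UH^{lm}_r$ subtraction; that is the crux of the lemma and it is missing.

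A secondary problem is your claim that the ordering of bunches forces every such Voronoi vertex to be a collapse of exactly three \emph{consecutive} bunches. The paper explicitly admits intersections of non-adjacent inter-bunch I-curves (Case~(\ref{nonadjint}) of Subsection~\ref{subsect:typeIV}), in which all of $b_{i+1},\ldots,b_{j-1}$ with $j > i+2$ are eliminated at once; Lemma~\ref{lem:hullinter} is written precisely to handle a non-empty set $B$ of intermediate bunches. Restricting to the three-consecutive-bunches case both overstates what the geometry guarantees (no genericity assumption is made) and hides the very situation in which the LCA argument is needed, since the two bunches that become adjacent may then be far apart in the left-to-right leaf order of the $WST$. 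Also note that Lemma~\ref{lem:contigprop2} concerns contiguity of boundary edges associated with a bunch, not the ordering of bunch Voronoi cells along the wavefront, so it does not by itself support the consecutiveness claim you hang on it.
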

\begin{proof}
Let $b_i, b_j$ be two bunches occurring in the left-to-right ordering of bunches in a waveform-section.
Let $b_{i_{succ}}$ is the successor of $b_i$ in the wavefront; also, let $b_{j_{pred}}$ is the predecessor of $b_j$ in the wavefront.
Suppose the I-curve($b_i, b_{i_{succ}}$) and I-curve($b_{j_{pred}}, b_j$) intersect at point $p$.
>From the I-curve definition, $p \in (b_i \cap b_j)$ when the bunch $b_i$ or $b_j$ strikes $p$.
Consider the least common ancestor node $v$ of nodes that store $b_i$ and $b_j$.
The left and right children of $v$, say $v_l$ and $v_r$, implicitly store upper hulls $UH_l, UH_r$ such that one contains $b_i$ and the other contains $b_j$.
Let $UH^{rm}_l$ (resp. $UH^{lm}_r)$) be the hull of the right-most (resp. left-most) bunch in the bunches stored at the leaves of $v_l$ (resp. $v_r$).
It is immediate to see that the bunches $b_i, b_j$ intersect, whenever either the hulls  $UH_l-UH^{rm}_l, UH_r$ or the hulls $UH_l, UH_r-UH^{lm}_r$ intersect.
This is captured as a Type-IV event.
\end{proof}

As shown in the following Lemma, a Type-IV event causes the disappearance of one or more bunches, in turn, the waveform-section is dynamically updated.
Using the IIntersect procedure (detailed in Section \ref{sect:IIntersectProc}) with $RV$ of a section of wavefront and an I-curve as parameters, we determine new associations and update the shortest distance in these new associations.
As part of this update, we compute the new inter-bunch I-curves among the bunches which became adjacent and their intersection points.
The event point min-heap is updated accordingly.
The following Lemma is useful in Analysis.

\begin{lemma}
\label{lem:hullinter}
Whenever a Type-IV event occurs, there exists at least one bunch which is not required to be progressed further.
\end{lemma}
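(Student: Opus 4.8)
The plan is to read a Type-IV event through the characterization already established in Lemma~\ref{lem:icurvehullinter}: such an event occurs precisely when two inter-bunch I-curves, I-curve$(b_i, b_{i_{succ}})$ and I-curve$(b_{j_{pred}}, b_j)$, of a single waveform-section intersect at some point $p$, where $b_i$ precedes $b_j$ in the left-to-right ordering of the bunches. I would first record that the detection rule for Type-IV events deliberately excludes the intersection of the I-curves of two adjacent bunches (two adjacent bunches always share an I-curve and are therefore not flagged). Consequently $b_i$ and $b_j$ are non-adjacent, so the contiguous block of bunches strictly between them, namely $b_{i_{succ}}, \ldots, b_{j_{pred}}$, is non-empty; this is the candidate set of bunches that will turn out to be removable, and its non-emptiness is exactly what the statement asserts.

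Next I would exploit the definition of an I-curve at the crossing point $p$. Since $p$ lies on I-curve$(b_i, b_{i_{succ}})$ it admits a shortest path to $s$ through the centre of a segment of $b_i$, and since it also lies on I-curve$(b_{j_{pred}}, b_j)$ it admits a shortest path to $s$ through the centre of a segment of $b_j$. Invoking the non-crossing property of shortest paths (Lemma~\ref{lem:spnoncrossing}), these two shortest $s$-to-$p$ paths may be taken so that their interiors do not cross. Together with the arc of the wavefront spanning $b_{i_{succ}}, \ldots, b_{j_{pred}}$, they bound a closed region $R$ whose interior contains the in-between bunches but not $t$: indeed $t$ lies in its own corridor and is reached last, hence it is not enclosed by the already-traversed wavefront, so it sits in the unbounded complement of $R$.

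The crux, which I expect to be the main obstacle, is to argue cleanly that none of the trapped bunches can ever contribute to a shortest path to $t$. Here I would use non-crossing once more: any point reached from an in-between bunch lies, after the crossing at $p$, inside $R$, so a shortest path from $s$ to $t$ routed through one of $b_{i_{succ}}, \ldots, b_{j_{pred}}$ would have to leave $R$ and would therefore cross one of the two non-crossing $s$-to-$p$ paths via $b_i$ or $b_j$, contradicting Lemma~\ref{lem:spnoncrossing}. Equivalently, the combined Voronoi region of $b_{i_{succ}}, \ldots, b_{j_{pred}}$ is pinched off at $p$ and does not extend beyond it, which is consistent with Lemma~\ref{lem:nonadjicurves} guaranteeing that the event distances computed for the surviving bunches remain valid. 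It follows that every bunch in $b_{i_{succ}}, \ldots, b_{j_{pred}}$ need not be progressed further; since this block is non-empty, at least one bunch is removable, which is the desired conclusion. The only care needed is orientation: I would note that if it were $t$ that lay inside the pinched-off region, the roles of the two sides would simply be swapped and the complementary (still non-empty) block of bunches would be the removable one, so the existence claim holds in either case.
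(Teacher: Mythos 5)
Your proposal is correct and follows essentially the same route as the paper: identify the non-empty block of bunches trapped between the two crossing inter-bunch I-curves, enclose them by the two (non-crossing) shortest $s$-to-$p$ paths together with the wavefront arc, and invoke Lemma~\ref{lem:spnoncrossing} to conclude that no trapped bunch can contribute a shortest path to $t$, with the roles of the two sides swapped when $t$ lies inside the pinched-off region. The only difference is organizational: by leaning on Lemma~\ref{lem:icurvehullinter} you treat the event uniformly as an I-curve crossing, whereas the paper splits into the adjacent-I-curve case and the hull-intersection case and spends most of its effort identifying the delimiting bunches $b_i, b_j$ from the hull intersection point (distinguishing whether that point lies on a bunch or on a bridge), a detail your version elides but does not actually need for the existence claim.
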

\begin{proof}
\begin{figure}
\center{
\subfigure[Adjacent I-curve intersection]{\epsfysize=180pt \epsfbox{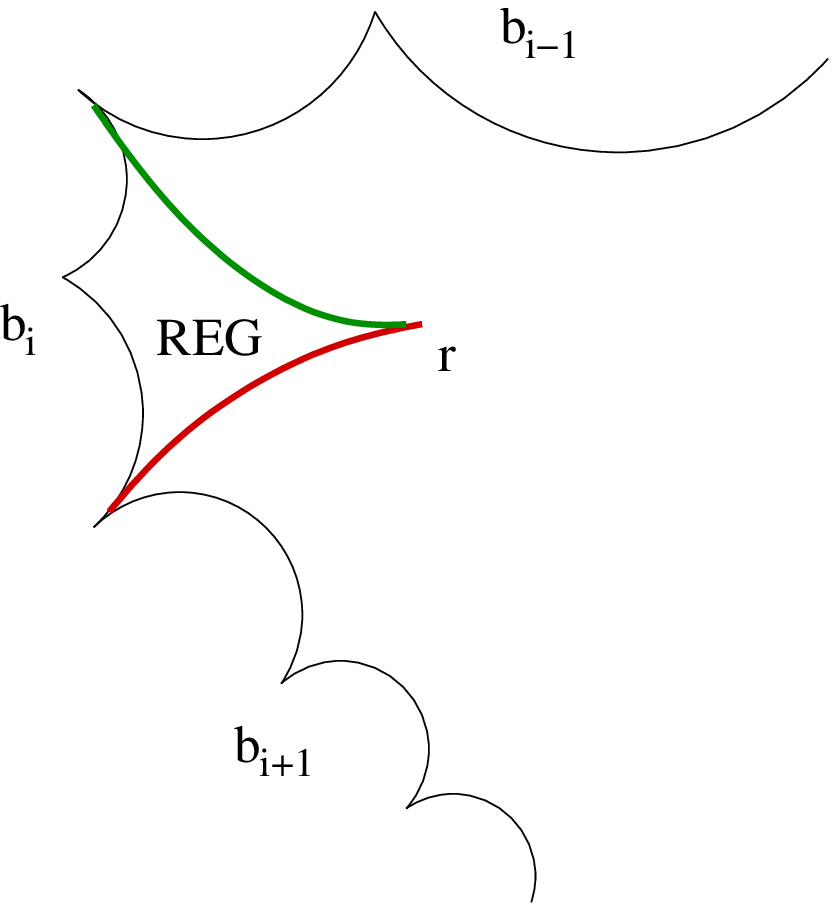}}
\subfigure[Hull intersection]{\epsfysize=180pt \epsfbox{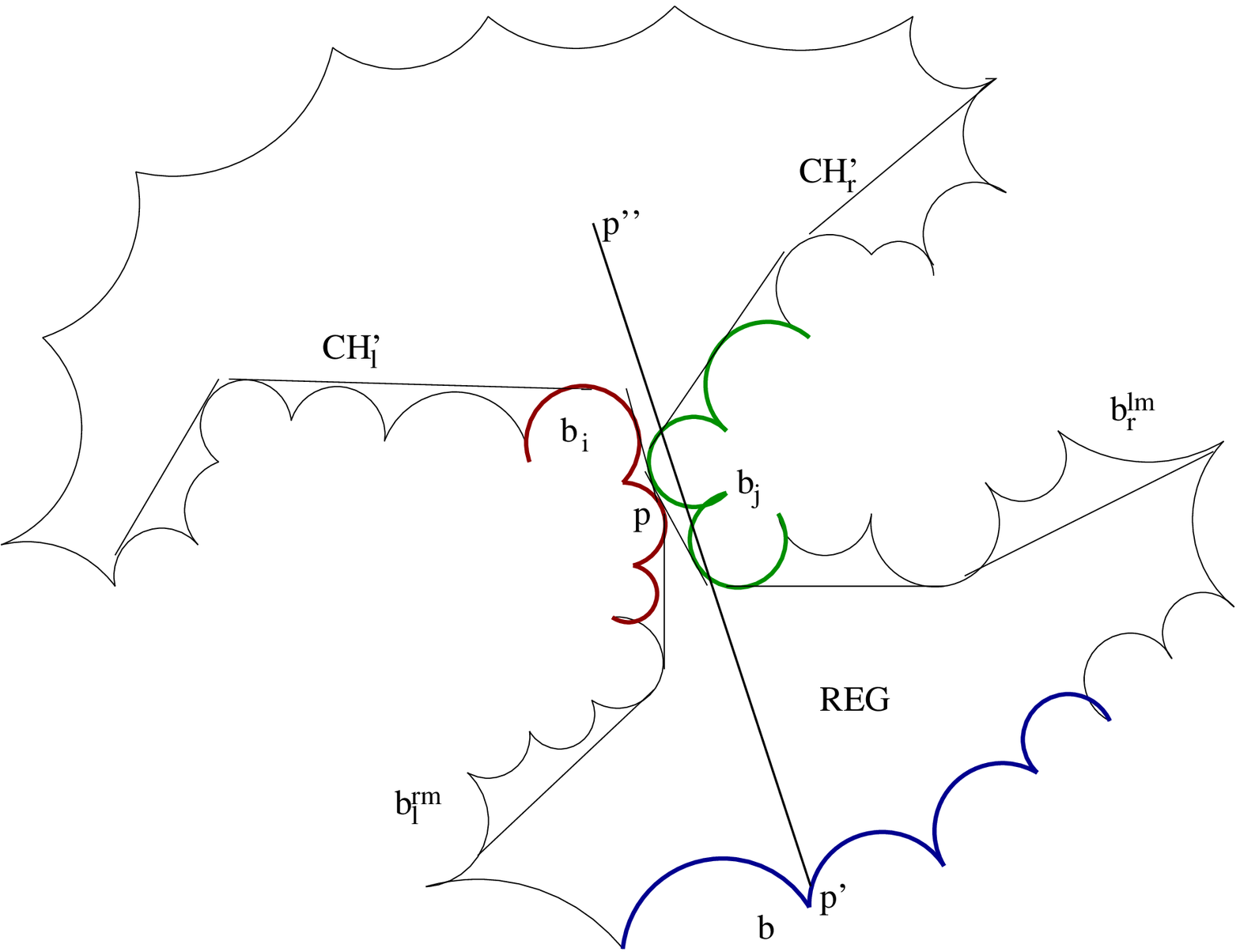}}
}
\caption{\label{fig:typeivevt} Type-IV Events}
\end{figure}
Consider two adjacent inter-bunch I-curves, I-curve($b_{i-1}, b_i$) and I-curve($b_i,b_{i+1}$).
Suppose the Type-IV event occurred due to the intersection of these two I-curves (Fig \ref{fig:typeivevt}(a)).
Let $r$ be the point of intersection.
Also, let $REG$ be the untraversed region bounded by these two I-curves and the bunch $b_i$.
Suppose the destination $t$ is located in $REG$.
Then due to the non-crossing property of shortest paths (Lemma \ref{lem:spnoncrossing}), the bunch $b_i$ is not required to be progressed in $\overline{REG}$.
Further the bunches other than $b_{i-1}, b_i$ and $b_{i+1}$ need not be 
propagated further.
Suppose the destination $t$ is located in the region $\overline{REG}$.
Again due to the non-crossing property of shortest paths (Lemma \ref{lem:spnoncrossing}), it is not required to progress the bunch $b_i$ (in $\overline{REG}$).  \hfil\break

Suppose the Type-IV event occurred due to an event point that
corresponds to the distance between two hulls and does not correspond to
the intersection of adjacent I-curves.
When this event occurs, let $CH_l', CH_r'$ be the hulls at nodes $l$ and $r$ of $WST$, and, $p$ be the point at which the hulls $CH_l', CH_r'$ intersect.
Let $B_l'$ (resp. $B_r'$) be the set of bunches at all the leaf nodes of the subtree formed with the descendents of node $l$ (resp. $r$).  
We define the bunches $b_i$ and $b_j$ herewith:
\begin{itemize}

\item Suppose the point $p \notin b$, for any bunch $b \in B_l'$ i.e., the point $p$ is located on a bridge of $CH_l'$.
Let $B_l'' \subseteq B_l'$ be the set of bunches, where a bunch $b \in B_l''$ if and only if there exists a line segment $LS$ from $p$ to a point located on $b$ s.t. the interior of $LS$ does not intersect with any bunch.
Among all the bunches in $B_l''$, we let the bunch stored at the rightmost leaf node of $WST$ be termed as $b_i$.
Otherwise, suppose the point $p$ is located on a bunch $b \in B_l'$.
Then the bunch $b$ is termed as $b_i$.  
\item Suppose the point $p \notin b$, for any bunch $b \in B_r'$ i.e., the point $p$ is located on a bridge of $CH_r'$.
Then the set $B_r'' \subseteq B_r'$ is defined similar to the above case.
Among all the bunches in $B_r''$, we let the bunch stored at the leftmost leaf node of $WST$ be termed as $b_j$.
Otherwise, suppose the point $p$ is located on a bunch $b \in B_r'$.
Then the bunch $b$ is termed as $b_j$.  

\end{itemize}
Let $B$ be the set of bunches strictly between $b_i$ and $b_j$ in the waveform-section.
$B$ is non-empty since the event corresponds to an intersection of
non-adjacent I-curves.
Let the set $S$ comprise untraversed regions s.t. a region $R \in S$ if and only if the closed set representing the region $R$ intersects with at least one bunch in $B$.
Also, let the region $REG$ be the union of the regions in $S$.
Note that the region $REG$ is not necessarily connected.

Suppose the destination $t$ is located in $\overline{REG}$.
Let $p'$ be any point chosen on any bunch $b$ in $B$.
For any point $p''$ in $\overline{REG}$, the line segment $p'p''$ intersects the boundary of $CH_l' \cup CH_r'$.
In turn, $p'p''$ intersects a bunch contained in $CH_l' \cup CH_r'$.
Then from the non-crossing property of shortest paths (Lemma \ref{lem:spnoncrossing}), bunch $b$ is not required to be progressed further. 
In other words, no bunch belonging to $B$ need to be progressed in $\overline{REG}$. 

Suppose the destination $t$ is located in $REG$.
Then by defining bunches $b_i, b_j$ symmetric to the above, the proof would be similar to the case mentioned in the above paragraph. 
\end{proof}

As required in the above Lemma, when a Type-IV event occurs we determine the region in which the destination point $t$ resides. 
This is done using the procedure from Vaidya \cite{Vaidya86}.

\section{Dirty Bridges in $WST$}
\label{sect:dirtybridges}

\begin{figure}
\centerline{
\centerline{\epsfysize=210pt \epsfbox{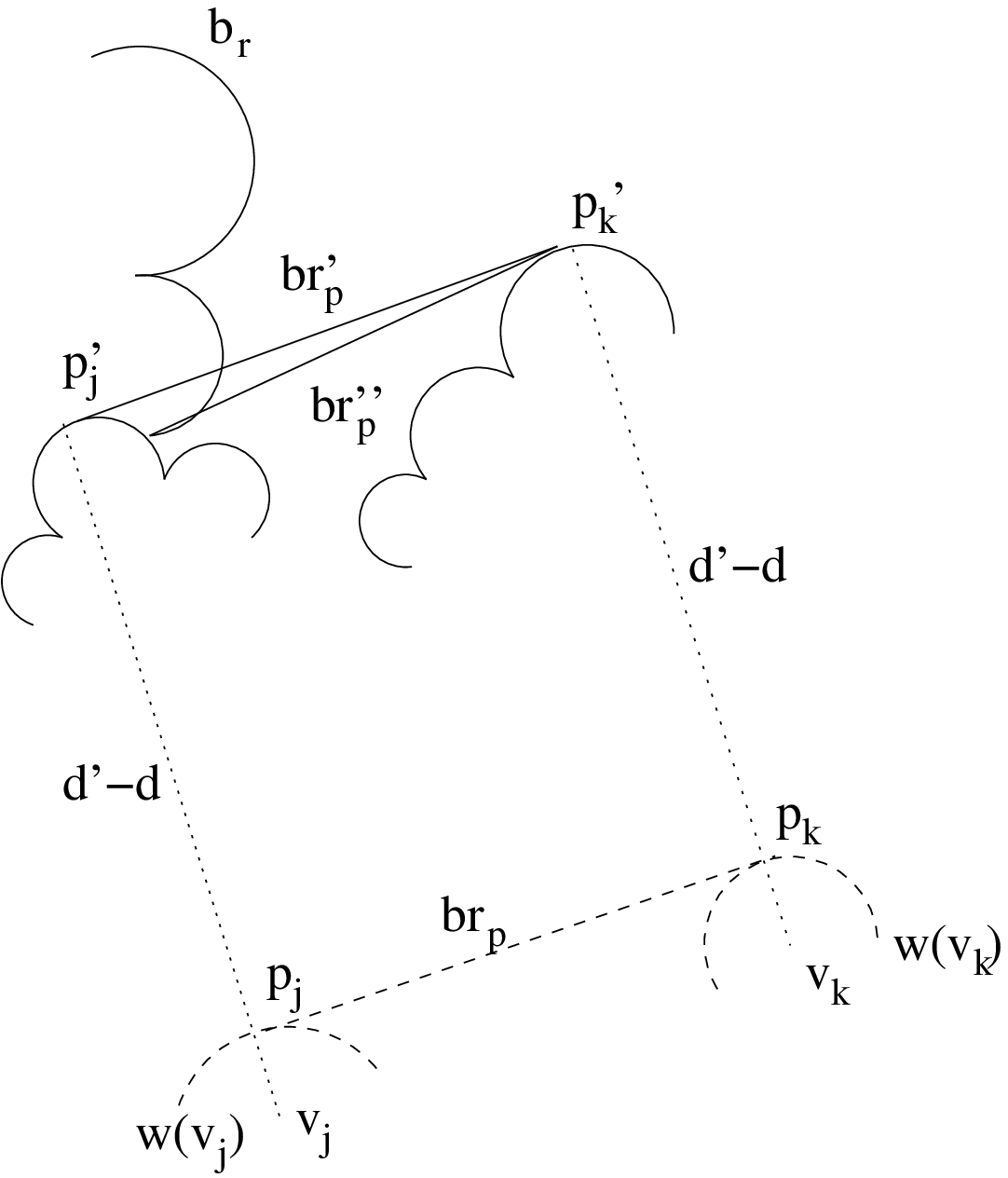}}
}
\caption{\label{fig:dirtybridges1} Dirty Bridges in $WST$}
\end{figure}

Consider the maintenance of bridges at internal nodes as the wavefront progresses.
For $p_j \in w(v_j), p_k \in w(v_k)$, let $br_p=(p_j, p_k)$ be a bridge at node $p$ of a $WST$ when $\cW (d)$, and $p_j, p_k$ are respectively the points of tangencies of $w(v_j)$ and $w(v_k)$.
Let $p_j'$ be a point at Euclidean distance $(|v_jp_j|+d'-d)$ from $v_j$ along the vector $\overrightarrow{p_jv_j}$.
Also, let $p_k'$ be a point at Euclidean distance $(v_kp_k+d'-d)$ from $v_k$ along the vector $\overrightarrow{p_kv_k}$.
As the wavefront expands, the bridge $br_p$ moves as fast as any point on a bunch $b_r$, for $i \le r \le k$, expands (Property \ref{prop:wstbridgemovement}).
Therefore, the traversal of $p_j$ (or $p_k$) could happen due to some bunch $b_r$, where $r < i$ or $r > k$ i.e., $b_r$ would be located in a subtree other than the one at $p$.  
Let an endpoint of $br_p'$, say $p_j'$, does not belong to $\cW(d')$ for $d' > d$.   
Since the Property \ref{prop:wstbridgemovement} is no more applicable, $br_p'$ is not a bridge at node $p$ at $\cW(d')$.
As shown in Fig. \ref{fig:dirtybridges1}, due to the definition of a bridge in $WST$, $br_p''$ is the bridge at node $p$ when $\cW (d')$.
However, it is inefficient and impractical to update the bridges infinitely often at nodes such as $p$.
As long as at least one wavefront segment in each of the bunches $b_j$ and $b_k$ (on which $p_j'$ and $p_k'$ incident) belongs to $\cW (d')$ i.e., as long as $b_j$ and $b_k$ are at the leaves of $WST$ corresponding to $\cW (d')$, we continue to have $br_p$ as a dirty bridge at node $p$ of $WST$.

\begin{definition}
A bridge is considered as a {\bf dirty bridge} whenever it is defined at a $WST$ node, but one or both of its endpoints do not belong to the shortest path wavefront. 
If the bridge is dirty at a node $p$ of $WST$, then the corresponding upper hull at $p$ is termed as a {\it dirty upper hull}. 
\end{definition}

When bridge $br_p$ is determined, say at $\cW (d)$,  at a node $p$ of $WST$, let $p_j' \in w(v_j)$ and $p_k \in w(v_k)$ be its endpoints such that $p_j'$ is  not a point of tangency of $w(v_j)$. 
Let $C$ be a circle with the same center and radius as $w(v_j)$.
To facilitate in reconstructing bridge at $\cW (d')$, by utilizing property \ref{prop:wstbridgemovement}, we find a point of tangency $p_j$ on $C$ (closest along $C$ to $p_j'$) so that $p_jp_k$ is a tangent to both $w(v_j)$ and $w(v_k)$. 
Again, so formed bridge $br_p = (p_j, p_k)$ is a dirty bridge.

\begin{lemma}
\label{lem:dirtybricorr} The dirty bridges in a $WST$ do not affect the correctness of shortest distance computations between $WST$ and its associated corridor convex chain or enter/exit boundary $C$.
\end{lemma}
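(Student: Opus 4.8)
The plan is to show that, although a dirty bridge $br_p$ has one or both endpoints off the current wavefront, the value it contributes to the traversal of Section~\ref{subsect:wstsdcomp} is never an underestimate of the true distance from $WS$ to $C$, and that the true minimum is still attained at a feature the traversal actually reaches. First I would fix notation: let $\cW(d')$ be the current wavefront, let $br_p=(p_j,p_k)$ be a dirty bridge at node $p$ with $p_j\in w(v_j)$ and $p_k\in w(v_k)$, and recall from the reconstruction described just above the lemma that $br_p$ is rebuilt as a genuine common outer tangent to the two circles centered at $v_j,v_k$ of the current radii, even when (say) $p_j$ fails to lie on the arc of $w(v_j)$ that survives in $\cW(d')$. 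The quantity the recursion compares against $C$ is then the distance from $C$ to this honest tangent segment.

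The first key step is a \emph{lag} claim: a dirty endpoint lies no farther from $s$ than the true wavefront does in its angular direction. By Property~\ref{prop:wstbridgemovement} the endpoint $p_j$ advances only as fast as the bunches $b_i,\ldots,b_k$ in the subtree rooted at $p$, whereas an endpoint becomes dirty precisely because the true point of tangency in that direction has migrated onto a faster bunch $b_r$ with $r<i$ or $r>k$, which lies in another subtree of the same $WST$. Hence $p_j$ lags, and the whole segment $br_p$ sits on the $s$-side of (or on) the true wavefront boundary of the subtree's bunches. I would make this precise through the additively-weighted geometry already used in the paper: $br_p$ is tangent to the current circles, and the overtaking arc of $b_r$ separates $br_p$ from $C$.

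From the lag claim I would derive the two halves of correctness. For the \emph{no false early event} half: since $br_p$ sits on the $s$-side of the true wavefront, the distance the traversal computes from $C$ to $br_p$ is at least the true expansion needed for $WS$ to reach $C$, so a dirty bridge can never manufacture a strike distance smaller than the correct one. For the \emph{true minimum still attained} half: if the closest approach of $C$ to the subtree hull happens to fall inside an overtaken (dirty) region, then by the lag claim the responsible bunch $b_r$ is more advanced toward $C$ and yields a distance no larger; because $b_r$ belongs to the same $WS$ and thus to another subtree of the same $WST$, the root-to-leaf recursion also visits $b_r$ and records that smaller value. Taking the minimum over the whole tree — as the computation of Section~\ref{subsect:wstsdcomp} does — therefore returns exactly $\min_{w(v)\in WS\text{ valid}}\mathrm{dist}(w(v),C)$, with the validity of individual segments enforced by the $wpupdate$/offset flags of the $BHT$s.

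The part I expect to be the main obstacle is the \emph{pruning} decisions rather than the candidate values: the traversal descends by comparing $C$ against the two child hulls and the bridge, and a dirty bridge distorts the boundary used for these comparisons. I would handle this by arguing that the reconstructed bridge is a genuine common tangent of two convex circles, so the implicit hull $UH_p$ it defines stays convex and contains the true wavefront of the subtree; this preserves the unimodality of $\mathrm{dist}(\cdot,C)$ along the tree on which the recursion relies, so that whenever the recursion prunes a subtree it discards only dominated features — in particular the subtree carrying the overtaking bunch $b_r$ that realizes the true minimum is never pruned. Combining the lag claim, the domination argument, and the preserved convexity yields the stated correctness.
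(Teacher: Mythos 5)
Your central ``lag claim'' points the key inequality in the wrong direction, and this is a genuine gap. The paper's proof establishes containment the other way: tracing the surviving bunches' tangent points back to the last wavefront $\cW(d)$ at which the bridge at $p$ was clean and then re-expanding everything by $d'-d$ (Property \ref{prop:wstbridgemovement}), it shows that the endpoints $p_r', p_l'$ of the \emph{true} current bridge $br_q'$ of the subtree's surviving bunches lie \emph{inside} the dirty upper hull $ch_p'$, so that $\mathrm{dist}(C, br_p') \le \mathrm{dist}(C, br_q')$ --- the dirty bridge lies on the $C$-side of the subtree's true hull, not the $s$-side, and is struck no later than the true bridge would be. That is exactly what the lemma needs: the hull stored at $p$ remains a valid \emph{outer} approximation, so no event is missed and no subtree containing the true closest feature is pruned. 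Your claim that a dirty bridge ``can never manufacture a strike distance smaller than the correct one'' is not the property required (even clean bridges yield distances smaller than the distances to the arcs they span, since the hull is an outer approximation; early triggers are absorbed by the bridge-splitting and amortization machinery), and your derivation of it is unsound for the bridge as a whole: only the dirty endpoint is known to have been overtaken, while the rest of the tangent segment can lie beyond the wavefront. You also contradict yourself: in your last paragraph you assert that the dirty hull ``contains the true wavefront of the subtree,'' which places the bridge outside that hull --- the opposite side from your lag claim.

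The one statement in your proposal that actually carries the proof --- that the reconstructed dirty hull remains convex and contains the subtree's true current hull --- is asserted but never proved; proving it is the entire content of the paper's argument (uniform expansion of the old tangent points by $d'-d$ via Property \ref{prop:wstbridgemovement}, together with the fact that bunches surviving a Type-IV event retain their order along the wavefront). Without it, your ``true minimum still attained'' step has no support, because the traversal of Section \ref{subsect:wstsdcomp} does not take a minimum over all leaves; it prunes based on bridge comparisons, so correctness hinges precisely on the containment you left unproved.
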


\begin{figure}
\centerline{\epsfysize=220pt \epsfbox{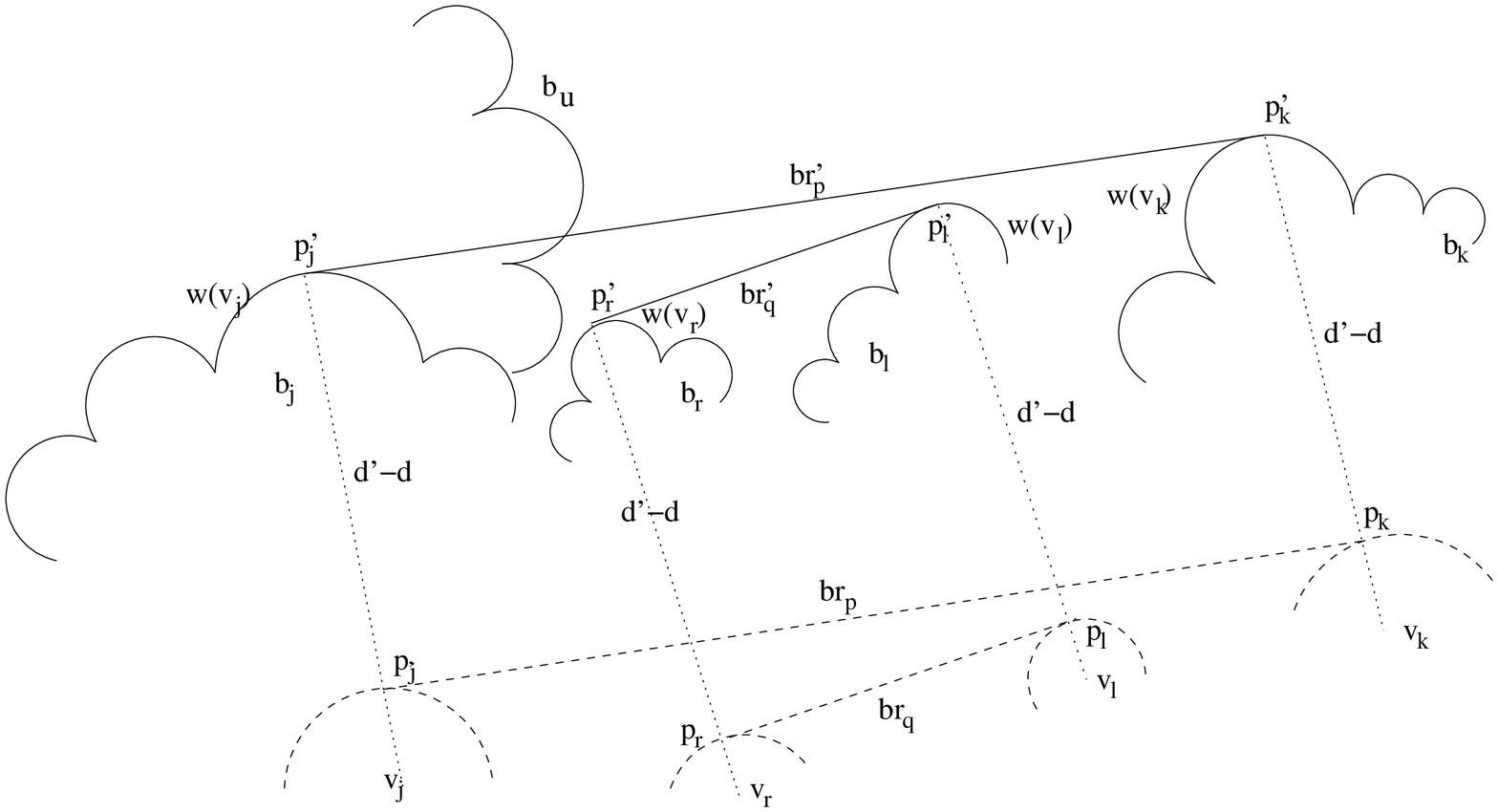}}
\caption{\label{fig:dirtybridges2} Dirty bridges versus the correctness of shortest distance computations}
\end{figure}

\begin{proof}
Consider a node $p$ of $WST$ with the bunches at its descendant leaf nodes as $b_i, b_{i+1}, \ldots, b_j, \\ \ldots, b_k, \ldots, b_{w-1}, b_w$.  
See Fig. \ref{fig:dirtybridges2}.
Let $B=\bigcup_{l \in \{i,\ldots,w\}} b_l$.
Let a dirty bridge $br_p'$ at node $p$ of $WST$ join two points $p_j' \in b_j$ and $p_k' \in b_k$.
Suppose the point $p_j$ was traversed by a bunch $b_u$, where $u \notin \{i, i+1, \ldots, w\}$, when $\cW (d')$ .
Let $ch_p'$ be the dirty upper hull at node $p$.
Consider bridge $br_q'$ at node $p$ computed at this stage of wavefront progression.  
Since we know that at least two bunches $b_j$ and $b_k$ exist among bunches in $B$, the bridge $br_q'$ is guaranteed to exist.  
Let the bridge $br_q'$ joins two points $p_r'$ and $p_l'$, where the point $p_r'$ is located on the wavefront segment $w(v_r) \in b_r$ whose center is $v_r$ and the point $p_l'$ is located on the wavefront segment $w(v_l) \in b_l$ whose center is $v_l$.  
Note that $w(v_r)$ (resp. $b_r, p_l', w(v_l), b_l$) is not necessarily distinct from $w(v_j)$ (resp. $b_j, p_k', w(v_k), b_k$).
\hfil\break

Consider a wavefront progression $\cW (d)$ at which the bridge at node $p$ was not dirty.
Let $br_p$ be the bridge at node $p$ for $\cW (d)$ with $p_j, p_k$ as its endpoints.
Let the corresponding upper hull be $ch_p$.
Choose a point $p_r \in w(v_r)$ at that stage of wavefront expansion so that the line segment $v_rp_r$ extended yields $p_r'$.
Then the Euclidean length of line segment $p_rp_r'$ be $d'-d$.
>From the property \ref{prop:wstbridgemovement}, for an expansion of $d'-d$ along $v_rp_r'$, the Euclidean length of segments $p_jp_j'$, $p_lp_l'$ would be $d'-d$ too.  
This with the fact that the bunches remaining in $B$ after a Type-IV event are in the same order along $B$ as they were before assures the following: $p_r' \in ch_p'$ and $p_l' \in ch_p'$. 
The convexity of $ch_p'$ guarantees that the line segment $br_q'$ (joining points $p_r', p_l'$) belongs to $ch_p'$.
Hence the shortest distance from $C$ to $br_p'$ is less than (or equal to) the shortest distance from $C$ to $br_q'$.  
Similar arguments can be given when both the endpoints of a dirty bridge were traversed.
\hfil\break

The correctness of the shortest distance computation procedure relies on the property shown above for a dirty bridge, when the shortest distance from $C$ is to that dirty bridge.  
Suppose the shortest distance between $br_p'$ and $C$ is less than the shortest distance to either of the bridges at $p$'s immediate descendants.  
Since the shortest distance from $C$ to bridge $br_p'$ is less than (or equal to) the shortest distance from $C$ to bridge $br_q'$, the bridge $br_p'$ will be struck whenever the wavefront can strike the bridge $br_q'$.  
Therefore, the correctness of shortest distance computation procedure is not affected.
\end{proof}

\begin{lemma}
\label{lem:invalidbridges}
The bridges constructed over invalid segments of a $BHT$ do not affect the correctness of shortest distance computations.
\end{lemma}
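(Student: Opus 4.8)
The plan is to reduce the statement to two facts: at any wavefront $\cW(d')$ the shortest distance realized by a bunch must come from a \emph{valid} segment, and the shortest-distance traversal of Lemmas \ref{lem:sdcbt} and \ref{lem:sdcwt} never enters a distance comparison against a bridge touching an invalid segment. First I would record the structural observation that the valid segments of a bunch $B(v_j,v_{n'})$ form a prefix of its leaf ordering. By Property \ref{lem:wpupdate} a segment $w(v_k)$ is valid iff $d'-r.shortestdist+l(v_k).wpupdate>0$; since $l(v_k).wpupdate$ is the negated chain distance from $v_j$ to $v_k$ and these distances increase monotonically in $k$ (the bunch definition forces $d(s,v_k)=d(s,v_{k-1})+d(v_{k-1},v_k)$), validity of $w(v_k)$ implies validity of every $w(v_{k'})$ with $k'<k$. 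Hence the valid segments are exactly the leftmost leaves $w(v_j),\ldots,w(v_{p^*})$ for a single boundary index $p^*$, and the hull relevant to the computation is the upper hull of this valid prefix.

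Next I would argue that an invalid segment can never carry the true shortest distance, so ignoring bridges that touch invalid segments loses nothing. An invalid segment $w(v_k)$ has $d(s,v_k)>d'$, so it is not yet part of $\cW(d')$ and cannot strike any bounding edge before the wavefront reaches $v_k$. Consequently the minimum expansion for the bunch to strike its associated $C$ is attained on a valid segment, and the correct quantity is the shortest distance from $C$ to the upper hull of $w(v_j),\ldots,w(v_{p^*})$.

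The key remaining step is to show the traversal actually restricts to this valid hull. Because each internal node stores the maximum $wpupdate$ of its children, the aggregate maximum $wpupdate$ over any subtree rooted at a node $p$ is available along the search path, and the sign of $d'-r.shortestdist+(\text{subtree max }wpupdate)$ certifies whether that whole subtree is invalid. At every node the traversal uses this test to refuse descent into, and to refuse comparison against the bridge leading to, a subtree certified invalid; at the unique node straddling $p^*$ it recurses to locate $p^*$ and combines only the valid pieces. Thus every bridge ever entered into a distance comparison has both endpoints on valid segments, so the computation behaves exactly as if the invalid leaves and the bridges built over them were absent, and correctness follows from Lemmas \ref{lem:sdcbt} and \ref{lem:sdcwt} applied to the valid hull.

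The main obstacle I anticipate is the mixed node that contains the validity boundary $p^*$: its stored bridge may have an endpoint on an invalid segment, and since the upper hull of a prefix is not in general a contiguous sub-chain of the full hull, this bridge cannot be reused verbatim. I would handle it by noting that the $wpupdate$ test flags this node as neither wholly valid nor wholly invalid, forcing the traversal to descend and re-form the tangent over the valid prefix only; this extra work occurs along a single boundary path of length $O(\lg n)$ and is already subsumed in the amortized bounds of Lemmas \ref{lem:sdcbt} and \ref{lem:sdcwt}.
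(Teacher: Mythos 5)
Your proposal reaches the right conclusion but by a genuinely different route from the paper. The paper's proof is a one-line reduction to the last paragraph of Lemma \ref{lem:dirtybricorr}: a bridge whose endpoint lies on an invalid segment is treated exactly like a dirty bridge, i.e., the upper hull built over all leaves (valid and invalid) encloses the hull of the valid segments alone, so the distance from the associated chain $C$ to such a bridge is a \emph{lower bound} on the distance to the corresponding bridge over valid segments only; since the recursion of Section \ref{sect:sdcomp} only terminates at actual segments or boundary edges, a conservative estimate at an internal node can never cause the true closest element to be missed. You instead argue that such bridges are never consulted at all: you prune wholly-invalid subtrees via the $wpupdate$ aggregation (which matches the paper's traversal description in Subsection \ref{subsect:wstsdcomp}), and at the single mixed node straddling the validity boundary $p^*$ you propose to descend and re-form the tangent over the valid prefix. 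That last step is where you depart from what the lemma is actually certifying: the data structure stores one bridge per node, computed over all leaves, and the lemma exists precisely to excuse using that stored bridge when its subtree is mixed. Recomputing a bridge over the valid prefix at each of the $O(\lg n)$ nodes on the boundary path costs $O(\lg n)$ apiece, i.e., $O(\lg^2 n)$ per query, and your claim that this is ``already subsumed in the amortized bounds'' of Lemmas \ref{lem:sdcbt} and \ref{lem:sdcwt} is not justified (those bounds are $O((\lg m)(\lg n))$, which can be smaller). Your prefix-validity observation and the fact that an invalid segment cannot realize the minimum are both correct and make the underlying geometry clearer than the paper does; but to prove the lemma as stated you still need the containment/conservativeness argument for the mixed-node bridge, which is exactly the dirty-bridge argument you avoided.
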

\begin{proof}
The argument is similar to the last paragraph of Lemma \ref{lem:dirtybricorr}.
\end{proof}

\begin{lemma}
\label{lem:typeIVnumevents}
The total number of Type-IV events are $O(m)$.
\end{lemma}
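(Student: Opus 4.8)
The plan is to prove the bound by an amortized charging argument that pairs each Type-IV event with a bunch that it eliminates from the wavefront. The crucial ingredient is already available: Lemma~\ref{lem:hullinter} guarantees that whenever a Type-IV event occurs, there is at least one bunch that need not be progressed any further. I would make this eliminated bunch the unit of charge.

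First I would argue that such an eliminated bunch is removed \emph{permanently}. When a Type-IV event fires, the set $B$ of bunches lying strictly between the two meeting bunches $b_i$ and $b_j$ is identified, and by the non-crossing property of shortest paths (Lemma~\ref{lem:spnoncrossing}) no segment in $B$ can reach the side of $b_i, b_j$ containing $t$ without crossing a surviving bunch. Consequently those bunches are deleted from the wavefront and are never reinstated. Thus each Type-IV event can be charged to at least one bunch, and since a deleted bunch stays deleted, distinct Type-IV events are charged to distinct bunches.

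Next I would bound the total number of bunches ever created over the entire run, not merely those alive simultaneously. New bunches arise only through: bunch initiations, which coincide with Type-III events and number $O(m)$ by Lemma~\ref{lem:typeIIInumevents}; bunch/waveform-section splits, of which there are $O(m)$ by Lemma~\ref{lem:numsplits}; and merges, of which there are $O(m)$ by Lemma~\ref{lem:nummerges}, each producing $O(1)$ new bunches. Summing, the number of bunches that ever enter the wavefront is $O(m)$, consistent with the standing bound of $O(m)$ simultaneously alive bunches from Lemma~\ref{lem:numbunches}.

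Combining the two observations, the number of Type-IV events is at most the number of bunches ever created, hence $O(m)$. The hardest part is the bookkeeping in the third step: carefully certifying that \emph{every} bunch creation is captured by one of the $O(m)$-bounded operations (initiation, split, merge), so that the creation count is genuinely $O(m)$, and that the elimination guaranteed by Lemma~\ref{lem:hullinter} always targets a previously uncharged bunch. Once those are pinned down, the charging yields the claim immediately.
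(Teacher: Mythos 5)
Your argument is correct, but it is not the route the paper takes. The paper's proof is a two-line observation: Type-IV events are caused only by bunches that have passed over the exit boundary of the corridor in which they were initiated, and there are $O(m)$ such bunches by Lemma~\ref{lem:numsplits}; hence $O(m)$ events. You instead charge each Type-IV event to a bunch it permanently eliminates, invoking Lemma~\ref{lem:hullinter} for the existence of such a bunch and then bounding the total number of bunches ever created (via Type-III initiations, splits, and merges, each $O(m)$) rather than merely the number simultaneously alive. Your version is arguably tighter as a proof: the paper leaves implicit why each exit-crossing bunch accounts for only $O(1)$ events, whereas your charging makes the distinctness of charges explicit because a deleted bunch stays deleted (the invalid marks discussed around Lemma~\ref{lem:mergecorr} ensure eliminated bunches are not resurrected by the backtracking in the gateway processing). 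What the paper's approach buys is brevity and independence from Lemma~\ref{lem:hullinter}; what yours buys is that the amortization is fully accounted for, at the cost of the bookkeeping you correctly identify as the delicate step, namely certifying that initiation, split, and merge exhaust all ways a bunch can come into existence. One minor refinement: the MERGE procedure outputs a subset of $A \cup B$, so merges in fact create no new bunches at all, which only strengthens your count.
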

\begin{proof}
The Type-IV events are caused by those bunches which passed over the exit boundary of the corridor in which they were initiated.
Since there are $O(m)$ such bunches (Lemma \ref{lem:numsplits}), there can be $O(m)$ Type-IV events.
\end{proof}

\begin{lemma}
\label{lem:typeIVhandlingcost}
The total cost in determining and handling all the Type-IV events is $O(m(\lg{m})(\lg{n}))$.
\end{lemma}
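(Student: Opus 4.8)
The plan is to decompose the total cost into two parts --- the cost of \emph{determining} Type-IV events (scheduling them from the shortest distances between sibling hulls at the internal nodes of the $WST$s) and the cost of \emph{handling} them (deleting the obsoleted bunches, re-associating the boundary, and rescheduling the affected event points) --- and to bound each part by $O(m(\lg m)(\lg n))$. The backbone of the argument is Lemma~\ref{lem:typeIVnumevents}, which guarantees that only $O(m)$ Type-IV events ever occur, combined with the per-operation data-structure bounds of Lemmas~\ref{lem:wsttimeandspace} and~\ref{lem:bsttimeandspace}, the amortized $O(\lg^2 m)$ cost of the IIntersect procedure of Section~\ref{sect:IIntersectProc}, and the structural-change counts of Lemmas~\ref{lem:numsplits} and~\ref{lem:nummerges}.

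First I would bound the determination cost. A Type-IV event at an internal node $p$ of a $WST$ is scheduled from the shortest distances $d',d''$ between the sibling hulls $UH_l-UH^{rm}_l, UH_r$ and $UH_l, UH_r-UH^{lm}_r$ (Lemma~\ref{lem:icurvehullinter}); since the bridge at $p$ is already maintained as part of the $WST$ at no extra asymptotic cost (Lemma~\ref{lem:wsttimeandspace}), extracting these distances reduces to a binary search over the $O(n)$ intra-bunch I-curves of the relevant leaf bunch, costing $O(\lg n)$. The scheduled distance at $p$ needs to be recomputed only when the subtree rooted at $p$ is altered, i.e.\ when a bunch is inserted, deleted, merged, or split in the $WST$; each such structural change lies on a single root-to-leaf path and therefore touches only $O(\lg m)$ nodes. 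Since Type-III initiations (Lemma~\ref{lem:typeIIInumevents}), splits (Lemma~\ref{lem:numsplits}), and merges (Lemma~\ref{lem:nummerges}) together account for $O(m)$ structural changes over the whole run, the number of node-level recomputations is $O(m\lg m)$, each at $O(\lg n)$ cost, giving $O(m(\lg m)(\lg n))$.

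Next I would bound the handling cost. When a Type-IV event fires, Lemma~\ref{lem:hullinter} guarantees that at least one bunch of the enclosed section of wavefront need not be progressed further and is removed from the wavefront. Each bunch deletion from a $WST$ costs $O((\lg m)(\lg n))$ (Lemma~\ref{lem:wsttimeandspace}), and because only $O(m)$ bunches are ever created (Lemmas~\ref{lem:typeIIInumevents} and~\ref{lem:numsplits}) the total deletion cost charged across all Type-IV events is $O(m(\lg m)(\lg n))$. Determining the new associations of the bunches that become adjacent is performed with the IIntersect procedure at $O(\lg^2 m)$ amortized cost per event, and recomputing the inter-bunch I-curves of the newly adjacent bunches together with pushing the corresponding entries into the event min-heap costs $O(\lg n)$ per affected I-curve; over $O(m)$ events these contribute $O(m\lg^2 m)$ and $O(m\lg n)$ respectively, both absorbed into $O(m(\lg m)(\lg n))$.

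I expect the main obstacle to be the amortization of the determination cost in the presence of \emph{dirty bridges} (Section~\ref{sect:dirtybridges}): because a bridge at a node $p$ may be advanced by a bunch lying in a different subtree, the sibling-hull key at $p$ can in principle be invalidated without any structural change occurring inside $p$'s own subtree, which threatens the ``recompute only on structural change'' charging scheme above. The delicate point is to argue that such nodes need not be re-examined eagerly: by Lemma~\ref{lem:dirtybricorr} a dirty bridge only underestimates the true strike distance, so a stale Type-IV key is conservative and can be corrected lazily when the event is extracted from the heap. The step that requires the most care is verifying that this lazy correction triggers only $O(1)$ amortized recomputations per dirty bridge --- and hence stays within the $O(m)$ budget on structural changes and within the claimed overall bound.
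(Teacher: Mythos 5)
Your proposal is correct and follows essentially the same route as the paper: bound the determination cost by charging recomputation of the sibling-hull keys to the $O(\lg m)$ nodes on the root-to-leaf path affected by each of the $O(m)$ structural changes (at $O(\lg n)$ per node), and bound the handling cost by the $O((\lg m)(\lg n))$ amortized $WST$ deletion of the at-least-one bunch eliminated per event (Lemma~\ref{lem:hullinter}), summed over the $O(m)$ events of Lemma~\ref{lem:typeIVnumevents}. The only component of the paper's accounting you omit is the point-location step that determines which region contains $t$ when a Type-IV event fires (needed to decide which bunches to discard in Lemma~\ref{lem:hullinter}); the paper charges this at $O((\lg m)(\lg n))$ amortized per event via the procedure of Vaidya, so it does not change the bound, and your dirty-bridge discussion, while not part of the paper's proof, correctly anticipates that stale bridges are handled lazily via Lemma~\ref{lem:dirtybricorr} rather than by eager recomputation.
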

\begin{proof}
Initially, in all $WST$s together, there are $O(m)$ bunches.
The insertion or deletion of a bunch $B$ to a $WST$ affect only the Type-IV events along the path from the leaf node at which $B$ is inserted to the root.
Since the depth of a $WST$ is $O(m)$, computing the shortest distance between two upper hulls in $WST$ is $O(\lg{n})$, the amortized time to compute Type-IV events and update the min-heap takes $O((\lg{m})(\lg{n}))$ time.
Only the bunches that traversed the exit boundary of the corridor in which they are initiated would require these updates, which are $O(m)$ in number. 
\hfil\break

As mentioned in Lemma \ref{lem:hullinter}, each Type-IV causes at least one bunch to be removed from the given $WST$.
>From Lemma \ref{lem:wsttimeandspace}, updating $WST$ corresponding to this deletion takes $O((\lg{m})(\lg{n}))$ amortized time.
Also, from Lemma \ref{lem:sdcwt}, updating shortest distances in the min-heap takes $O((\lg{m})(\lg{n})$ amortized time.
Since there are at most $O(m)$ Type-IV events, total computation involved in updating $WST$ takes $O(m(\lg{m})(\lg{n}))$.
\hfil\break

The amortized time complexity involved in locating the destination $t$ using the procedure from \cite{Vaidya86} takes $O((\lg{m})(\lg{n}))$ time (Analysis is similar to Lemma \ref{lem:tinsplits}.).
The updates to $WST$ and the shortest distance computations due to a Type-IV event take $O((\lg{m})(\lg{n}))$ time.
The deletion/updating of events in the event queue take $O(\lg{n})$ amortized complexity.
\end{proof}

\section{More Analysis}
\label{sect:moreanalysis}

\ignore {
\begin{theorem}
The algorithm correctly computes the shortest distance from $s$ to $t$
\end{theorem}
\vspace{-.15in}
\begin{proof}
The boundary-section for the initial wavefront (a single segment)  is computed correctly.
As the wavefront progresses, new bunches are added [deleted] to [from] it.
A new bunch with valid and invalid segments is initiated whenever the wavefront strikes a corridor convex chain tangentially for the first time.  
As the wavefront progresses, an invalid segment $r$ is correctly flagged as valid segment whenever a section of wavefront strikes the centre of $r$.  
This is reflected in the initialization/maintenance of $BHT$s (lemma \ref{lem:ptangencycorr}).
It can be proved using induction on the number of changes to the wavefront that the waveform- and boundary-sections are maintained correctly.
This is because of the correctness of event determination and handling routines which, in turn, rely on split and merge procedures (lemmas \ref{lem:dirtybricorr}, \ref{lem:mergecorr}, \ref{lem:modifyrvcorr}).  
Event determination relies on the correct computation of shortest distance between  a waveform-section [bunch] and a boundary edge [boundary-section].
To facilitate computing shortest distances correctly, $WST$s and $BST$s are dynamically maintained correctly whenever waveform- and boundary-sections change.  
In computing shortest distances, only the valid segments of a bunch are considered.  
This with the wavefront being progressed till a valid segment strikes the degenerate corridor in which $t$ resides, proves the correctness of the algorithm.
\end{proof}
}

\begin{theorem}
\label{thm:numevents} The total number of event points are $O(m)$.
\end{theorem}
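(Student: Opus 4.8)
The plan is to observe that this theorem is purely an aggregation statement: every event point generated by the algorithm falls into exactly one of the four categories (Type-I, Type-II, Type-III, Type-IV) introduced in Section~\ref{sect:algooutline}, and each category has already been bounded individually. So the proof would consist of summing three already-established bounds rather than proving anything new from scratch.

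Concretely, I would first recall the classification of events and note that the set of all event points pushed to the min-heap is the disjoint union of the four types. Then I would invoke Lemma~\ref{lem:typeItypeIInumevents} to assert that the Type-I and Type-II events together number $O(m)$; invoke Lemma~\ref{lem:typeIIInumevents} for the $O(m)$ bound on Type-III events; and invoke Lemma~\ref{lem:typeIVnumevents} for the $O(m)$ bound on Type-IV events. Adding these yields $O(m)+O(m)+O(m)=O(m)$, which is the claim.

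The one subtlety I would flag, to make the aggregation airtight, is that the counting must be of \emph{all} events ever created over the whole run, including those spawned secondarily: for instance, Type-I/Type-II handling can generate Type-III events through wavefront splits and through the MERGE procedure, and a Type-IV event can, upon deleting bunches, push fresh shortest-distance events for the newly adjacent bunches. I would point out that the three cited lemmas already count events cumulatively over the entire execution (their charging arguments --- Type-III charged to junction vertices and to originating bunches, Type-IV charged to bunches that crossed a corridor exit boundary --- are global amortized counts), so these secondary events are already subsumed and are not double-counted.

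Since all the genuine work has been done inside the three prerequisite lemmas, there is no real obstacle remaining for this statement; the only thing to be careful about is not to overlook a source of events. I would therefore keep the proof to a couple of sentences that cite Lemmas~\ref{lem:typeItypeIInumevents}, \ref{lem:typeIIInumevents}, and~\ref{lem:typeIVnumevents} and conclude the $O(m)$ total. The harder, previously-discharged part was bounding Type-III and Type-IV events, where the non-crossing property (Lemma~\ref{lem:spnoncrossing}) and the $O(m)$ bounds on corridors, junctions, and bunches (Lemmas~\ref{lem:numbunches} and~\ref{lem:numsplits}) do the heavy lifting.
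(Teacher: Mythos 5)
Your proposal matches the paper's proof exactly: Theorem~\ref{thm:numevents} is proved there by simply citing Lemmas~\ref{lem:typeItypeIInumevents}, \ref{lem:typeIIInumevents}, and~\ref{lem:typeIVnumevents} and summing the three $O(m)$ bounds. Your additional remark about secondary events being already subsumed in the cumulative counts of those lemmas is a sensible clarification but does not change the argument.
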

\begin{proof}
>From Lemmas \ref{lem:typeItypeIInumevents}, \ref{lem:typeIIInumevents}, and \ref{lem:typeIVnumevents}.
\end{proof}

\begin{theorem}
\label{thm:numcwtcbt} The number of waveform- and boundary-sections created/updated are $O(m)$.
\end{theorem}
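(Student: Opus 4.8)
The plan is to bound separately the number of sections \emph{created} and the number of sections \emph{updated}, and then appeal to the already-established $O(m)$ bounds on the combinatorial objects of the algorithm. For creations: a boundary-section $BS(b)$ is defined per bunch $b$ (when $|RV(b)|>1$), and by Lemma~\ref{lem:numbunches} the total number of bunches ever formed is $O(m)$, so at most $O(m)$ boundary-sections are ever created; a waveform-section $WS(C)$ is defined per corridor convex chain or enter/exit boundary $C$ (when $|B_g|>1$), and since there are only $O(m)$ such chains/edges in the partition, and each enter/exit boundary is struck at most once (as noted where associations are introduced), at most $O(m)$ waveform-sections are ever created. This settles the creation count.

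The substantive part is bounding \emph{updates}. First I would observe that waveform- and boundary-sections are modified \emph{only} inside the handlers for Type-I, Type-II, Type-III, and Type-IV events and inside the MERGE and split procedures; no other part of the algorithm touches an $RV$, a $WST$, or a $BST$. By Theorem~\ref{thm:numevents} there are $O(m)$ events in total, by Lemma~\ref{lem:nummerges} there are $O(m)$ merges, and by Lemma~\ref{lem:numsplits} there are $O(m)$ splits. The Type-I/II handlers replace a single struck boundary element by the $O(1)$ bounding edges of the incident junction or corridor (at most $CC_1,b_2,CC_2$), splitting the incident section of wavefront into $O(1)$ pieces, and hence create/update $O(1)$ sections; a Type-III handler initiates one bunch and, by the contiguity property (Lemma~\ref{lem:contigprop2}), attaches it to a single contiguous boundary-section, again $O(1)$ sections; a Type-IV handler, by Lemma~\ref{lem:hullinter}, deletes at least one bunch and repairs the $O(1)$ newly-adjacent inter-bunch I-curves, touching $O(1)$ sections per deleted bunch, with the total number of deletions charged against the $O(m)$ bunches of Lemma~\ref{lem:numbunches}.

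The main obstacle is the MERGE procedure: a single invocation of ASSOCATOB re-associates a contiguous section of boundary $SB_A$ against the bunches of $B$, and a priori this can rewrite the $RV$s of many bunches and refresh many $WST(C)$, so a naive count would give $\Theta(m)$ updates per merge and a useless $O(m^2)$ total. I would resolve this by amortization rather than a per-merge bound: charge each individual re-association of a corridor convex chain / enter/exit boundary $C$ to the boundary element itself. The argument is that, because the wavefront expands monotonically and gateways are oriented (Lemma~\ref{lem:outgoinggw}, Lemma~\ref{lem:gatewayscorr}) so that each region is traversed toward $t$ exactly once, together with the non-crossing property (Lemma~\ref{lem:spnoncrossing}), any fixed element $C$ has its association changed only a bounded number of times over the whole run—once when it first becomes reachable, and at most once more when the merge at its incident gateway reassigns it to the closer side. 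Since there are $O(m)$ such elements, the total number of re-associations performed across all merges is $O(m)$, and each re-association updates $O(1)$ waveform- and boundary-sections (one $WS(C)$ and the two $BS(\cdot)$ of the bunches losing and gaining $C$). Summing the creation count with the event-, split-, and amortized-merge update counts yields $O(m)$ overall, which I would state as the conclusion.
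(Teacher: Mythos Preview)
Your creation count and your per-event analysis for Types~I--IV are essentially the paper's argument, only spelled out in more detail. The paper's own proof is very short: it observes that there are $O(m)$ bunches (hence $O(m)$ boundary-sections) and $O(m)$ corridor chains/exit boundaries (hence $O(m)$ waveform-sections), and then simply says that updates are driven by events, of which there are $O(m)$ by Theorem~\ref{thm:numevents}. In particular, the paper does \emph{not} carry out any separate amortization of MERGE inside this theorem; the internal work of ASSOCATOB is handled later, in Theorem~\ref{thm:mergecompl}, which in fact \emph{uses} the present theorem rather than the other way round.

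The gap in your write-up is the MERGE amortization you add. You charge each re-association to the boundary element $C$ and assert that $C$ changes association only $O(1)$ times ``once when it first becomes reachable, and at most once more when the merge at its incident gateway reassigns it.'' Lemma~\ref{lem:outgoinggw} gives each region a single \emph{outgoing} gateway, but it places no bound on \emph{incoming} gateways: in the boundary-cycle tree a node can have many children, and the region containing $C$ can be the target of many merges, each potentially reassigning $C$. Non-crossing (Lemma~\ref{lem:spnoncrossing}) and gateway orientation do not by themselves force an $O(1)$ bound per element. The paper avoids this issue by charging in the opposite direction: in Theorem~\ref{thm:mergecompl} each binary-search iteration of ASSOCATOB is charged to a bunch whose $RV$ transitions to non-$\phi$, and those transitions are bounded by the $O(m)$ splits and the $O(m)$ waveform-sections supplied by the present theorem. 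So either drop the MERGE paragraph and mirror the paper's terse ``$O(m)$ events $\Rightarrow$ $O(m)$ updates'' step, or, if you want a self-contained amortization here, charge to bunch $RV$ transitions rather than to boundary elements.
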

\begin{proof}
Since there are at most $O(m)$ bunches at any point of execution of the algorithm (Lemma \ref{lem:numbunches}) and a bunch is part of at most one boundary-section, initially there are at most $O(m)$ boundary-sections.  
Since there are $O(m)$ corridor convex chains/exit boundaries and each can have at most one waveform-section associated to it, initially there can be at most $O(m)$ waveform-sections.  
We need to update these initial boundary- and waveform-sections due to events, which are upper bounded by $O(m)$ (Theorem \ref{thm:numevents}).  
Hence the complexity.
\end{proof}

\begin{theorem}
\label{thm:mergecompl} The total processing involved in the merge procedure, excluding the computation involved in finding Type-III events, during the entire algorithm is of $O(m(\lg{m})(\lg{n}))$ time complexity.
\end{theorem}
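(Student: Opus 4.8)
The plan is to bound the cost of a single merge and then sum over all merges, leaning on the facts that merges are few and that the genuinely expensive operations inside a merge are already counted by the event/section accounting. First I would invoke Lemma~\ref{lem:nummerges} to fix the number of merges at $O(m)$, and observe that each call to MERGE triggers only a constant number (eight) of ASSOCATOB invocations; hence it suffices to amortize the total work of all ASSOCATOB calls over the whole run. I would then split the work done inside ASSOCATOB into three parts: the binary searches of line~4 (each comparison being an IIntersect query), the $RV$ bookkeeping of lines~5--11, and the updates to the $WST$s and $BST$s in lines~12--13. The Type-III generation of line~14 is excluded by hypothesis and is deferred to the separate Type-III accounting, which is exactly why the statement carries that qualification.

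For the data-structure updates I would argue as follows. By Theorem~\ref{thm:numcwtcbt} the total number of waveform- and boundary-sections created or updated during the entire algorithm is $O(m)$, and each productive iteration of the REPEAT loop (one in which $prevb$ advances to a new bunch $b_j$ that acquires a boundary block, thereby touching some $BS(b_j)$ or $WS(c)$) performs one such creation or update; so the number of productive iterations summed over all merges is $O(m)$, and adding the $O(1)$ loop-terminating iteration for each of the $O(m)$ merges keeps the iteration count at $O(m)$. Each $WST$ or $BST$ update then costs $O((\lg m)(\lg n))$ amortized by Lemmas~\ref{lem:wsttimeandspace} and~\ref{lem:bsttimeandspace}, giving $O(m(\lg m)(\lg n))$ for this part. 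Note there is no circularity here, since the proof of Theorem~\ref{thm:numcwtcbt} rests only on the event-count bounds, not on the present time bound.

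For the binary searches I would charge the IIntersect queries against the $O(m)$ bridge splits that underlie their amortized analysis in Section~\ref{sect:IIntersectProc}, where each query is shown to cost $O(\lg^2 m)$ amortized over splits, mergers and Type-IV intersections; since $m \le n$ this is $O((\lg m)(\lg n))$. I expect the main obstacle to be precisely this step: a single binary search issues $O(\lg m)$ IIntersect queries, so a naive product over the $O(m)$ productive iterations would leave a spurious extra $\lg m$ factor, and the argument must instead show that the one-time nature of bridge splits (no bridge splits twice) lets the repeated queries, both within a binary search and across searches, share their charge, so that the aggregate IIntersect cost is governed by the $O(m)$ splits/mergers/Type-IV events rather than by the raw number of queries. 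Establishing this sharing cleanly, rather than absorbing an unwanted $\lg m$, is the delicate point; once it is in place, summing the three parts yields the claimed $O(m(\lg m)(\lg n))$ bound, with the correctness of the resulting associations supplied separately by Corollary~\ref{cor:merge}.
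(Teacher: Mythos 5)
Your high-level decomposition is the same as the paper's: a constant number of ASSOCATOB calls per merge, $O(m)$ merges by Lemma~\ref{lem:nummerges}, and a split of the work into binary searches, $RV$ bookkeeping, and $WST$/$BST$ updates, with the data-structure part bounded exactly as the paper does via Theorem~\ref{thm:numcwtcbt} together with Lemmas~\ref{lem:wsttimeandspace} and~\ref{lem:bsttimeandspace}. The one genuine (but minor) difference in accounting is how the number of iterations is bounded: you charge productive iterations to section creations/updates, whereas the paper charges each binary-search invocation to a bunch whose $RV$ is non-$\phi$, observing that an $RV$ turns non-$\phi$ only at a split ($O(m)$ of these by Lemma~\ref{lem:numsplits}) and turns $\phi$ only at a merge ($O(m)$ by Lemma~\ref{lem:nummerges}), so only $O(m)$ binary searches ever occur. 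Both routes yield the same $O(m)$ count.

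The gap is exactly the step you flag and then defer: you never actually bound the aggregate IIntersect cost, and the ``sharing'' argument you say must exist is not supplied. The paper closes this step more directly than you anticipate. It uses the per-invocation amortized cost of $O(\lg m)$ for IIntersect from Section~\ref{sect:IIntersectProc} --- the expensive tangent computations and bridge splits inside a query are charged globally to the bridges, each of which splits at most once, leaving only $O(\lg m)$ uncharged work per query. With that figure, each binary search performs $O(\lg m)$ comparisons at $O(\lg m)$ amortized each, i.e.\ $O(\lg^{2} m)$ per search, and $O(m)$ searches give $O(m\lg^{2} m) = O(m(\lg m)(\lg n))$ since $m \le n$: the naive product already fits inside the target bound, and no further sharing across queries is required. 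The spurious extra $\lg m$ you fear appears only if one applies the $O(\lg^{2} m)$ whole-$SB$ amortized figure to every comparison of every binary search; the resolution is the choice of which amortization to use per query, not a new sharing lemma. As written, your proof leaves this unclosed (it also omits the paper's additional $O((\lg m)(\lg n))$-per-iteration term for deciding whether $c_s$ is closer to a bunch of $A$ or of $B$, though that folds into the same bound), so the argument is incomplete at precisely the point that carries the theorem.
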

\begin{proof}
We need to invoke MERGE procedure constant number of times at most per each merge operation.  
The MERGE procedure complexity relies on ASSOCATOB procedure.  
Primarily, the complexity of the ASSOCATOB procedure relies on two components: how many times we call the binary search procedure within it, and on the total number of $WST, BST$ updates.
\hfil\break

In the procedure, the binary search over the I-curves of $B$ determines the next I-curve which does not intersect the edge $c_{prev}$.  
The $RV$ of a bunch may be $\phi$ before a merge, or it can become $\phi$ after calling the ASSOCATOB procedure.  
However, the latter is possible only when the merge occurs and the number of merges are bounded by $O(m)$ from Lemma \ref{lem:nummerges}.  
>From Theorem \ref{thm:numcwtcbt}, there can be at most $O(m)$ waveform-sections during the entire algorithm, and hence there can be at most $O(m)$ bunches with $\phi$ $RV$s.
The $RV$ of a bunch can change from $\phi$ to non-$\phi$ because of splits, however, the splits are bounded by $O(m)$ (from Lemma \ref{lem:numsplits}); 
hence there are $O(m)$ bunches with non-$\phi$ $RV$s over the course of the algorithm.
Since each binary search over the bunches finds a bunch whose $RV$ is non-$\phi$, we charge the $O(\lg{m})$ binary search complexity to bunches with non-$\phi$ $RV$s, leading to $O(m \lg{m})$ complexity.  
The IIntersect procedure takes $O(\lg{m})$ time per invocation.
Also, determining the proximity of a corridor convex chain or enter/exit boundary $c_s$ (line 4 of ASSOCATOB procedure) w.r.t. a bunch or $WST$ in $A$ versus a bunch or $WST$ in $B$ takes $O((\lg{m})(\lg{n}))$.
Hence, the overall time is $O(m(\lg{m})(\lg{n}))$.
\hfil\break

Since there are $O(m)$ waveform and boundary-sections possible in the entire algorithm (Theorem \ref{thm:numcwtcbt}), and, each $WST/BST$ update takes $O((\lg{m})(\lg{n}))$ (using \cite{Overmars81}), 
we spend $O(m(\lg{m})(\lg{n}))$ to dynamically maintain waveform- and boundary-sections during all the invocations of ASSOCATOB procedure together.
\hfil\break

Also, assigning all the corridor convex chains or enter/exit boundaries prior to the intersection of the first inter-bunch I-curve($b_1,b_2$) of $B$ with the $RV(A)$, and, adjusting the associations of already existing corridor convex chains/exit boundaries in $RV(b_1)$ takes $O((\lg{m})(\lg{n}))$ time. 
Hence we spend $O(m(\lg{m})(\lg{n}))$ time in ASSOCATOB procedure during the entire algorithm.
Therefore, MERGE procedure, and, all merge operations together are of $O(m(\lg{m})(\lg{n}))$ time complexity.
\end{proof}

\begin{lemma}
\label{lem:typeIIIhandlingcost} Building and maintaining all $BHT$s during the entire algorithm takes $O(m \lg{n}+n)$.
\end{lemma}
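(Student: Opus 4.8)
The plan is to separate the total work on $BHT$s into three kinds of operations: the \emph{full builds} that create a brand-new tree from scratch (Case~(1) of the initialization in Subsection~\ref{subsect:bhtdatastr}), the \emph{splits} (Case~(4) together with the re-association bunch splits of the Splitting paragraph), and the implicit \emph{maintenance} that promotes invalid segments to valid ones as the wavefront advances. I expect the full builds to supply the additive $O(n)$ term, the splits to supply the $O(m\lg n)$ term, and the maintenance to be free beyond the work already charged to the first two.

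For the full builds, the key observation is that Case~(1) fires at most once per corridor convex chain. Its guard requires that neither $w(v_z)$ nor any other segment of $CC$ has yet been initiated; once the wavefront first strikes $CC$ tangentially and we create $B(v_z,v_{n'})$, the segment $w(v_z)$ becomes permanently initiated (its value $d(s,\cdot)$ is a fixed fact that is never retracted, even if the tree itself is later deleted), so every subsequent tangential strike on $CC$ falls into Cases~(2)--(4), none of which rebuilds a tree from scratch. By Property~\ref{lem:ptangencycorr} the leaf sets of distinct $BHT$s are disjoint, and the leaves produced across all Case~(1) builds are (suffixes of) the vertex sequences of the $O(m)$ convex chains, whose sizes sum to $O(n)$. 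Since Lemma~\ref{lem:bunchinittime} charges $O(1)$ per leaf during a build, the total full-build cost is $O(n)$.

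For the splits, I would invoke Lemma~\ref{lem:numbunches} and Lemma~\ref{lem:numsplits}, which together bound the total number of $BHT$ splits (the Case~(4) deletions-after-split and the re-association splits) by $O(m)$, and Lemma~\ref{lem:bunchsplittime}, which bounds a single split by $O(\lg n)$ (the two LCAs are found in $O(1)$ and the $wpupdate$ fields are refreshed along the two split paths in $O(\lg n)$). Multiplying gives $O(m\lg n)$. The promotion of invalid segments to valid ones then needs no explicit restructuring: by Property~\ref{lem:wpupdate} validity is decided in $O(1)$ from $r.shortestdist$ and the stored $wpupdate$ offsets, so this maintenance is absorbed into the query-time accounting of the shortest-distance routines and adds nothing here. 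Summing the three contributions yields $O(n)+O(m\lg n)=O(m\lg n + n)$.

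The main obstacle is precisely the $O(n)$ bookkeeping for the full builds: I must rule out the naive $O(mn)$ bound that would follow if a convex chain could be built from scratch repeatedly. That argument rests entirely on the monotonicity of the ``initiated'' status of a vertex together with the disjointness guaranteed by Property~\ref{lem:ptangencycorr}. Care is needed to confirm that a chain entered from both of its enter/exit boundaries still triggers only $O(1)$ genuine Case~(1) builds (the second-direction bunch already sees an initiated vertex and so enters Case~(2), (3), or (4)), and that the intervening Case~(2)/(3) situations cost nothing while the Case~(4) situations only ever split, so no linear cost is incurred more than once per chain.
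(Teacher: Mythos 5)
Your proposal is correct and follows essentially the same route as the paper's proof: a case-by-case decomposition in which Case~(1) builds contribute $O(n)$ in total, Cases~(2)--(3) cost nothing, and the $O(m)$ Case~(4) splits at $O(\lg n)$ each contribute $O(m\lg n)$. You are in fact more careful than the paper on the one point it leaves implicit --- why the Case~(1) builds sum to $O(n)$ rather than $O(mn)$ --- and your substitution of Lemmas~\ref{lem:numbunches} and \ref{lem:numsplits} for Theorem~\ref{thm:numcwtcbt} in bounding the number of splits is an equivalent $O(m)$ count.
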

\begin{proof}
The $BHT$ maintenance is divided into four cases (See Section \ref{subsect:bhtdatastr}.).  
In Case (1), we create nodes corresponding to the bunch vertices all at once in $O(n)$ time, even though some of them are invalid by the time we construct $BHT$.  
In Cases (2) and (3), we do nothing.
In Case (4), splitting and balancing $T_{new}$ takes $O(\lg{n})$ time.
%\rem{??? relies on Type-III events due to mergers}
Since there are $O(m)$ waveform- and boundary-sections (Theorem \ref{thm:numcwtcbt}) together, Case (4) takes $O(m \lg{n})$ time in the worst-case.  
\end{proof}

\begin{theorem}
\label{thm:eventdu} Any event point determination with the associated updates, excluding the initiation and maintenance of $BHT$s, can be accomplished in $O((\lg{m})(\lg{n}))$ amortized time. 
\end{theorem}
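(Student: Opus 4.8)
The plan is to argue the bound separately for each of the four event types and then take the maximum, being careful throughout that the quantity being bounded is an \emph{amortized} per-event cost rather than a worst-case one. The backbone of the argument is Theorem~\ref{thm:numevents}, which guarantees that only $O(m)$ events occur during the whole run; consequently any subroutine whose \emph{total} cost over the algorithm is $O(m(\lg m)(\lg n))$ contributes only $O((\lg m)(\lg n))$ amortized to a single event, and any min-heap operation over the $O(m)$ scheduled events costs $O(\lg m)$.

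First I would dispose of Type-I and Type-II events. Their determination is a shortest-distance query between a waveform-section and an associated corridor chain/boundary (Lemma~\ref{lem:sdcwt}) or between a boundary-section and an associated bunch (Lemma~\ref{lem:sdcbt}), each of which is $O((\lg m)(\lg n))$ amortized. The handling then does three kinds of work: (i) updating the affected $WST$/$BST$, which by Lemmas~\ref{lem:wsttimeandspace} and~\ref{lem:bsttimeandspace} is $O((\lg m)(\lg n))$ amortized per operation; (ii) performing a bunch/waveform-section split, whose total cost over the algorithm is $O(m(\lg m)(\lg n))$ by Lemma~\ref{lem:splitcompl}, hence $O((\lg m)(\lg n))$ amortized per event; and (iii) invoking the merge machinery, whose total cost excluding Type-III generation is $O(m(\lg m)(\lg n))$ by Theorem~\ref{thm:mergecompl}, again amortizing to $O((\lg m)(\lg n))$. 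Adding the $O(\lg m)$ heap pushes/deletions, each Type-I/II event is handled in $O((\lg m)(\lg n))$ amortized time.

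For Type-III events I would observe that their determination is already folded into Type-I/II handling (the tangent to the struck convex chain is computed during the split), so the only charge remaining is determining the new bunch's associations and updating the relevant waveform-sections, which is $O((\lg m)(\lg n))$ via Lemmas~\ref{lem:sdcwt} and~\ref{lem:wsttimeandspace} plus $O(\lg m)$ heap work. The construction or splitting of the $BHT$ itself is precisely the bunch initiation/maintenance that the theorem statement excludes (cf.\ Lemma~\ref{lem:typeIIIhandlingcost}), so it is omitted. For Type-IV events, Lemma~\ref{lem:typeIVhandlingcost} already bounds total determination-and-handling by $O(m(\lg m)(\lg n))$, which together with the $O(m)$ bound on Type-IV events (Lemma~\ref{lem:typeIVnumevents}) yields $O((\lg m)(\lg n))$ amortized per event immediately.

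The main obstacle, and the place I would spend the most care, is the amortization bookkeeping. Several of the cited bounds --- splits in Lemma~\ref{lem:splitcompl}, merges in Theorem~\ref{thm:mergecompl}, the IIntersect-based binary searches, and the bridge/dirty-bridge charging inside Lemmas~\ref{lem:sdcbt} and~\ref{lem:sdcwt} --- are stated as \emph{totals} amortized over structural quantities (number of bridges, of splits, of merges) rather than as worst-case per-event costs. I would therefore verify that the several charging schemes are mutually disjoint, i.e.\ that no single unit of work (a bridge split, a section split, a merge) is counted against two distinct event occurrences, so that dividing the global $O(m(\lg m)(\lg n))$ totals by the $O(m)$ events of Theorem~\ref{thm:numevents} is legitimate and collapses cleanly to the claimed $O((\lg m)(\lg n))$ amortized per-event cost.
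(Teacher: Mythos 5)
Your proposal is correct and follows essentially the same route as the paper, which simply cites Lemmas~\ref{lem:typeItypeIIhandlingcost}, \ref{lem:typeIIIhandlingcost}, and \ref{lem:typeIVhandlingcost} to obtain the per-type totals of $O(m(\lg m)(\lg n))$ and amortizes them over the $O(m)$ events of Theorem~\ref{thm:numevents}. Your unpacking of the Type-I/II total into its constituent lemmas and your explicit caution about disjoint charging schemes are more detailed than the paper's one-line proof but do not change the argument.
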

\begin{proof}
Immediate from Lemmas \ref{lem:typeItypeIIhandlingcost}, \ref{lem:typeIIIhandlingcost}, and \ref{lem:typeIVhandlingcost}.
\end{proof}

\begin{theorem}
\label{thm:finalth} The shortest distance from $s$ to $t$ can be computed in $O(T+m(\lg{m})(\lg{n}))$ time using $O(n)$ space.
\end{theorem}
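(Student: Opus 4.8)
The plan is to treat the theorem as a bookkeeping argument that sums the costs already established for each phase of the algorithm and to verify that the dominant term matches the claimed bound. First I would account for the preprocessing: triangulating $\cP - \{s,t\}$ via the algorithm of Bar-Yehuda and Chazelle \cite{Chaz94} costs $O(T)$, and locating $s$ and $t$ together with building the corridor/junction decomposition as in Kapoor et al.\ \cite{Kapoor88} costs $O(n)$, which is absorbed into $O(T)$ since the triangulation already touches all $n$ vertices. The graph $G$ of corridors and junctions (Subsection \ref{subsect:ccjdatastr}) and the convex-chain arrays are likewise built in $O(n)$ time and $O(n)$ space.

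The heart of the argument is the event-processing loop. By Theorem \ref{thm:numevents} the total number of event points of all four types is $O(m)$, and by Theorem \ref{thm:eventdu} each event determination together with its associated updates, excluding the initiation and maintenance of the $BHT$s, costs $O((\lg m)(\lg n))$ amortized. Multiplying, the event handling exclusive of $BHT$ work costs $O(m(\lg m)(\lg n))$. I would then add in separately the $BHT$ initiation and maintenance, which by Lemma \ref{lem:typeIIIhandlingcost} costs $O(m\lg n + n)$ over the entire run and is dominated by $O(m(\lg m)(\lg n)) + O(n)$. The min-heap holds $O(m)$ events at any time, so each insertion, extraction, or deletion is $O(\lg m)$, contributing only $O(m\lg m)$ in total. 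Finally, once the wavefront strikes $t$, the shortest path and distance are reconstructed from the root and leaf data of the relevant $BHT$s using Lemma \ref{lem:spsdtos} in time proportional to the number of edges on the path, which is $O(n)$. Summing the phases gives $O(T)+O(m(\lg m)(\lg n))+O(m\lg n + n)+O(m\lg m)+O(n) = O(T + m(\lg m)(\lg n))$, using $T \ge n$ and $m\lg n = O(m(\lg m)(\lg n))$.

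For the space bound I would invoke Lemmas \ref{lem:wsttimeandspace} and \ref{lem:bsttimeandspace}, which give $O(n)$ space for all $WST$s and $BST$s together, the $O(m)$-space boundary cycle list (Subsection \ref{subsect:bcldatastr}), and the $O(n)$ corridor/junction structures. Property \ref{lem:ptangencycorr} ensures each vertex is a leaf of at most one $BHT$, so all $BHT$s together store only $O(n)$ wavefront segments; with $O(m)$ events in the heap, the working storage is $O(n)$.

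The main obstacle is not any single calculation but guaranteeing that the amortized per-event bound of Theorem \ref{thm:eventdu} may legitimately be multiplied by the $O(m)$ event count without double-counting the amortization credits. This requires checking that the charging schemes underlying the component lemmas --- the charging of splits to junction vertices and of bunch removals to generating bunches in Lemma \ref{lem:typeIIInumevents}, the charging of dirty-bridge reconstruction and of bridge splits each to a once-only struck edge in Lemmas \ref{lem:sdcwt} and \ref{lem:dirtybricorr}, and the $O(m)$ merge bound of Lemma \ref{lem:nummerges} feeding Theorem \ref{thm:mergecompl} --- all draw on disjoint potential so that their amortized totals add rather than interfere. I would make this explicit by observing that each accounting charges either to a distinct corridor/junction boundary edge (of which there are $O(m)$) or to a distinct bunch (of which $O(m)$ are ever created, by Lemma \ref{lem:numbunches}); the credits are therefore drawn from pools of total size $O(m)$, and the sum of the amortized costs is bounded by the sum of the worst-case per-operation costs over these pools, each of which is $O((\lg m)(\lg n))$.
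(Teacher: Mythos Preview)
Your proposal is correct and follows essentially the same bookkeeping argument as the paper: triangulation in $O(T)$, corridor/junction construction absorbed into the preprocessing, then $O(m)$ events (Theorem~\ref{thm:numevents}) at $O((\lg m)(\lg n))$ amortized each (Theorem~\ref{thm:eventdu}), plus $O(m\lg n + n)$ for $BHT$ maintenance (Lemma~\ref{lem:typeIIIhandlingcost}), and $O(n)$ space from the four main data structures. Your additional paragraph verifying that the separate amortization schemes draw on disjoint pools (corridor/junction edges versus bunches) is more explicit than the paper's own proof, which simply cites the component lemmas without addressing possible interference; this extra care is sound and strengthens the exposition rather than departing from the paper's approach.
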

\vspace{-.15in}
\begin{proof}
The triangulation of polygonal region takes $O(n + m\lg^{1+\epsilon}m)$, represented as $O(T)$.  
Given a triangulation, finding useful corridors and junctions takes $O(m\lg{n})$.  
>From theorems \ref{thm:numevents}, \ref{thm:eventdu} all the event points determination and associated updates, excluding the updates to $BHT$s, can be done in $O(m(\lg{m})(\lg{n}))$ time.  
>From Lemma \ref{lem:typeIIIhandlingcost}, initiating and maintaining all $BHT$s take $O(m\lg{n}+n)$.
Hence the algorithm is of $O(T+m(\lg{m})(\lg{n}))$ time complexity.
\hfil\break

All the four data structures $BHT$, $BST$, $WST$ and Event Heap require at most $O(n)$ space at any instance during the entire algorithm.  
Hence, the algorithm is of $O(n)$ space complexity.
\end{proof}

\section{Conclusions}
\label{sect:conclu}

We have described an algorithm for finding the Euclidean shortest path in polygonal domain with $O(T+m(\lg{m})(\lg{n}))$ time complexity using $O(n)$ space.
It would be of interest to investigate for a solution with $O(n+m\lg{m})$ time and $O(n)$ space.
Also, exploring the applicability of the above technique to weighted geodesic shortest path computation, and determining approximate Euclidean shortest paths in polygonal domains is of interest.

\bibliography{spcorrwav-rinkulu10}

\pagebreak

\end{document}